
\documentclass[envcountsame]{elsarticle}

\usepackage{lineno,hyperref}


\usepackage{graphicx}

\usepackage{amsmath}

\usepackage{hyperref}
\usepackage{xcolor}

\usepackage{blindtext}
\usepackage{subcaption}
\graphicspath{{../}}
\DeclareGraphicsExtensions{.pdf,.eps}
\usepackage{times} 
\usepackage{amssymb}  
\usepackage{amsfonts}
\usepackage{stmaryrd}
\usepackage{mathtools}
\usepackage{wrapfig} 
\usepackage{array}
\usepackage{enumerate}
\usepackage{xspace}
\usepackage{ifthen}
\usepackage{color}
\usepackage{mathdots}
\usepackage{url}
\usepackage[newitem]{paralist}
\usepackage[para]{footmisc}
\usepackage{amsthm}  
\usepackage{xspace}
\usepackage{bussproofs}
\usepackage{verbatim}


\allowdisplaybreaks

\newcommand{\Grob}{Gr\"{o}bner\xspace}
\newcommand{\KeYmaeraX}{KeYmaera~X\xspace}
\newcommand{\KX}{\KeYmaeraX}
\newcommand{\bbn}{\mathbb{N}}
\newcommand{\bbq}{\mathbb{Q}}
\newcommand{\bbr}{\mathbb{R}}
\newcommand{\bbc}{\mathbb{C}}
\newcommand{\bx}{\mathbf{x}}
\newcommand{\cl}[3][]{{\overline{#2}}^{{#3}_{#1}}}
\newcommand{\dimpt}[2]{\text{dim}_{\mathbf{#1}}(#2)}
\newcommand{\rx}{\mathbb{R}[\mathbf{x}]}
\newcommand{\cx}{\mathbb{C}[\mathbf{x}]}
\newcommand{\dkx}[1][K]{{#1}\{\mathbf{x}\}}
\newcommand{\ldf}[1]{\mathcal{L}_{\mathbf{F}}(#1)}
\newcommand{\ldfid}[1]{\mathcal{L}^*_{\mathbf{F}}(#1)}

\newcommand{\rgaexp}{{\tt RGA}_{o}}
\newcommand{\rrad}[1]{\sqrt[\mathbb{R}]{#1}}
\newcommand{\sat}[2][S]{{#2}:#1^\infty}
\newcommand{\tri}{{\tt Triangulate}\xspace}
\newcommand{\vi}[2]{\mathbf{I}_{#1}(#2)}
\newcommand{\vs}[2][]{\mathbf{V}_{#1}(#2)}
\newcommand{\vsr}[1]{\mathbf{V}_{\mathbb{R}}(#1)}
\newcommand{\vsc}[1]{\mathbf{V}_{\mathbb{C}}(#1)}
\newcommand{\xpf}{\mathbf{x'}-\mathbf{f}(\mathbf{x})}
\newcommand{\xpef}{\mathbf{x'}=\mathbf{f}(\mathbf{x})}
\newcommand{\xtpf}{\mathbf{\widetilde{x}'}-\mathbf{\widetilde{f}}(\mathbf{x})}

\newtheorem{theorem}{Theorem}



\newtheorem{corollary}[theorem]{Corollary}
\newtheorem{definition}[theorem]{Definition}
\newtheorem{example}[theorem]{Example}
\newtheorem{lemma}[theorem]{Lemma}
\newtheorem{notation}[theorem]{Notation}
\newtheorem{proposition}[theorem]{Proposition}

\theoremstyle{definition}
\newtheorem{remark}[theorem]{Remark}

\modulolinenumbers[5]










\bibliographystyle{elsarticle-num}
\makeatletter
\def\ps@pprintTitle{%
     \let\@oddhead\@empty
     \let\@evenhead\@empty
     \def\@oddfoot{\reset@font\hfil}
     \let\@evenfoot\@oddfoot
}
\makeatother
\begin{document}

\begin{frontmatter}

\title{Differential Elimination and Algebraic Invariants of Polynomial Dynamical Systems}


\author[mymainaddress,mysecondaryaddress]{William Simmons\corref{mycorrespondingauthor}}
\cortext[mycorrespondingauthor]{Corresponding author}
\ead{william.simmons@twosixtech.com}

\author[mysecondaryaddress,mythirdaddress]{Andr\'e Platzer}
\ead{aplatzer@cs.cmu.edu}

\address[mymainaddress]{High Assurance Solutions group, Two Six Technologies, Arlington, USA}
\address[mysecondaryaddress]{Computer Science Department, Carnegie Mellon University, Pittsburgh, USA}
\address[mythirdaddress]{Department of Informatics, Karlsruhe Institute of Technology, Karlsruhe, Germany}

\begin{abstract}
Invariant sets are a key ingredient for verifying safety and other properties of cyber-physical systems that mix discrete and continuous dynamics. We adapt the elimination-theoretic Rosenfeld-\Grob algorithm to systematically obtain algebraic invariants of polynomial dynamical systems without using \Grob bases or quantifier elimination. We identify totally real varieties as an important class for efficient invariance checking. 
\end{abstract}

\begin{keyword}
algebraic invariants 
\sep algebraic varieties \sep elimination theory \sep differential algebra  \sep polynomial dynamical systems  \sep Rosenfeld-\Grob algorithm \sep characteristic set \sep \Grob basis \sep totally real varieties \sep computational complexity
\end{keyword}
\end{frontmatter}
 
\section{Introduction}

 Computers are increasingly hitting the road, taking to the skies, and interacting with people and the physical environment in unprecedented ways.  Engineering concerns like realistic models and reliable sensors are critical, but just as important are the complex mathematical problems that lie at the heart of making \emph{cyber-physical systems} (CPS) safe \cite{Platzer10,Platzer18}. One of these central problems is computing \emph{invariant sets} \cite{goriely2001integrability} of continuous dynamical systems (Section \ref{lie}), where an invariant set is a collection of states from which any trajectory starting in the set will never leave. Given a system of ordinary differential equations (ODEs) and a set of unsafe states, we must identify initial states that will never lead to an unsafe state under the specified dynamics.  If an invariant does not intersect the unsafe set, then every state in the invariant set is a safe starting point. Beyond safety \cite{PlatzerT20}, invariant reasoning is important for issues like stability \cite{TanP21}, liveness \cite{tan2021axiomatic}, and control \cite{zhan2024reset}.
Moreover, recent work of Platzer and Qian \cite{platzer2024axiomatization} shows that verification of numerical ODE computation boils down to analysis of differential invariants. Due to its great practical significance, the problem of invariant generation for ODEs has received substantial interest \cite{DBLP:conf/hybrid/Rodriguez-CarbonellT05,DBLP:journals/fmsd/SankaranarayananSM08,DBLP:conf/hybrid/Tiwari08,DBLP:journals/fmsd/PlatzerC09,DBLP:conf/hybrid/Sankaranarayanan10,DBLP:conf/cav/SankaranarayananT11,DBLP:journals/sttt/TalyGT11,DBLP:conf/itp/Platzer12,DBLP:conf/etsc/WuZ12,DBLP:conf/tacas/GhorbalP14,DBLP:conf/vmcai/SogokonGJP16,DBLP:conf/sas/RouxVS16,Sankaranarayanan16,DBLP:conf/hybrid/KongBSJH17,DBLP:conf/sofsem/Boreale18,majumdar2020algebraic,DBLP:journals/jsc/GhorbalS22,zhao2023formal} and even has a dedicated tool \cite{DBLP:journals/fmsd/SogokonMTCP22}, but most methods have nontrivial heuristic search parts.

In this article we employ strategies from \emph{elimination theory} \cite{Wang01} to give new algorithms for systematically generating invariants and checking candidates for invariance. Elimination theory draws on algebra, geometry, and logic to give algorithmic procedures for understanding the polynomial systems that arise in many applications. In a narrow sense, elimination is the process of exposing explicit relationships and removing variables. The prototypical example of an elimination method is \emph{Gaussian elimination}, 
which converts a system of linear equations into a simpler system with the same solutions from which those solutions can be read off easily. However, linear algebraic equations are merely the tip of the iceberg. Crucially, many elimination tools are extremely general. They apply to nonlinear problems of diverse forms, allowing an unusual degree of reuse across theories and applications. In this paper, our problems require methods for symbolically analyzing systems of ODEs that have \emph{real algebraic} \cite{bochnak1998real} constraints; i.e., are defined by polynomial equations over the real numbers (Section \ref{lie}). In recent decades, researchers in \emph{differential algebra} \cite{MR0568864} have extended classical algebraic elimination methods to polynomial differential equations (Section \ref{diffalg}). 
Broadly speaking, \emph{differential elimination} \cite{LiY19} provides tools for finding \emph{all} relations implied by a polynomial differential system, regardless of the form. The elimination procedures we develop in Sections \ref{regsysresults} and \ref{rga} are intrinsically mathematical, but we show that the output corresponds directly to invariants of the input system, giving a novel, complementary view on systematically computing and checking ODE invariants. 
 
As always, there are trade-offs involved. Elimination methods are exact, general 
and remarkably versatile, but used naively they can have high theoretical and practical costs \cite{Pardo95}. We emphasize alternatives within elimination theory that 
have better computational complexity than standard choices. 
For example, \emph{characteristic sets} \cite{mishra} carry less information than the more ubiquitous \emph{\Grob bases} \cite{clo1_4} but they have singly exponential complexity \cite{gallo1991efficient} in the number of variables instead of \Grob bases' doubly exponential growth \cite{mayr1982complexity}. We work extensively with \emph{regular systems} (Section \ref{regsyssec}), a weak form of characteristic sets with lower complexity. Similarly, real number solutions are the main goal in applications, but obtaining them is often significantly more expensive than working over the complex numbers \cite{Boreale20,becker1993computation}. We give a promising workaround that can \emph{equivalently} replace real algebraic computation with more efficient complex\footnote{Ironically, computation over complex-valued structures is often less complex than over real-valued structures.} alternatives in many situations (Sections \ref{tr} and \ref{check}). 

Computation is our main motivation for restricting to polynomial differential and algebraic equations. Such equations are general \cite{PlatzerT20}, already powerful for applications \cite{MeniniPT21} and undecidability typically ensues when we leave the polynomial setting \cite{richardson1969some}.  

\paragraph{Structure of the Paper} 

Section \ref{rev} reviews the algebraic geometry and differential algebra used to rigorously justify our results in Sections \ref{regsysresults} and \ref{rga}. If we do not have a reference at hand or the proof is short, we provide a proof without claiming originality.   

Most of the material in Section \ref{rev} is standard and is self-contained assuming familiarity with basic linear algebra and multivariate calculus.  That said, the ideas are nontrivial and, depending on background, readers may need to review some or all of Section \ref{rev} in order to fully understand the new results and proofs in Sections \ref{regsysresults} and \ref{rga}. However, scanning Section \ref{rev} for notation and definitions 
provides a good initial picture. 
Sections \ref{regsysresults} and \ref{rga} are the heart of the paper. In Section \ref{regsysresults} the main contributions are Theorems \ref{satinvar} and \ref{invarcor}. Theorem \ref{satinvar} identifies a novel sufficient condition for a system $(A=0,S\neq 0)$ of polynomial equations and inequations to implicitly determine an invariant. Theorem \ref{invarcor} gives multiple criteria, frequently met in practice, that allow $(A=0,S\neq 0)$ to define the associated invariant more explicitly. A detailed example (Section \ref{lorenzex1}) illustrates the necessary computations while postponing a full explanation until Section \ref{lorenzex2}.

Section \ref{rgaexp} introduces $\rgaexp$, our new variant of the Rosenfeld-\Grob algorithm (RGA) of Boulier et al. \cite{BoulierLOP95}. $\rgaexp$ takes in a system of ordinary differential equations $\xpef$ and polynomial equations $A=0$ and algorithmically extracts a structured, maximal invariant satisfying the equations. In particular, Theorem \ref{radlddecomp} shows that $\rgaexp$ produces output that meets the requirements of Theorem \ref{satinvar} for an invariant. Section \ref{lorenzex2} revisits the example of Section \ref{lorenzex1} in the context of RGA. In Section \ref{rgabds} we analyze the computational complexity of $\rgaexp$ and find an explicit upper bound (Theorem \ref{expRGA}) on the degrees of polynomials in the output. (Most prior results on RGA have either ignored complexity or focused on the order of intermediate differential equations. An exception is Corollary 11 of \cite{MR2556127}, which gives an Ackermannian bound on the degrees of polynomials returned by an RGA-like algorithm.) 
Our focus in Section \ref{check} is a simple and efficient---compared to standard methods from the literature---test for invariance (Theorem \ref{radinvartest}) based on \emph{totally real varieties}. These geometric objects provide a bridge that lets us draw conclusions about real number solutions using tools that are naturally suited to the complex numbers with their computational advantages.

Section \ref{related} compares and contrasts $\rgaexp$ with related approaches for generating and checking invariants. Section \ref{conc} summarizes our contributions and outlines promising future research questions. Finally, Section \ref{appendix} (following the references) is an appendix containing proofs not given in the body of the paper. (The corresponding theorem statements throughout the paper are marked by `(Appendix)'.)
 

\section{Mathematical Background}\label{rev}


\subsection{Ideals and Varieties}
We work extensively with multivariate polynomials over various \emph{fields} \cite{Lang:algebra} like the rational numbers $\mathbb{Q}$, the real numbers $\bbr$, and the complex numbers $\bbc$.  We often write $K$ or $L$ for an unspecified field.  (A field is essentially a set equipped with some ``addition" and ``multiplication" operations that satisfy the usual properties of arithmetic in $\mathbb{Q},\bbr$, or $\bbc$. In particular, we have the ability to divide by nonzero elements.) In Section \ref{diffalg} we consider \emph{differential fields}, which are fields having additional structure inspired by calculus. 
We sometimes restrict to polynomials over the real or complex numbers even when results apply more generally. The fact that $\bbr$ has an ordering (given by the usual $<$ relation) is important, but we try to work algebraically as much as possible (in which case $\bbc$ is particularly convenient). Occasionally we use the Euclidean metric properties of $\bbr$ and $\bbc$ \cite{Rudin87}.

 Unless indicated otherwise, we use $n$ for the number of variables and write $\bx$ for the $n$-tuple $(x_1,x_2,\ldots,x_n)$; then the \emph{polynomial ring} $K[\bx]$ is the set of all polynomials in variables $x_1,x_2,\ldots,x_n$ over field $K$. We denote by $K^n$ the collection of all $n$-tuples of elements from $K$.

\subsubsection{Ideals}
\begin{definition} \label{iddef}
Let $K$ be a field. If $I\subseteq K[\bx]$, $0\in I$, and $p+q\in I, gp\in I$ for all $p,q\in I$ and $g\in K[\bx]$, then we say $I$ is an \emph{ideal} of $K[\bx]$ and write $I\unlhd K[\bx]$. If $I\unlhd K[\bx]$ and $I\neq K[\bx]$, we call $I$ a \emph{proper ideal} and write $I \lhd K[\bx]$.

If $A\subseteq K[\bx]$, then the collection of all finite sums $\sum_{i} g_{i}p_i$, where $p_i \in A$ and $g_{i}$ is any polynomial in $K[\bx]$, is the \emph{ideal generated by $A$ in $K[\bx]$.} We write $(A)_K$ or just $(A)$ to denote this ideal of $K[\bx]$.
\end{definition}

Note that $(A)$ is the minimal ideal of $K[\bx]$ containing $A$. Elements of an ideal are linear combinations of generators of the ideal, although the ``coefficients" multiplied by the generators are arbitrary polynomials over $K$, not necessarily elements of $K$.


Even though nonzero ideals in $K[\bx]$ are infinite, they can always be represented in a finite way:

\begin{theorem}[{Hilbert's basis theorem \cite[Thm. 4, p. 77]{clo1_4}}]\label{hbt}
Let $K$ be a field. Every ideal in the multivariate polynomial ring $K[\mathbf{x}]$ is finitely generated (has a finite generating set).
\end{theorem}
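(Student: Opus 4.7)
The plan is to prove the theorem by induction on the number of variables $n$, reducing it to the single-variable case via the key intermediate fact that $R[x]$ inherits the finite-generation property from $R$. Since $K[x_1,\ldots,x_n] \cong K[x_1,\ldots,x_{n-1}][x_n]$, this chain of extensions will give the result.

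For the base case $n=1$, I would show $K[x_1]$ is a principal ideal domain. Given any nonzero ideal $I \lhd K[x_1]$, pick a polynomial $p \in I$ of minimal degree and apply polynomial division (which works because $K$ is a field, so all leading coefficients are invertible): any $q \in I$ can be written $q = ap + r$ with $\deg(r) < \deg(p)$, and $r = q - ap \in I$, forcing $r = 0$ by minimality. Hence $I = (p)$ is generated by a single element.

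For the inductive step, I would prove the sharper statement that if $R$ is a commutative ring in which every ideal is finitely generated, then so is $R[x]$. Given $I \unlhd R[x]$, let $L_d \subseteq R$ denote the set of leading coefficients of degree-$d$ polynomials in $I$ together with $0$; each $L_d$ is readily checked to be an ideal of $R$, and since multiplying any polynomial by $x$ raises its degree by one while preserving its leading coefficient, $L_0 \subseteq L_1 \subseteq L_2 \subseteq \cdots$ forms an ascending chain. The crux is that this chain must stabilize at some $L_N$: given any ascending chain of ideals in $R$, its union is again an ideal, whose assumed finite generating set must already live inside some term of the chain, forcing stabilization there. Each $L_d$ for $d \leq N$ is then finitely generated in $R$, say by leading coefficients of polynomials $f_{d,1}, \ldots, f_{d,k_d} \in I$, and I would claim the finite collection $\{f_{d,j} : 0 \leq d \leq N,\, 1 \leq j \leq k_d\}$ generates $I$.

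To close the argument, I would take any $g \in I$ and induct on $\deg(g)$. If $\deg(g) = d \leq N$, expand its leading coefficient as an $R$-linear combination of generators of $L_d$ and subtract off the corresponding combination of the $f_{d,j}$ to cancel the leading term, producing an element of $I$ of strictly smaller degree to which induction applies. If $\deg(g) = d > N$, use $L_d = L_N$ and multiply the $f_{N,j}$ by $x^{d-N}$ (which lies in $R[x]$) to cancel the leading term in the same way. The main obstacle is getting this leading-coefficient bookkeeping right and justifying the equivalence between finite generation of every ideal and the ascending chain condition; once that is in hand, the inductive machinery runs smoothly and the single-variable base case finishes the proof.
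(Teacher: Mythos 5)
The paper gives no proof of this theorem at all; it cites it to Cox, Little, and O'Shea (Theorem~4, p.~77 of \cite{clo1_4}) and moves on. So there is no in-paper argument to compare against, but it is worth noting that the cited reference proves the result by a rather different route than yours: they first establish Dickson's lemma to show that monomial ideals are finitely generated, then observe that the leading-term ideal $\mathrm{LT}(I)$ is a monomial ideal, pick polynomials $g_1,\ldots,g_s\in I$ whose leading terms generate $\mathrm{LT}(I)$, and use the multivariate division algorithm to show these $g_i$ generate $I$ (this simultaneously proves the existence of Gr\"obner bases, which is the reason \cite{clo1_4} favors that route). Your proof is the classical one due essentially to Hilbert, refined into its modern form via the ascending-chain condition: reduce to showing that $R$ Noetherian implies $R[x]$ Noetherian, introduce the chain of leading-coefficient ideals $L_0\subseteq L_1\subseteq\cdots$, use ACC to stabilize at $L_N$, and then induct on degree to show that finitely many witnesses $f_{d,j}$ generate $I$. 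Your argument is correct and self-contained, and it buys elementariness --- no monomial orderings, no division algorithm in several variables, no Gr\"obner theory --- at the cost of not producing the Gr\"obner basis machinery that \cite{clo1_4} wants for later use. One small point worth making explicit when you write it up: your generators $f_{d,j}$ must be chosen to have degree \emph{exactly} $d$ (not merely $\leq d$) for the leading-term cancellation in the final induction to land in a strictly lower degree; this is implicit in your definition of $L_d$ as leading coefficients of degree-$d$ elements, but it is the kind of detail that deserves a sentence.
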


We generally rely on context to indicate the polynomial ring of which $I$ or $(A)$ is an ideal (namely, the ring mentioned whenever the ideal's name is introduced), but we use the following notation when changing fields:

\begin{notation}
Let $K\subseteq L$ be fields and let $I\unlhd K[\bx]$. We write $I_L$ for the ideal generated by $I$ in $L[\bx]$.
\end{notation}

For instance, if $(A)_K\unlhd K[\bx]$, then $(A)_L$ is the ideal generated by $A$ (or equivalently by $(A)_K$) in $L[\bx]$.

Two types of ideals play a particularly fundamental role in relating geometry (specifically, solution sets of polynomial equations) to algebra.

\begin{definition}
A \emph{prime} ideal $P \lhd K[\bx]$ is a proper ideal such that for all $p,q\in K[\bx]$ for which $pq\in P$, either $p\in P$ or $q \in P$.

A \emph{radical} ideal $I\unlhd K[\bx]$ has the property that for all $p\in K[\bx]$ such that $p^N \in I$ for some natural number $N$, already $p\in I$.  The \emph{radical} $\sqrt{I}$ of an arbitrary ideal $I$ is the intersection of all radical ideals containing $I$.\end{definition}

The radical $\sqrt{I}$ is a radical ideal and consists precisely of those polynomials $p$ such that $p^N\in I$ for some $N$. Ideal $I$ is radical if and only if $I=\sqrt{I}$. Prime ideals are radical, but not necessarily vice versa (for instance, $(xy)\unlhd \bbr[x,y]$ is radical but not prime).

\subsubsection{Algebraic Varieties and Constructible Sets} Sets definable by polynomial equations, as well as the corresponding ideals, are central players in our work:

\begin{definition}
Let $K$ be a field, $A\subseteq K[\bx]$, and $X\subseteq K^n$. The \emph{zero set} of $A$, denoted $\vs[K]{A}$, is the set of all $\mathbf{a}\in K^n$ such that $f(\mathbf{a})=0$ for all $f\in A$. If $X=\mathbf{V}_K(B)$ for some $B\subseteq K[\bx]$, we call $X$ a \emph{variety}, \emph{algebraic set}, or \emph{Zariski closed set} over $K$. We refer to subsets of $X$ that are themselves zero sets of polynomials in $K[\bx]$ as \emph{subvarieties}, \emph{algebraic subsets}, or \emph{Zariski closed subsets} over $K$.

Even if $X$ is not algebraic, we define the \emph{vanishing ideal} of $X$, denoted $\vi{K}{X}$, to be the collection of all $g\in K[\bx]$ such that $g(\mathbf{b})=0$ for all $\mathbf{b}\in X$. (Note that $\vi{K}{X}$ is a radical ideal.)  If $K$ is understood from context, we may write $\vs{A}$ or $\vi{}{X}$ for the zero set and vanishing ideal, respectively. If $A$ contains only one polynomial $p$, we write $\vs{p}$ instead of $\vs{A}$.
\end{definition}

\begin{example}
Trivial examples of algebraic sets are the empty set and all of $K^n$. A more interesting example is $\mathbf{V}_K(\text{det}(\bx))\subseteq K^{n^2}$, where $\text{det}(\mathbf{a})$ is the determinant of the $n\times n$ matrix corresponding to $\mathbf{a}\in K^{n^2}$. The points of $\mathbf{V}_K(\text{det}(\bx))$ represent singular matrices.
\end{example}




\begin{lemma}[{\cite[Thm. 15, p. 196]{clo1_4}}] \label{capcup}
Let $K$ be a field and let $I,J\unlhd K[\bx]$. Then $\mathbf{V}_{K}(I\cap J)= \mathbf{V}_K(I)\cup \mathbf{V}_K(J)$.
\end{lemma}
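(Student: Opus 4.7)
The plan is to prove the two set-theoretic inclusions separately. The easy direction is $\mathbf{V}_K(I \cap J) \supseteq \mathbf{V}_K(I) \cup \mathbf{V}_K(J)$, which follows purely from monotonicity: since $I \cap J \subseteq I$ and $I \cap J \subseteq J$, any point that kills every element of $I$ (respectively every element of $J$) automatically kills every element of the smaller set $I \cap J$. I would dispatch this in a single sentence.

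The interesting direction is $\mathbf{V}_K(I \cap J) \subseteq \mathbf{V}_K(I) \cup \mathbf{V}_K(J)$, and I would argue it by contradiction. Suppose $\mathbf{a} \in \mathbf{V}_K(I \cap J)$ but $\mathbf{a} \notin \mathbf{V}_K(I) \cup \mathbf{V}_K(J)$. Then there exist witnesses $p \in I$ with $p(\mathbf{a}) \neq 0$ and $q \in J$ with $q(\mathbf{a}) \neq 0$. The product $pq$ lies in $I$ (because $I$ is an ideal closed under multiplication by $q \in K[\bx]$) and lies in $J$ (because $J$ is an ideal closed under multiplication by $p \in K[\bx]$), so $pq \in I \cap J$. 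Evaluating at $\mathbf{a}$ gives $p(\mathbf{a}) q(\mathbf{a}) = 0$, which contradicts the fact that $K$, being a field, has no zero divisors.

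The only subtle point, and the one I would make sure to flag explicitly, is the use of the absorption property of ideals to conclude $pq \in I \cap J$ from $p \in I$ and $q \in J$ (this is really the inclusion $IJ \subseteq I \cap J$, valid for any two ideals). Once that observation is made, the argument is essentially a two-line contradiction. No appeal to Hilbert's basis theorem, the Nullstellensatz, or any property of $K$ beyond being an integral domain is needed, so the proof goes through verbatim for any field $K$ and in fact for any integral domain of coefficients.
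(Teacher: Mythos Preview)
Your proof is correct and is precisely the standard argument; the paper does not give its own proof of this lemma but simply cites Cox--Little--O'Shea, whose proof is the same product trick $pq\in I\cap J$ that you use.
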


\begin{lemma}[{\cite[Thm. 7(ii), p. 183]{clo1_4};} { Table \ref{galois}}] \label{viva}
Let $K$ be a field and $A\subseteq K[\bx]$. Then $\vs{\vi{}{\vs{A}}}= \vs{A}$.
\end{lemma}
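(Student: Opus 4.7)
The plan is to establish the identity by proving the two inclusions $\vs{A} \subseteq \vs{\vi{}{\vs{A}}}$ and $\vs{\vi{}{\vs{A}}} \subseteq \vs{A}$ separately, relying only on the definitions of the $\vs{\cdot}$ and $\vi{}{\cdot}$ operators. No deeper machinery (such as the Nullstellensatz) is needed here, since the statement only asserts a stability property of the Galois-style connection between sets and ideals, not an algebraic characterization.

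For the forward inclusion $\vs{A} \subseteq \vs{\vi{}{\vs{A}}}$, I would take an arbitrary point $\mathbf{a} \in \vs{A}$ and note that by definition every $g \in \vi{}{\vs{A}}$ satisfies $g(\mathbf{b}) = 0$ for all $\mathbf{b} \in \vs{A}$, and in particular $g(\mathbf{a}) = 0$. Hence $\mathbf{a}$ is a common zero of $\vi{}{\vs{A}}$, so $\mathbf{a} \in \vs{\vi{}{\vs{A}}}$. This direction is immediate from unwinding the definitions.

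For the reverse inclusion $\vs{\vi{}{\vs{A}}} \subseteq \vs{A}$, the key observation is that $A \subseteq \vi{}{\vs{A}}$: indeed, any $f \in A$ vanishes at every point of $\vs{A}$ by the very definition of $\vs{A}$, so $f$ qualifies for membership in $\vi{}{\vs{A}}$. Once this is in hand, I would invoke the (elementary, one-line) inclusion-reversing property of the zero-set operator, namely that $B \subseteq C$ implies $\vs{C} \subseteq \vs{B}$, which holds because a common zero of a larger set of polynomials is automatically a common zero of a smaller set. Applying this to $A \subseteq \vi{}{\vs{A}}$ yields $\vs{\vi{}{\vs{A}}} \subseteq \vs{A}$, completing the argument.

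There is no real obstacle here; the whole proof is a matter of carefully parsing the three nested operators and using the contravariance of $\vs{\cdot}$. If anything, the only subtlety worth flagging is the mild notational point that $\vi{}{\vs{A}}$ is a set of polynomials (an ideal), so writing $A \subseteq \vi{}{\vs{A}}$ makes sense as an inclusion of subsets of $K[\bx]$, whereas $\vs{A} \subseteq \vs{\vi{}{\vs{A}}}$ is an inclusion of subsets of $K^n$; keeping the two kinds of containments straight is the main thing to be careful about.
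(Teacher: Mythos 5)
Your proof is correct and is the standard argument; the paper does not supply its own proof but simply cites Cox--Little--O'Shea, whose proof is exactly this two-inclusion unwinding of the definitions together with the contravariance of $\vs{\cdot}$.
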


\begin{remark}\label{galoisrmk}
Lemma \ref{viva} suggests an inverse relationship between $\mathbf{V}$ and $\mathbf{I}$. A deeper companion result that fills out this intuition is Hilbert's Nullstellensatz (Theorem \ref{nss}). We will review several similar results involving various fields ($\bbr,\bbc,$ and differential fields (Section \ref{diffalg})), so for convenience we list all of them in Table \ref{galois} on p. \pageref{galois}. (The table is referenced prior to the statement of each such theorem.) These theorems are instances of \emph{Galois connections} \cite{mac2013categories, ore1944galois}.
\end{remark}

\phantomsection
\label{galois}
\begin{table}[h!]
\caption{Summary of Galois connection theorems}

\centering
\begin{tabular}{| p{\textwidth} |} \hline
\smallskip
\emph{Any field $K$:}
\begin{itemize}
\item (Lemma \ref{viva}) Let $A\subseteq K[\bx]$. Then $\vs{\vi{}{\vs{A}}}= \vs{A}$.
\item (Lemma \ref{irredprime}) An algebraic variety $X$ over $K$ is irreducible if and only if its vanishing ideal $\vi{K}{X}$ is a prime ideal.
\end{itemize}
$\bbc$ \emph{(or any algebraically closed field):}
\begin{itemize}
\item (Theorem \ref{nss}, Hilbert's Nullstellensatz) Polynomial $p\in \mathbb{C}[\mathbf{x}]$ vanishes at every point in the complex zero set of  $I\unlhd \mathbb{C}[\mathbf{x}]$ if and only if $p\in\sqrt{I}$.
\item (Corollary \ref{agdict}, complex algebra-geometry dictionary) Let $A,B\subseteq \cx$. Then $\vsc{A}\subseteq \vsc{B}$ if and only if $\sqrt{(A)}\supseteq \sqrt{(B)}$.
\item (Theorem \ref{algsatrad}, Hilbert's Nichtnullstellensatz) Let $A\subseteq \cx$, and let $0\notin S\subseteq \cx$ be finite. A polynomial $p\in\cx$ vanishes at every complex solution of $(A=0,S\neq 0)$ if and only if $p\in\sqrt{\sat{(A)}}$.

\end{itemize}
$\bbr$ \emph{(or any real closed field \cite{bochnak1998real}):}
\begin{itemize}
\item (Theorem \ref{rnss}, real Nullstellensatz) Polynomial $p\in \mathbb{R}[\mathbf{x}]$ vanishes at every point in the real zero set of $I\unlhd \mathbb{R}[\mathbf{x}]$ if and only if $p\in\rrad{I}$.
\item (Corollary \ref{agdict}, real algebra-geometry dictionary) Let $A,B\subseteq \rx$. Then $\vsr{A}\subseteq \vsr{B}$ if and only if $\rrad{(A)}\supseteq \rrad{(B)}$.

\end{itemize}
\emph{Any differential field K (characteristic 0):}
\begin{itemize}
\item (Theorem \ref{dnss}, differential Nullstellensatz) Given $A\subseteq \dkx$, a differential polynomial $p\in\dkx$ vanishes at every point of $\vs[K,\delta]{A}$ if and only if $p\in\sqrt{[A]}$.
\item (Corollary \ref{dagdict}, differential algebra-geometry dictionary) Let $A,B\subseteq \dkx$. Then $\vs[K,\delta]{A}\subseteq \vs[K, \delta]{B}$  if and only if $\sqrt{[A]}\supseteq \sqrt{[B]}$.
\item (Theorem \ref{diffsatrad}, differential Nichtnullstellensatz) Let $A\subseteq \dkx$, and let $0\notin S\subseteq \dkx$ be finite. A differential polynomial $p\in\dkx$ vanishes at every solution of $(A=0,S\neq 0)$ in every differential extension field of $K$ if and only if $p\in\sqrt{\sat{[A]}}$.
\end{itemize}
\\ \hline \end{tabular}
\label{galois}
\end{table}

Complements of algebraic sets (i.e., points \emph{not} satisfying certain polynomial equations) are also important for our results in Sections \ref{regsysresults} and \ref{rga}.
\begin{definition}
Let $K$ be a field. $K$-\emph{constructible sets} are Boolean combinations (i.e., complements, finite intersections, and finite unions) of Zariski closed sets over $K$.  We write \emph{constructible sets} if $K$ is understood.
\end{definition}

A typical example of a constructible set that might not be an algebraic set is the set difference $X\setminus Y$ of two algebraic sets $X,Y$. However, finite unions and arbitrary intersections of algebraic sets are still algebraic \cite[p. 24]{shaf3ed}. Said differently, \emph{Zariski open sets} (complements of Zariski closed sets) form a \emph{topology} on $K^n$.  

\subsubsection{Zariski Closure} 

Even if a constructible set is not algebraic, it can be augmented to one that is equivalent from the perspective of polynomial \emph{equations}.  
\begin{definition}
Let $K$ be a field. Given a set $X\subseteq K^n$, the intersection of all Zariski closed sets over $K$ that contain $X$ is the \emph{Zariski closure} of $X$ over $K$. We denote the Zariski closure by $\cl{X}{K}$, or simply $\cl{X}{}$ if $K$ is understood.
\end{definition}

Being an intersection of Zariski closed sets, the Zariski closure is itself a Zariski closed set.  As suggested above, the Zariski closure is the best \emph{algebraic} overapproximation (using coefficients from $K$) to the original set.  In particular, two sets having the same Zariski closure are indistinguishable using polynomial equations. (For a more precise statement, see Lemma \ref{vanishclos}.) 

Over $\bbr$ or $\bbc$, Zariski open sets form a topology that is coarser than the usual Euclidean topology. That is, Zariski open sets are Euclidean open, but not necessarily vice versa (for instance, the open interval $(0,1)\subseteq \bbr^1$ is Euclidean open but not Zariski open because its complement is not the set of roots of a univariate polynomial). However, the following is true:

\begin{lemma}[{\cite[Cor. 4.20]{michalek2021invitation}}]\label{euclidclos}

Real (respectively, complex) Zariski-closed sets are closed in the real (respectively, complex) Euclidean topology, and the real (respectively, complex) Zariski closure of a real (respectively, complex) constructible set is the real (respectively, complex) Euclidean closure of the set.
\end{lemma}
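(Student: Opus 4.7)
The plan is to split the lemma into two separate claims and address them in turn. First, for the assertion that $K$-Zariski-closed sets are Euclidean closed when $K = \bbr$ or $K = \bbc$, I would start by observing that every polynomial $f \in K[\bx]$ is continuous with respect to the Euclidean topology on $K^n$, so $\vs[K]{f} = f^{-1}(\{0\})$ is Euclidean closed as the preimage of the closed singleton $\{0\}$. An arbitrary Zariski closed set has the form $\vs[K]{A} = \bigcap_{f \in A} \vs[K]{f}$ for some $A\subseteq K[\bx]$ and is therefore an intersection of Euclidean closed sets, hence Euclidean closed.

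For the second claim, fix a constructible set $X \subseteq K^n$ and let $\overline{X}^E$ denote its Euclidean closure. One inclusion, $\overline{X}^E \subseteq \cl{X}{K}$, is essentially immediate from the first claim: the Zariski closure $\cl{X}{K}$ is Zariski closed and thus Euclidean closed, and it contains $X$, so it contains the smallest Euclidean closed set containing $X$.

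The reverse inclusion $\cl{X}{K} \subseteq \overline{X}^E$ is where the real work lies. My plan is to exploit the Boolean structure of constructible sets to write $X = \bigcup_{i=1}^{m} (U_i \cap Y_i)$, where each $U_i$ is Zariski open and each $Y_i$ is Zariski closed. After replacing $Y_i$ by the Zariski closure of $U_i \cap Y_i$ and then decomposing into irreducible components, I may assume each $Y_i$ is irreducible. Since the Zariski closure of a finite union is the union of the Zariski closures, and the Euclidean closure of a finite union contains the union of the Euclidean closures, the problem reduces to a single piece: showing that whenever $Y$ is an irreducible $K$-variety and $U$ is a Zariski open set with $U \cap Y \neq \emptyset$, the set $U \cap Y$ is Euclidean dense in $Y$---equivalently, that the proper Zariski-closed subset $Y \setminus U$ of the irreducible $Y$ has empty Euclidean interior in $Y$.

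This final density step is the main obstacle. Over $\bbc$ it is a classical consequence of complex analysis: a proper algebraic subset of an irreducible complex variety has empty Euclidean interior, which one can justify via the identity theorem applied on the smooth locus together with the Euclidean connectedness of irreducible complex varieties. Over $\bbr$ the analogous density is substantially more delicate---irreducible real varieties can harbor isolated points or multiple connected components, so one cannot argue directly from connectedness---and the natural workaround is to complexify, passing from $Y$ to $\vsc{\vi{\bbr}{Y}}$, apply the complex density result there, and then descend to $\bbr^n$ by exploiting that the defining polynomials have real coefficients and that real Zariski closures are controlled by $\bbr$-polynomial relations that persist under complexification.
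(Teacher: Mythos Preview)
Your approach is more direct than the paper's, which essentially cites an external result (Michałek--Sturmfels) for the complex case and then sketches a reduction of the real case to the complex one via Lemma~\ref{restrictclos}. Your decomposition into pieces $U_i \cap Y_i$ with $Y_i$ irreducible, together with the complex-case density argument via the identity theorem on the smooth locus and Euclidean connectedness of irreducible complex varieties, is correct and is the standard textbook route.

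The genuine gap is in the real descent. Passing to $Y_\bbc = \vsc{\vi{\bbr}{Y}}$ is sound---and in fact $Y_\bbc$ is even irreducible over $\bbc$, since $\vi{\bbr}{Y}$ being a real-radical prime forces its residue field to be formally real, whence $\sqrt{-1}$ is absent and the extension of the ideal to $\bbc[\bx]$ remains prime---so your complex density result does apply to $Y_\bbc$. But ``descend to $\bbr^n$'' is precisely where the content lies, and your justification about ``$\bbr$-polynomial relations that persist under complexification'' speaks to \emph{Zariski} closure, not Euclidean closure. Knowing that $Y_\bbc \setminus W_\bbc$ is Euclidean dense in $Y_\bbc$ gives you \emph{complex} witnesses in every neighborhood of $p \in Y$; you need \emph{real} ones in every real neighborhood, and there is no automatic mechanism to convert the former into the latter. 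To close this you need an independent fact from real algebraic geometry---for instance, that the local Euclidean dimension of an irreducible real variety at every point equals its Krull dimension (see \cite{bochnak1998real}, \S2.8), so that a proper Zariski-closed $W \subsetneq Y$ of strictly smaller dimension cannot contain any Euclidean-open piece of $Y$. That statement, or an equivalent one, is the missing ingredient; complexification by itself does not supply it. The paper's own parenthetical sketch of the real case is similarly terse on this point and ultimately leans on the cited reference.
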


To distinguish between the Zariski and the Euclidean closures of a set $X$ over $\bbr$ or $\bbc$, we write $\overline{X}^{\bbr\text{-euc}}$ or $\overline{X}^{\bbc\text{-euc}}$ for the Euclidean closures.

\begin{definition}
Let $K$ be a field and $X,Y \subseteq K^n$ where $Y$ is Zariski closed over $K$. We say that $X$ is $K$-\emph{Zariski dense} (or just Zariski dense) in $Y$ if $Y=\cl{X}{K}$.
\end{definition}

The next lemma concerns \emph{irreducible} algebraic sets, i.e., algebraic sets over a field $K$ that are not the union of two smaller algebraic subsets over $K$. 
\begin{lemma}[{\cite[p. 35]{shaf3ed};}{ Table \ref{galois}}]\label{irredprime}
An algebraic variety $X$ over $K$ is irreducible if and only if its vanishing ideal $\vi{K}{X}$ is a prime ideal. 
\end{lemma}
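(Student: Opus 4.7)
The plan is to prove both implications by translating between the geometric statement about $X$ and the algebraic statement about $\vi{K}{X}$, using Lemma \ref{viva} as the bridge.

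For the forward direction, suppose $X$ is irreducible and let $fg \in \vi{K}{X}$. First I would form the two subsets $X_1 = X \cap \vs{f}$ and $X_2 = X \cap \vs{g}$. Each is a Zariski closed subset of $X$ because intersections of algebraic sets over $K$ are algebraic. Since $fg$ vanishes on $X$, at every point of $X$ either $f$ or $g$ is zero, so $X = X_1 \cup X_2$. Irreducibility then forces $X = X_1$ or $X = X_2$; in the first case $f$ vanishes on $X$ so $f \in \vi{K}{X}$, and symmetrically for $g$. Hence $\vi{K}{X}$ is prime (and it is proper because the constant $1 \notin \vi{K}{X}$ as long as $X$ is nonempty; the edge case of $X = \emptyset$ in the definition of ``variety'' should be addressed explicitly or excluded, whichever convention the paper adopts).

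For the converse, assume $\vi{K}{X}$ is prime and suppose $X = X_1 \cup X_2$ where $X_1, X_2$ are Zariski closed subsets of $X$. The key step is to show that if both inclusions $X_i \subseteq X$ are proper, we reach a contradiction. Using Lemma \ref{viva}, $\vs{\vi{K}{X_i}} = X_i$ and $\vs{\vi{K}{X}} = X$, so $X_i \subsetneq X$ forces $\vi{K}{X_i} \supsetneq \vi{K}{X}$ strictly. Thus I can pick $f \in \vi{K}{X_1} \setminus \vi{K}{X}$ and $g \in \vi{K}{X_2} \setminus \vi{K}{X}$. Then $fg$ vanishes on $X_1 \cup X_2 = X$, so $fg \in \vi{K}{X}$. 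Primality yields $f \in \vi{K}{X}$ or $g \in \vi{K}{X}$, contradicting the choice of $f,g$. Hence one of $X_1, X_2$ must equal $X$, showing $X$ is irreducible.

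The step I expect to be most subtle is the strict-containment move in the converse: translating $X_i \subsetneq X$ into $\vi{K}{X_i} \supsetneq \vi{K}{X}$. This relies crucially on $X$ and $X_i$ being \emph{algebraic} so that Lemma \ref{viva} applies and the map $X \mapsto \vi{K}{X}$ is injective on algebraic sets; for arbitrary subsets the corresponding statement would fail. Everything else is essentially bookkeeping with the definitions of prime ideal, vanishing ideal, and the union/intersection behavior of zero sets.
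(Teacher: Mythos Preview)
Your argument is correct and is the standard textbook proof of this classical fact. The paper does not supply its own proof of this lemma; it simply cites Shafarevich \cite[p.~35]{shaf3ed}, where essentially the same argument appears. Your caution about the empty-set edge case is warranted: the paper explicitly counts the empty set as an algebraic set, and its vanishing ideal is all of $K[\bx]$, which is not prime, so one needs the convention that the empty set is not irreducible for the biconditional to hold.
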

 A typical example is an algebraic set defined by a single polynomial that is irreducible in $K[\bx]$.  Some authors require an algebraic set to be irreducible to qualify for the name ``variety", but we do not impose that restriction. 

\begin{lemma}[{Appendix}]\label{vanishclos}
Let $K$ be a field. 
\begin{enumerate}
\item If $X\subseteq K^n$ and $p\in K[\mathbf{x}]$, then $p$ vanishes at every point of $X$ if and only if $p$ vanishes at every point of the Zariski closure $\overline{X}^K$ of $X$; i.e., $\vi{K}{X} = \vi{K}{\overline{X}^K}$.
\item Given $X,Y\subseteq K^n$, we have $\overline{X}^K=\overline{Y}^K$ if and only $\vi{K}{X}=\vi{K}{Y}$.
\end{enumerate}
\end{lemma}

\begin{definition}
Let $K\subseteq L$ be fields and let $Y\subseteq L^n$. We call $Y\cap K^n$ the $K$-\emph{points} of $Y$ or the \emph{restriction} of $Y$ to $K^n$. We denote this set by $Y(K)$. 
\end{definition}
\begin{lemma}[{Appendix}]\label{restrictclos}
Let $X\subseteq \mathbb{R}^n$. Then the real Zariski closure $\overline{X}^{\bbr}$ equals $\overline{X}^{\mathbb{C}}\cap\mathbb{R}^n$, the restriction of the complex Zariski closure to the reals.
\end{lemma}

\subsubsection{The Nullstellensatz: Connecting geometry and algebra}
The following well-known result is fundamental for going back and forth between varieties and ideals 
over $\bbc$:

\begin{theorem}[{Hilbert's Nullstellensatz  \cite[Thm. 2, p. 179]{clo1_4};}{ Table \ref{galois}}] \label{nss}
Polynomial $p\in \mathbb{C}[\mathbf{x}]$ vanishes at every point in the complex zero set of  $I\unlhd \mathbb{C}[\mathbf{x}]$ if and only if $p\in\sqrt{I}$. Equivalently, $\vi{\bbc}{\vs[\bbc]{I}}=\sqrt{I}$ for $I \unlhd \cx$.
\end{theorem}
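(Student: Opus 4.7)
The plan is to prove the two containments $\sqrt{I} \subseteq \vi{\bbc}{\vs[\bbc]{I}}$ and $\vi{\bbc}{\vs[\bbc]{I}} \subseteq \sqrt{I}$ separately. The first is immediate from the definitions: if $p^N \in I$, then at every $\mathbf{a} \in \vs[\bbc]{I}$ we have $p(\mathbf{a})^N = 0$, forcing $p(\mathbf{a}) = 0$ and hence $p \in \vi{\bbc}{\vs[\bbc]{I}}$.

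For the reverse containment I would proceed in two steps. First, establish the \emph{weak Nullstellensatz}: every proper ideal $J \lhd \mathbb{C}[\mathbf{x}]$ has nonempty complex zero set. Extend $J$ to a maximal ideal $\mathfrak{m}$ (which exists by Hilbert's basis theorem \ref{hbt} and Noetherian ascending chains, or by Zorn's lemma). The quotient $\mathbb{C}[\mathbf{x}]/\mathfrak{m}$ is then a field that is a finitely generated $\mathbb{C}$-algebra. By Zariski's lemma, any such field is a finite algebraic extension of $\mathbb{C}$, and since $\mathbb{C}$ is algebraically closed the extension must be trivial: $\mathbb{C}[\mathbf{x}]/\mathfrak{m} \cong \mathbb{C}$. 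The images of $x_1,\ldots,x_n$ under this isomorphism then assemble into a point $\mathbf{a} \in \mathbb{C}^n$ that annihilates every polynomial in $\mathfrak{m}$ and hence in $J$.

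Next, I would apply the \emph{Rabinowitsch trick} to deduce the strong form. Suppose $p \in \vi{\bbc}{\vs[\bbc]{I}}$ and introduce a fresh variable $y$. Let $J$ be the ideal of $\mathbb{C}[\mathbf{x},y]$ generated by $I$ together with $1 - yp$. Any $(\mathbf{a},b) \in \vs[\bbc]{J}$ would have $\mathbf{a}\in \vs[\bbc]{I}$, hence $p(\mathbf{a}) = 0$, and simultaneously $1 - bp(\mathbf{a}) = 1 \neq 0$, which is absurd. Thus $\vs[\bbc]{J} = \emptyset$, and the weak Nullstellensatz forces $J = \mathbb{C}[\mathbf{x},y]$. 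Writing $1 = \sum_i g_i(\mathbf{x},y)f_i(\mathbf{x}) + h(\mathbf{x},y)(1 - yp(\mathbf{x}))$ with $f_i \in I$, I would pass to the localization at $p$ (formally, substitute $y = 1/p$) to kill the last term, then clear denominators by multiplying through by a sufficiently high power $p^N$, obtaining $p^N \in I$ and thus $p \in \sqrt{I}$.

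The main obstacle is the weak Nullstellensatz itself, whose substance is Zariski's lemma on finitely generated algebras over a field that happen themselves to be fields. That is the real commutative-algebra content of the theorem; everything else (the easy direction, the Rabinowitsch trick, and the reduction of the strong form to the weak form) is essentially bookkeeping once Zariski's lemma is in hand.
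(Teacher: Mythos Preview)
Your proof is correct and follows the classical route (weak Nullstellensatz via Zariski's lemma, then the Rabinowitsch trick). However, the paper does not actually prove this theorem: it is stated with a citation to \cite[Thm.~2, p.~179]{clo1_4} and treated as standard background, so there is no proof in the paper to compare against.
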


Real algebraic geometry requires a more refined (but computationally less tractable) notion than the radical, the \emph{real radical}: 

\begin{definition}
Let $I \unlhd \mathbb{R}[\mathbf{x}]$. The \emph{real radical} $\sqrt[\mathbb{R}]{I}$ of $I$ is the set of all polynomials $p\in \rx$ such that for some natural number $m$ and polynomials $g_1,g_2,\ldots, g_s \in \rx$, the sum $p^{2m}+g_1^2 + g_2^2+\cdots +g_s^2$ belongs to $I$. 
\end{definition}
\begin{remark}\label{sos}
An expression of the form $g_1^2 + g_2^2+\cdots +g_s^2$ is a \emph{sum-of-squares}. Sums-of-squares are important for real algebraic geometry because $g(\bx)=(g_1(\bx))^2 + (g_2(\bx))^2+\cdots +(g_s(\bx))^2$ is zero at $\mathbf{a}\in \bbr^n$ if and only if $g_i(\mathbf{a})=0$ for all $1\leq i\leq s$.
\end{remark}
The following straightforward properties follow immediately from the definitions and Remark \ref{sos}. We may use them without comment but record them here for completeness.
\begin{proposition}\label{radprop}
\phantom{ghost}
\begin{enumerate}
\item If $I\unlhd \rx$, then $\rrad{I}$ is a radical ideal that contains $\sqrt{I}$ (possibly strictly) .
\item If $I\unlhd \cx$ (respectively, $\rx$), then $\sqrt{\sqrt{I}}=\sqrt{I}$ (respectively, $\rrad{\rrad{I}}=\rrad{I}$  and $\rrad{\sqrt{I}}=\rrad{I}$).
\item If $I,J\unlhd \,\cx$ (respectively, $\rx$) and $I\subseteq J \subseteq \sqrt{I}$ (respectively, $\rrad{I}$), then

\begin{enumerate} 
\item  $\sqrt{J}=\sqrt{I}$ (respectively, $\rrad{J}=\rrad{I}$) and
\item $\vsc{I}=\vsc{J}=\vsc{\sqrt{I}}$ (respectively, $\vsr{I}=\vsr{J}=\vsr{\rrad{I}}$).
\end{enumerate}
\end{enumerate}
\end{proposition}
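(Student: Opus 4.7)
The plan is to verify each claim by direct manipulation of the definitions, invoking Remark \ref{sos} whenever sums-of-squares arithmetic is needed. Every item is algebraic in nature, so the work reduces to bookkeeping with exponents and sum-of-squares certificates; I treat them in the order listed.

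For item 1, I first check that $\rrad{I}$ is an ideal. Closure under multiplication by $g\in\rx$ is immediate: from $p^{2m}+\sum g_i^2\in I$, multiplying by $g^{2m}$ yields $(gp)^{2m}+\sum(gg_i)^2\in I$. Closure under addition is the subtle point; given $p,q\in\rrad{I}$ with certificates of orders $m,n$, one expands $(p+q)^{2(m+n)}$ binomially and adjusts modulo $I$ to produce a sum-of-squares certificate for $p+q$, or equivalently cites the standard fact (e.g., Bochnak-Coste-Roy) that $\rrad{I}$ is an ideal. Radicality is direct: if $p^N\in\rrad{I}$ then $(p^N)^{2m}+\sum g_i^2\in I$ rewrites as $p^{2Nm}+\sum g_i^2\in I$. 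The containment $\sqrt{I}\subseteq\rrad{I}$ follows because $p^N\in I$ yields $p^{2m}\in I$ for any $2m\geq N$, so $p^{2m}+0\in I$ is a valid certificate.

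For item 2, $\sqrt{\sqrt{I}}=\sqrt{I}$ holds because $\sqrt{I}$ is already radical. For the other two equalities, the inclusions $\rrad{I}\subseteq\rrad{\sqrt{I}}$ and $\rrad{I}\subseteq\rrad{\rrad{I}}$ are trivial from $I\subseteq\sqrt{I}\subseteq\rrad{I}$. For the reverse inclusions, suppose $p\in\rrad{J}$ where $J$ is either $\sqrt{I}$ or $\rrad{I}$; then $s:=p^{2m}+\sigma\in J$ for some sum of squares $\sigma$, and in either case some power $s^N$ lies in $I$ up to a sum of squares, namely $s^N+\tau\in I$ (taking $\tau=0$ when $J=\sqrt{I}$). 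Expanding $s^N=(p^{2m}+\sigma)^N=p^{2mN}+\rho$ binomially, the tail $\rho=\sum_{k<N}\binom{N}{k}p^{2mk}\sigma^{N-k}$ is a sum of squares, since each summand is a positive-integer multiple of $(p^{mk})^2\cdot\sigma^{N-k}$, itself a product of a square and a sum of squares. Thus $p^{2mN}+(\rho+\tau)\in I$ is a sum-of-squares certificate for $p\in\rrad{I}$.

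For item 3, applying $\sqrt{\cdot}$ to $I\subseteq J\subseteq\sqrt{I}$ gives $\sqrt{I}\subseteq\sqrt{J}\subseteq\sqrt{\sqrt{I}}=\sqrt{I}$, proving (a); the real case is identical using item 2. For (b), zero sets reverse inclusions, so $\vsc{I}\supseteq\vsc{J}\supseteq\vsc{\sqrt{I}}$, and the chain collapses because every complex zero of $I$ vanishes on $\sqrt{I}$ by definition of the radical. The real analogue uses that if $\mathbf{a}\in\vsr{I}$ and $p\in\rrad{I}$ with $p^{2m}+\sum g_i^2\in I$, evaluation at $\mathbf{a}$ produces a vanishing real sum of squares whose summands must each be zero by Remark \ref{sos}, so $p(\mathbf{a})=0$ and hence $\vsr{\rrad{I}}=\vsr{I}$. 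The main technical obstacle is verifying that $\rrad{I}$ is closed under addition; the remaining claims are short formal consequences of the definitions, the inclusion-reversing behavior of $\mathbf{V}_K$, and the sum-of-squares observation of Remark \ref{sos}.
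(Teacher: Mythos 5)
The paper does not actually give a proof of this proposition—it simply asserts that the properties ``follow immediately from the definitions and Remark \ref{sos}''—so your detailed verification is a welcome expansion of what the paper leaves implicit. Your argument is correct, modulo one small arithmetic slip in item 1: after multiplying $p^{2m}+\sum g_i^2$ by $g^{2m}$ you obtain $(gp)^{2m}+\sum (g^m g_i)^2$, not $(gp)^{2m}+\sum(gg_i)^2$. Everything else checks out, including the non-obvious step in item 2 where you verify that the binomial tail $\rho=\sum_{k<N}\binom{N}{k}p^{2mk}\sigma^{N-k}$ is a sum of squares (each term is a positive integer times a perfect square $(p^{mk})^2$ times a power $\sigma^{N-k}$, and products of sums of squares are sums of squares), and the observation that $2mN$ is even because $2m$ is. For item 1's closure under addition you are right that a genuine binomial/SOS argument (or a citation such as Bochnak-Coste-Roy) is needed; the paper's Remark \ref{sos} alone does not deliver it, and you are honest in flagging this as the one point that is not a pure definitional unwinding. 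Items 3(a) and 3(b) are handled exactly as the intended routine consequences: apply the idempotence from item 2 to the sandwich $I\subseteq J\subseteq\sqrt{I}$, and use inclusion reversal of $\mathbf{V}$ together with the elementary facts that complex zeros of $I$ kill $\sqrt{I}$ and (via Remark \ref{sos}) real zeros of $I$ kill $\rrad{I}$.
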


A much deeper result known as the \emph{real Nullstellensatz} tells us that $\sqrt[\mathbb{R}]{I}$ is the largest collection of polynomials over $\bbr$ that has the same \emph{real} zero set as $I$:

\begin{theorem}[{Real Nullstellensatz \cite[Cor. 4.1.8]{bochnak1998real};}{ Table \ref{galois}}]\label{rnss}
Polynomial $p\in \mathbb{R}[\mathbf{x}]$ vanishes at every point in the real zero set of $I\unlhd \mathbb{R}[\mathbf{x}]$ if and only if $p\in\rrad{I}$. Equivalently, $\vi{\bbr}{\vs[\bbr]{I}}=\rrad{I} $ for $I\unlhd \rx$. 

\end{theorem}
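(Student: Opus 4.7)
The easy direction ($\Leftarrow$) is purely algebraic and mirrors the sum-of-squares observation in Remark \ref{sos}. If $p\in\rrad{I}$ then by definition there exist $m\in\bbn$ and $g_1,\ldots,g_s\in\rx$ with $p^{2m}+g_1^2+\cdots+g_s^2\in I$. For any $\mathbf{a}\in\vs[\bbr]{I}$, evaluation gives $p(\mathbf{a})^{2m}+\sum_i g_i(\mathbf{a})^2 = 0$ in $\bbr$; since real squares are nonnegative, every term must vanish, forcing $p(\mathbf{a})=0$.

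The hard direction ($\Rightarrow$) is where real algebra diverges substantially from the complex case, and the plan is to proceed by contrapositive: assuming $p\notin\rrad{I}$, I want to produce some $\mathbf{a}\in\vs[\bbr]{I}$ with $p(\mathbf{a})\neq 0$. First, I would use Zorn's Lemma to enlarge $\rrad{I}$ to an ideal $P$ maximal among \emph{real} ideals excluding $p$, where an ideal $J$ is real if $\sum h_i^2\in J$ forces each $h_i\in J$. A standard argument---attempting to enlarge $P$ via $f$ or $g$ whenever $fg\in P$ and extracting a contradiction through sum-of-squares closure---shows that any such $P$ must be prime. Then $\rx/P$ is an integral domain whose fraction field $K$ is \emph{formally real}, i.e., $-1$ is not a sum of squares in $K$, because $P$ is a real ideal.

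Next, by Artin--Schreier, the formally real field $K$ admits an ordering that extends to a real closure $R\supseteq K$. In $R^n$, the tuple of residue classes $(\overline{x}_1,\ldots,\overline{x}_n)$ is a solution of the system $I=0$ at which the image of $p$ is nonzero, giving a point in $\vs[R]{I}\setminus \vs[R]{p}$. The final step is to transfer this $R$-solution back to a genuine $\bbr$-point, which I would handle either by the Artin--Lang homomorphism theorem or, equivalently, Tarski's transfer principle for real closed fields; both guarantee that an existential first-order statement with $\bbr$-coefficients true over some real closed extension of $\bbr$ is already true over $\bbr$ itself.

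The main obstacle is precisely this transfer step. The algebraic maneuvers of constructing a real prime ideal and its formally real quotient field are relatively routine once one has the Artin--Schreier machinery, but producing an honest $\bbr$-point from an abstract real-closed extension is a deep theorem in its own right and the fundamental reason the real Nullstellensatz lies much deeper than its complex counterpart (Theorem \ref{nss}). Rather than reproving the transfer principle, I would cite Bochnak--Coste--Roy \cite{bochnak1998real} as the theorem statement already does, and emphasize that the model-theoretic content of this step, not the construction of a real maximal ideal, is the irreducible difficulty.
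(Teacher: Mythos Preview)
The paper does not prove this theorem at all; it is stated as background and attributed directly to \cite[Cor.~4.1.8]{bochnak1998real}, with no argument supplied. Your sketch is correct and follows the standard route taken in that reference (real prime avoiding $p$, formally real residue field, Artin--Schreier ordering, real closure, then Artin--Lang/Tarski transfer), so there is nothing substantive to compare---you have simply filled in what the paper chose to cite.
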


With the Nullstellensatz and real Nullstellensatz, we may give an ``algebra-geometry dictionary" connecting ideals and algebraic sets :
\begin{corollary}[{Algebra-geometry dictionary \cite[Thm.~7, p.~183]{clo1_4};}{ Table \ref{galois}}]\label{agdict}
Let $A,B\subseteq \cx$ (respectively, $\rx$). Then $\vsc{A}\subseteq \vsc{B}$ (respectively, $\vsr{A}\subseteq \vsr{B}$) if and only if $\sqrt{(A)}\supseteq \sqrt{(B)}$ (respectively, $\rrad{(A)}\supseteq \rrad{(B)}$).
\end{corollary}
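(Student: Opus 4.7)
The plan is to prove each direction as a direct consequence of the relevant Nullstellensatz together with elementary properties of zero sets and radicals (or real radicals). I would handle the complex case first and then observe that the real case follows by the same argument with $\sqrt{\,\cdot\,}$ replaced by $\rrad{\,\cdot\,}$ and $\vsc{\cdot}$ replaced by $\vsr{\cdot}$, using Theorem \ref{rnss} in place of Theorem \ref{nss}.

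For the forward implication in the complex case, I would assume $\vsc{A}\subseteq \vsc{B}$ and take any $p\in\sqrt{(B)}$. By Hilbert's Nullstellensatz (Theorem \ref{nss}), $\sqrt{(B)}=\vi{\bbc}{\vsc{B}}$, so $p$ vanishes on $\vsc{B}$, hence on the smaller set $\vsc{A}$. Applying the Nullstellensatz again gives $p\in\vi{\bbc}{\vsc{A}}=\sqrt{(A)}$. Thus $\sqrt{(B)}\subseteq\sqrt{(A)}$, as required.

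For the reverse implication, I would assume $\sqrt{(B)}\subseteq\sqrt{(A)}$. Since $(B)\subseteq\sqrt{(B)}$, in particular $B\subseteq\sqrt{(A)}$. Now fix any $\mathbf{a}\in\vsc{A}$. Every generator of $(A)$, and hence every element of $(A)$ itself, vanishes at $\mathbf{a}$; and any $q\in\sqrt{(A)}$ satisfies $q^N\in (A)$ for some $N$, so $q(\mathbf{a})^N=0$ and thus $q(\mathbf{a})=0$. Applying this to each $p\in B\subseteq\sqrt{(A)}$ shows that $\mathbf{a}\in\vsc{B}$, so $\vsc{A}\subseteq \vsc{B}$. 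The real case is identical, using that every element of $\rrad{(A)}$ also vanishes on $\vsr{A}$ (which is where Remark \ref{sos} and Proposition \ref{radprop} come in: if $p^{2m}+\sum g_i^2 \in (A)$ and $\mathbf{a}\in\vsr{A}$, then $p(\mathbf{a})^{2m}+\sum g_i(\mathbf{a})^2=0$ forces $p(\mathbf{a})=0$).

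I do not expect any significant obstacle; the statement is essentially a restatement of the Nullstellensatz/real Nullstellensatz in terms of inclusions of varieties, together with the elementary fact that membership in $\sqrt{(A)}$ (respectively $\rrad{(A)}$) forces vanishing on $\vsc{A}$ (respectively $\vsr{A}$). The only mild subtlety is making sure to handle $B$ (a subset of polynomials) rather than the ideal $(B)$, but this is immediate since $\vsc{B}=\vsc{(B)}$ and similarly in the real case.
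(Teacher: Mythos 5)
Your proof is correct and is the standard Nullstellensatz argument, which is exactly what the paper has in mind: the paper does not reprove the corollary but cites the complex case from the reference and remarks that the real case follows by the same argument with the real Nullstellensatz in place of the Nullstellensatz, precisely the substitution you make. Your handling of the subset-versus-ideal point (via $\vsc{B}=\vsc{(B)}$) and of the reverse implication via $p^{2m}+\sum g_i^2\in(A)$ forcing $p(\mathbf{a})=0$ is the right level of care.
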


The cited result only deals with the complex case, but the argument is identical given the real Nullstellensatz in place of the Nullstellensatz.

\subsubsection{Inequations and Saturation}
Just as the definition of an ideal corresponds to the behavior of equations, we need an operation on ideals that reflects the presence of \emph{inequations}. 

\begin{definition}
Let $K$ be a field. Let $S$ be a subset of $K[\bx]$ that does not contain $0$. By $S^\infty$ we denote the \emph{multiplicative set generated by} $S$ (i.e., the set containing $1$ and every finite product of elements from $S$. In general, a subset of $K[\bx]$ that does not contain $0$, contains $1$, and is closed under multiplication is called a \emph{multiplicative set}.). If $I \unlhd K[\bx]$, the \emph{saturation} of $I$ by $S$ is the set of all $p\in K[\bx]$ such that for some $s\in S^\infty$ we have $sp\in I$. We write $I:S^\infty$ to denote the saturation of $I$ by $S$. \end{definition}

We think of the elements of $I$ as equations and the elements of $S$ as inequations; i.e., for all $p\in I$ and $g\in S$ we include $p(\bx)=0,g(\bx)\neq 0$ in the system of simultaneous equations and inequations. Lemma \ref{satgeom} (2) makes this interpretation precise. 

Note that the saturation $\sat{I}$ is an ideal containing $I$ and equals the entire polynomial ring if and only if $S^\infty$ contains some element of $I$. Also, if $S$ is already a multiplicative set, then $S=S^\infty$. When we write $I:S^\infty$ we automatically assume that $I\unlhd K[\bx]$ even if we do not explicitly state this. If $S=\emptyset$, then $\sat{I}=\sat[\{1\}]{I}=I$.

Usually the saturation adds to or ``saturates" $I$ with new elements, but sometimes the ideal is unchanged: 
\begin{lemma}\label{satprime}
If $I$ is a prime ideal containing no element of $S$, then $\sat{I}=I$. 
\end{lemma}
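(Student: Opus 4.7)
The plan is a short direct proof using the definitions of saturation and prime ideal together with the hypothesis.

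First I would observe the easy containment $I \subseteq \sat{I}$: given $p \in I$, since $1 \in S$ (by the definition of a multiplicative set) and $1 \cdot p = p \in I$, we get $p \in \sat{I}$ directly from the definition of saturation.

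For the reverse containment $\sat{I} \subseteq I$, I would take an arbitrary $p \in \sat{I}$. By definition, there exists some $s \in S$ with $sp \in I$. Since $I$ is a prime ideal, the definition of primality forces $s \in I$ or $p \in I$. The hypothesis that $I$ contains no element of $S$ rules out $s \in I$, so we must have $p \in I$.

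There is no real obstacle here; the statement is essentially a direct unpacking of the two definitions (saturation and primality) against the hypothesis. The only thing to be careful about is noting the use of $1 \in S$ in the forward containment, which is where the definition of a multiplicative set plays a role.
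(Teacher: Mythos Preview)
Your proof is correct and follows essentially the same argument as the paper: both directions are handled by unpacking the definitions, with primality plus $I\cap S=\emptyset$ giving the nontrivial containment $\sat{I}\subseteq I$. The paper simply declares the containment $I\subseteq \sat{I}$ ``immediate'' where you spell out the role of $1\in S$, but otherwise the proofs are identical.
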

\begin{proof}
$\subseteq$: By definition we have $sp\in I$ for some $s\in S^\infty$ if $p\in \sat{I}$, but for $I$ prime either $s$ or $p$ must then belong to $I$. It must be $p\in I$ because $s\notin I$ lest $1\in I$ or some product of elements of $S$ belong to $I$. This is impossible for $I$ prime since $I\cap S = \emptyset$ by assumption. 

\smallskip

\noindent $\supseteq$: Immediate.
\end{proof}

Our main interest is the case where $S=\{s_1, s_2, \ldots,s_m \}$ is finite. In this case it is equivalent to saturate by the single element defined by the product of the $s_j$. Let $\mathrm{\Pi} S := \prod_j s_j = s_1s_2\cdots s_m$. 
We write $\sat[(\mathrm{\Pi}  S)]{I}$ to denote the saturation of $I$ by the multiplicative set $\{1, \mathrm{\Pi}  S, (\mathrm{\Pi}  S)^2, (\mathrm{\Pi} S)^3, \ldots\}$ that is generated by the single element $\mathrm{\Pi}  S$ (note that $\mathrm{\Pi}  S$ is not 0 because $0\notin S$). 

The solution set $\vs{I}\setminus \vs{\mathrm{\Pi} S}$ (points that make all elements of $I$, but \emph{no} elements of $S$, vanish) is closely related to $\sat{I}$. The following lemma will help prove our central results in Section \ref{regsysresults} by making the connection between saturation ideals, equations, and inequations (and hence constructible sets):


\begin{lemma}[{Appendix}]\label{satgeom}
Let $I \,\unlhd \, \mathbb{C}[\mathbf{x}]$ and let $0\notin S\subseteq \bbc[\bx]$ be finite. Then
\begin{enumerate}
\item $\sat{I} =\sat[(\mathrm{\Pi}  S)]{I}$ and 
 \item $\mathbf{V}_{\mathbb{C}}(I:S^{\infty})=\vsc{\sat[(\mathrm{\Pi}  S)]{I}}=\overline{\mathbf{V}_{\mathbb{C}}(I)\setminus \mathbf{V}_{\mathbb{C}}(\mathrm{\Pi}  S)}^{\mathbb{C}}$.
 \end{enumerate}
\end{lemma}

Lemma \ref{satgeom} (1) holds for any field. In Lemma \ref{satgeom} (2) we work over $\bbc$ because the proof uses the Nullstellensatz. The algebraic version of Lemma \ref{satgeom} (2) is:


\begin{theorem}[{Hilbert's Nichtnullstellensatz;} {Table \ref{galois}}; Proof in appendix]\label{algsatrad}
Let $A\subseteq \cx$, and let $0\notin S\subseteq \cx$ be finite. A polynomial $p\in\cx$ vanishes at every complex solution of $(A=0,S\neq 0)$ if and only if $p\in\sqrt{\sat{(A)}}$.
\end{theorem}

The next result will help prove correctness (Theorem \ref{triangcorr}) of an important elimination algorithm in Section \ref{rga}.

\begin{theorem}[{Splitting, algebraic case \cite[Cor. 5]{rga}}]\label{algsplitrad}
Let $A\subseteq \cx$, and let $0\notin S\subseteq \cx$ be finite. If $h\in \cx \setminus \{0\}$, then \[\sqrt{\sat{(A)}}=\sqrt{\sat{(A,h)}}\cap \sqrt{\sat[(S\cup\{h\})]{(A)}}.\]
\end{theorem}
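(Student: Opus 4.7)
The plan is to translate each radical saturation ideal into a geometric statement via Hilbert's Nichtnullstellensatz (Theorem \ref{algsatrad}), then reduce the claim to a simple set-theoretic decomposition determined by the sign of $h$.

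Concretely, the first step is to observe that, by Theorem \ref{algsatrad}, each of the three ideals in the identity admits a clean characterization in terms of vanishing on a constructible set:
\begin{align*}
\sqrt{\sat{(A)}} &= \vi{\bbc}{X}, \quad X = \{\mathbf{a}\in\bbc^n : p(\mathbf{a})=0 \ \forall p\in A,\ s(\mathbf{a})\neq 0 \ \forall s\in S\},\\
\sqrt{\sat{(A,h)}} &= \vi{\bbc}{X_0}, \quad X_0 = \{\mathbf{a}\in X : h(\mathbf{a}) = 0\},\\
\sqrt{\sat[(S\cup\{h\})]{(A)}} &= \vi{\bbc}{X_1}, \quad X_1 = \{\mathbf{a}\in X : h(\mathbf{a}) \neq 0\}.
\end{align*}
(For the last one, note that saturating by the multiplicative set generated by $S\cup\{h\}$ means we are simultaneously imposing $s\neq 0$ for $s\in S$ and $h\neq 0$.)

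The second step is the trivial set-theoretic observation that for any $\mathbf{a}\in X$, either $h(\mathbf{a})=0$ or $h(\mathbf{a})\neq 0$, hence $X = X_0 \cup X_1$. Therefore a polynomial vanishes on $X$ if and only if it vanishes on both $X_0$ and $X_1$, which gives $\vi{\bbc}{X} = \vi{\bbc}{X_0} \cap \vi{\bbc}{X_1}$. Combining with the characterizations above yields the claimed equality.

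There is not really a serious obstacle here; the work is entirely bookkeeping. The only point requiring a bit of care is confirming that $0\notin S\cup\{h\}$ (which holds since $h\neq 0$ by hypothesis and $0\notin S$) so that Theorem \ref{algsatrad} applies to the third ideal, and noting that $S$ may be empty in the first term on the right-hand side, in which case the saturation becomes trivial but the argument still goes through by interpreting ``$S\neq 0$'' as the vacuous condition.
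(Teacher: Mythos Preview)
Your proof is correct and follows the same approach as the paper: invoke Theorem~\ref{algsatrad} to identify each radical saturation with the vanishing ideal of the corresponding constructible set, then use the trivial decomposition of the solution locus according to whether $h$ vanishes. The paper's proof is just a one-line pointer to exactly this argument.
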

\begin{proof}
The claim follows from Theorem \ref{algsatrad} and the fact that every point either makes $h$ vanish or not.
\end{proof} 
In Section \ref{regsysresults} we will need the following fact about zero sets of saturations and extending ideals of $\rx$ to $\cx$:
\begin{lemma}\label{restrictsat}
For $I\unlhd \rx$ and multiplicative set $S\subseteq \rx$, $\mathbf{V}_{\mathbb{C}}(I_{\mathbb{C}}: S^{\infty})\cap \mathbb{R}^n =  \mathbf{V}_{\mathbb{C}}(I: S^{\infty})\cap \mathbb{R}^n$.
\end{lemma}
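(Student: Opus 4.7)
The plan is to prove both containments separately. The $(\subseteq)$ direction is the easy one and actually holds globally, not just after intersecting with $\mathbb{R}^n$: since $S \subseteq \mathbb{R}[\mathbf{x}] \subseteq \mathbb{C}[\mathbf{x}]$ and $I \subseteq I_{\mathbb{C}}$, any $p \in I : S^{\infty}$ satisfies $sp \in I \subseteq I_{\mathbb{C}}$ for some $s \in S$, so $p \in I_{\mathbb{C}} : S^{\infty}$. Hence $I : S^{\infty} \subseteq I_{\mathbb{C}} : S^{\infty}$, and taking complex zero sets reverses the inclusion. Intersecting with $\mathbb{R}^n$ gives the $(\subseteq)$ containment of the lemma.

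For the $(\supseteq)$ direction I would take $\mathbf{a} \in \mathbf{V}_{\mathbb{C}}(I : S^{\infty}) \cap \mathbb{R}^n$ and an arbitrary $p \in I_{\mathbb{C}} : S^{\infty}$, and show $p(\mathbf{a}) = 0$. By definition there is some $s \in S$ (so $s \in \mathbb{R}[\mathbf{x}]$) with $sp \in I_{\mathbb{C}}$. Write $p = p_1 + ip_2$ where $p_1, p_2 \in \mathbb{R}[\mathbf{x}]$ are its real and imaginary parts. Then $sp = sp_1 + i(sp_2)$, with $sp_1, sp_2 \in \mathbb{R}[\mathbf{x}]$.

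The key step is to conclude that $sp_1$ and $sp_2$ individually lie in $I$. This follows from \rref{lem}{linfield}: both $sp_1$ and $sp_2$ are real polynomials that belong to $I_{\mathbb{C}}$ (either as the real/imaginary part of an element of $I_{\mathbb{C}}$, using that $I_{\mathbb{C}}$ is closed under multiplication by $1$ and $-i$, or by decomposing a representation $sp = \sum_j (a_j + i b_j) g_j$ with $g_j \in I$ and $a_j, b_j \in \mathbb{R}[\mathbf{x}]$ into its real and imaginary parts). Hence by \rref{lem}{linfield} both $sp_1$ and $sp_2$ lie in $I$, which means $p_1, p_2 \in I : S^{\infty}$.

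The conclusion is immediate: since $\mathbf{a} \in \mathbb{R}^n$ is a common zero of $I : S^{\infty}$, we get $p_1(\mathbf{a}) = p_2(\mathbf{a}) = 0$, and so $p(\mathbf{a}) = p_1(\mathbf{a}) + i \, p_2(\mathbf{a}) = 0$. Since $p$ was arbitrary, $\mathbf{a} \in \mathbf{V}_{\mathbb{C}}(I_{\mathbb{C}} : S^{\infty})$. The only subtle step is the real/imaginary decomposition argument for membership in $I$; everything else is purely definitional. The hypothesis $\mathbf{a} \in \mathbb{R}^n$ is essential for this direction because it lets us pass from the vanishing of a real polynomial at a real point to the vanishing of its complexification.
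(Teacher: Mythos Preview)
Your proof is correct and follows essentially the same approach as the paper's, just with more detail: the paper's proof for $(\supseteq)$ simply asserts that the real and imaginary parts of $p$ belong to $I:S^\infty$ and concludes, whereas you spell out why (via Lemma~\ref{linfield}) this holds.
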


\begin{proof}
$\subseteq$: Clear because $I: S^{\infty}\subseteq I_{\mathbb{C}}: S^{\infty}$. 

\smallskip

\noindent $\supseteq$: Assume $\mathbf{a}\in  \mathbf{V}_{\mathbb{C}}(I: S^{\infty})\cap \mathbb{R}^n$ and $p\in  I_{\mathbb{C}}: S^{\infty}$. Note that the real and imaginary parts of $p$ both belong to $I: S^{\infty}$. Hence both vanish at $\mathbf{a}$ and $p(\mathbf{a})=0$ as needed.
\end{proof}

\subsection{Totally Real Varieties: Keeping complex things real} \label{tr} 

As suggested earlier, real radicals are usually considered computationally intractable \cite{becker1993computation, Boreale20}. A recurring theme of our work in Sections \ref{regsysresults} and \ref{rga} is that we can sometimes circumvent the complications of real algebraic geometry by working over $\bbc$. For instance, we would like to use, without loss of precision, complex radical ideals instead of unwieldy real radical ideals. The following is an important condition that, if satisfied, makes this possible.

\begin{definition}\label{trdef}
Let $X \subseteq \bbc^n$ be a complex variety that is defined over $\bbr$. (That is, $X=\vsc{A}$ for some $A\subseteq \rx$.) We say $X$ is \emph{totally real} if the real points of $X$ are $\bbc$-Zariski dense in $X$; i.e., $\overline{X(\bbr)}^{\bbc}= \cl{X\cap \bbr^n}{\bbc}=X$.
\end{definition}
The key intuition about a totally real variety $X$ is that it has ``enough real points" for the real variety $X(\bbr)$ to closely resemble $X$. More precisely, the real points are not contained in a proper complex subvariety of $X$ and in that sense are algebraically indistinguishable from the strictly complex points. Proposition \ref{realcxdim} below characterizes this phenomenon in terms of dimension. 
If $X$ is totally real, we can often transfer simpler proofs or more efficient algorithms for $X$ to the real variety $X(\bbr)$ \cite{sottile19}. 

The following fact illustrates that being totally real allows us to replace real radicals with complex radical ideals:

\begin{proposition}[{\cite[Prop. 1.3]{sander1996aspects}}]\label{trrad}
$(A)\unlhd \rx$ is real radical (i.e.,$ \rrad{(A)}=(A)$) if and only if $(A)$ is radical and $\vs[\bbc]{A}$ is totally real.
\end{proposition}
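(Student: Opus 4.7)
The proof is essentially a bridge between the real and complex Nullstellensätze, using total realness to make the two radicals coincide. I would prove each direction separately.

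\textbf{Forward direction.} Assume $\rrad{(A)}=(A)$. That $(A)$ is radical is immediate: if $p^N\in(A)$ then $p^{2N}\in (A)$, which is a sum-of-squares expression showing $p\in\rrad{(A)}=(A)$. For total realness I must show $\cl{\vs[\bbc]{A}\cap\bbr^n}{\bbc}=\vs[\bbc]{A}$, i.e.\ (since $A\subseteq\rx$ gives $\vs[\bbc]{A}\cap\bbr^n=\vs[\bbr]{A}$) that $\cl{\vs[\bbr]{A}}{\bbc}=\vs[\bbc]{A}$. The key step is to prove $\vi{\bbc}{\vs[\bbr]{A}}\subseteq (A)_{\bbc}$. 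For any $p\in\vi{\bbc}{\vs[\bbr]{A}}$, decompose $p=p_1+ip_2$ with $p_1,p_2\in\rx$; since $\vs[\bbr]{A}\subseteq\bbr^n$, both real and imaginary parts must vanish on $\vs[\bbr]{A}$, so by the real Nullstellensatz (Theorem \ref{rnss}) $p_1,p_2\in\rrad{(A)}=(A)$, hence $p\in(A)_{\bbc}$. Combined with the obvious inclusion $(A)_{\bbc}\subseteq \vi{\bbc}{\vs[\bbr]{A}}$, this gives equality. Then $\vs[\bbc]{A}=\vs[\bbc]{(A)_\bbc}=\vs[\bbc]{\vi{\bbc}{\vs[\bbr]{A}}}=\cl{\vs[\bbr]{A}}{\bbc}$ by definition of Zariski closure, so $\vs[\bbc]{A}$ is totally real.

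\textbf{Backward direction.} Assume $(A)$ is radical and $\vs[\bbc]{A}$ is totally real. I want $\rrad{(A)}\subseteq(A)$. Pick $p\in\rrad{(A)}$. By the real Nullstellensatz $p$ vanishes on $\vs[\bbr]{A}$; viewing $p\in\rx\subseteq\cx$ and applying Lemma \ref{vanishclos} together with the total realness hypothesis, $p$ also vanishes on $\cl{\vs[\bbr]{A}}{\bbc}=\vs[\bbc]{A}$. Hilbert's Nullstellensatz (Theorem \ref{nss}) then gives $p\in\sqrt{(A)_{\bbc}}$, so $p^N\in(A)_{\bbc}$ for some $N$. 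Since $p^N\in\rx$, Lemma \ref{linfield} yields $p^N\in(A)$, whence $p\in\sqrt{(A)}=(A)$ because $(A)$ is radical.

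\textbf{Main obstacle.} Neither direction presents a deep obstacle once the Nullstellensätze are in hand; the only subtle point is the bridge $\vi{\bbc}{\vs[\bbr]{A}}=(A)_{\bbc}$ in the forward direction, which requires carefully separating real and imaginary parts and invoking the real Nullstellensatz on each. In the reverse direction, the role of the radicality assumption on $(A)$ is exactly to turn ``$p^N\in(A)$'' into ``$p\in(A)$'' after descending from $\cx$ to $\rx$ via Lemma \ref{linfield}; without this hypothesis, total realness alone only recovers $\sqrt{(A)}$, not $(A)$ itself.
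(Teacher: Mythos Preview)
Your proof is correct. The paper itself does not supply a proof of this proposition; it simply cites \cite{sander1996aspects}. Your argument is self-contained and uses precisely the machinery the paper has set up: the real and complex Nullstellens\"atze (Theorems~\ref{rnss} and~\ref{nss}), Lemma~\ref{linfield} for descending from $(A)_{\bbc}$ to $(A)$, and the real/imaginary-part decomposition trick that the paper employs elsewhere (e.g., in the proofs of Lemma~\ref{restrictclos} and Proposition~\ref{alltr}). In fact, your forward direction is essentially the same computation as the paper's proof of Proposition~\ref{alltr}, which shows that $\vsc{B}$ is totally real whenever $(B)$ is real radical.

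One minor wording issue: in the forward direction you say ``$p^{2N}\in(A)$, which is a sum-of-squares expression.'' What you mean is that $p^{2N}+0\in(A)$ fits the defining form $p^{2m}+\sum g_i^2\in I$ with $m=N$ and an empty sum; the phrase ``sum-of-squares expression'' is slightly misleading. Alternatively, you could simply invoke Proposition~\ref{radprop}(1), which already records $\sqrt{I}\subseteq\rrad{I}$, so $\rrad{(A)}=(A)$ immediately forces $\sqrt{(A)}=(A)$.
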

We need an alternative characterization that is more amenable to computation. To state it, we require the notions of \emph{irreducible component}, \emph{dimension of a variety}, and \emph{nonsingular point}.
\begin{definition}
Let $X\subseteq \bbc^n$ be a complex variety (not necessarily defined over $\bbr$). An \emph{irreducible component} $Y$ of $X$ is a maximal irreducible complex subvariety of $X$ (i.e., $Y$ is not strictly contained in any irreducible complex subvariety of $X$).
\end{definition}
\begin{example}
The complex variety $\vs[\bbc]{x^2+y^2}=\vs[\bbc]{(x+iy)(x-iy)}$ is not irreducible, but it has two irreducible components defined by $x+iy=0$ and $x-iy=0$ (these are lines in $\bbc^2$).
\end{example}
\begin{definition}\label{krull}
Let $X\subseteq \bbc^n$ be a nonempty complex variety, and let $\mathbf{a}\in X$. We define the (complex) \emph{dimension of $X$ at $\mathbf{a}$} (written $\text{dim}_{\mathbf{a}}(X)$) to be $d$ if i) there are distinct irreducible complex subvarieties $X_1,\ldots, X_{d}$ of $X$ such that $\{\mathbf{a}\} \subsetneq X_1\subsetneq \cdots \subsetneq X_{d}$ and ii) this is the longest such sequence of subvarieties in $X$. If there is no such $X_1$, we define  $\text{dim}_{\mathbf{a}}(X)$ to be 0. The \emph{dimension of} $X$ is the maximal dimension of $X$ at any of its points.
\end{definition}

By the correspondence between prime ideals and irreducible varieties, this definition is equivalent to the maximal length of a chain of prime ideals in the \emph{coordinate ring} of the variety (the quotient of $\cx$ by the ideal of all polynomials in $\cx$ that vanish at every complex point of the variety; this ideal is radical) \cite[p. 25]{shaf3ed}. This version is often referred to as the \emph{Krull dimension}.

\begin{example}
If $X=\{\mathbf{a}_1, \ldots, \mathbf{a}_m\}$ is finite, where $\mathbf{a}_i\in \bbc^n$, then $\text{dim}_{\mathbf{a}_i}(X)=0$. If $X=\vs[\bbc]{p}$ for some non-constant polynomial $p\in \cx$, then $\dimpt{\mathbf{a}}{X}= n-1$ at all points of $X$. The dimension of $\bbc^n$ at any point is $n.$ 

If a variety is reducible, then different components can have different dimensions. 
For instance, $X=\vsc{x(y-z), y(y-z)}\subseteq \bbc^3$ is a union of a line ($x=y=0$, dimension 1) and a plane ($y-z=0$, dimension 2). At the intersection point $(0,0,0)$, $X$ has dimension 2. 
\end{example}

The same definitions are used, \emph{mutatis mutandis}, for the  \emph{real dimension} of a real variety \cite[Def. 2.8.1]{bochnak1998real}; we just replace $\bbc^n$ with $\bbr^n$ and ``complex'' with ``real'' in Definition \ref{krull}. For the prime ideal/Krull dimension version, we use the \emph{real coordinate ring} (the quotient of $\rx$ by all polynomials over $\bbr$ that vanish at the real points of the variety; this ideal is real radical and not just radical). The real dimension could be smaller than the complex if the variety is not totally real (see Example \ref{trrmk}), but not otherwise:

\begin{proposition}[{\cite[Thm. 12.6.1 (3)]{marshall2008positive}}{\cite[Thm. 2.4]{HarrisHS20}}] \label{realcxdim}
Let $X \subseteq \bbc^n$ be an irreducible complex variety defined over $\bbr$.  Then $X$ is totally real if and only if the real dimension of $X$ equals the complex dimension.
\end{proposition}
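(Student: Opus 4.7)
The plan is to translate both ``totally real'' and the dimension equality into purely ideal-theoretic statements about a single prime ideal, and then show that each condition is equivalent to this ideal being its own real radical. Let $P := \vi{\bbc}{X}\cap \rx$. Since $X$ is irreducible over $\bbc$, $\vi{\bbc}{X}$ is prime in $\cx$, so $P$ is prime in $\rx$. Because $X$ is defined over $\bbr$, $\vi{\bbc}{X} = P\cdot \cx$ (its generators can be chosen in $\rx$). Note also that $X(\bbr)=\vsr{P}$, and by the real Nullstellensatz (Theorem \ref{rnss}) $\vi{\bbr}{X(\bbr)} = \rrad{P}$.

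For the totally real side, by Lemma \ref{vanishclos}, $X$ is totally real iff $\vi{\bbc}{X(\bbr)} = \vi{\bbc}{X} = P\cdot\cx$. Since $X(\bbr)\subseteq \bbr^n$, a polynomial $p+iq\in\cx$ (with $p,q\in\rx$) vanishes on $X(\bbr)$ iff $p$ and $q$ both do, so $\vi{\bbc}{X(\bbr)} = \vi{\bbr}{X(\bbr)}\cdot\cx = \rrad{P}\cdot\cx$. Thus $X$ is totally real iff $\rrad{P}\cdot\cx = P\cdot\cx$, and restricting back to $\rx$ via Lemma \ref{linfield} gives $\rrad{P}=P$.

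For the dimension side, the complex dimension of $X$ equals the Krull dimension of $\cx/P\cx$, which equals $\dim(\rx/P)$ because $\cx$ is free of rank $2$ over $\rx$ (standard flat-base-change preservation of Krull dimension for finitely generated algebras, or equivalently preservation of transcendence degree of residue fields). The real dimension of $X(\bbr)=\vsr{P}$ equals $\dim(\rx/\rrad{P})$. Writing $\rrad{P} = \bigcap_i Q_i$ as the intersection of its minimal primes $Q_i\supseteq P$, one has $\dim(\rx/\rrad{P}) = \max_i \dim(\rx/Q_i)$. Because $\rx$ is catenary and $P$ is prime, any strict inclusion $P\subsetneq Q_i$ forces $\dim(\rx/Q_i) < \dim(\rx/P)$, so $\dim(\rx/\rrad{P}) = \dim(\rx/P)$ iff some $Q_i$ equals $P$, iff $P = \rrad{P}$.

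Combining, both sides of the claimed equivalence are equivalent to $P = \rrad{P}$. The two potentially delicate steps are (i) the flat-base-change argument equating $\dim(\cx/P\cx)$ and $\dim(\rx/P)$, for which I would invoke preservation of transcendence degree under extension of the base field by an algebraic field extension, and (ii) the catenary/height argument ruling out $P\subsetneq Q_i$ with equal quotient dimensions; these are standard in commutative algebra but are where the real-versus-complex content actually lies, so they are the main obstacles one must handle carefully.
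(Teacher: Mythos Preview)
The paper does not prove this proposition; it simply cites \cite[Thm.~12.6.1(3)]{marshall2008positive} and \cite[Thm.~2.4]{HarrisHS20}. Your argument is therefore a genuine proof where the paper offers none, and it is essentially correct. The reduction of both conditions to the single criterion $P=\rrad{P}$ is clean; note that your ``totally real $\Leftrightarrow P=\rrad{P}$'' step is precisely the prime-ideal case of Proposition~\ref{trrad} (also only cited in the paper), so you have supplied a proof of that as well. The two points you flag as delicate are indeed the only places requiring outside input: (i) $\dim(\cx/P\cx)=\dim(\rx/P)$ follows from the Noether normalization/transcendence-degree formula for finitely generated algebras over a field, which is unchanged under the algebraic extension $\bbr\subseteq\bbc$; (ii) the strict drop in dimension for $P\subsetneq Q_i$ follows from the standard identity $\dim(\rx/Q)+\operatorname{ht}(Q)=n$ in $\rx$, which is stronger than bare catenarity and is exactly what you need. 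One small clarification worth making explicit: the equality $\vi{\bbc}{X}=P\cdot\cx$ uses that $X$, being cut out by real polynomials, is stable under complex conjugation, so $\vi{\bbc}{X}$ is conjugation-stable and hence generated by its real part---you assert this but a reader might appreciate the one-line justification.
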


Whether or not this happens is determined by \emph{nonsingular points}.

\begin{definition} \label{dimdef}
Let $X=\vs[\bbc]{A}$ where $A=\{p_1,p_2,\ldots, p_s\}\subseteq \bbc[\bx]$ and $(A)$ is a radical ideal; recall $|\bx|=n$. The \emph{Jacobian matrix} corresponding to $A$ is the $s\times n$-matrix $J_A:= \left[\frac{\partial p_i}{\partial x_j}\right]$ whose $ij$-th entry is the formal partial derivative of polynomial $p_i$ with respect to variable $x_j$. We say $\mathbf{a}\in X$ is a \emph{nonsingular} point of the variety $X$ if the matrix $J_A(\mathbf{a})$ (that is, $J_A$ with $\mathbf{a}$ substituted for $\bx$) has rank equal to $n-\dimpt{\mathbf{a}}{X}$. (In other words, there is a maximal choice of $n-\dimpt{\mathbf{a}}{X}$ rows of $J_A(\mathbf{a})$ that form a linearly independent set of $\bbc$-vectors in $\bbc^n$.) 
Otherwise we say $\mathbf{a}$ is a \emph{singular} point of $X$.
\end{definition}

If $X=\vs[\bbc]{A_1}=\vs[\bbc]{A_2}$ and both $(A_1)$ and $(A_2)$ are radical, then nonsingularity of $\mathbf{a}\in X$ is independent of the choice of $A_1$ or $A_2$ in the sense that $J_{A_1}(\mathbf{a})$ and  $J_{A_2}(\mathbf{a})$ have the same rank. This follows from the Nullstellensatz (which implies in our case that $(A_1)=(A_2)=\vi{\bbc}{X})$ and \cite[Thm. 5.1, p. 32]{Hartshorne}.

\begin{example}
Suppose $X=\vs[\bbc]{p}$ for some square-free (has no repeated factors) $p\in\cx$; then $(p)$ is radical. Moreover, matrix $J_{\{p\}}(\mathbf{a})$ has the form $\left[\frac{\partial p}{\partial x_1}(\mathbf{a}),\ldots, \frac{\partial p}{\partial x_n}(\mathbf{a})\right]$ and either has rank 0 (if all $\frac{\partial p}{\partial x_i}$ vanish at $\mathbf{a}$; then $\mathbf{a}$ is singular) or rank $1=n-(n-1)=n-\dimpt{\mathbf{a}}{X}$ (in which case $\mathbf{a}$ is a nonsingular point of $X$).
\end{example}


While the definition is technical, the intuitive picture is simpler: at a nonsingular point the variety looks ``smooth", while at a singular point we find  cusps/sharp corners, self-intersections, etc. For instance, points in the intersection of two irreducible components are always singular \cite[Thm. 2.9]{shaf3ed}. 
Importantly, smooth points give us a characterization of totally real varieties that we can calculate with: 

\begin{proposition}[{\cite[Thm. 12.6.1 (4)]{marshall2008positive}}{\cite[p. 736]{blekherman2021sums}}] \label{trsmooth}
Let $X \subseteq \bbc^n$ be a complex variety defined over $\bbr$. Then $X$ is totally real if and only if every irreducible component of $X$ contains a nonsingular real point of $X$. 
\end{proposition}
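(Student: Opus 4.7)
The plan is to prove both directions by reducing to the irreducible case and invoking Proposition \ref{realcxdim} to convert between the ``totally real'' property and equality of real and complex dimensions.

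For the forward direction, I would assume $X$ is totally real and decompose $X = X_1 \cup \cdots \cup X_k$ into irreducible components. Because Zariski closure commutes with finite unions, $X = \overline{X(\bbr)}^{\bbc} = \bigcup_i \overline{X_i(\bbr)}^{\bbc}$, and since each $\overline{X_i(\bbr)}^{\bbc}$ is a closed subvariety of $X_i$, irreducibility of $X_i$ forces $\overline{X_i(\bbr)}^{\bbc} = X_i$. Since $X$ is defined over $\bbr$, complex conjugation permutes the $X_i$ and fixes those with dense real points, so each such $X_i$ is itself defined over $\bbr$, making Proposition \ref{realcxdim} applicable to conclude that the real dimension of $X_i$ equals its complex dimension $d_i$. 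I would then invoke the standard fact from real algebraic geometry that the singular locus is a proper subvariety of strictly smaller dimension, so some nonsingular real point of $X_i$ must exist.

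For the reverse direction, suppose every irreducible component $X_i$ of $X$ contains a nonsingular real point $\mathbf{a}_i$. I would choose real defining polynomials for $X_i$ and apply the real implicit function theorem: because $J_{A_i}(\mathbf{a}_i)$ has the maximal possible rank $n-d_i$, the real points of $X_i$ form a real-analytic manifold of dimension $d_i$ in a Euclidean neighborhood of $\mathbf{a}_i$. The complex Zariski closure of this local real patch is a complex subvariety of $X_i$ of complex dimension at least $d_i$, and hence equals $X_i$ by irreducibility. Taking the union over $i$ gives $X \subseteq \overline{X(\bbr)}^{\bbc}$, with the reverse inclusion automatic.

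The main obstacle lies in the dimension bookkeeping where the Euclidean and Zariski topologies interact: one must justify that the local real-analytic patch near $\mathbf{a}_i$ has complex Zariski closure of complex dimension exactly $d_i$, and that on an irreducible variety with matching real and complex dimensions the singular locus cannot swallow every real point. Both use classical real algebraic geometry results from the cited references, and each requires the subtle observation that, although irreducible components of a variety defined over $\bbr$ need not individually be defined over $\bbr$, those with Zariski-dense real points are forced to be conjugation-invariant.
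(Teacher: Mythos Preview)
Your proposal is correct and aligns with the paper's treatment: the paper does not give its own proof of this proposition but cites external references and remarks that the result follows from Proposition~\ref{realcxdim} together with the fact that the real and complex dimensions agree at a smooth real point (\cite[p.~185]{ContiT95}). Your argument fleshes out exactly this route, reducing to the irreducible case, invoking Proposition~\ref{realcxdim}, and using the implicit function theorem and dimension counting for the singular locus; the conjugation-invariance subtlety you flag is the right point to watch, and in the reverse direction it is cleanest to apply the real implicit function theorem to defining polynomials of $X$ itself (which are real by hypothesis) rather than of $X_i$.
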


In light of Proposition \ref{realcxdim}, the equivalence given by Proposition \ref{trsmooth} is not surprising because the real dimension of a variety (defined over $\bbr$) at a smooth real point is the same as the complex dimension \cite[p. 185]{ContiT95}.  

Decomposition into irreducible components \cite{Wang92a} and finding smooth points \cite{HarrisHS20} are both well-studied (albeit computationally intensive) problems in algorithmic algebraic geometry, so we can decide if any given complex variety is totally real. 

\begin{example}\label{trrmk}
As a simple example, the complex variety defined by $x^2+y^2=0$ is not totally real because the lone real point $(0,0)$ is singular, being the intersection of two complex lines $x+iy=0,x-iy=0$ (the irreducible components of $\vsc{x^2+y^2}$). Alternatively, the partial derivatives $2x$ and $2y$ vanish simultaneously at $(0,0)$; note that $x^2+y^2$ is square-free so $(x^2+y^2)$ is radical. Nevertheless, the equations $x=0,y=0$ define the same real points and the corresponding singleton set $\{(0,0)\}$ \emph{is} a totally real complex variety (see Proposition \ref{alltr}). In terms of dimension, all this happens because $\vsr{x^2+y^2}$ has real dimension 0 but $\vsc{x^2+y^2}$ has complex dimension 1. We give further examples in Section \ref{regsysresults}.

In general, totally real varieties contain non-real points as well; the definition only requires that the complex solutions to the defining equations form the smallest algebraic set containing all the real solutions. 
\end{example}

The next result shows more generally than Example \ref{trrmk} that being totally real depends on the complex variety and not only on the set of real points.

\begin{proposition}[{Appendix}]\label{alltr}
For any $A\subseteq \rx$, there exists a finite set $B\subseteq \rx$ such that  $\mathbf{V}_{\mathbb{C}}(B)$ is totally real and $\mathbf{V}_{\bbr}(A)=\mathbf{V}_{\bbr}(B)$.
\end{proposition}


Following up on Example \ref{trrmk}, an illustration of Proposition \ref{alltr} is $A=\{x^2+y^2\}$ and $B=\{x,y\}$.

Proposition \ref{alltr} shows that any real variety has a representation (i.e., some choice of defining polynomial equations) such that the corresponding complex variety is totally real. Theoretically, this suggests that we can always assume our real varieties are totally real as complex varieties. Computationally, though, this is a nontrivial assumption because it may require computing generators of the real radical to transform the representation if the original equations have ``too many complex solutions''/define a complex variety that is not totally real.

However, in our experience most systems of polynomial equations over $\bbr$ that arise from applications already define a totally real variety. Indeed, the only counterexamples we are aware of have the form $\vsc{p}$ where $p\in \rx$ and $p(\mathbf{a})\geq 0$ for all $\mathbf{a}\in \bbr^n$ (for instance, sums of squares like in Example \ref{trrmk}). The property of being totally real is common and even appears to be the typical situation. Evidence of this is the following theorem, sometimes called the ``sign-changing criterion" in the literature: \begin{theorem}[{\cite[Thm. 12.7.1]{marshall2008positive}}]\label{signchange}
Let $p\in \rx$ be irreducible. Then $\vsc{p}$ is totally real if and only if there exist $\mathbf{a},\mathbf{b}\in \bbr^n$ such that $p(\mathbf{a})>0$ and $p(\mathbf{b})<0$.
\end{theorem}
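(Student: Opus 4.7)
The plan is to split on the factorization of $p$ over $\bbc$. Since $p \in \rx$ is irreducible, $p$ either stays irreducible over $\bbc$ or factors (up to a nonzero real scalar) as $q\bar{q}$, where $\bar{q}$ is obtained from $q \in \cx$ by conjugating its coefficients. For real inputs $\bar{q}(\mathbf{x}) = \overline{q(\mathbf{x})}$, so in the factored case $p$ agrees with a real scalar multiple of $|q|^2$ on $\bbr^n$ and thus cannot change sign. Hence any sign change on $\bbr^n$ forces $\bbc$-irreducibility of $p$, so $\vsc{p}$ is an irreducible complex hypersurface of dimension $n-1$.

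For the forward direction, suppose $p(\mathbf{a}) > 0$ and $p(\mathbf{b}) < 0$, so $\vsc{p}$ is $\bbc$-irreducible of dimension $n-1$ by the above. I plan to invoke Proposition \ref{trsmooth} by producing a nonsingular real point of $\vsc{p}$. The idea is to choose the segment from $\mathbf{a}$ to $\mathbf{b}$ (or a generic nearby parallel line) so that the univariate polynomial $f(t) := p(\mathbf{a} + t(\mathbf{b}-\mathbf{a}))$ is squarefree---this holds outside the proper Zariski-closed set of directions cut out by the discriminant of $f$. The intermediate value theorem then yields a zero $t_0 \in (0,1)$, and squarefreeness forces $f'(t_0) = \nabla p(\mathbf{c}) \cdot (\mathbf{b}-\mathbf{a}) \neq 0$ at $\mathbf{c} := \mathbf{a} + t_0(\mathbf{b}-\mathbf{a})$, so the $1 \times n$ Jacobian of $\{p\}$ at $\mathbf{c}$ has rank $1 = n - \dimpt{c}{\vsc{p}}$ and $\mathbf{c}$ is nonsingular. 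A cleaner alternative bypasses genericity by observing that $\vsr{p}$ topologically separates the nonempty open sets $\{p > 0\}$ and $\{p < 0\}$ in $\bbr^n$, forcing its real dimension to be $n-1$, and then appealing to Proposition \ref{realcxdim}.

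For the converse, I argue contrapositively: assume $p$ does not change sign, so without loss of generality $p \geq 0$ on $\bbr^n$. In the $\bbc$-irreducible subcase, every real zero $\mathbf{a} \in \vsr{p}$ is a global minimum of $p$, every partial derivative $\partial p/\partial x_i$ vanishes at $\mathbf{a}$, and the Jacobian rank drops below $1 = n - \dimpt{a}{\vsc{p}}$, making $\mathbf{a}$ singular; the unique irreducible component $\vsc{p}$ thus has no nonsingular real point, and Proposition \ref{trsmooth} rules out total realness. In the factored case $p = c q\bar{q}$, the components $\vsc{q}$ and $\vsc{\bar{q}}$ are distinct (otherwise $p$ would be a constant multiple of the square of a real polynomial, contradicting irreducibility), and any real zero of $p$ is a common zero of $q$ and $\bar{q}$ and therefore a singular intersection point of the two components of $\vsc{p}$; again Proposition \ref{trsmooth} denies total realness. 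The main obstacle is the forward-direction smooth-point step; the dimension route via Proposition \ref{realcxdim} is my more robust fallback, requiring only the topological separation observation in place of the discriminant genericity argument.
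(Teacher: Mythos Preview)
The paper does not prove this theorem; it is cited from Marshall's \emph{Positive Polynomials and Sums of Squares} without argument, so there is no in-paper proof to compare against.

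Your argument is sound. The case split on $\bbc$-irreducibility of $p$ is the right organizing move, and the observation that in the reducible case $p = c\,q\bar q$ agrees with $c|q|^2$ on $\bbr^n$ (hence has constant sign) cleanly reduces both directions to the $\bbc$-irreducible situation. For the forward implication your dimension route (approach 2) is the cleaner one and matches the standard proof: the sign change forces $\vsr{p}$ to separate two nonempty Euclidean-open subsets of $\bbr^n$, so its real dimension must be at least $n-1$ (a real algebraic set of dimension $\le n-2$ has connected complement in $\bbr^n$; see e.g.\ \cite{bochnak1998real}), and since the complex dimension of $\vsc{p}$ is $n-1$, Proposition~\ref{realcxdim} finishes. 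Your line-restriction approach (approach 1) also works, but the genericity step---that the discriminant of $t \mapsto p(\mathbf{a}+t(\mathbf{b}-\mathbf{a}))$ is not identically zero as a polynomial in $(\mathbf{a},\mathbf{b})$---deserves one sentence of justification; it follows from $\bbc$-irreducibility of $p$ via a Bertini-type transversality argument, after which the nonvanishing locus is Zariski-dense and must meet the Euclidean-open set of sign-changing segments. The converse via Proposition~\ref{trsmooth} is correct in both subcases: in the $\bbc$-irreducible case every real zero is a global minimum and hence a critical point of $p$, so the Jacobian rank drops; in the $q\bar q$ case every real zero lies on both irreducible components and is therefore singular in $\vsc{p}$.
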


In Section \ref{check} we discuss the intriguing relationship between this result and efficiently verifying algebraic invariants (Definition \ref{invardef}) of polynomial dynamical systems.

We extend the notion of totally real varieties to constructible sets. (We are not aware of this definition appearing in the literature, but it is only a minor generalization of the concept for varieties.) This will enable us to use results about saturation ideals over $\bbc$ (in particular, Lemma \ref{satgeom} (2)) when handling inequations over $\bbr$.
\begin{definition}\label{trcons}
Let $A,B\subseteq \rx$. We say that the constructible set $\mathbf{V}_{\mathbb{C}}(A)\setminus  \mathbf{V}_{\mathbb{C}}(B)$ is a \emph{totally real constructible set} if $\overline{\mathbf{V}_{\mathbb{C}}(A)\setminus  \mathbf{V}_{\mathbb{C}}(B)}^{\mathbb{C}}=\overline{\mathbf{V}_{\bbr}(A)\setminus  \mathbf{V}_{\bbr}(B)}^{\mathbb{C}}$ ; i.e., the real constructible set $\mathbf{V}_{\bbr}(A)\setminus  \mathbf{V}_{\bbr}(B)$ is Zariski dense in  $\overline{\mathbf{V}_{\mathbb{C}}(A)\setminus  \mathbf{V}_{\mathbb{C}}(B)}^{\bbc}$.
\end{definition}
The main way to obtain a totally real constructible set is to take the set difference of a totally real variety with an arbitrary variety:
\begin{lemma}[{Appendix}]\label{trtrcons}
Let $A,B \subseteq \rx$. If $\vsc{A}$ is totally real, then $\vsc{A}\setminus \vsc{B}$ is a totally real constructible set. \end{lemma}

Totally real constructible sets play an important role in Theorem \ref{invarcor} (Section \ref{regsysresults}), where they give a simpler characterization of one of our main results (Theorem \ref{satinvar}).

\subsection{Lie Derivatives and Algebraic Invariants of Polynomial Vector Fields} \label{lie}

A system of ODEs 

\[\mathbf{x'} =(x_1',\ldots,x_n') = (f_1(x_1,\ldots, x_n),\ldots,f_n(x_1,\ldots, x_n))= \mathbf{f}(\bx)\]

\vspace{.2cm}

\noindent defines a \emph{vector field} $\mathbf{F}:\bbr^n \rightarrow \bbr^n$ \cite{hubbard2015vector} that describes the motion of a hypothetical object moving according to the given differential equations. (In particular, $\mathbf{F}(\mathbf{a})=\mathbf{f}(\mathbf{a})=(f_1(\mathbf{a}),\ldots,f_n(\mathbf{a}))$ is the velocity vector at $\mathbf{a}\in \bbr^n$.) Abusing terminology, we also refer to the corresponding system $\mathbf{x'}=\mathbf{f}(\bx)$ as a vector field.
In this paper we assume that the $f_i$ are multivariate polynomials over $\bbr$, so $\xpef$ satisfies the Picard-Lindel\"{o}f existence and uniqueness theorem for ODEs at all points of $\bbr^n$ \cite[Thm. 10.VI]{walter98}.  We write $\xpf$ for the set $\{x_1'-f_1(\bx),\ldots, x_n'-f_n(\bx)\}$. 

The rate of change of a function $p:\mathbb{R}^n\rightarrow \mathbb{R}$ along vector $\mathbf{F}(\mathbf{x})$ is given by the \emph{Lie derivative} $\mathcal{L}_{\mathbf{F}}(p)$ of $p$ with respect to $\mathbf{F}$ at $\mathbf{x}$: 

\[\mathcal{L}_{\mathbf{F}}(p)(\bx):= \nabla p \boldsymbol{\cdot} \mathbf{F}(\bx)= 
\sum_{i=1}^n \left( \frac{\partial p}{\partial x_i}(\mathbf{x})\right )f_i(\mathbf{x}).\]

\noindent  We repeat the procedure on  $\mathcal{L}_{\mathbf{F}}(p)$ to get \emph{higher-order} Lie derivatives $\mathcal{L}_{\mathbf{F}}^{(2)}(p)$,  $\mathcal{L}_{\mathbf{F}}^{(3)}(p),\ldots$ (take $\mathcal{L}_{\mathbf{F}}^{(0)}(p)$ to be $p$ and  $\mathcal{L}_{\mathbf{F}}^{(n+1)}(p)$ to be $\mathcal{L}_{\mathbf{F}}(\mathcal{L}_{\mathbf{F}}^{(n)}(p))$).

To emphasize the particular ODEs, we sometimes write $\mathcal{L}_{\xpef}(p)$ instead of $\ldf{p}$. If the vector field $\mathbf{F}$ is understood, we simply write $\dot{p}$ for the Lie derivative $\mathcal{L}_{\mathbf{F}}(p)$. As with $\mathbf{F}$, we only consider $p\in \rx$. 

The following is straightforward:
\begin{lemma} \label{ldderiv}
Lie derivatives with respect to a vector field $\mathbf{F}$ obey the sum and product rules: $\mathcal{L}_{\mathbf{F}}{(p+q)} = \mathcal{L}_{\mathbf{F}}(p)+\mathcal{L}_{\mathbf{F}}(q)$ and $\mathcal{L}_{\mathbf{F}}{(qp)} = \mathcal{L}_{\mathbf{F}}(q)p+q\mathcal{L}_{\mathbf{F}}(p)$. 
\end{lemma}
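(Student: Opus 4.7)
The plan is to unfold the definition $\ldf{p}(\bx) = \sum_{i=1}^n \frac{\partial p}{\partial x_i}(\bx)\, f_i(\bx)$ and reduce both identities to the familiar sum and product rules for ordinary (formal) partial derivatives on the polynomial ring $\rx$, which hold coordinate-wise. Since the $f_i$ and $p,q$ are all polynomials in $\rx$, no analytic subtlety intervenes and the argument will be a straightforward symbolic manipulation.

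For the sum rule, I will apply linearity of $\frac{\partial}{\partial x_i}$ inside the summation to get
\[
\ldf{p+q}(\bx) = \sum_{i=1}^n \left(\frac{\partial p}{\partial x_i}(\bx) + \frac{\partial q}{\partial x_i}(\bx)\right) f_i(\bx),
\]
then distribute $f_i$ and split the finite sum into two, recognizing each as $\ldf{p}(\bx)$ and $\ldf{q}(\bx)$ respectively. For the product rule, I will apply the Leibniz rule for $\frac{\partial}{\partial x_i}$ term-by-term to obtain
\[
\ldf{qp}(\bx) = \sum_{i=1}^n \left(\frac{\partial q}{\partial x_i}(\bx)\, p(\bx) + q(\bx)\,\frac{\partial p}{\partial x_i}(\bx)\right) f_i(\bx),
\]
then split the sum, factor out $p(\bx)$ from the first piece and $q(\bx)$ from the second, and identify the remaining sums as $\ldf{q}(\bx)$ and $\ldf{p}(\bx)$. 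Since the identities are required to hold for every $\bx \in \bbr^n$, both computations yield equalities of polynomials in $\rx$.

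There is no real obstacle: everything reduces to the additivity and Leibniz property of formal partial differentiation, which are standard for polynomial rings and can simply be cited. The only minor care needed is to make sure the rewriting is done before the sum is split, so that the factor $f_i(\bx)$ attaches correctly to each summand before we recognize the two Lie derivatives on the right-hand side.
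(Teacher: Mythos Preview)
Your proposal is correct and is exactly the straightforward verification the paper has in mind: the paper does not write out a proof at all, merely labeling the lemma as ``straightforward,'' and your unfolding of the definition together with the linearity and Leibniz rule for formal partial derivatives is precisely that routine check.
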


Lie derivatives have a close relationship with \emph{invariant sets} of a vector field $\mathbf{F}$.

\begin{definition}\label{invardef}
Let  $\mathbf{F}$ be a vector field on $\bbr^n$ defined by a system of ODEs $\xpef$. A subset $X\subseteq \mathbb{R}^n$ is \emph{invariant} with respect to $\mathbf{F}$ if for every $\mathbf{x_0}\in X$ and solution $\varphi_{\bx_0}: D \rightarrow \bbr^n$ to the initial value problem $(\xpef, \bx(0)=\bx_\mathbf{0})$, we have $\varphi_{\bx_{\mathbf{0}}}(t)\in X$ for all $t\geq 0$ such that $\varphi_{\bx_0}$ is defined. ($D$ could be a proper open subset of $\bbr$ if $\varphi_{\bx_{\mathbf{0}}}$ does not exist for all time.)
\end{definition}

Intuitively, if $X$ is invariant, then an object starting at any point of $X$ will remain in $X$ as the object follows the dynamics described by $\mathbf{F}$. We restrict our focus to invariant sets that are real algebraic varieties.  
The following well-known result (Theorem \ref{fullinvarcrit}) characterizes such invariant sets. Its statement uses Lie derivatives of all orders as well as the concept of invariant ideals. An ideal $I\unlhd \rx$ is an \emph{invariant ideal} with respect to $\mathbf{F}$ if for all $p\in I$, the Lie derivative $\mathcal{L}_{\mathbf{F}}(p)$ belongs to $I$ (i.e., $I$ is closed under Lie differentiation, written $ \ldf{I}=\{\ldf{p} \mid p\in I\} \subseteq I$).
\begin{notation}
Given a subset $A\subseteq \rx$ and polynomial vector field $\mathbf{F}$, we write $(\mathcal{L}^*_{\mathbf{F}}(A))$ (or simply $(\mathcal{L}^*(A))$ if the vector field is understood) to denote the ideal $(\mathcal{L}_{\mathbf{F}}^{(k)}(p) \mid p\in A, k\in \bbn)\, \unlhd \rx$ generated by the collection of higher Lie derivatives $\mathcal{L}_{\mathbf{F}}^{(k)}(p)$ for all $k\geq 0$ and $p\in A$. (Note that $(\mathcal{L}^*_{\mathbf{F}}(A))$ is distinct from $\ldf{A}$, which is the set of first Lie derivatives of the elements of $A$.)
\end{notation}

Observe that $(\mathcal{L}^*_{\mathbf{F}}(A))$ is an invariant ideal of $\mathbf{F}$ by construction because the given generating set is closed under Lie differentiation with respect to $\mathbf{F}$. 


\begin{theorem}[{Characterization of algebraic invariants }{\cite[Lemma 5]{Boreale20}}{\cite[Lemma 2.1]{christopher2007inverse}}; Proof in appendix]\label{fullinvarcrit}
Let $\mathbf{F}$ be a polynomial vector field and let $X\subseteq \mathbb{R}^n$ be real Zariski closed. The following are equivalent:
\begin{enumerate}
\item $X$ is an algebraic invariant set of $\mathbf{F}$.
\item For all $p_1,\ldots,p_m \in \rx$ such that $X=\mathbf{V}_{\bbr}(p_1,\ldots,p_m)$, we have $\mathcal{L}_{\mathbf{F}}^{(k)}(p_i)(\mathbf{a})=0$ for all $k\geq 0$, $1\leq i \leq m$, and $\mathbf{a} \in X$.
\item There exists $I\unlhd \rx$ such that $X=\mathbf{V}_{\bbr}(I)$ and $I$ is an invariant ideal with respect to $\mathbf{F}$ (i.e., $\ldf{I}\subseteq I$).
\end{enumerate}
\end{theorem}

\begin{corollary}[{Appendix}]\label{membinvar}
Let $\mathbf{F}$ be a polynomial vector field and let $X=\vsr{p_1,\ldots, p_m}$. If $\ldf{p_i}\in (p_1,\ldots,p_m)$ for all $1\leq i\leq m$, then $X$ is an algebraic invariant set of $\mathbf{F}$.
\end{corollary}

The useful Lemma \ref{rradinvar} and Proposition \ref{maxinvar} below are probably folklore but we know of no specific references.  

\begin{lemma}\label{rradinvar}
If $I\unlhd \rx$ is an invariant ideal of $\mathbf{F}$, then $\sqrt[\bbr]{I}$ is also an invariant ideal.
\end{lemma}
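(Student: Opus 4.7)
The plan is to reduce to the geometric characterization of algebraic invariants (Theorem~\ref{fullinvarcrit}) and then apply the real Nullstellensatz to identify $\sqrt[\bbr]{I}$ with the vanishing ideal of the real variety $\vsr{I}$. Since invariance is a property of the underlying real zero set rather than the particular defining ideal, passing from $I$ to its real radical should not destroy it.

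In more detail, I would first set $X := \vsr{I}$. The hypothesis that $I$ is invariant means condition (3) of Theorem~\ref{fullinvarcrit} holds, so (3)$\Rightarrow$(1) gives that $X$ is an algebraic invariant set of $\mathbf{F}$. Next, the real Nullstellensatz (Theorem~\ref{rnss}) yields
\[\sqrt[\bbr]{I} \;=\; \vi{\bbr}{\vsr{I}} \;=\; \vi{\bbr}{X}.\]
By Hilbert's basis theorem (Theorem~\ref{hbt}), pick a finite generating set $\sqrt[\bbr]{I}=(p_1,\ldots,p_m)$. Using Proposition~\ref{radprop} together with Lemma~\ref{viva}, we have $\vsr{p_1,\ldots,p_m}=\vsr{\sqrt[\bbr]{I}}=X$. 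Now Theorem~\ref{fullinvarcrit} (1)$\Rightarrow$(2) applied to this representation of $X$ shows that $\ldf{p_i}$ vanishes on $X$ for each $i$, so $\ldf{p_i}\in \vi{\bbr}{X}=\sqrt[\bbr]{I}$.

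Finally, I would invoke the sum and product rules for Lie derivatives (Lemma~\ref{ldderiv}): for any $p=\sum_{i}h_i p_i \in \sqrt[\bbr]{I}$,
\[\ldf{p} \;=\; \sum_{i}\bigl(\ldf{h_i}\,p_i + h_i\,\ldf{p_i}\bigr),\]
and each summand lies in $\sqrt[\bbr]{I}$ since both $p_i$ and $\ldf{p_i}$ do. Hence $\ldf{\sqrt[\bbr]{I}}\subseteq \sqrt[\bbr]{I}$, as required. The main obstacle is bookkeeping rather than mathematical difficulty: making sure the equivalence (1)$\Leftrightarrow$(2) in Theorem~\ref{fullinvarcrit} is applied to the \emph{new} generating set $\{p_1,\ldots,p_m\}$ of $\sqrt[\bbr]{I}$ and not to the original generators of $I$, since the invariance conclusion for Lie derivatives must be extracted at the level of the real radical.
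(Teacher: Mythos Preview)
Your proposal is correct and follows essentially the same approach as the paper's proof: both use $(3)\Rightarrow(1)$ of Theorem~\ref{fullinvarcrit} to see that $X=\vsr{I}$ is invariant, pick finite generators $p_1,\ldots,p_m$ of $\rrad{I}$, apply $(1)\Rightarrow(2)$ to the new generators, and then use the real Nullstellensatz to place the Lie derivatives back in $\rrad{I}$. The only cosmetic difference is that the paper records all higher-order Lie derivatives from statement~(2) (even though only first order is needed) and leaves the extension from generators to arbitrary elements implicit, whereas you spell out that last step via Lemma~\ref{ldderiv}.
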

\begin{proof}
We use all three equivalent statements in Theorem \ref{fullinvarcrit}. Since $I$ is an invariant ideal of $\mathbf{F}$, we know $X:=\mathbf{V}_{\bbr}(I)$ is an invariant set by $(3)\Rightarrow (1)$ in Theorem \ref{fullinvarcrit}. We have  $X=\vsr{\rrad{I}}$ by Proposition \ref{radprop} (3b). By Hilbert's basis theorem (Theorem \ref{hbt}), there exist $p_1,\ldots, p_m \in \rx$ that generate $\rrad{I}$. Since $X$ is invariant and $X=\vsr{p_1,\ldots, p_m}$, $(1)\Rightarrow (2)$ implies that $\mathcal{L}_{\mathbf{F}}^{(k)}(p_i)(\mathbf{a})=0$ for all $k\geq 0$, $1\leq i \leq m$, and $\mathbf{a} \in X$. By the real Nullstellensatz (Theorem \ref{rnss}), each order of Lie derivative $\mathcal{L}_{\mathbf{F}}^{(k)}(p_i)$ belongs to $\sqrt[\bbr]{(p_1,\ldots, p_m)}=\sqrt[\bbr]{\sqrt[\bbr]{I}}$ (since the  $p_i$ generate  $\rrad{I}$) $=\sqrt[\bbr]{I}$ (Proposition \ref{radprop} (2)). Hence $\ldf{\rrad{I}}\subseteq \rrad{I}$ as needed.
\end{proof}

Finally, we show that $\mathbf{V}_{\bbr}(\mathcal{L}^*_{\mathbf{F}}(A))$ is the ``largest" invariant contained in $\vsr{A}$: 

\begin{proposition}\label{maxinvar}
 Let $A\subseteq \rx$ and let $\mathbf{F}$ be a polynomial vector field. The real variety $\mathbf{V}_{\bbr}(\mathcal{L}^*_{\mathbf{F}}(A))$ is invariant with respect to  $\mathbf{F}$, is contained in $\mathbf{V}_{\bbr}(A)$, and contains every invariant set $X$ of $\mathbf{F}$ such that $X\subseteq \mathbf{V}_{\bbr}(A)$.
\end{proposition}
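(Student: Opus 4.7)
The plan is to dispatch each of the three claims using Theorem~\ref{fullinvarcrit} together with the sum and product rules from Lemma~\ref{ldderiv}.

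First I would verify invariance. By construction the generating set $\{\mathcal{L}_{\mathbf{F}}^{(k)}(p) \mid p\in A,\ k\in \bbn\}$ of the ideal $(\ldfid{A})$ is closed under Lie differentiation, since applying $\ldf{\cdot}$ to $\mathcal{L}_{\mathbf{F}}^{(k)}(p)$ produces $\mathcal{L}_{\mathbf{F}}^{(k+1)}(p)$, another generator. Lemma~\ref{ldderiv} extends this closure from the generators to arbitrary elements of the ideal, so $(\ldfid{A})$ is an invariant ideal. The implication $(3)\Rightarrow(1)$ in Theorem~\ref{fullinvarcrit} then delivers invariance of $\vsr{\ldfid{A}}$. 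The containment $\vsr{\ldfid{A}}\subseteq \vsr{A}$ is immediate from $A\subseteq (\ldfid{A})$ (take $k=0$), since larger ideals cut out smaller zero sets.

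For maximality, let $X$ be any algebraic invariant of $\mathbf{F}$ with $X\subseteq \vsr{A}$. The crux is to show that the vanishing ideal $\vi{\bbr}{X}$ is itself an invariant ideal. Hilbert's basis theorem (Theorem~\ref{hbt}) furnishes a finite generating set $\{q_1,\ldots,q_\ell\}$ of $\vi{\bbr}{X}$; since $\vi{\bbr}{X}$ is radical, Lemma~\ref{viva} gives $X=\vsr{q_1,\ldots,q_\ell}$. Applying $(1)\Rightarrow(2)$ of Theorem~\ref{fullinvarcrit} yields $\mathcal{L}_{\mathbf{F}}^{(k)}(q_j)(\mathbf{a})=0$ for all $k,j$ and all $\mathbf{a}\in X$, i.e.\ every higher Lie derivative of each $q_j$ lies in $\vi{\bbr}{X}$. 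Lemma~\ref{ldderiv} then propagates closure under $\ldf{\cdot}$ to the entire ideal. Since $A\subseteq \vi{\bbr}{X}$ by hypothesis, every generator $\mathcal{L}_{\mathbf{F}}^{(k)}(p)$ of $(\ldfid{A})$ lies in $\vi{\bbr}{X}$, so $(\ldfid{A})\subseteq \vi{\bbr}{X}$, and hence $X\subseteq \vsr{\vi{\bbr}{X}}\subseteq \vsr{\ldfid{A}}$.

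The only mildly subtle point is the upgrade from invariance of the \emph{set} $X$ to invariance of the \emph{ideal} $\vi{\bbr}{X}$, which is precisely why Hilbert's basis theorem is needed in order to invoke Theorem~\ref{fullinvarcrit} on a concrete finite presentation. Otherwise the argument is essentially bookkeeping, using Theorem~\ref{fullinvarcrit} in both directions.
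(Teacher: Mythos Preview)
Your proof is correct and follows essentially the same approach as the paper. The only difference is in the maximality step: the paper obtains an invariant ideal $I$ with $X=\vsr{I}$ via Theorem~\ref{fullinvarcrit} and then invokes Lemma~\ref{rradinvar} and Corollary~\ref{agdict} to pass to the invariant ideal $\rrad{I}\supseteq A$, whereas you work directly with $\vi{\bbr}{X}$ and effectively inline the proof of Lemma~\ref{rradinvar}---a cosmetic variation, not a different route.
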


\begin{proof}
The ideal $(\mathcal{L}^*_{\mathbf{F}}(A))$ is an invariant ideal, so $\mathbf{V}_{\bbr}(\mathcal{L}^*_{\mathbf{F}}(A))$ is an invariant set by Theorem \ref{fullinvarcrit}. We have $\mathbf{V}_{\bbr}(\mathcal{L}^*_{\mathbf{F}}(A))\subseteq \vsr{A}$ by Corollary \ref{agdict} because $A\subseteq \mathcal{L}^*_{\mathbf{F}}(A)$. For the last claim, by Theorem \ref{fullinvarcrit} there is an invariant ideal $I$ such that $X=\vsr{I}$. By Corollary \ref{agdict}, $\rrad{I}\supseteq \rrad{(A)}$ since $\vsr{I}\subseteq \vsr{A}$. Note that $\rrad{I}$ is invariant by Lemma \ref{rradinvar} and that $\rrad{I}\supseteq A$ since $\rrad{(A)}\supseteq A$. Because $(\mathcal{L}^*_{\mathbf{F}}(A))$ is contained in every invariant ideal containing $A$, we have $(\mathcal{L}^*_{\mathbf{F}}(A))\subseteq \rrad{I}$ and thus by Corollary \ref{agdict} $\vsr{\mathcal{L}^*_{\mathbf{F}}(A)}\supseteq \vsr{\rrad{I}}=\vsr{I}=X$.
\end{proof}


\subsection{Differential Polynomial Rings and Differential Ideals}\label{diffalg}
Differential algebra extends commutative algebra and algebraic geometry to settings that involve differentiation. This makes the theory a natural choice for exploring algebraic invariants of polynomial dynamical systems. In particular, differential elimination \cite{LiY19} is the algorithmic engine that powers our main results, Theorem \ref{satinvar} in Section \ref{regsysresults} and Theorem \ref{radlddecomp} in Section \ref{rga}. 

Most of the following definitions and basic lemmas are found in standard references \cite{Ritt, KolchinDAAG}.
\begin{definition}
An (ordinary) \emph{differential field} is a field $K$ equipped with a single \emph{derivation} operator $'$ that satisfies $(a+b)'= a'+b'$ and $(ab)' =a'b +ab'$ for all $a,b\in K$. 
\end{definition}

Examples are standard fields with the trivial zero derivation, the fraction field $\bbq(x)$ of $\bbq[x]$ with derivation given by the quotient rule, and the field of meromorphic functions (quotients of complex analytic functions) with the usual complex derivative \cite{marker2005model}. Many of the definitions and results in Sections \ref{diffalg},\ref{rank},and \ref{regsyssec} apply to more general differential fields, but in this paper we restrict to ordinary differential fields of characteristic 0 (repeated addition of $1$ never yields 0; equivalently, the field contains the rational numbers $\mathbb{Q}$ as a subfield). Henceforth we simply refer to differential fields, with these restrictions being understood even if not stated explicitly.  We generally leave the differential field unspecified and assume that it contains all the solutions that interest us (e.g., real analytic functions solving systems of ODEs with rational number coefficients). 

\begin{definition}
Let $K$ be a differential field and let $x$ be an indeterminate. The \emph{differential polynomial ring} $K\{x\}$ with \emph{differential indeterminate} $x$ and coefficients $K$ is the polynomial ring $K[x= x^{(0)},x'=x^{(1)},\ldots, x^{(k)},\ldots]$ having infinitely many algebraic indeterminates (named in a suggestive way). The derivation $'$ on $K$ extends to all of $K\{x\}$ by defining $(x^{(k)})'= x^{(k+1)}$, 
$(p+q)'=p' +q'$, and $(pq)'=p'q + pq'$ for $p,q\in K\{x\}$. We call $x^{(k)}$ the $k$-\emph{th derivative} of $x$. The largest such $k$ appearing for any variable in a differential polynomial is the \emph{order} of the polynomial. If  $k\geq 1$ we call $x^{(k)}$ a \emph{proper} derivative of $x$. For uniformity we call $x=x^{(0)}$ a derivative (just not a proper one.) Differential polynomial rings having several differential indeterminates $(x_1,x_2,\ldots, x_n)=\bx$ are defined analogously and are denoted by $K\{\bx\}$. If an element $p \in K\{\bx\}$ has no proper derivatives of $\bx$ (i.e., $p\in K[\bx]$), we say $p$ is a \emph{nondifferential} polynomial.  
\end{definition}

Intuitively, differential polynomials are standard polynomials except for the presence of variables representing ``derivatives" of the (finitely many) differential indeterminates. For instance, $x(y')^3 +(x'')^2 -3$ is an order 2 differential polynomial of total degree 4 (from the term $x(y')^3$) in two differential variables $x,y$. 

The derivatives are formal and do not necessarily represent limits of difference quotients like in calculus; we only require that they obey the sum and product rules. However, just as we can substitute elements of a field for the variables of a nondifferential polynomial, we can substitute differentiable functions into a differential polynomial and treat $'$ as the usual analytic derivative. For example, $p:=x''+x\in \bbc \{\bx\}$ is a differential polynomial and $\sin : \bbc \rightarrow \bbc$ is an element of the differential field of complex meromorphic functions (which contains the elements of $\bbc$ considered as constant functions). Substituting $\sin$ for $x$ and interpreting $'$ as the usual complex derivative, we find that $p(\sin)= \sin '' + \sin = -\sin + \sin =0$, the zero function (which is the zero element in this differential field). 

Note that $p'(\sin)$ is also 0 : $p'= x''' +x'$ and $p'(\sin)= \sin''' + (\sin)' = -\cos + \cos= 0$. More generally, if $\mathbf{a}\in K^n$ for differential field $K$ and $q\in \dkx$, then $q'(\mathbf{a})= (q(\mathbf{a}))'$. For instance, $q(\mathbf{a})=0$ implies $q'(\mathbf{a})=0$. Commutativity of differentiation and substitution is analogous to commutativity of polynomial addition and substitution; e.g., $(r+s)(\mathbf{a})=r(\mathbf{a})+s(\mathbf{a})$.

\subsubsection{Differential Ideals}

Just as ideals give us an algebraic way to approach polynomial equations, \emph{differential ideals} correspond to systems of polynomial differential equations.

\begin{definition}\label{diffid}
Let $K$ be a differential field. A \emph{differential ideal} is an ideal 
$I \unlhd \dkx$  that is also closed under differentiation: $p\in I$ implies $p'\in I$. A \emph{radical differential ideal} is a differential ideal that is also a radical ideal in $\dkx$. 

Let $A\subseteq K\{\bx\}$. We write $[A]_K$ or just $[A]$ to denote the \emph{differential ideal generated by} $A$ in $K\{\bx\}$: the collection of all finite sums $\sum_{i,j} g_{i,j}p_i^{(j)}$ where $p_i \in A$, $p_i^{(j)}$ is the $j$-th derivative of $p_i$, and $g_{i,j}\in K\{\bx\}$ is any differential polynomial. 

We write $(A)_K$ or $(A)$ to denote the ideal generated by $A$ in $K\{\bx\}$ viewed as a polynomial ring. (That is, $(A)$ consists of sums $\sum_{i} g_{i}p_i$ where $p_i \in A$ and $g_{i}\in K\{\bx\}$ is any differential polynomial.) If there is risk of confusion, we specify whether $(A)$ is the ideal generated by $A$ in $\dkx$ or in $K[\bx]$, but context usually makes it clear (for instance, this could only be an issue if $A\subseteq K[\bx]$).


\end{definition}

The ideal $(A)$ is defined similarly to $[A]$, except that in forming $(A)$ we do not allow differentiation of the elements of $A$. Hence $(A)\subseteq [A]$ but generally $(A)\subsetneq [A]$. Note that $[A]$ is the minimal differential ideal containing $A$.  

We remark that invariant ideals with respect to a polynomial vector field $\mathbf{F}$ are essentially differential ideals in the sense of Definition \ref{diffid} (the only difference being that $\rx$ is a polynomial ring and not a  differential polynomial ring). By Lemma \ref{ldderiv} (sum and product rule for the Lie derivative), the  operator $\mathcal{L}_{\mathbf{F}}: \rx \rightarrow \rx$ is a derivation on the polynomial ring $\rx$. Since $\ldf{I}\subseteq I$ for an invariant ideal $I$, such an ideal is a differential ideal with respect to the derivation $\mathcal{L}_{\mathbf{F}}$. See Lemma \ref{subslie} for the converse relating a given differential ideal to an invariant ideal.

Radical differential ideals are theoretically and practically more tractable than general differential ideals. We need the following straightforward properties: 
 \begin{lemma}\label{radrestr}
Let $K$ be a differential field and let $I \unlhd K\{\bx\}$ be a differential ideal.
\begin{enumerate}
\item The radical $\sqrt{I}$ of differential ideal $I$ in $K\{\bx\}$ is also a differential ideal.
\item  $\sqrt{I \cap K[\bx]}=\sqrt{I}\cap K[\bx]$ as ideals in $K[\bx]$.
\end{enumerate}
 \end{lemma}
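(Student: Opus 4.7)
The plan is to handle the two claims in turn. The second claim is essentially a bookkeeping exercise with definitions, while the first relies on a standard but clever differentiation trick.

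For part 1, given $p\in\sqrt{I}$ with $p^n\in I$, I would show $p'\in\sqrt{I}$ by exhibiting an $m$ with $(p')^m\in I$. The natural approach, going back to Ritt, is to prove by induction on $k$ that
\[
p^{n-k}(p')^{2k}\in I\qquad\text{for all }0\le k\le n,
\]
so that taking $k=n$ yields $(p')^{2n}\in I$. The base case $k=0$ is just the hypothesis $p^n\in I$. For the inductive step, I would differentiate the inductive assumption, using closure of $I$ under $'$, to obtain
\[
(n-k)p^{n-k-1}(p')^{2k+1} + 2k\,p^{n-k}(p')^{2k-1}p''\in I,
\]
then multiply by $p'$ and observe that the $p''$-term equals $p''\cdot\bigl(p^{n-k}(p')^{2k}\bigr)$, which already lies in $I$ by the inductive hypothesis. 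Subtracting gives $(n-k)p^{n-k-1}(p')^{2k+2}\in I$, and since the characteristic is $0$ the integer $n-k$ is invertible in $K\subseteq K\{\bx\}$, so $p^{n-(k+1)}(p')^{2(k+1)}\in I$ as required. Closure of $\sqrt{I}$ under addition and scalar multiplication is standard (it is a radical ideal), so once differentiation closure is established, $\sqrt{I}$ qualifies as a differential ideal.

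For part 2, both inclusions are immediate once one remembers that $\sqrt{I\cap K[\bx]}$ is the radical computed inside the polynomial ring $K[\bx]$ while $\sqrt{I}$ is the radical in $K\{\bx\}$. For $\subseteq$, if $p\in\sqrt{I\cap K[\bx]}$ then $p\in K[\bx]$ and some power $p^N$ lies in $I\cap K[\bx]\subseteq I$, whence $p\in\sqrt{I}\cap K[\bx]$. For $\supseteq$, if $p\in\sqrt{I}\cap K[\bx]$ then $p\in K[\bx]$ and $p^N\in I$ for some $N$; since $p^N\in K[\bx]$ as well (as $K[\bx]$ is closed under multiplication), we get $p^N\in I\cap K[\bx]$, hence $p\in\sqrt{I\cap K[\bx]}$.

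The only genuinely nontrivial part is the inductive identity in part 1; the main obstacle there is simply to recognize the right exponents ($p^{n-k}(p')^{2k}$) so that the $p''$-term produced by differentiation can be absorbed back into $I$ via the inductive hypothesis after multiplication by $p'$. Everything else is routine, and the characteristic-zero assumption on $K$ (already standing in this subsection) is exactly what makes the integer factor $(n-k)$ harmlessly cancelable.
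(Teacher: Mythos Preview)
Your proposal is correct. The paper itself does not spell out part 1 at all—it merely cites \cite[Lemma 1.15]{marker2005model}—and for part 2 it just says ``immediate from the definition of the radical of an ideal''; your argument for part 1 is exactly the standard Ritt induction that the cited reference contains, and your part 2 matches the paper's one-line observation.
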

 
\begin{proof}
\begin{enumerate}
\item \cite[Lemma 1.15]{marker2005model}.
\item Immediate from the definition of the radical of an ideal. Note that $\sqrt{I \cap K[\bx]}$ is the radical of $I \cap K[\bx]$ taken in $K[\bx]$ while the $\sqrt{I}$ is the radical of $I$ taken in $K\{\bx\}$.
\end{enumerate}

\end{proof}

\begin{remark}
By Lemma \ref{radrestr} (1), the radical ideal $\sqrt{[A]}$ is a differential ideal and hence the smallest radical differential ideal containing $A$. This radical differential ideal is often denoted by $\{A\}$ in the differential algebra literature. 
Because we work so often with finite sets containing differential polynomials, we do \emph{not} follow this usage and by $\{p_1,\ldots,p_r\}$ we mean the set of differential polynomials with elements $p_i$, not the radical differential ideal $\sqrt{[p_1,\ldots,p_r]}$.
\end{remark}
\subsubsection{Differential Nullstellensatz and Differential Saturation Ideals}
On the ``geometric'' side, we are interested in zero sets of collections of differential polynomials. In the differential setting, there is no specific field playing the role of $\bbc$ that contains solutions to all polynomial differential equations with coefficients in the field. (\emph{Differentially closed fields} \cite{robinson1959concept}, named in analogy to algebraically closed fields like $\bbc$, do exist but we do not explicitly need them in this paper. Moreover, there are no known natural examples.) Instead we use the following terminology and notation.
\begin{definition}\label{diffvar}
Let $K$ be a differential field with $A,B\subseteq \dkx$. Let  $L$ be any differential field extending $K$ and let $\mathbf{a}\in L^n$ satisfy $p(\mathbf{a})=0$ for all $p\in A$. Then we say that $\mathbf{a}$ is a \emph{point} (or \emph{element}) of the \emph{differential zero set}  $\vs[K,\delta]{A}$. (We write $\vs[\delta]{A}$ if $K$ is understood.) If every point of $\vs[K,\delta]{A}$ is also a point of $\vs[K,\delta]{B}$, we write $\vs[K,\delta]{A}\subseteq \vs[K, \delta]{B}$ (i.e., for all differential field extensions $L$ of $K$, all solutions of $A=0$ in $L^n$ are solutions of $B=0$.) If also $\vs[K,\delta]{B}\subseteq \vs[K, \delta]{A}$, we write $\vs[K,\delta]{A}=\vs[K,\delta]{B}$ and say that $A$ and $B$ have the same differential zero sets (or the same \emph{differential solutions}). (See Remark \ref{dnssrmk} about our use of the term ``set" in this context.)

We also refer to differential zero sets as  \emph{differential varieties}, \emph{differential algebraic sets}, or \emph{Kolchin closed sets} (over $K$, when we specify a differential field containing the coefficients of the defining differential polynomials).

Analogous definitions hold for \emph{differential constructible sets} (i.e., Boolean combinations of Kolchin closed sets). See, for instance, Example \ref{pdivex}. 
\end{definition}
  
\begin{remark}\label{dnssrmk}
We could have defined algebraic varieties in an analogous way, having points in any extension field containing the coefficients of the defining polynomials. In that case Hilbert's Nullstellensatz (Theorem \ref{nss}) would read as follows: Let $K$ be a field and let $A \subseteq K[\bx]$. Then a polynomial $f \in K[\bx]$ belongs to $\sqrt{(A)}$ if and only if for all fields $L$ extending $K$ and all $\mathbf{a}\in L^n$ such that $q(\mathbf{a})=0$ for all $q\in A$, we have $f(\mathbf{a})=0$. However, $\bbc$ is algebraically closed and thus has the property that a system of polynomial equations over $\bbc$ has a solution in some extension field if and only if it has a solution in $\bbc$ \cite{marker2005fields}. Hence there is no need to mention extensions in the algebraic case. We merely do so in the differential case because there is no natural analogue of $\bbc$ available.

Strictly speaking, the collection of all points of $\vs[K,\delta]{A}$ is not a set, but rather a proper class \cite{enderton:set_thy} containing the solutions of $A$ from all differential fields extending $K$. However, in this paper we never need to treat it as a single completed object and so do not risk set-theoretic difficulties. We simply use $\vs[K,\delta]{A}$ as shorthand for universal quantification over differential field extensions. The terms ``differential zero set" and ``differential algebraic set" are harmless abuses of terminology that we use to mirror the corresponding algebraic concepts. We introduce the notion so that we can conveniently state the \emph{differential Nullstellensatz} without a technical detour into differentially closed fields. Like its algebraic counterpart, the differential Nullstellensatz connects algebra and geometry, giving us a correspondence between radical differential ideals and differential varieties. In this paper, the differential Nullstellensatz is mainly used in the form of Theorems \ref{diffsatrad} and \ref{splitrad}. In turn, these will help prove correctness (Theorem \ref{modrgacorr}) of our main algorithm $\rgaexp$ in Section \ref{rga}.


\end{remark}

\begin{theorem}[Differential Nullstellensatz {\cite[Thm. 2]{rga};}{ Table \ref{galois}}] \label{dnss}
Let $K$ be a differential field of characteristic 0. Given $A\subseteq \dkx$, a differential polynomial $p\in\dkx$ vanishes at every point of $\vs[K,\delta]{A}$ if and only if $p\in\sqrt{[A]}$. (Here $\sqrt{[A]}$ is the radical in $\dkx$ of the differential ideal generated by $A$.)
\end{theorem}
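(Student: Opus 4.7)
The plan is to prove the two directions separately, with the forward $(\Leftarrow)$ direction being an easy computation and the reverse $(\Rightarrow)$ direction requiring a differential analogue of the model-theoretic construction used to prove Hilbert's Nullstellensatz (Theorem \ref{nss}). The overall strategy mirrors the proof of the algebraic Nullstellensatz via passage to a quotient ring, but adapted to respect the derivation.

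For the easy direction, suppose $p \in \sqrt{[A]}$ so that $p^N \in [A]$ for some natural number $N$. By definition of $[A]$, there is an expression $p^N = \sum_{i,j} g_{i,j} p_i^{(j)}$ with $p_i \in A$, $g_{i,j} \in K\{\bx\}$, and $j$ ranging over finitely many derivative orders. Fix any differential field extension $L/K$ and any $\mathbf{a} \in L^n$ with $q(\mathbf{a}) = 0$ for all $q \in A$. Because evaluation at $\mathbf{a}$ commutes with $'$ (as noted in the text following the definition of differential polynomial rings), $p_i(\mathbf{a}) = 0$ implies $p_i^{(j)}(\mathbf{a}) = 0$ for every $j \geq 0$. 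Substituting $\mathbf{a}$ then gives $p^N(\mathbf{a}) = 0$ in the field $L$, and since $L$ is an integral domain we conclude $p(\mathbf{a}) = 0$.

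For the hard direction, I would argue the contrapositive: assuming $p \notin \sqrt{[A]}$, construct a differential extension $L/K$ and a tuple $\mathbf{a} \in L^n$ such that $q(\mathbf{a}) = 0$ for all $q \in A$ but $p(\mathbf{a}) \neq 0$. The key tool is the Ritt--Raudenbush theorem, which in characteristic zero states that every radical differential ideal of $K\{\bx\}$ is the intersection of finitely (or in general, at most countably) many prime differential ideals. Applying this to $\sqrt{[A]}$, we obtain a prime differential ideal $P \supseteq [A]$ with $p \notin P$. Now form the quotient $K\{\bx\}/P$; because $P$ is a differential ideal, the derivation $'$ descends to the quotient, and because $P$ is prime, the quotient is an integral domain. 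Let $L$ be its fraction field, equipped with the unique extension of the derivation given by the quotient rule (this extension exists in characteristic zero and is standard). Taking $\mathbf{a} = (\overline{x_1}, \ldots, \overline{x_n}) \in L^n$ to be the images of the differential indeterminates, we get $q(\mathbf{a}) = \overline{q} = 0$ for every $q \in P \supseteq A$, yet $p(\mathbf{a}) = \overline{p} \neq 0$ since $p \notin P$. This gives the required witness in $\vs[K,\delta]{A}$ where $p$ fails to vanish.

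The main obstacle is the Ritt--Raudenbush decomposition into prime differential ideals, which is substantially deeper than Hilbert's basis theorem (Theorem \ref{hbt}) for the algebraic case: the differential polynomial ring $K\{\bx\}$ is \emph{not} Noetherian as an ordinary ring, so one cannot directly apply classical primary decomposition. Instead, one must prove that the set of radical differential ideals satisfies the ascending chain condition (equivalently, every radical differential ideal is finitely generated \emph{as a radical differential ideal}), which is the genuinely differential-algebraic content. A secondary technical point is verifying that the derivation extends canonically from $K\{\bx\}/P$ to its fraction field, which requires the characteristic-zero hypothesis to avoid pathologies and uses the standard quotient-rule extension; this step is routine but must be checked. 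Once these ingredients are in place, the argument parallels the algebraic Nullstellensatz closely.
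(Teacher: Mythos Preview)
The paper does not supply its own proof of Theorem \ref{dnss}; it is stated with a citation to \cite[Thm.~2]{rga} and used as a black box. Your proposal is the standard textbook proof (Ritt, Kolchin) and is correct in outline: the easy direction is exactly as you say, and the hard direction via a prime differential ideal $P \supseteq [A]$ with $p \notin P$, followed by passage to the fraction field of $K\{\bx\}/P$, is the canonical construction.

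One small misattribution: the extension of the derivation from $K\{\bx\}/P$ to its fraction field via the quotient rule works in \emph{any} characteristic; there is no pathology there. The characteristic-zero hypothesis is genuinely used earlier, namely (i) so that $\sqrt{[A]}$ is itself a differential ideal (Lemma \ref{radrestr}(1), which fails in positive characteristic), and (ii) in the Ritt--Raudenbush prime decomposition of radical differential ideals. You correctly flag Ritt--Raudenbush as the substantive ingredient, but it would sharpen the write-up to locate the use of characteristic zero precisely.
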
 

Phrased differently, $\sqrt{[A]}$ consists precisely of those differential polynomials over $K$ that vanish at all differential solutions of $A$ in any differential extension field of $K$.

The differential Nullstellensatz immediately implies a differential analogue of the algebra-geometry dictionary from Corollary \ref{agdict}:
\begin{corollary}[{Differential algebra-geometry dictionary;}{ Table \ref{galois}}]\label{dagdict}
Let $A,B\subseteq \dkx$, where $K$ is a differential field of characteristic 0. Then $\vs[K,\delta]{A}\subseteq \vs[K, \delta]{B}$  if and only if $\sqrt{[A]}\supseteq \sqrt{[B]}$.\end{corollary}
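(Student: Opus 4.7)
The plan is to derive the corollary directly from the differential Nullstellensatz (Theorem \ref{dnss}), mirroring exactly how the algebraic analogue (Corollary \ref{agdict}) follows from Hilbert's Nullstellensatz. Both directions reduce to the basic observation that $\sqrt{[A]}$ (respectively $\sqrt{[B]}$) is precisely the set of differential polynomials vanishing at every point of $\vs[K,\delta]{A}$ (respectively $\vs[K,\delta]{B}$), which is exactly the content of Theorem \ref{dnss}.

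For the forward direction, I would assume $\vs[K,\delta]{A}\subseteq \vs[K,\delta]{B}$ and take an arbitrary $p\in \sqrt{[B]}$. By Theorem \ref{dnss} applied to $B$, $p$ vanishes at every point of $\vs[K,\delta]{B}$ and hence, by the inclusion of zero sets, at every point of $\vs[K,\delta]{A}$ as well. Applying Theorem \ref{dnss} in the opposite direction to $A$ yields $p\in\sqrt{[A]}$, giving the containment $\sqrt{[B]}\subseteq\sqrt{[A]}$.

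For the reverse direction, I would assume $\sqrt{[A]}\supseteq\sqrt{[B]}$ and take an arbitrary point $\mathbf{a}\in\vs[K,\delta]{A}$ (in some differential field extension $L$ of $K$). For each $q\in B$, we have $q\in [B]\subseteq \sqrt{[B]}\subseteq \sqrt{[A]}$, so by Theorem \ref{dnss} (applied to $A$) $q$ vanishes at every point of $\vs[K,\delta]{A}$; in particular $q(\mathbf{a})=0$. Since this holds for every $q\in B$, we conclude $\mathbf{a}\in\vs[K,\delta]{B}$, as needed.

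Both directions are essentially bookkeeping once Theorem \ref{dnss} is in hand, so there is no real obstacle. The only mild subtlety is making sure the quantification over differential field extensions is handled uniformly on both sides (as flagged in Remark \ref{dnssrmk}), but since the inclusion $\vs[K,\delta]{A}\subseteq \vs[K,\delta]{B}$ is by definition universal over all extensions $L/K$, this causes no difficulty and no completed set-theoretic object need be formed.
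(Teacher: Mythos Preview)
Your proof is correct and takes essentially the same approach as the paper, which simply notes that the corollary follows immediately from the differential Nullstellensatz (Theorem~\ref{dnss}) in the same way Corollary~\ref{agdict} follows from Hilbert's Nullstellensatz.
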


Given an ideal $I\unlhd K\{\bx\}$ and set $S\subseteq K\{\bx\}\setminus \{0\}$, the saturation ideal $I:S^\infty$ is defined as before by $I:S^\infty:= \{p\in K\{\bx\} \mid sp\in I \text{ for some } s\in S^\infty\}$. We have $I\subseteq \sat{I}\subseteq \sat{[I]}$. If $I$ is a differential ideal, then $[I]:S^\infty=\sat{I}$ is also a differential ideal (called a \emph{differential saturation ideal}) \cite[Lemma 1.3]{marker2005model}. 

Much like algebraic saturation ideals, differential saturation ideals represent systems of polynomial differential equations and inequations. (Example \ref{pdivex} and Lemma \ref{difftriang} show that inequations, and not just equations, are important for differential elimination.) The differential Nichtnullstellensatz makes the following important connection between solutions of differential polynomial (in)equations and radicals of differential saturation ideals. (See Theorem \ref{algsatrad} for the algebraic version.)

\begin{theorem}[{Differential Nichtnullstellensatz \cite[Cor. 3]{rga};}{ Table \ref{galois}}]\label{diffsatrad}
Let $K$ be a differential field of characteristic 0, let $A\subseteq \dkx$, and let $0\notin S\subseteq \dkx$ be finite. A differential polynomial $p\in\dkx$ vanishes at every solution of $(A=0,S\neq 0)$ in every differential extension field of $K$ if and only if $p\in\sqrt{\sat{[A]}}$.
\end{theorem}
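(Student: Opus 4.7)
The plan is to follow the template of Theorem \ref{algsatrad} (the algebraic Nichtnullstellensatz) but with Hilbert's Nullstellensatz replaced by the differential Nullstellensatz (Theorem \ref{dnss}). Let $\mathrm{\Pi} S := \prod_{s\in S} s$, which is a nonzero element of $\dkx$ since $S$ is finite and $0\notin S$. I would first observe that $\sat{[A]} = [A]:(\mathrm{\Pi} S)^\infty$: the argument given for Lemma \ref{satgeom} (1) is purely ring-theoretic, relying only on finiteness of $S$ and the closure of ideals under multiplication, so it transfers verbatim to the differential polynomial ring $\dkx$.

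For the $(\Leftarrow)$ direction, assume $p\in\sqrt{\sat{[A]}}$ and let $\mathbf{a}$ be a solution of $(A=0,S\neq 0)$ in some differential extension field $L$ of $K$. By the preceding observation there exist $M,N\in\bbn$ with $(\mathrm{\Pi} S)^M p^N \in [A]$. The generators of $[A]$ as a $\dkx$-ideal are the iterated derivatives $p_i^{(k)}$ of elements $p_i\in A$; since evaluation at a point of a differential extension commutes with the derivation (so $p_i^{(k)}(\mathbf{a}) = (p_i(\mathbf{a}))^{(k)} = 0$ whenever $p_i(\mathbf{a})=0$), every element of $[A]$ vanishes at $\mathbf{a}$. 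Hence $(\mathrm{\Pi} S)^M(\mathbf{a})\cdot p(\mathbf{a})^N = 0$, and since each factor of $\mathrm{\Pi} S$ is nonzero at $\mathbf{a}$, we conclude $p(\mathbf{a})=0$.

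For the $(\Rightarrow)$ direction, I would adapt the trick used in the $(\subseteq)$ part of Lemma \ref{satgeom} (2). Set $q := (\mathrm{\Pi} S)\cdot p$ and claim that $q$ vanishes at every point of $\vs[K,\delta]{A}$. Indeed, at any differential solution $\mathbf{a}$ of $A=0$, either some $s\in S$ already satisfies $s(\mathbf{a})=0$ (so $\mathrm{\Pi} S(\mathbf{a})=0$ and $q(\mathbf{a})=0$), or every $s\in S$ is nonvanishing at $\mathbf{a}$, in which case $\mathbf{a}$ is a solution of $(A=0,S\neq 0)$ and the hypothesis yields $p(\mathbf{a})=0$, giving $q(\mathbf{a})=0$ again. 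By the differential Nullstellensatz, $q\in\sqrt{[A]}$, so $(\mathrm{\Pi} S)^M p^M \in [A]$ for some $M$. This places $p^M$ in $[A]:(\mathrm{\Pi} S)^\infty = \sat{[A]}$, giving $p\in\sqrt{\sat{[A]}}$.

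The only conceptual step beyond invoking Theorem \ref{dnss} is the fact that evaluation of a differential polynomial at a point of a differential extension commutes with the formal derivation, so that every element of $[A]$ (and not merely the elements of $A$ themselves) automatically vanishes on $\vs[K,\delta]{A}$. I expect this to be the main place a reader might pause; once that subtlety is in hand, the argument is a direct transcription of the algebraic proof.
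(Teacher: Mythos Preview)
Your proof is correct. Note, however, that the paper does not actually supply its own proof of Theorem~\ref{diffsatrad}: it simply cites the result as \cite[Cor.~3]{rga}. That said, your argument is precisely the differential analogue of the proof the paper \emph{does} give for the algebraic Nichtnullstellensatz (Theorem~\ref{algsatrad}), which in turn unpacks the two directions of Lemma~\ref{satgeom}~(2). So in spirit your approach matches the paper's treatment of the parallel algebraic statement, with the differential Nullstellensatz (Theorem~\ref{dnss}) substituted for Hilbert's Nullstellensatz at the key step---exactly as you intended.
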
 



The last result in this subsection is a differential analogue of Theorem \ref{algsplitrad}:

\begin{theorem}[{Splitting, differential case \cite[Cor. 5]{rga}}]\label{splitrad}
Let $K$ be a differential field of characteristic 0, let $A\subseteq \dkx$, and let $0 \notin S\subseteq \dkx$ be finite. If $h\in \dkx \setminus \{0\}$, then \[\sqrt{\sat{[A]}}=\sqrt{\sat{[A,h]}}\cap \sqrt{\sat[(S\cup\{h\})]{[A]}}.\]
\end{theorem}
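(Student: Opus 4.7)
The plan is to mirror the proof of the algebraic splitting theorem (Theorem \ref{algsplitrad}), using the differential Nichtnullstellensatz (Theorem \ref{diffsatrad}) in place of its algebraic counterpart. The key observation is that, over any differential extension field of $K$, the solution set of $(A=0, S\neq 0)$ decomposes as the disjoint union of the solutions satisfying $h=0$ and those satisfying $h\neq 0$. Each of the two factors on the right-hand side of the claimed identity corresponds, via Theorem \ref{diffsatrad}, to one of these two pieces.

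More concretely, I would first rewrite all three radical differential saturation ideals as vanishing ideals of differential constructible sets. By Theorem \ref{diffsatrad}, $p\in \sqrt{\sat{[A]}}$ iff $p$ vanishes at every differential solution of $(A=0, S\neq 0)$; $p\in \sqrt{\sat{[A,h]}}$ iff $p$ vanishes at every differential solution of $(A=0, h=0, S\neq 0)$; and $p\in \sqrt{\sat[(S\cup\{h\})]{[A]}}$ iff $p$ vanishes at every differential solution of $(A=0, S\neq 0, h\neq 0)$.

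For the inclusion $(\subseteq)$, observe that the solution sets of $(A=0, h=0, S\neq 0)$ and of $(A=0, S\neq 0, h\neq 0)$ are each contained in the solution set of $(A=0, S\neq 0)$. So if $p$ vanishes on the latter, it vanishes on each of the former, giving membership in both radicals on the right. For the reverse inclusion $(\supseteq)$, suppose $p$ lies in both $\sqrt{\sat{[A,h]}}$ and $\sqrt{\sat[(S\cup\{h\})]{[A]}}$, and let $\mathbf{a}$ be any point of $\vs[K,\delta]{A} \setminus \vs[K,\delta]{\Pi S}$ in some differential extension $L\supseteq K$. Either $h(\mathbf{a})=0$, in which case $\mathbf{a}$ is a solution of $(A=0, h=0, S\neq 0)$ and hence $p(\mathbf{a})=0$ by the first membership, or $h(\mathbf{a})\neq 0$, in which case $\mathbf{a}$ is a solution of $(A=0, S\cup\{h\} \neq 0)$ and $p(\mathbf{a})=0$ by the second. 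Either way $p$ vanishes on every differential solution of $(A=0, S\neq 0)$, so another application of Theorem \ref{diffsatrad} yields $p\in \sqrt{\sat{[A]}}$.

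There is no real obstacle here once the differential Nichtnullstellensatz is in hand: the argument is literally the same case split ``$h$ vanishes or it doesn't'' as in the algebraic case, just applied to tuples of solutions that may live in an arbitrary differential extension field of $K$ rather than in $\bbc^n$. The only subtlety worth emphasizing is notational, namely that the ``points'' in Definition \ref{diffvar} range over all differential extensions of $K$ (cf.\ Remark \ref{dnssrmk}), so the dichotomy ``$h(\mathbf{a})=0$ or $h(\mathbf{a})\neq 0$'' must be applied uniformly across all such $\mathbf{a}$; but this causes no difficulty since the dichotomy is pointwise.
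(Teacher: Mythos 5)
Your proof is correct. The paper itself cites this result to Boulier et al.\ without giving a proof, but your argument---the pointwise dichotomy ``$h(\mathbf{a})=0$ or $h(\mathbf{a})\neq 0$'' combined with the differential Nichtnullstellensatz (Theorem~\ref{diffsatrad})---is exactly the differential analogue of the one-line proof the paper gives for the algebraic splitting theorem (Theorem~\ref{algsplitrad}).
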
 

\subsubsection{Explicit Form and Lie Derivatives in Differential Ideals}
Since we intend to compute algebraic invariants of polynomial vector fields, we naturally focus on differential polynomials of the following form:
\begin{definition}\label{explicit}
Let $K$ be a differential field. We say a differential polynomial $p\in K\{\bx\}$ of order 1 is in \emph{explicit form} (or is \emph{explicit}) if $p=x_i' + q$ for some $1\leq i \leq n $ and $q\in K[\bx]$. (In particular, no proper derivative appears in $q$.)  We also say the corresponding differential equation $x_i'=-q$ is in explicit form. (Note that explicit form is desirable because it essentially replaces a derivative with a polynomial.) 
\end{definition}
In the following lemma and thereafter, if $I\unlhd \bbr \{\bx\}$ we write $\mathbf{x'}-\mathbf{f}(\bx) \in I$ as shorthand indicating that each element $x_i' -f_i(\bx)$ of $\mathbf{x'}-\mathbf{f}(\bx)$ belongs to $I$. This property connects differential ideals to invariant ideals and allows us to use differential algebra to find sets invariant with respect to $\xpef$:
\begin{lemma}\label{subslie}
If $\mathbf{x'}-\mathbf{f}(\bx) \in  I \unlhd \bbr\{\bx\}$ and $I$ is a differential ideal, then for any $p\in  I \cap \rx$ we have $\mathcal{L}_{\xpef}(p)=\dot{p} \in I \cap \rx$. (In other words, $I\cap \rx$ is an invariant ideal with respect to $\xpef$.)
\end{lemma}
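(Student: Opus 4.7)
The plan is to exploit the fact that $I$ being a differential ideal lets us differentiate $p$ inside $I$, then rewrite the result in terms of the vector field $\mathbf{f}(\bx)$ by adding and subtracting the known elements $x_i'-f_i(\bx)$. Concretely, since $p\in I$ and $I$ is closed under $'$, we have $p'\in I$. Because $p\in \rx$ (no proper derivatives appear), applying the sum and product rules of the derivation $'$ on $\bbr\{\bx\}$ gives the expansion
\[
p' \;=\; \sum_{i=1}^n \frac{\partial p}{\partial x_i}\, x_i',
\]
which is an order-$1$ differential polynomial whose only proper derivatives are the $x_i'$.

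Next, I would use the trivial identity $x_i' = (x_i' - f_i(\bx)) + f_i(\bx)$ to decompose
\[
p' \;=\; \sum_{i=1}^n \frac{\partial p}{\partial x_i}\,(x_i'-f_i(\bx)) \;+\; \sum_{i=1}^n \frac{\partial p}{\partial x_i}\, f_i(\bx).
\]
By hypothesis each $x_i'-f_i(\bx)\in I$, and $I$ is an ideal of $\bbr\{\bx\}$, so the first sum lies in $I$. Rearranging,
\[
\sum_{i=1}^n \frac{\partial p}{\partial x_i}\, f_i(\bx) \;=\; p' \;-\; \sum_{i=1}^n \frac{\partial p}{\partial x_i}\,(x_i'-f_i(\bx)) \;\in\; I,
\]
since the right-hand side is a difference of two elements of $I$. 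The left-hand side is precisely $\ldf{p} = \dot{p}$ by the definition of the Lie derivative.

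Finally, note that $\dot{p} = \sum_i (\partial p/\partial x_i)\, f_i(\bx)$ involves only the variables $\bx$ and no proper derivatives, so $\dot{p}\in \rx$; combined with $\dot{p}\in I$ this yields $\dot{p}\in I\cap \rx$. There is no genuine obstacle here — the argument is essentially the observation that the derivation $'$ on $\dkx$ reduces modulo the ideal $(\xpf)$ to the Lie derivative $\ldf{\cdot}$ on $\rx$, so the ``hard'' step is only the bookkeeping of writing this reduction explicitly.
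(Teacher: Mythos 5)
Your proof is correct and takes essentially the same approach as the paper: both start from $p'\in I$ (differential ideal), rewrite $p'$ modulo the elements $x_i'-f_i(\bx)\in I$, and conclude that what remains is $\dot p\in I\cap\rx$. The only difference is presentational — you invoke the chain-rule identity $p' = \sum_i (\partial p/\partial x_i)\,x_i'$ up front, whereas the paper performs the equivalent substitution monomial by monomial.
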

\begin{proof}
We have $p'\in  I$ because $I$ is a differential ideal. Since $p\in \rx$, all monomials $m_i$ of $p'$ have the form $(k_i)(a)x_1^{k_1}\cdots x_i^{k_i-1}x_i'\cdots x_n^{k_n}$ for some $a \in \bbr$ and $k_1,\ldots k_n \in \mathbb{N}$. (This is one of the summands produced by the product rule applied to a monomial $m=ax_1^{k_1}\cdots x_i^{k_i}\cdots x_n^{k_n}$ in $p$.) Subtracting $(k_i)(a)x_1^{k_1}\cdots x_{i}^{k_i -1}\cdots x_n^{k_n}(x_i' -f_i(\bx))$ from $p'$ produces an element of $I$ (since $x_i' -f_i(\bx)\in I$ by assumption) that replaces the monomial $m_i$ with $(k_i)(a)x_1^{k_1}\cdots x_i^{k_i-1}(f_i(\bx))\cdots x_n^{k_n}$. Doing this for each $x_i$ in monomial $m$ and summing the output replaces $m'$ in $p'$ with the Lie derivative $\dot{m}$ of $m$. It follows from Lemma \ref{ldderiv} that substituting $\mathbf{f}(\mathbf{x})$ for $\mathbf{x'}$ in $p'$ this way yields $\dot{p}\in I \cap \rx$. (Recall that $\dot{p}\in \rx$ by definition.)
\end{proof}

\subsection{Rankings and Reduction}\label{rank}
As discussed in the introduction, the central aim of this paper is to use \emph{differential elimination} to algorithmically generate algebraic invariants of polynomial dynamical systems.
Elimination identifies the core content of a system of polynomial (or differential polynomial) equations by ``reducing" some polynomials with respect to others. In addition to Gaussian elimination, another classic example is long division of one univariate polynomial by another. If the divisor does not evenly divide the dividend, we are left with a nonzero remainder. One way to extend this to multivariate differential polynomials is to use a \emph{differential ranking} (or ranking, if the context is clear) \cite{sit2002ritt}. Differential rankings identify a ``leading derivative" in any differential polynomial (e.g., to determine if $x'$ or $y''$ is eliminated first)  and ensure termination of algorithms by eliminating ``large terms" first. Rankings are analogous to \emph{monomial orderings} in the theory of \Grob bases \cite[Sect. 2.2, Def. 1]{clo1_4}. However, rankings only apply to individual variables and their derivatives, whereas monomial orderings concern monomials. 
Many of the following results do not depend on the choice of ranking (the main exceptions are algorithm $\rgaexp$ on p. \pageref{rgaalgo} and the proof of Theorem \ref{modrgacorr}) and we do not specify a ranking except when necessary. 

\begin{definition}
A \emph{differential ranking} is a well-founded linear ordering $<$ of derivatives (i.e., every nonempty subset of derivatives has a least element with respect to $<$) that goes up with differentiation and respects $'$: $x^{(k)}<x^{(k+1)}$ and $x<y$ implies $x'<y'$.
\end{definition}

There are different rankings for different purposes. Two prominent classes are \emph{elimination rankings} (sort variables lexicographically irrespective of their order
; e.g., if $x<y$, then $x^{(k)}< y$ for any $k$) and \emph{orderly rankings} (the derivative of highest order ranks highest, regardless of the base variable; ties in order are decided according to the ordering on the variables). For instance, if $x<y$ for an orderly ranking, then $y^{(l)}<x^{(k)}$ if $l<k$ and $x^{(k)}<y^{(l)}$ if $k \leq l$.

We identify several important components 
of a differential polynomial with respect to a ranking. These constituents are useful for specifying the control flow of algorithms involving systems of differential polynomials. (For instance, in Section \ref{rga} we split cases by either setting a given separant equal to 0 or not.) 

\begin{definition}
Fix a differential ranking. 
\begin{itemize}
    \item The highest-ranking derivative that appears in a non-constant differential polynomial $p\in K\{\bx\}\setminus K$ is the \emph{leader} of $p$ (denoted $l_p$). Note that $l_p$ is a derivative of a single differential indeterminate and does not involve powers or multiple variables.
    \item The \emph{initial} of $p$ (denoted $i_p$) 
is the coefficient (viewing $p$ as a univariate polynomial in $l_p$ with coefficients in $K\{\bx\}$ that do not involve $l_p$) of the highest power of $l_p$.
\item The \emph{separant} of $p$ (denoted $s_p$) is the initial of any derivative $p^{(k)}$ where $k\geq 1$ (equivalently, the formal partial derivative $\frac{\partial p}{\partial l_p}$ of $p$ with respect to $l_p$).
\end{itemize} Unlike the initial $i_p$, the separant $s_p$ could contain $l_p$, but with a lower degree than $l_p$ has in $p$. We do not define leaders, initials, or separants for constants (i.e., elements of the field $K$).
\end{definition}


\begin{example}\label{diffpexp}
Fix any differential ranking such that $x<y$ and let $p:=x(x+1)(y'')^2+x'y'' + x^4$. Then 
\smallskip

\begin{align*}
p'&= \underline{2x(x+1)(y'')}(y''') + (x')(x+1)(y'')^2 +x(x')(y'')^2 + \underline{x'}y''' + x''y''+4x^3x'\\
&=  (2x(x+1)y''+x')(y''') + (2x+1)(x')(y'')^2  + x''y''+4x^3x'.
\end{align*}
\smallskip 

 \noindent The leader $l_p$ of $p$ is $y''$ (and not $(y'')^2$, $x(x+1)(y'')^2$, etc.), the initial $i_p$ is the coefficient $x(x+1)$ of $(y'')^2$, and the separant $s_p$ is $2x(x+1)y'' +x'$ (underlined for visibility in $p'$ above).

\end{example}
\begin{remark}\label{rankset}
Rankings induce a well-partial-ordering on differential polynomials  ($q<p$ if the leader of $p$ is greater than the leader of $q$ or if they are the same and the degree of the leader is greater in $p$ than in $q$) and, in turn, on \emph{sets} of polynomials. See \cite[Sect. I.10]{KolchinDAAG} and \cite[Ch. 5]{mishra} for exact definitions; also see our remarks about the RGA algorithm at the beginning of Section \ref{rga}.

Rankings for nondifferential polynomial rings $K[\bx]$ are simply linear orderings of the variables. The leader, initial, and separant of a polynomial $p\in K[\bx]$ are defined as in the differential case. 
\end{remark}
\subsubsection{Differential Pseudodivision}
Rankings give us a systematic way of generalizing long division via \emph{differential pseudodivision} (or just pseudodivision; we also say \emph{differential pseudoreduction} or \emph{Ritt reduction}). Here we view multivariate differential polynomials as univariate differential polynomials whose coefficients are differential polynomials in one fewer variable. Differential pseudodivision is like univariate long division except the coefficients are differential polynomials and we usually cannot divide without introducing fractions. We give a concrete example before describing the process more formally.

\begin{example}\label{pdivex} Choose any differential ranking with $x<y$ and let $p=(x+1)(y'')+ x^4$ and $q=(x^2-1)(y')^2$. Observe that $y'$ is the leader $l_q$ of $q$ and $y''$ is the highest derivative of $y'$ in $p$. Pseudodividing/pseudoreducing $p$ by $q$ consists of first differentiating $q$ to match the $y''$ and then ``dividing" $p$ by $q'$ (in quotation marks because we must \emph{premultiply} $p$ by something in order to divide by $q'$ without fractions). This eliminates $y''$. If the resulting \emph{pseudoremainder} contains $(y')^2$, we ``divide" the pseudoremainder by $q$ to obtain another pseudoremainder. The final \emph{reduced} pseudoremainder does not contain any proper derivative of $l_q$ and has degree less than 2 in $l_q$.

	\begin{itemize}
	\item  Differentiate $q$ to find $q'=2(x^2-1)(y')y'' +(2xx')(y')^2$. (Note that $y''$ has degree 1 in $q'$.)
	\item Then premultiply $p$ by $(x-1)y'$ because $x+1$ is the coefficient of $l_{q'}=y''$ in $p$ and $2(x+1)(x-1)y'=2(x^2-1)y'= i_{q'}=s_q$. 
	\item Now subtract $(\frac{1}{2})q'$ from $(x-1)y'p$ to obtain pseudoremainder $r=(x-1)(y')(x^4)- xx'(y')^2.$ 
	
	Notice that $p,q'$ vanish precisely when either $p,q',s_q$ all vanish or $q',r$ vanish and $s_q$ does not. 
Using the notation of Definition \ref{diffvar} we have $\vs[K,\delta]{p,q'}=(\vs[K,\delta]{p,q',s_q})\cup (\vs[K,\delta]{q',r}\setminus \vs[K,\delta]{s_q})$. In other words, using splitting and differential pseudodivision to eliminate differential polynomials splits differential varieties into a union of differential constructible sets. 
The same idea appears in 
Equations \ref{diffspliteq},\ref{diffcons} on p. \pageref{diffcons}.
		\item With one more round of premultiplication and subtraction, we can eliminate  $xx'(y')^2$ in $r$. Premultiply $r$ by $x^2-1$ (this is the initial $i_q$ of $q$ and $xx'$ doesn't already contain any factors of $i_q$) and subtract $-xx'q$ to obtain the final answer $(x^2-1)(x-1)(y')(x^4)$. This pseudoremainder has no variables that can be eliminated using $q$.
		\end{itemize}

\end{example}

In general, let $p,q$ be differential polynomials ($q$ non-constant so that leaders, etc., are well defined; in particular, $q\neq 0$). Pseudodividing $p$ by $q$  involves the following steps (see \cite[Sect. I.9]{KolchinDAAG} and \cite[Sect. 2.2]{boulier2006differential}  for other versions): 
\medskip

\noindent ${\tt DiffPseudoDiv}(p,q)$:

\begin{enumerate}

\item If a proper derivative of the leader $l_q$ of $q$ appears in $p$:
	\begin{itemize}
		\item Let $(l_q)^{(k)}$, $k\geq 1$, be the highest proper derivative of $l_q$ appearing in $p$. Compute $q^{(k)}$. 
		\item Let $r_k:= $ ${\tt DiffPseudoDiv}$$(p,q^{(k)})$. (Note that the initial of $q^{(k)}$ is the separant $s_q$ of $q$ since $k\geq 1$.)  Then return
	${\tt DiffPseudoDiv}$$(r_k,q)$. 
	\end{itemize}
(Note that the call ${\tt DiffPseudoDiv}$$(p,q^{(k)})$ must proceed to step 2 because $(l_q)^{(k)}$ is the leader of $q^{(k)}$. Also note that the degree of $(l_q)^{(k)}$ in $q^{(k)}$ is 1.)
\item If $l_q$ (but no proper derivative thereof) appears in $p$ and the highest power $d$ of $l_q$ that appears in $p$ is at least the degree $e$ of $l_q$ in $q$:
	\begin{itemize}
	\item Let  $c\in \dkx$ be the coefficient of $l_q^d$ in $p$. Multiply $p$ by an appropriate factor $\alpha$ of the initial $i_q$ of $q$ to ensure that  $(\alpha )(c)$ is divisible by $i_q$. This is called 
\emph{premultiplication}. It suffices to premultiply $p$ by $\alpha:= i_q/g$, where $g$ is the GCD of $i_q$ and $c$. 

	\item Now subtract the necessary multiple of $q$ (namely, $\left(c/{g}\right)(l_q)^{d-e}q$) to eliminate $(l_q)^d$ from $p$ and obtain a \emph{pseudoremainder} $r$ whose highest power of $l_q$ is less than $d$:	
 \[r:= (i_q/g)p -(c/{g})(l_q)^{d-e}q .\]
\phantomsection
\label{pdivdeg}
\noindent Note that $i_q/g$ and $c/g$ are both differential polynomials and not fractions because $g$ divides both $i_q$ and $c$ by definition.  Also note that the total degree $deg(r)$ of $r$ is at most the sum of the total degrees $deg(p),deg(q)$ of $p$ and $q$. (This is because degrees of products are additive and the degrees of $i_q/g$ and $(c/{g})(l_q)^{d-e}$ are at most $deg(q)-e$ and $deg(p)-d+(d-e)=deg(p)-e$, respectively. Hence $deg(r)\leq deg(p)+deg(q)-e$, but for simplicity we use the larger bound $deg(p)+deg(q)$.)

	\item Return ${\tt DiffPseudoDiv}$$(r,q)$.  
	
\end{itemize} 

If  $l_q$ (but no proper derivative thereof) appears in $p$ and $d<e$, return $p$.

\item If neither $l_q$ nor any proper derivative of $l_q$ appears in $p$, return $p$.

\end{enumerate}

\begin{remark}\label{premred}
The algorithm ${\tt DiffPseudoDiv}$ terminates because step 1 reduces the order of the highest proper derivative of $l_q$ that appears in $p$ and step 2 reduces the degree of $l_q$ in $p$ to a value below $e$. (In particular, $(l_q)^{(k)}$ is eliminated by the call ${\tt DiffPseudoDiv}$$(p,q^{(k)})$ because the degree of $(l_q)^{(k)}$ in $q^{(k)}$ is $e=1$.)

The same concepts apply in the nondifferential case (i.e., rankings on variables in $K[\bx]$). Algebraic pseudodivision is the same process as ${\tt DiffPseudoDiv}$, except that step 1 never applies because there are no derivatives.

If ${\tt DiffPseudoDiv}$$(p,q)=r$, we say $r$ is the (differential) pseudoremainder resulting from (differential) pseudodivision of $p$ by $q$. The core component of pseudodivision is a single round of premultiplication and subtraction (the first and second items in step 2 of ${\tt DiffPseudoDiv}$). We refer to this as a ``single step" of pseudodivision and call the result a pseudoremainder even though it is an intermediate element that may be used for further steps of pseudodivision until the final pseudoremainder is reached. See Remark \ref{singlestepconv} immediately preceding the description of algorithm ${\tt Triangulate}$.

\end{remark}

The next proposition makes explicit the differential-algebraic relationship between $p,q$, and $r$ when ${\tt DiffPseudoDiv}$$(p,q)=r$. (The analogue for division of integer $a$ by nonzero integer $b$ is the equation $a=cb + d$, where $c$ is the quotient and $d$ is the remainder.)
\begin{proposition}[{Appendix}]\label{pdivsat}
Let ${\tt DiffPseudoDiv}$$(p,q)=r$. Then for some $\widetilde{s}$ a product of factors of $s_q$, $\widetilde{i}$ a product of factors of $i_q$, and  $\widetilde{q}\in [q]$ we have $(\widetilde{s})(\widetilde{i})p-\widetilde{q} = r$. In particular, we have $p\in \sat[\{s_q,i_q\}]{[q]}$ if $r$ is 0. 

In the nondifferential case (or if $p$ contains no proper derivatives of the leader $l_q$ of $q$) we have $(\widetilde{i})p-\widetilde{q} = r$, with $\widetilde{q}\in (q)$ now, and $p\in \sat[\{i_q\}]{(q)}$ if $r$ is 0.  
\end{proposition}


Just as the process in Example \ref{pdivex} eliminated $y''$ and cut the degree of $y'$ from 2 to 1, more generally we have the following definition:

\begin{definition}
Given a differential polynomial ring $K\{\bx\}$, a differential ranking, and $p,q\in \dkx$, we say $p$ is \emph{partially reduced} with respect to $q$ if no proper derivative of the leader $l_q$ of $q$ appears in $p$.  (In the nondifferential case there are no derivatives and so partial reducedness holds vacuously.) We say $p$ is \emph{reduced} with respect to $q$ if $p$ is partially reduced with respect to $q$ and any instances of $l_q$ in $p$ have strictly lower degree than the maximum degree of $l_q$ in $q$. (In particular, it is impossible to pseudodivide $p$ by $q$ any further.) We say that a subset $A\subseteq \dkx$ is \emph{partially reduced} if all members of $A$ are pairwise partially reduced. We likewise call $A$ \emph{autoreduced} if the elements of $A$ are pairwise reduced. Similarly, we say $p\in \dkx$ is (partially) reduced with respect to $A$ if $p$ is (partially) reduced with respect to each element of $A$. \end{definition}

It follows from Remark \ref{premred} that the pseudoremainder $r$ resulting from pseudodividing $p$ by $q$ is reduced with respect to $q$. After a ``single step" of pseudoreduction as defined in Remark \ref{premred}, the pseudoremainder is not necessarily reduced yet with respect to $q$, but it does have degree in $l_q$ strictly less than that of $p$.
\subsection{Regular Systems}\label{regsyssec}

Systems of polynomial equations and inequations can ``hide" their information in the sense that deciding the system's properties may require substantial computation. We focus on structured collections known as \emph{regular systems} and \emph{regular sets} that are easier to analyze. The basic intuition is that ``enough" pseudodivision has been done beforehand so that there is no redundancy left to obscure relations between the system's elements. See Hubert's articles \cite{hubert2001notesa,hubert2001notesb} for more information about these systems and algorithms for analyzing them.

\begin{definition}[{\cite[Def. 1]{rga}}] \label{regalgsys}
Let $K$ be a field, let $A,S\subseteq K[\bx]$ be finite with $0\notin S$, and fix a ranking. The equation/inequation pair $(A=0,S\neq 0)$ (or simply $(A,S)$) denoting $f(\bx)=0$ and $g(\bx)\neq 0$ for each $f\in A$ and $g\in S$, respectively, is a \emph{regular algebraic system} over $K$ if 
\begin{enumerate}
\item the elements of $A$ have distinct leaders and
\item  $S$ contains the separant of each element of $A$.
\end{enumerate}
\end{definition}

\begin{definition}[{\cite[Def. 7]{rga}}] \label{regdiffsys}
Let $K$ be a differential field, let $A,S\subseteq K\{\bx\}$ be finite with $0\notin S$, and fix a differential ranking. The equation/inequation pair $(A=0,S\neq 0)$ is a \emph{regular differential system} over $K$ if 
\begin{enumerate}
\item $A$ is partially reduced and the elements of $A$ have distinct leaders and 
\item the elements of $S$ are partially reduced with respect to $A$, and $S$ contains the separant of each element of $A$.
\end{enumerate}
\end{definition}
Our statement is simpler than the general definition \cite{rga} because we restrict ourselves to a single derivation. The ``coherence property" in \cite{rga} holds vacuously in the ordinary differential case. Definitions \ref{regalgsys} (1), \ref{regdiffsys} (1) force $A$ to be finite (since there are only $n$ indeterminates), so there is no loss of generality in assuming $A$ is finite to begin with.

We often omit ``over $K$" when the field is clear from context. Also, nonzero constant multiples in $S$ are irrelevant because we interpret the elements of $S$ as inequations. Thus, for example, if $s_f:=2x$ is the separant of $f\in A$ and $x\in S$, we take that as satisfying the requirement that $S$ contain $s_f$. Similarly, we do not explicitly show nonzero scalars when listing the elements of $S$ even if, like in $x'-xy$, the separant is 1.

An equation/inequation pair $(A=0,S\neq 0)$ from $K[\bx]$ (respectively, $K\{\bx\}$) is an \emph{algebraic system} (respectively, \emph{differential system}) if it is not necessarily regular.

While an arbitrary differential system need not be regular, there always exists a decomposition into one or more regular differential systems (Theorem \ref{rgadecomp}). Our main algorithm $\rgaexp$ from Section \ref{rga} yields such a decomposition if the differential equations have explicit form $\xpef$ (Theorem \ref{modrgacorr}).

\begin{definition}
Let $K$ be a differential field and let $A,S\subseteq K\{\bx\}$. The \emph{restrictions} of $A$ and $S$, respectively, to $K[\bx]$ are $A_{K[\bx]} := A\cap K[\bx]$  and $S_{K[\bx]}:= S \cap K[\bx]$. If $\mathcal{C}:=(A=0,S\neq 0)$ is a differential system, we call $\mathcal{C}_{K[\bx]}:= (A_{K[\bx]}=0,S_{K[\bx]}\neq 0)$ the \emph{restriction} of $\mathcal{C}$ to $K[\bx]$.
\end{definition}

Every regular differential system naturally contains a regular algebraic system.

\begin{lemma}\label{regalg}
Let $\mathcal{C}:=( A=0,S\neq 0$) be a regular differential system. 
Then the restriction $\mathcal{C}_{K[\bx]}$ of $\mathcal{C}$ to $K[\bx]$ is a regular algebraic system (with respect to the ranking inherited from that of $\mathcal{C}$).
\end{lemma}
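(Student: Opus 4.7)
The plan is to verify the two defining conditions of a regular algebraic system (Definition \ref{regalgsys}) for the pair $\mathcal{C}_{K[\bx]} = (A_{K[\bx]} = 0,\, S_{K[\bx]} \neq 0)$, using the regularity hypotheses on $\mathcal{C}$ (Definition \ref{regdiffsys}) together with the key observation that restricting to $K[\bx]$ means restricting to non-differential polynomials, whose leaders and separants are automatically non-differential as well.

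First I would check condition (1), that the elements of $A_{K[\bx]}$ have distinct leaders. This is essentially inherited: since $A$ itself satisfies this property by regularity of $\mathcal{C}$, and $A_{K[\bx]} \subseteq A$, any two distinct $f, g \in A_{K[\bx]}$ already have distinct leaders $l_f \neq l_g$ when viewed as differential polynomials. Under the inherited ranking on $K[\bx]$ (obtained by restricting the differential ranking to $\bx = (x_1, \ldots, x_n)$, which is a subset of all derivatives and hence inherits a linear order), these leaders are the same variables and remain distinct.

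Next I would verify condition (2), that $S_{K[\bx]}$ contains the separant of each element of $A_{K[\bx]}$. The crucial point is that if $f \in A_{K[\bx]} = A \cap K[\bx]$, then $f$ contains no proper derivatives, so its leader $l_f$ is some $x_i \in \bx$ and the separant $s_f = \partial f / \partial l_f$ lies entirely in $K[\bx]$ (no proper derivatives are introduced by differentiating a polynomial in $\bx$ with respect to one of the $x_i$). By regularity of $\mathcal{C}$, we have $s_f \in S$, and combining this with $s_f \in K[\bx]$ gives $s_f \in S \cap K[\bx] = S_{K[\bx]}$, as required. A minor remark: $0 \notin S_{K[\bx]}$ because $0 \notin S$.

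There is no serious obstacle here; the lemma is really a bookkeeping statement that the non-differential part of a regular differential system is closed under the formation of leaders and separants. The only subtlety worth noting is that the ``partial reducedness'' portion of Definition \ref{regdiffsys} is not needed for $\mathcal{C}_{K[\bx]}$, since partial reducedness holds vacuously among elements of $K[\bx]$ (no proper derivatives exist to appear). Thus restriction strips away exactly the conditions from the differential definition that have no algebraic counterpart, leaving precisely the requirements of Definition \ref{regalgsys}.
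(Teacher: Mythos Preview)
Your proposal is correct and follows essentially the same approach as the paper's proof: both verify condition (1) by noting that $A_{K[\bx]} \subseteq A$ inherits distinct leaders, and condition (2) by observing that separants of elements of $A_{K[\bx]}$ lie in $K[\bx]$ and hence in $S \cap K[\bx] = S_{K[\bx]}$. Your version is simply more detailed, including the remarks on the inherited ranking, $0 \notin S_{K[\bx]}$, and the vacuity of partial reducedness, which the paper omits.
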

\begin{proof}
The elements of $A_{K[\bx]}\subseteq A$ have distinct leaders, as do all elements of $A$. By definition $S_{K[\bx]}=S\cap K[\bx]$; this implies that the separants of elements of $A_{K[\bx]}$ belong to $S_{K[\bx]}$ because $S$ contains the separants of elements of $A\supseteq A_{K[\bx]}$ and the separants of elements of $A_{K[\bx]}$ belong to $K[\bx]$.
\end{proof}

For some purposes regular systems are not strong enough. In particular, they do not suffice for testing membership in $\sat{(A)}$ or $\sat{[A]}$.  For that we need the related notion of a \emph{regular set}.  We first give the definition, but the connection to saturation ideal membership is contained in Theorem \ref{premtest}. This in turn plays an important role in the proof of Theorem \ref{lddecomp} in Section \ref{regsysresults}.

\begin{definition}[{\cite[Def. 2]{BoulierLMP21}}] \label{regsys}
Let $K$ be a field, let $A=\{p_1,\ldots, p_m\}\subseteq K[\bx]$, and fix a ranking. For $1 \leq j \leq m$ let $A_j=\{p_1,\ldots,p_j\}$ and let $S_j$ be the multiplicative set generated by the initials $i_1,\ldots, i_j$ of $p_1,\ldots ,p_j$. We say $A$ is a \emph{regular set} (or \emph{regular chain}) if 

\begin{enumerate}
\item the elements of $A$ have distinct leaders and
\item for $2 \leq j \leq m$, if $q\in K[\bx]$ and $q$ does not belong to $\sat[S_{j-1}]{(A_{j-1})}$, then $i_jq\notin \sat[S_{j-1}]{(A_{j-1})}$. 

\end{enumerate}

\end{definition}

In the usual terminology of the area, $A$ is a \emph{triangular set} (having distinct leaders is suggestive of the triangular shape of an invertible matrix in row echelon form) and for $2 \leq j \leq m$ the initial $i_j$ of $p_j$ is \emph{regular} (i.e., a non-zerodivisor) in the quotient ring $K[\bx]/ (\sat[S_{j-1}]{(A_{j-1})})$.

\subsubsection{Key Results for Regular Systems}\label{keyregres}
To round out the necessary mathematical background, we cite three deeper properties of regular systems and sets that we need in Sections \ref{regsysresults} and \ref{rga}. The first is a technical lemma that, in light of the Nullstellensatz, allows us to go back and forth between zero sets and ideals when dealing with regular algebraic systems.
\begin{lemma}[{Lazard's lemma \cite[Thm. 1.1 and Thm. 4]{rga}}]\label{lazard}
Let $(A=0,S\neq 0)$ be a regular algebraic system. Then $(A):S^{\infty}$ is a radical ideal; i.e., $(A):S^{\infty}=\sqrt{(A):S^{\infty}}$. Likewise, if $(A=0,S\neq 0)$ is a regular differential system, then $\sat{[A]}$ is a radical differential ideal; i.e., $\sat{[A]}=\sqrt{\sat{[A]}}$. 
\end{lemma}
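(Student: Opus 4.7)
My plan is to handle the algebraic case by induction on $|A|$, with the driving insight that the separant hypothesis forces each $p_i$ to be squarefree in its leader after localizing; then bootstrap the differential case via a Rosenfeld-type reduction to the algebraic one.

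For the algebraic case, order $A=\{p_1,\ldots,p_m\}$ by increasing leaders $l_1<\cdots<l_m$, let $A_j=\{p_1,\ldots,p_j\}$, and let $S_j$ be the multiplicative set generated by the separants and initials of $p_1,\ldots,p_j$ (together with whatever extra inequations $S$ carries). I would prove by induction on $j$ that $I_j:=(A_j):S_j^\infty$ is radical. The base $j=0$ gives $I_0=(0)$. For the step, suppose $I_{j-1}$ is radical and $q^N\in I_j$. Perform algebraic pseudodivision of $q$ by $p_j$ in the variable $l_j$: by Proposition \ref{pdivsat} there is a power $i_j^k$ of the initial and some $\widetilde q\in (p_j)$ with $i_j^k q=r+\widetilde q$, where $\deg_{l_j}r<\deg_{l_j}p_j$. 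Since $i_j\in S_j$, this gives $q\equiv r\pmod{I_j}$, so it suffices to show $r\in I_j$. Now regard everything modulo $I_{j-1}$: the quotient $R:=K[\mathbf{x}\setminus\{l_j\}]/(I_{j-1}\cap K[\mathbf{x}\setminus\{l_j\}])$ is reduced by induction, and in $R[l_j]$ the polynomial $p_j$ has separant $s_j$ that becomes a unit after localizing at $S_j$. Squarefreeness of $p_j$ in $l_j$ over the localized, reduced base combined with $r^N\equiv 0$ and $\deg_{l_j}r<\deg_{l_j}p_j$ forces $r\equiv 0$ in the localization, hence $r\in I_j$ as required.

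The main technical obstacle in the algebraic step is that $R$ need not be an integral domain, so "squarefree implies no nilpotents in the quotient" has to be justified carefully. Two standard remedies are available: either decompose $I_{j-1}$ into associated primes and argue component by component (using that the separant is invertible on each component by the regular-chain condition), or give a direct argument showing that if $p_j=us_j v+w$ with $\deg_{l_j}w<\deg_{l_j}p_j$ in an appropriate Bezout-style identity, then any nilpotent in $R[l_j]/(p_j)_{s_j}$ must already be nilpotent in $R$. Either way, the separant's presence in $S$ is doing all the work.

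For the differential case, I would reduce to the algebraic one. Collect the finite set $V$ of derivatives actually appearing in $A\cup S$; by Lemma \ref{regalg} and the partial-reducedness hypothesis, $(A=0,S\ne 0)$ viewed in $K[V]$ is a regular algebraic system, so $(A)_{K[V]}:S^\infty$ is radical by the algebraic case already proven. A Rosenfeld-type lemma — essentially the coherence-free specialization of \cite[Thm. 3]{rga} to the ordinary case — asserts that a differential polynomial lies in $\sat{[A]}$ iff its differential pseudoremainder modulo $A$ lies in the algebraic saturation inside $K[V']$ for a suitable finite $V'\supseteq V$ of derivatives. Combined with Lemma \ref{radrestr} (radicals commute with restriction to polynomial subrings and with differentiation), this upgrades the algebraic radicality conclusion to the differential one: if $p^N\in\sat{[A]}$, then differentially pseudoreducing $p$ by $A$ produces a remainder $r$ partially reduced with respect to $A$, $r^N$ lies in the algebraic saturation (by Rosenfeld), the algebraic Lazard gives $r$ itself in that saturation, and lifting back shows $p\in\sat{[A]}$. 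The delicate point — and the one I expect to require the most care — is verifying that differential pseudoreduction preserves membership in the saturation up to powers of elements of $S$; this is exactly the content of Rosenfeld's lemma and is the precise place where the separant/partial-reducedness conditions of Definition \ref{regdiffsys} are indispensable.
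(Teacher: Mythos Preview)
The paper does not prove Lazard's lemma; it is cited from \cite{rga} (see the opening of Section~\ref{keyregres}, where the authors explicitly say they ``cite three deeper properties''). So there is no in-paper proof to compare against, but your sketch is worth examining on its own merits.

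Your outline follows a standard inductive strategy, but there is a genuine gap in the algebraic case. By Definition~\ref{regalgsys}, a regular algebraic system only requires the \emph{separants} to lie in $S$; the initials need not. Yet you build $S_j$ to contain the initials as well, so your induction establishes that $(A):S_m^\infty$ is radical for a multiplicative set $S_m\supseteq S$ that may be strictly larger, and the two saturations can differ. For instance, with ranking $y<z$ take $A=\{\,y-1,\ (y-1)z^2+z+x\,\}$ and $S=\{\,2(y-1)z+1\,\}$: this is a regular algebraic system, $(A):S^\infty=(y-1,\,z+x)$, but your $S_2$ contains the initial $i_{p_2}=y-1\in(A)$, whence $(A):S_2^\infty$ is the unit ideal. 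Your step ``since $i_j\in S_j$, this gives $q\equiv r\pmod{I_j}$'' therefore does not transfer to the ideal the lemma is actually about. The fix is to avoid pseudodivision by initials altogether and argue directly in the localization at $S$: there the separant of $p_j$ is a unit, which forces $p_j$ to be separable in its leader $l_j$ over the (inductively reduced) base ring, so the quotient by $p_j$ stays reduced. That is the mechanism you correctly identify as ``the main technical obstacle,'' but it must be run without the crutch of inverting $i_j$.

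Your reduction of the differential case to the algebraic one via Rosenfeld's lemma is the standard route and is sound in outline: \emph{partial} reduction (eliminating proper derivatives of leaders only) premultiplies solely by separants, which are in $S$, so $p\in\sat{[A]}\Leftrightarrow r\in\sat{[A]}$ for the partially reduced remainder $r$; then $r^N$ is still partially reduced, Rosenfeld places it in the algebraic saturation, and the algebraic case finishes.
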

The second gives a decision procedure for membership in saturation ideals determined by regular sets.

\begin{theorem}[{\cite[Prop. 11]{BoulierLMP21}}{\cite[Thm. 6.1]{AubryLM99}}]\label{premtest}
Let $A$ be a regular set over field $K$ and let $S$ be the multiplicative set generated by the initials of the elements of $A$. Then for any $p\in K[\bx]$, we have $p\in \sat{(A)}$ if and only if the pseudoremainder of $p$ with respect to $A$ is 0 (i.e., the pseudoremainder after reducing as much as possible with respect to all elements of $A$).  
\end{theorem}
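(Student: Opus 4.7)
The plan is to handle the two directions separately. The $(\Leftarrow)$ direction should fall out immediately from Proposition \ref{pdivsat}: iterated pseudodivision of $p$ by the elements of $A$ gives $\widetilde{i}\,p-\widetilde{q}=0$ where $\widetilde{i}$ is a product of (factors of) initials of elements of $A$, hence $\widetilde{i}\in S$, and $\widetilde{q}\in(A)$, which directly witnesses $p\in\sat{(A)}$. For $(\Rightarrow)$ I would induct on $m=|A|$, with base case $m=0$ trivial because then $\sat{(A)}=(0)$.

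For the inductive step, order $A=\{p_1,\ldots,p_m\}$ so that the leaders satisfy $l_1<\cdots<l_m$, let $A_{m-1}=\{p_1,\ldots,p_{m-1}\}$ with associated $S_{m-1}$, and let $r$ be the pseudoremainder of $p$ with respect to $A$. Proposition \ref{pdivsat} again gives $r\in\sat{(A)}$ whenever $p$ is, and $r$ is reduced w.r.t.\ $A$; in particular $\deg_{l_m}(r)<e_m:=\deg_{l_m}(p_m)$. The goal becomes $r=0$. Invoking the standard regular-chain convention that neither $p_j$ nor $i_j$ for $j<m$ involves $l_m$, I set $B:=K[x_1,\ldots,\widehat{l_m},\ldots,x_n]$ so that $K[\bx]=B[l_m]$ and $A_{m-1},S_{m-1}\subseteq B$. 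Lemma \ref{lazard} (Lazard) makes $\sat[S_{m-1}]{(A_{m-1})}$ radical, and since it lies in $B$, the quotient $R:=K[\bx]/\sat[S_{m-1}]{(A_{m-1})}$ is the polynomial ring $\bar{B}[l_m]$, where $\bar{B}:=B/\sat[S_{m-1}]{(A_{m-1})}_B$. Definition \ref{regsys}(2) makes the image $\bar{i}_m$ a non-zerodivisor in $\bar{B}$, and each $\bar{i}_j$ for $j<m$ is also a non-zerodivisor (if $i_jq\in\sat[S_{m-1}]{(A_{m-1})}$ with $i_j\in S_{m-1}$, then $q$ is already in this saturation), so the image of every $s\in S$ is a non-zerodivisor in $\bar{B}[l_m]$.

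The hard part will be a degree-in-$l_m$ argument in $\bar{B}[l_m]$. Picking $s\in S$ with $sr\in(A)$ and writing $sr=gp_m+h$ with $h\in(A_{m-1})$, I pass to $R$ to get $\bar{s}\bar{r}=\bar{g}\,\bar{p}_m$. The left-hand side has $l_m$-degree $<e_m$; if $\bar{g}\neq 0$ with $l_m$-leading coefficient $c\in\bar{B}$, the right-hand side has leading coefficient $c\bar{i}_m\neq 0$ (since $\bar{i}_m$ is a non-zerodivisor), hence $l_m$-degree $\geq e_m$---a contradiction. So $\bar{g}=0$, whence $\bar{s}\bar{r}=0$, and cancelling the non-zerodivisor $\bar{s}$ gives $\bar{r}=0$; i.e., $r\in\sat[S_{m-1}]{(A_{m-1})}$. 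Since $r$ is already reduced w.r.t.\ $A_{m-1}$, its pseudoremainder with respect to $A_{m-1}$ is $r$ itself, and the inductive hypothesis forces $r=0$. The main obstacle is the clean passage to $\bar{B}[l_m]$ as a genuine univariate polynomial ring in $l_m$: this is precisely where Lazard's lemma and the non-zerodivisor content of Definition \ref{regsys}(2) combine, and where a plain triangular set (without the regular-chain strengthening) would not suffice.
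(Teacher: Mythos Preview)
The paper does not prove Theorem \ref{premtest}; it is cited from \cite{BoulierLMP21} and \cite{AubryLM99} without argument. Your proof is essentially the standard one from that literature and is correct, with one caveat worth flagging.

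Your appeal to Lazard's lemma (Lemma \ref{lazard}) is both unnecessary and, as stated in this paper, inapplicable: Lemma \ref{lazard} concerns regular algebraic \emph{systems}, where $S$ contains the \emph{separants}, whereas your $S_{m-1}$ is generated by \emph{initials}. Fortunately you never actually use radicality. The identification $R\cong\bar{B}[l_m]$ follows directly from the fact that the saturation $\sat[S_{m-1}]{(A_{m-1})}$ in $K[\bx]=B[l_m]$ is the extension of the corresponding saturation in $B$: since $A_{m-1},S_{m-1}\subseteq B$, write any $p=\sum_k p_k\, l_m^k$ with $sp\in(A_{m-1})_{K[\bx]}$ and compare coefficients to get each $p_k\in\sat[S_{m-1}]{(A_{m-1})}_B$. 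Everything thereafter---the non-zerodivisor property of $\bar{i}_m$ from Definition \ref{regsys}(2), the leading-coefficient argument in $\bar{B}[l_m]$ forcing $\bar{g}=0$, the cancellation of the non-zerodivisor $\bar{s}$, and the inductive descent to $A_{m-1}$---goes through exactly as written. One small wording point: that $p_j$ and $i_j$ for $j<m$ do not involve $l_m$ is not a ``convention'' but an immediate consequence of $l_j<l_m$ and the definition of leader.
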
 

This 
is analogous to how \Grob bases decide membership in arbitrary polynomial ideals \cite[Sect. 2.6, Cor. 2]{clo1_4}. See Theorem \ref{rgadecomp} for a differential version of Theorem \ref{premtest}.

The last result provides a link between differential and algebraic ideals. This is critical because computation becomes less complicated when we do not have to keep differentiating. Moreover, theory and tools for symbolic computation are more developed in the algebraic case than the differential.

\begin{lemma}[{Rosenfeld's lemma  \cite[Thm. 3]{rga}}]\label{rosenfeld}
Let $(A=0,S\neq 0$) be a regular differential system. Then for all differential polynomials $p\in K\{\bx\}$ partially reduced with respect to $A$, we have $p\in [A]:S^{\infty}$ if and only if $p\in (A):S^{\infty}$.
\end{lemma}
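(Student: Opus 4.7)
The inclusion $(A):S^{\infty}\subseteq [A]:S^{\infty}$ is immediate from $(A)\subseteq [A]$ and monotonicity of saturation, so all the content lies in the reverse direction. My plan is a minimality argument driven by the partial reducedness of $p$ together with the fact that every separant of an element of $A$ lies in $S$.

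Given $p\in [A]:S^{\infty}$ partially reduced with respect to $A$, choose an element $s$ in the multiplicative set generated by $S$ and an expansion $sp=\sum_{i}\sum_{j=0}^{N_i} g_{i,j}\,a_i^{(j)}$ (with $a_i\in A$, $g_{i,j}\in K\{\bx\}$) that is lexicographically minimal in the pair $(N,k)$, where $N=\max_i N_i$ and $k$ counts the number of indices $(i,j)$ with $j=N$. I claim that $N=0$, which immediately gives $sp\in(A)$ and hence $p\in (A):S^{\infty}$.

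Suppose instead $N\geq 1$, and let $I=\{i : N_i=N\}$. For each $i\in I$, repeated differentiation of $a_i$ produces $a_i^{(N)}=s_{a_i}\,l_{a_i}^{(N)}+U_i$ with $U_i$ of degree $0$ in $l_{a_i}^{(N)}$, because $l_{a_i}^{(N)}$ enters $a_i^{(N)}$ only through $N$-fold application of the product rule to terms of $a_i$ involving the leader, and its coefficient is by definition the separant $s_{a_i}\in S$. Crucially, since $p$ is partially reduced with respect to $A$ and every element of $S$ is also partially reduced with respect to $A$ by Definition \ref{regdiffsys}, the product $sp$ is free of every variable $l_{a_i}^{(N)}$ for $i\in I$. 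I would then iterate pseudoreduction of each coefficient $g_{i',j'}$ modulo the linear polynomials $a_i^{(N)}$, $i\in I$ (viewed as polynomials in $l_{a_i}^{(N)}$); after multiplying the entire relation by a suitable $\widetilde{s}\in S^{\infty}$ built from powers of the separants $s_{a_i}$, this yields a new expansion $\widetilde{s}sp=\sum_{i',j'}\widetilde{g}_{i',j'}\,a_{i'}^{(j')}$ in which each $\widetilde{g}_{i',j'}$ is free of every $l_{a_i}^{(N)}$ for $i\in I$, while the maximum order remains $N$.

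Now view both sides of this rewritten identity as polynomials in the distinct indeterminates $\{l_{a_i}^{(N)} : i\in I\}$ (distinctness comes from Definition \ref{regdiffsys}(1)). The left side has degree $0$ in each such variable. On the right side, the only summands that can produce $l_{a_i}^{(N)}$ are the terms $\widetilde{g}_{i,N}\,a_i^{(N)}$ for $i\in I$, because any $a_{i'}^{(j')}$ with $j'<N$ contains no $l_{a_i}^{(N)}$. Matching the coefficient of $l_{a_i}^{(N)}$ in the integral domain $K\{\bx\}$ yields $\widetilde{g}_{i,N}\,s_{a_i}=0$, and since $0\notin S$ I conclude $\widetilde{g}_{i,N}=0$ for every $i\in I$. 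The resulting expansion of $\widetilde{s}sp$ uses only derivatives of order strictly less than $N$, contradicting the minimality of $(N,k)$ and forcing $N=0$. The main obstacle is the pseudoreduction step in the third paragraph: iteratively clearing $l_{a_i}^{(N)}$ from the coefficients without losing control of the maximum order requires careful bookkeeping, and it leans essentially on the fact that each $a_i^{(N)}$ is $K\{\bx\}$-linear in $l_{a_i}^{(N)}$ with initial exactly the separant $s_{a_i}\in S$, together with the partial reducedness hypotheses on $A$ and $S$.
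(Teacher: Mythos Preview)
The paper does not prove this lemma; it simply cites \cite[Thm.~3]{rga}. Your outline follows the classical Rosenfeld minimality strategy and is on the right track, but two related steps do not go through as written.

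First, the pseudoreduction step does not yield the expansion you claim. When you reduce a coefficient $g_{i',j'}$ by $a_i^{(N)}$, say $s_{a_i}^{k}g_{i',j'} = q_{i',j'}\, a_i^{(N)} + \widetilde{g}_{i',j'}$, the quotient does not vanish: the term $q_{i',j'}\,a_{i'}^{(j')}$ is absorbed into the new coefficient of $a_i^{(N)}$. That new coefficient is therefore $\widetilde{g}_{i,N}$ \emph{plus} a sum of such quotient contributions, and nothing forces this sum to be free of the variables $l_{a_{i''}}^{(N)}$ (the quotients $q_{i',j'}$ generally still contain them, and so do the factors $a_{i'}^{(N)}$). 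So the rewritten identity is not of the form $\widetilde{s}sp=\sum_{i',j'}\widetilde{g}_{i',j'}\,a_{i'}^{(j')}$ with all $\widetilde{g}_{i',j'}$ cleared of every $l_{a_i}^{(N)}$.

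Second, even granting that form, the conclusion ``matching the coefficient of $l_{a_i}^{(N)}$ yields $\widetilde{g}_{i,N}\,s_{a_i}=0$'' ignores cross terms. For $i'\in I$ with $l_{a_{i'}}>l_{a_i}$, partial reducedness of $A$ only forbids \emph{proper} derivatives of $l_{a_i}$ from appearing in $a_{i'}$, not $l_{a_i}$ itself; hence $a_{i'}^{(N)}$ can contain $l_{a_i}^{(N)}$, and the term $\widetilde{g}_{i',N}\,a_{i'}^{(N)}$ contributes to that coefficient as well. The repair is a triangularity argument: if $l_{a_{i_0}}$ is the highest-ranked leader among $i\in I$, then $l_{a_{i_0}}$ cannot occur in any $a_{i'}$ with $i'\neq i_0$ (it would outrank $l_{a_{i'}}$), so $l_{a_{i_0}}^{(N)}$ appears only through $a_{i_0}^{(N)}$. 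Matching that coefficient first gives $\widetilde{g}_{i_0,N}=0$, and one then inducts downward on the rank. The standard proofs (Rosenfeld; Kolchin; Boulier et al.) build this into the minimality measure itself---minimizing the highest-ranked proper derivative of a leader appearing on the right, together with its degree---and perform a single reduction by the corresponding $a_{i_0}^{(N)}$ rather than attempting to clear all $l_{a_i}^{(N)}$ simultaneously.
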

\begin{remark}\label{rosenfeldrmk}
Since nondifferential polynomials are (trivially) partially reduced with respect to any set, Rosenfeld's lemma implies that $(\sat{[A]}{}) \cap K[\bx] = (\sat{(A)}{})\cap K[\bx]$. 

The key lesson of Rosenfeld's lemma is that, given a regular differential system, a partially reduced polynomial $p\in \sat{[A]}$ belongs to the differential saturation for essentially algebraic reasons; we do not have to differentiate $A$ to prove it.
\end{remark}
\section{Regular Differential Systems and Algebraic Invariants of Polynomial Vector Fields} \label {regsysresults}

\subsection{Explicit Regular Differential Systems}

We have now covered the background needed for our new method that uses differential elimination to generate algebraic invariants of polynomial dynamical systems. The principal results of the current section are Theorems \ref{satinvar} and \ref{invarcor}. Example \ref{lorenzex1} demonstrates these theorems using the well-known Lorenz system.

To abbreviate theorem statements, we make the following definition that specifies our systems of interest. As stated earlier, we restrict to differential polynomials in explicit form (Definition \ref{explicit}) because we want to apply the theory to finding algebraic invariants of polynomial vector fields. Moreover, this application naturally concerns nondifferential inequations (which we can use, for instance, to indicate ``unsafe" locations/states).

\begin{definition}
Let $K$ be a differential field and let  $\mathcal{C}:=( A=0,S\neq 0$) be a differential system over $K$. We say $\mathcal{C}$ is an \emph{explicit differential system over $K$  with nondifferential inequations} (or just an \emph{explicit system with nondifferential inequations} if $K$ and the differential system are understood) if i) all elements of $A$ that have a proper derivative are in explicit form and ii)  all elements of $S$ are nondifferential polynomials (in which case $S=S_{K[\bx]}=S\cap K[\bx]$). \end{definition}

We prove an important technical lemma that, along with Rosenfeld's lemma (Lemma \ref{rosenfeld}), implies the central result of this section (Theorem \ref{satinvar}).
\begin{lemma}\label{exist}
Let $\mathcal{C}:=( A=0,S\neq 0$) be an explicit regular differential system over $\mathbb{R}$ with nondifferential inequations. 
Then $((A):S^{\infty})\cap \rx =((A):S^{\infty})_{\rx} = (A_{\rx}):S^{\infty}$.\end{lemma}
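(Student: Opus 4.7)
The statement contains three objects; the middle equality $((A):S^{\infty})\cap \rx =((A):S^{\infty})_{\rx}$ is purely notational, since the subscript ``$\rx$'' denotes restriction to $\rx$ in the convention from Section \ref{regsyssec}. So the real content is the equality $((A):S^{\infty})\cap \rx = (A_{\rx}):S^{\infty}$, which I would establish by two inclusions. The inclusion $(A_{\rx}):S^{\infty}\subseteq ((A):S^{\infty})\cap \rx$ is immediate: given $p\in (A_{\rx}):S^\infty$, pick $s$ a product of elements of $S$ with $sp\in (A_{\rx})$; since $A_\rx\subseteq A$ we have $(A_{\rx})\subseteq (A)$ as ideals of $\bbr\{\bx\}$, so $sp\in (A)$, and clearly $p\in\rx$.

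For the main inclusion $((A):S^{\infty})\cap \rx\subseteq (A_{\rx}):S^{\infty}$, let $p\in ((A):S^\infty)\cap\rx$, and fix a representation
\[
sp \;=\; \sum_{i} g_i f_i \;+\; \sum_{k} h_k a_k
\]
where $s$ is a product of elements of $S$ (hence $s\in\rx$ by the nondifferential-inequations hypothesis), the $f_i\in A_{\rx}$ range over the nondifferential generators, the $a_k$ range over the remaining (differential) generators of $A$---each of which is in explicit form $a_k=x_{j_k}'+q_k$ with $q_k\in\rx$---and $g_i,h_k\in\bbr\{\bx\}$. The plan is to apply a ring homomorphism $\sigma\colon\bbr\{\bx\}\to\rx$ defined on algebraic generators by $\sigma(x_l)=x_l$, $\sigma(x_{j_k}')=-q_k$ for each $k$, and $\sigma(x_l^{(m)})=0$ for every other proper derivative. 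This $\sigma$ is well-defined because $\bbr\{\bx\}$ is a polynomial ring in the algebraically independent symbols $\{x_l^{(m)}\}_{l,m\geq 0}$ and, by the regular-system assumption, the leaders $x_{j_k}'$ are pairwise distinct, so there is no conflict in the assignment. Applying $\sigma$ to the expression for $sp$ and using that $\sigma$ fixes $\rx$ pointwise (so $\sigma(sp)=sp$ and $\sigma(f_i)=f_i$) and that $\sigma(a_k)=\sigma(x_{j_k}')+q_k=-q_k+q_k=0$, we obtain $sp=\sum_i\sigma(g_i)f_i\in (A_{\rx})$ with each $\sigma(g_i)\in\rx$. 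Hence $p\in (A_{\rx}):S^{\infty}$.

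\textbf{Main obstacle.} Once the substitution $\sigma$ is set up correctly, the argument is essentially automatic, so the only delicate point is justifying that $\sigma$ really does land in $\rx$ and collapses the differential part of the representation to zero. This is exactly where the two hypotheses work in tandem: explicit form supplies polynomial ``replacement values'' $-q_k\in\rx$ for each leader $x_{j_k}'$ that occurs, and explicit form also guarantees that no $x_l^{(m)}$ with $m\geq 2$ appears in any $a_k$, so sending such higher derivatives to $0$ is harmless; meanwhile, the regular-system condition (distinct leaders) ensures the assignments on the leaders are consistent. Without the explicit-form hypothesis, one would need a substantially heavier argument (e.g., via Rosenfeld's lemma, Lemma \ref{rosenfeld}), but here the substitution trick is all that is required.
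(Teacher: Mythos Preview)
Your proof is correct and takes a genuinely different—and more elementary—route than the paper's. The paper argues geometrically and analytically: given $q\in ((A):S^\infty)\cap\rx$, it shows that $q$ vanishes at every \emph{complex} solution of $(A_{\rx}=0,S\neq 0)$ by invoking Peano's existence theorem to solve the IVP associated with the explicit ODE generators, substituting the resulting solution into the representation $rq=\sum_j\alpha_jh_j+\sum_{i\in I}\beta_i(x_i'+g_i)$, and evaluating at $t=0$. It then passes through Lemma~\ref{vanishclos}, Lemma~\ref{satgeom}(2), the Nullstellensatz, and Lemma~\ref{linfield} to obtain $q\in\sqrt{(A_{\rx}):S^\infty}$, and finally invokes Lazard's lemma (Lemma~\ref{lazard}, via Lemma~\ref{regalg}) to drop the radical.

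Your substitution homomorphism $\sigma$ bypasses all of this machinery: it works purely at the level of polynomial rings, needs no analysis, no Nullstellensatz, and—notably—no Lazard's lemma, since you land directly in $(A_{\rx})$ rather than in its radical. The paper's approach does illustrate a recurring theme of the article (passing through $\bbc$ to draw real conclusions), and its use of Lazard's lemma exercises the regular-algebraic-system structure of $\mathcal{C}_{\rx}$; but for this particular lemma your argument is strictly simpler and uses less of the regularity hypothesis (only that the differential generators have distinct leaders, not the separant condition).
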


\begin{proof}

The reverse containment $((A):S^{\infty})\cap \rx \supseteq (A_{\rx}):S^{\infty}$ is automatic. For the forward containment, we must show that if $q\in ((A):S^{\infty})\cap \rx$ is a nondifferential polynomial, then $q\in (A_{\rx}):S^{\infty}$. Each explicit differential polynomial in $A$ has the form $x_i'+g_i$ for some variable $x_i$ and nondifferential polynomial $g_i$. Let $I\subseteq \{1,\ldots, n\}$ be the subset of indices $i$ for which such an $x_i'+g_i$ belongs to $A$. The remaining elements of $A$ belong to $A_{\bbr[\bx]}$. Since $q\in (A):S^{\infty}$ and $S=S_{\rx}$, there exist nondifferential polynomials $r\in S_{\rx}$ and $h_j\in A_{\rx}$ as well as differential polynomials $\alpha_j,\beta_i$ such that $rq = \sum_{j}\alpha_j h_j + \sum_{i\in I}  \beta_i(x_i'+g_i)$.

We claim that $q$ vanishes at all \emph{complex} solutions of the restriction $\mathcal{C}_{\bbr[\bx]}$. (In other words, $q\in \vi{\bbc}{\mathbf{V}_{\mathbb{C}}(A_{\bbr[\bx]})\setminus \mathbf{V}_{\mathbb{C}}(\mathrm{\Pi}  S)}$. Recall that $\mathrm{\Pi}  S$ is the product of the (finitely many) elements of $S$.) Let $\mathbf{a}= (a_1,a_2,\ldots, a_n) \in \mathbb{C}^n$ be such that $h(\mathbf{a})=0$ and $s(\mathbf{a})\neq 0$ for all $h \in A_{\rx}$ and $s\in S_{\rx}$ (in particular, the $h_j$ and $r$ from the previous paragraph). By Peano's existence theorem for (complex-valued) ODEs \cite[Thm. X, p. 110]{walter98}, there exists a solution $\mathbf{x}(t)$ to the initial value problem (IVP) $x_i'(t)=-g_i(\mathbf{x}(t))$, $x_i(0)=a_i$, $1\leq i \leq n$. (If $i \notin I$, for the IVP simply let $x_i'(t)=0, x_i(0)=a_i$. For such $i$, the choice  $x_i'(t)=0$ does not affect the following argument.)

Substituting $\mathbf{x}(t)$ into the various differential and nondifferential polynomials, we obtain 

\[r(\mathbf{x}(t))q(\mathbf{x}(t)) = \sum_{j}\alpha_j(\mathbf{x}(t)) h_j(\mathbf{x}(t)) + \sum_{i\in I}  \beta_i(\mathbf{x}(t))(x_i'(t)+g_i(\mathbf{x}(t))).\]

\noindent Evaluating at $t=0$, we find 

\[r(\mathbf{a})q(\mathbf{a}) = \sum_{j}(\alpha_j(\mathbf{x}(t))(0)) h_j(\mathbf{a}) + \sum_{i\in I}  (\beta_i(\mathbf{x}(t))(0))(x_i'(0)+g_i(\mathbf{a})).\] 

\noindent (We have written $\alpha_j(\mathbf{x}(t))(0),\beta_i(\mathbf{x}(t))(0) $ instead of $\alpha_j(\mathbf{a}),\beta_i(\mathbf{a})$ because $\alpha_j,\beta_i$ are differential polynomials and the function $\bx (t)$ must be substituted into $\alpha_j,\beta_i$ and differentiated before evaluating at $t=0$. We similarly write $x_i'(0)$ instead of $a_i'$, which is simply 0.) Since $r(\mathbf{a})\neq 0$ and $h_j(\mathbf{a})=0$ (by assumption on $r, h_j$, and $\mathbf{a}$) and  $x_i'(0)+g_i(\mathbf{a})=0$ (because $\bx(t)$ solves the IVP), this  proves that $q(\mathbf{a})=0$ and establishes the claim that $q\in \vi{\bbc}{\mathbf{V}_{\mathbb{C}}(A_{\bbr[\bx]})\setminus \mathbf{V}_{\mathbb{C}}(\mathrm{\Pi}  S)}$.

Lemma \ref{vanishclos} (1) now implies that $q\in \vi{\bbc}{\overline{\mathbf{V}_{\mathbb{C}}(A_{\bbr[\bx]})\setminus \mathbf{V}_{\mathbb{C}}(\mathrm{\Pi}  S)}^{\bbc}}$.  Lemma \ref{satgeom} (2) converts this to $q\in \vi{\bbc}{\vsc{\sat{(A_{\rx})_{\bbc}}}}$, which equals $\sqrt{(A_{\rx})_{\mathbb{C}}:S^{\infty}}$ by the Nullstellensatz (Theorem \ref{nss}). It follows from the definitions of radical and saturation ideals that $(\mathrm{\Pi}  S)^Mq^N\in (A_{\rx})_{\bbc}$ for some $M,N$, so 
we have $q\in \sqrt{(A_{\rx}):S^{\infty}}\subseteq \mathbb{R}[\mathbf{x}]$.  
Lazard's lemma (Lemma \ref{lazard}) gives $ \sqrt{(A_{\rx}):S^{\infty}}=(A_{\rx}):S^{\infty}$ since  $\mathcal{C}_{\rx}$ is a regular algebraic system by Lemma \ref{regalg}. This completes the proof. 
\end{proof}

The proof of Lemma \ref{exist} illustrates the theme of proving things about real polynomial systems by first going up to the complex numbers  (see Definition \ref{trdef} and the comments preceding it). This strategy also appears in Theorem \ref{invarcor}, Section \ref{lorenzex1}, and Section \ref{check}.

We are ready to give the connection between explicit regular differential systems and algebraic invariants. For convenience going forward, we sometimes refer to this result as the ``regular invariant theorem''.
\begin{theorem}[Regular invariant theorem] \label{satinvar}
Let $\mathcal{C}:=( A=0,S\neq 0$) be an explicit regular differential system over $\mathbb{R}$ with nondifferential inequations. 
Let $\mathbf{x'}=\mathbf{f}(\mathbf{x})$ be a polynomial vector field such that $\mathbf{x'}-\mathbf{f}(\mathbf{x})\in [A]:S^{\infty}$. Then $\mathbf{V}_{\bbr}((A_{\rx}): S^{\infty})$ is an algebraic invariant set of $\mathbf{x'}=\mathbf{f}(\mathbf{x})$.

\end{theorem}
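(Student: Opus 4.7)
The plan is to check that the ordinary ideal $(A_{\rx}):S^{\infty}$ is invariant under Lie differentiation with respect to $\mathbf{F}$ (the vector field induced by $\xpef$) and then invoke the ``$(3)\Rightarrow (1)$'' direction of Theorem \ref{fullinvarcrit} to deduce that $\mathbf{V}_{\bbr}((A_{\rx}):S^{\infty})$ is an algebraic invariant set. Virtually all of the technical content needed is already packaged in the preceding lemmas, so the proof is essentially an assembly.

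First I would start one level up, in the differential polynomial ring $\bbr\{\bx\}$. The hypothesis $\xpf \in [A]:S^{\infty}$ says that the explicit ODEs live inside the differential saturation ideal $\sat{[A]}$, and, as noted in the paragraph following Definition \ref{diffid} (or by the reference cited there), $\sat{[A]}$ is itself a differential ideal of $\bbr\{\bx\}$. Lemma \ref{subslie} therefore applies with $I=\sat{[A]}$, and it tells me that $\sat{[A]}\cap \rx$ is an invariant ideal of $\rx$ with respect to $\xpef$.

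Next I would transfer this invariant ideal to the nondifferential world. By Rosenfeld's lemma in the form of Remark \ref{rosenfeldrmk} (using that every element of $\rx$ is trivially partially reduced with respect to $A$), one has
\[
\sat{[A]}\cap \rx \;=\; \sat{(A)}\cap \rx,
\]
where $(A)$ is the ideal generated by $A$ inside $\bbr\{\bx\}$ regarded as a plain polynomial ring. Then Lemma \ref{exist}, whose hypotheses are exactly those of the present theorem, collapses this further to
\[
\sat{(A)}\cap \rx \;=\; (A_{\rx}):S^{\infty}.
\]
Chaining these identifications with the previous step shows that $(A_{\rx}):S^{\infty}$ is an invariant ideal of $\rx$ under $\mathcal{L}_{\mathbf{F}}$. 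Applying Theorem \ref{fullinvarcrit} to this ideal then yields the desired conclusion.

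The only step that is at all delicate is the equality $\sat{[A]}\cap \rx = (A_{\rx}):S^{\infty}$, which is where the regular-system hypothesis and the explicit form of the differential equations really get used; but this is precisely what Lemma \ref{exist} (together with Rosenfeld's lemma) has already established via the detour through complex solutions and Peano's theorem. Once those lemmas are in hand, the proof of Theorem \ref{satinvar} itself is a clean three-line assembly and carries no additional obstacle.
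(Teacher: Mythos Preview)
Your proposal is correct and follows essentially the same approach as the paper: both use Lemma~\ref{subslie} (applied to the differential ideal $[A]:S^\infty$), Rosenfeld's lemma, and Lemma~\ref{exist} in sequence, then appeal to Theorem~\ref{fullinvarcrit}. The only cosmetic difference is that the paper traces a single element $p\mapsto\dot p$ through the chain of ideals, whereas you phrase it as an equality of ideals $([A]:S^\infty)\cap\rx=((A):S^\infty)\cap\rx=(A_{\rx}):S^\infty$ and transfer invariance across; the content is identical.
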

\begin{proof}
By Lemma \ref{fullinvarcrit}, it suffices to prove that $(A_{\rx}):S^{\infty}$ is an invariant ideal. Let $p\in (A_{\rx}):S^{\infty}\unlhd \rx$ with the goal of showing $\dot{p}\in (A_{\rx}):S^{\infty}$ (recall that $\dot{p}$ is the Lie derivative of $p$ with respect to $\mathbf{x'}=\mathbf{f}(\bx)$). The saturation $\sat{[A]} \unlhd \bbr\{\mathbf{x}\}$ is a differential ideal, $p\in \sat{(A_{\rx})}\subseteq [A]:S^{\infty}$,  and by assumption $\mathbf{x'}-\mathbf{f}(\mathbf{x})\in [A]:S^{\infty}$, so by Lemma \ref{subslie} we obtain $\dot{p} \in [A]:S^{\infty}$. As $\dot{p}$ is nondifferential and thus partially reduced with respect to $A$, Rosenfeld's lemma (Lemma \ref{rosenfeld}) yields $\dot{p} \in (A):S^{\infty} \unlhd \bbr\{\mathbf{x}\}$. Because $\dot{p}\in \rx$, Lemma \ref{exist} implies 
$\dot{p} \in (A_{\rx}):S^{\infty}$ as desired. 
\end{proof}

While it is not necessary that $\mathbf{x'}-\mathbf{f}(\mathbf{x})\in [A]$ for the regular invariant theorem to hold (see the example in Section \ref{lorenzex1}), the hypothesis that $\mathbf{x'}-\mathbf{f}(\mathbf{x})\in [A]:S^{\infty}$ implies that $\mathbf{x'}=\mathbf{f}(\bx)$ is a differential-algebraic consequence of $(A=0,S\neq 0)$ (by the differential Nichtnullstellensatz, Theorem \ref{diffsatrad}.)
\emph{How} we obtain explicit regular differential systems in the first place is a central topic in Section \ref{rga}.

\subsection{Alternate Representations from Additional Hypotheses}
As we will see in Section \ref{lorenzex1} and Section \ref{rga}, Theorem \ref{satinvar} opens the door to novel ways of finding and analyzing algebraic invariants of polynomial vector fields. However, it poses the challenge of finding generators of the saturation ideal $\sat{(A_{\rx})}$ if we want to explicitly write equations for the invariant. While this is possible using \Grob bases \cite[p. 205]{clo1_4}, it adds complexity to the process (Remark \ref{eqeffrmk}). We would much prefer to read off the invariant directly from $A_{\rx}$ and $S$. To explore this possibility, we start by noting that 
$\mathbf{V}_{\bbr}((A_{\rx}): S^{\infty})$ is sandwiched between two more convenient sets. (The following lemma does not depend on regular systems, so we use generic names in place of $A_{\rx}$ and $S$.)

\begin{lemma}  \label{closcont} If $B,C\subseteq \rx$, with $0\notin C$ finite, then $\overline{\mathbf{V}_{\bbr}(B)\setminus \mathbf{V}_{\bbr}(\mathrm{\Pi}  C)}^{\bbr\text{-euc}}\subseteq \mathbf{V}_{\bbr}((B): C^{\infty}) \subseteq \vsr{B}$.
\end{lemma}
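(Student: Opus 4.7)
My plan is to prove the two inclusions separately; both reduce to routine unwinding of definitions together with Lemma \ref{euclidclos} which lets us replace Zariski closure with Euclidean closure in the real case.

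For the second inclusion $\mathbf{V}_{\bbr}((B): C^{\infty}) \subseteq \vsr{B}$, I would simply note that $B \subseteq (B) \subseteq (B):C^\infty$, since $1 \in \rx$ certifies that every element of $(B)$ belongs to the saturation. Then monotonicity of $\vsr{\cdot}$ (a point vanishing on a larger set of polynomials vanishes on any subset) gives the desired containment immediately.

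For the first inclusion, the key pointwise claim is that $\mathbf{V}_{\bbr}(B)\setminus \mathbf{V}_{\bbr}(\mathrm{\Pi} C) \subseteq \mathbf{V}_{\bbr}((B):C^\infty)$. Given $\mathbf{a}$ in the left-hand side and $p \in (B):C^\infty$, by definition of saturation (combined with Lemma \ref{satgeom} (1), since $C$ is finite) there exists $M \in \bbn$ such that $(\mathrm{\Pi} C)^M p \in (B)$. Evaluating at $\mathbf{a}$ yields $(\mathrm{\Pi} C)(\mathbf{a})^M \, p(\mathbf{a}) = 0$; since $(\mathrm{\Pi} C)(\mathbf{a}) \neq 0$ by assumption, we conclude $p(\mathbf{a})=0$, so $\mathbf{a} \in \mathbf{V}_{\bbr}((B):C^\infty)$.

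To upgrade this pointwise statement to the Euclidean closure, I would invoke Lemma \ref{euclidclos}: the real algebraic set $\mathbf{V}_{\bbr}((B):C^\infty)$ is closed in the Euclidean topology on $\bbr^n$. Since it contains $\mathbf{V}_{\bbr}(B)\setminus \mathbf{V}_{\bbr}(\mathrm{\Pi} C)$ and is Euclidean-closed, it must contain $\overline{\mathbf{V}_{\bbr}(B)\setminus \mathbf{V}_{\bbr}(\mathrm{\Pi} C)}^{\bbr\text{-euc}}$, as required.

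I do not anticipate a significant obstacle: the only subtle point is remembering that we are asserting \emph{Euclidean} closure (not Zariski) on the left, which is handled by Lemma \ref{euclidclos}. Everything else is bookkeeping with the saturation definition; the finiteness of $C$ is used only to make sense of $\mathrm{\Pi} C$ and invoke Lemma \ref{satgeom} (1).
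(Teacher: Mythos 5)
Your proof is correct and follows essentially the same route as the paper: the pointwise containment $\mathbf{V}_{\bbr}(B)\setminus \mathbf{V}_{\bbr}(\mathrm{\Pi} C) \subseteq \mathbf{V}_{\bbr}((B):C^\infty)$ is exactly the argument from the $\supseteq$ direction of Lemma~\ref{satgeom}~(2) that the paper cites, and the second inclusion is handled identically via $(B)\subseteq (B):C^\infty$. The only cosmetic difference is that you invoke the ``Zariski-closed sets are Euclidean-closed'' half of Lemma~\ref{euclidclos} to pass directly to the Euclidean closure, whereas the paper first identifies the Euclidean and Zariski closures of the constructible set using the other half of that lemma; both are equivalent uses of the same result.
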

\begin{proof}
We have $\overline{\mathbf{V}_{\bbr}(B)\setminus \mathbf{V}_{\bbr}(\mathrm{\Pi}  C)}^{\bbr\text{-euc}}= \overline{\mathbf{V}_{\bbr}(B)\setminus \mathbf{V}_{\bbr}(\mathrm{\Pi}  C)}^{\bbr}$ by Lemma \ref{euclidclos} since $\mathbf{V}_{\bbr}(B)\setminus \mathbf{V}_{\bbr}(\mathrm{\Pi}  C)$ is a real constructible set. (Recall that $\overline{X}^{\bbr\text{-euc}}$ denotes the Euclidean closure of a set $X\subseteq \bbr^n$. Though in this case the closures coincide, we invoke the Euclidean topology because of its visually intuitive nature compared to the Zariski topology.)  

The containment  $\overline{\mathbf{V}_{\bbr}(B)\setminus \mathbf{V}_{\bbr}(\mathrm{\Pi}  C)}^{\bbr} \subseteq  \mathbf{V}_{\bbr}((B): C^{\infty})$ holds because $\mathbf{V}_{\bbr}((B): C^{\infty})$ is a real Zariski closed set containing $\mathbf{V}_{\bbr}(B)\setminus \mathbf{V}_{\bbr}(\mathrm{\Pi}  C)$ (this follows quickly from the definition of a saturation ideal; see the $\supseteq$ case in the proof of Lemma \ref{satgeom} (2)). The last containment holds because $(B)\subseteq (B): C^{\infty}$. 
\end{proof}

It is possible that both $\overline{\mathbf{V}_{\bbr}(A_{\rx})\setminus \mathbf{V}_{\bbr}(\mathrm{\Pi}  S)}^{\bbr\text{-euc}}$ and $\vsr{A_{\rx}}$, in addition to $\vs[\bbr]{\sat{(A_{\rx})}}$, must be invariant under the hypotheses of the regular invariant theorem. 
We cannot yet prove or disprove this conjecture. However, we \emph{can} prove invariance of the various sets using additional hypotheses that are commonly satisfied (see also Theorem \ref{radinvartest}, the discussion following it, and Remark \ref{eqeffrmk}): 

\begin{theorem}[Alternative criteria for regular invariants]  \label{invarcor} 
Let $\mathcal{C}:=( A=0,S\neq 0$) be an explicit regular differential system over $\mathbb{R}$ with nondifferential inequations. Let $\mathbf{x'}=\mathbf{f}(\mathbf{x})$ be a polynomial vector field such that $\mathbf{x'}-\mathbf{f}(\mathbf{x})\in [A]:S^{\infty}$. Then each of the following conditions is sufficient for the indicated set to be an algebraic invariant set of $\mathbf{x'}=\mathbf{f}(\mathbf{x})$.
\begin{enumerate}
\item If $\mathbf{V}_{\mathbb{C}}(A_{\rx})\setminus \mathbf{V}_{\mathbb{C}}(\mathrm{\Pi}  S)$ is a totally real constructible set, then $\overline{\mathbf{V}_{\bbr}(A_{\rx})\setminus \mathbf{V}_{\bbr}(\mathrm{\Pi}  S)}^{\bbr\text{-euc}}$ $=\mathbf{V}_{\bbr}((A_{\rx}): S^{\infty}) $ is invariant. 
\item  If $\mathbf{V}_{\mathbb{R}}(A_{\rx})$ is irreducible over $\bbr$ and $\mathbf{V}_{\bbr}(A_{\rx}) \setminus \mathbf{V}_{\bbr}(\mathrm{\Pi}  S)$ is nonempty (i.e., $\mathcal{C}_{\rx}$ has a real solution), then  $\overline{\mathbf{V}_{\bbr}(A_{\rx}) \setminus \mathbf{V}_{\bbr}(\mathrm{\Pi}  S)}^{\bbr\text{-euc}}=\mathbf{V}_{\bbr}((A_{\rx}): S^{\infty})$ $=\mathbf{V}_{\mathbb{R}}(A_{\rx})$ is invariant.
\item Suppose 
$[A]\cap \rx = (A_{\rx})$ and for each $q\in A_{\rx}$ and monomial $ux_i'$ in the derivative $q'$, we have $u(x_i' -f_i(\bx))\in [A]$. Then $\mathbf{V}_{\mathbb{R}}(A_{\rx})$ is invariant.
 (The first condition strengthens Rosenfeld's lemma (Lemma \ref{rosenfeld}) and the second strengthens the requirement $\mathbf{x'}-\mathbf{f}(\mathbf{x})\in [A]:S^{\infty}$ from Theorem \ref{satinvar}.)

\end{enumerate}
\end{theorem}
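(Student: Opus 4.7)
The plan is to handle the three parts separately, since each becomes a relatively short bookkeeping argument once invariance of $\vsr{\sat{(A_{\rx})}}$ is supplied by Theorem \ref{satinvar}; the work in (1) and (2) is mostly identifying the saturation variety with the stated closure, while (3) does not appeal to Theorem \ref{satinvar} at all.

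For (1), Lemma \ref{closcont} already gives the containment $\overline{\vsr{A_{\rx}}\setminus \vsr{\mathrm{\Pi} S}}^{\bbr\text{-euc}}\subseteq \vsr{\sat{(A_{\rx})}}$, so the totally real constructible hypothesis must be used to pull off the reverse inclusion. I would lift to $\bbc$ and chase: Lemma \ref{restrictsat} rewrites $\vsr{\sat{(A_{\rx})}}$ as $\vsc{\sat{(A_{\rx})_{\bbc}}}\cap\bbr^n$; Lemma \ref{satgeom}(2) reexpresses this as $\overline{\vsc{A_{\rx}}\setminus\vsc{\mathrm{\Pi} S}}^{\bbc}\cap\bbr^n$; Definition \ref{trcons} replaces this with $\overline{\vsr{A_{\rx}}\setminus\vsr{\mathrm{\Pi} S}}^{\bbc}\cap\bbr^n$; and finally Lemmas \ref{restrictclos} and \ref{euclidclos} identify the restriction of the complex Zariski closure with the real Euclidean closure. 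Invariance of $\vsr{\sat{(A_{\rx})}}$ is then just Theorem \ref{satinvar}.

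For (2), the shortcut is Lemma \ref{irredclos}: applied to $X=\vsr{A_{\rx}}$ (irreducible by hypothesis) and $Y=\vsr{\mathrm{\Pi} S}$, the nonemptiness of $X\setminus Y$ yields $\overline{X\setminus Y}^{\bbr}=X$, which by Lemma \ref{euclidclos} is the same as its Euclidean closure. Lemma \ref{closcont} then squeezes $\vsr{\sat{(A_{\rx})}}$ between this closure and $\vsr{A_{\rx}}$, so all three sets coincide, and Theorem \ref{satinvar} transports invariance to $\vsr{A_{\rx}}$.

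For (3), I plan to show directly that $(A_{\rx})$ is an invariant ideal, so that Corollary \ref{membinvar} (a consequence of Theorem \ref{fullinvarcrit}) delivers invariance of $\vsr{A_{\rx}}$. Take a generator $q\in A_{\rx}$. By the product rule applied monomial-by-monomial, $q'-\dot q=\sum_{m,i} t_{m,i}(x_i'-f_i(\bx))$, where each $t_{m,i}x_i'$ is (up to the coefficient produced by differentiating a monomial of $q$) a monomial of $q'$. The second hypothesis places every $t_{m,i}(x_i'-f_i(\bx))$ in $[A]$, so $q'-\dot q\in[A]$; since $q'\in[A]$ as $[A]$ is a differential ideal, $\dot q\in[A]$. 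As $\dot q\in\rx$, the first hypothesis $[A]\cap\rx=(A_{\rx})$ yields $\dot q\in(A_{\rx})$. The sum and product rules of Lemma \ref{ldderiv} then extend closure under Lie differentiation from the generators to all of $(A_{\rx})$.

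The main obstacle will be the bookkeeping for (1): four previously established lemmas are composed back-to-back, each toggling between $\bbr$/$\bbc$ or between Zariski and Euclidean closures, and it is easy to misapply Lemma \ref{satgeom}(2) or Lemma \ref{restrictclos} in a direction that does not give the identity one needs. Parts (2) and (3) should be comparatively routine once the right earlier results (Lemma \ref{irredclos} and Corollary \ref{membinvar}) are identified.
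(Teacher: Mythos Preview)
Your proposal is correct and follows essentially the same approach as the paper's proof. For part (1) you run the same chain of equalities (Lemmas \ref{euclidclos}, \ref{restrictclos}, Definition \ref{trcons}, Lemma \ref{satgeom}(2), Lemma \ref{restrictsat}) just in the opposite direction, which makes your preliminary invocation of Lemma \ref{closcont} redundant since the chain itself yields equality; parts (2) and (3) match the paper's arguments via Lemma \ref{irredclos} plus the sandwich of Lemma \ref{closcont}, and the direct verification that $(A_{\rx})$ is an invariant ideal, respectively.
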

\begin{proof}
\begin{enumerate}
\item The following chain of equalities establishes the claim:
 
 \begin{align*}
 \overline{\mathbf{V}_{\bbr}(A_{\rx})\setminus \mathbf{V}_{\bbr}(\mathrm{\Pi}  S)}^{\bbr \text{-euc}} &= \overline{\mathbf{V}_{\bbr}(A_{\rx})\setminus \mathbf{V}_{\bbr}(\mathrm{\Pi}  S)}^{\bbr} && \text{\hspace{.5cm}(Lemma  \ref{euclidclos})}\\
 &=  \overline{\mathbf{V}_{\bbr}(A_{\rx})\setminus \mathbf{V}_{\bbr}(\mathrm{\Pi}  S)}^{\mathbb{C}} \cap \mathbb{R}^n && \text{\hspace{.5cm}(Lemma  \ref{restrictclos})}\\
 &= \overline{\mathbf{V}_{\mathbb{C}}(A_{\rx})\setminus \mathbf{V}_{\mathbb{C}}(\mathrm{\Pi}  S)}^{\mathbb{C}} \cap \mathbb{R}^n && \text{\hspace{.5cm}(totally real constructible set)}\\
&=  \mathbf{V}_{\mathbb{C}}((A_{\rx})_{\mathbb{C}}: S^{\infty})\cap \mathbb{R}^n && \text{\hspace{.5cm}(Lemma  \ref{satgeom} (2))}\\ 
 &= \mathbf{V}_{\mathbb{C}}((A_{\rx}): S^{\infty})\cap \mathbb{R}^n && \text{\hspace{.5cm}(Lemma \ref{restrictsat})}\\
 &= \mathbf{V}_{\bbr}((A_{\rx}): S^{\infty}), 
 \end{align*}
 \noindent which is invariant by the regular invariant theorem.
 \item Since $\overline{\mathbf{V}_{\bbr}(A_{\rx})\setminus \mathbf{V}_{\bbr}(\mathrm{\Pi}  S)}^{\bbr\text{-euc}}\subseteq \mathbf{V}_{\bbr}((A_{\rx}): S^{\infty}) \subseteq \vsr{A_{\rx}}$ by Lemma \ref{closcont}, it suffices by the regular invariant theorem to prove equality of the first and last sets. As in part 1, the real Euclidean closure equals the real Zariski closure. Then $\overline{\mathbf{V}_{\bbr}(A_{\rx})\setminus \mathbf{V}_{\bbr}(\mathrm{\Pi}  S)}^{\bbr}$ equals $\vsr{A_{\rx}}$ 
since by hypothesis $\vsr{A_{\rx}}$ is irreducible and $\mathbf{V}_{\bbr}(A_{\rx}) \setminus \mathbf{V}_{\bbr}(\mathrm{\Pi}  S)$ is nonempty.
 \item As indicated, the hypotheses are chosen to mimic the proof of the regular invariant theorem using  $(A_{\rx})$ instead of $(A_{\rx}):S^{\infty}$. In particular, let $p\in A_{\rx}$ with the goal of showing $\dot{p}\in (A_{\rx})$. (By Lemma \ref{ldderiv} it suffices to consider generators of $(A_{\rx})$.)  Now $p' \in [A]$ and it follows from the assumption about monomials in derivatives of elements of $A_{\rx}$ that $\dot{p} \in [A]$. (Replace each $x_i'$ in $p'$ with $x_i'-f_i(\bx)+f_i(\bx)$ and distribute. The assumption about monomials implies that $p'=\dot{p} + \text{an element of } [A]$.) Since $\dot{p}$ is nondifferential and by assumption $[A]\cap \rx = (A_{\rx})$, we conclude that $\dot{p}\in (A_{\rx})$.
 
 \end{enumerate}


\end{proof}


\begin{remark}\label{primecor}
A common way for $[A]\cap \rx = (A_{\rx})$ to hold is for $(A_{\rx})\unlhd \, \rx$ to be a prime ideal containing no element of $S$.  Clearly $(A_{\rx})\subseteq [A]\cap \rx$. For the other containment, note that $[A]\cap \rx \subseteq (\sat{[A]}{} )\cap \rx$, which equals $(\sat{(A)}{})\cap \rx$ by Rosenfeld's lemma (see Remark \ref{rosenfeldrmk}). In turn, $(\sat{(A)}{})\cap \rx= \sat{(A_{\rx})}{}=(A_{\rx})$ by Lemma \ref{exist}, Lemma \ref{satprime}, and the fact that $(A_{\rx})$ is prime and has no element of $S$.

\end{remark} 

We illustrate Theorems \ref{satinvar} and \ref{invarcor} with a nontrivial example from the physical sciences. 
The various hypotheses--in spite of their seemingly technical statements--are all satisfied and readily checked. 

\subsection {Example: Lorenz equations}\label{lorenzex1}

The ODEs $x' = \sigma(y - x), y'=\rho x -y-xz,  z'=xy-\beta z$ comprise the famous \emph{Lorenz equations} \cite{lorenz1963deterministic}. Depending on the parameters, this nonlinear system can display widely varying behavior, including chaotic dynamics \cite{sparrow2012lorenz}. The literature also contains studies of algebraic invariants for the Lorenz system \cite{{llibre2002invariant},{swinnerton2002invariant}} . The benchmark collection from \cite{SogokonMTCP21} considers the parameters $\sigma=1, \rho = 2, \beta=1$, yielding the particular equations $x' = y - x, y'=2x -y-xz,  z'=xy-z$ that we abbreviate as $\mathbf{x'}=\mathbf{f}(\mathbf{x})$.

 Fix an orderly ranking with $x>y>z$. We analyze $A= \{y'-2x +y+xz,  z'-xy+z,2x^2 - y^2 - z^2\}, S= \{x\}$. Thus $A_{\rx}=\{2x^2 - y^2 - z^2\}$. We discuss how we obtained these sets in Section \ref{lorenzex2}; for now we take them as given and confirm the hypotheses of our preceding theorems. 

We claim $\mathcal{C}:= (A=0,S\neq 0)$ satisfies the regular invariant theorem (Theorem \ref{satinvar}) and each part of Theorem \ref{invarcor}. In particular, the strongest conclusion holds: $\overline{\mathbf{V}_{\bbr}(A_{\rx}) \setminus \mathbf{V}_{\bbr}(\mathrm{\Pi}  S)}^{\bbr\text{-euc}}=\mathbf{V}_{\bbr}((A_{\rx}): S^{\infty}) = \mathbf{V}_{\mathbb{R}}(A_{\rx})$ is invariant with respect to $\mathbf{x'}=\mathbf{f}(\mathbf{x})$. 

\smallskip

\noindent \emph{(Theorem \ref{satinvar} applies)}
Clearly, $\mathcal{C}$ is an explicit regular differential system over $\mathbb{R}$ with nondifferential inequations. Differential polynomials $y'-2x +y+xz$ and $z'-xy+z$ belong to $A$, but the presence of $p:=2x^2 - y^2 - z^2$ implies that $x'-y+x$ cannot also be in $A$ lest $A$ not be partially reduced. However, we can check $x'-y+x\in \sat{[A]}{}$ with standard computer algebra systems (CAS) by confirming $x(x' -y +x)\in (y'-2x +y+xz,  z'-xy+z,2x^2-y^2-z^2, 4xx' -2yy' -2zz'=(2x^2-y^2-z^2)')$, where $x',y',z'$ are new algebraic indeterminates. That is, we consider ideal membership in the nondifferential polynomial ring $\bbr[x,y,z,x',y',z']$.  This establishes $\mathbf{x'}-\mathbf{f}(\mathbf{x})\in [A]:S^{\infty}$ and Theorem \ref{satinvar} implies that $\mathbf{V}_{\bbr}((A_{\rx}): S^{\infty})$ is an algebraic invariant of $\mathbf{x'}=\mathbf{f}(\mathbf{x})$.

\smallskip
\noindent \emph{(Theorem \ref{invarcor} applies)}
To check the additional hypotheses in Theorem \ref{invarcor}, we first show that $p=2x^2 - y^2 - z^2\in \rx$ is irreducible over  $\bbc$. (The weaker condition of irreducibility over $\bbr$ suffices for parts 2 and 3 of Theorem \ref{invarcor}, but we prefer to prove the stronger result that is helpful for part 1.) In this case it is simple to work directly: reducibility would imply that $p= (a_1x+b_1y+c_1z)(a_2x+b_2y+c_2z)$ for some $a_i,b_i,c_i\in \bbc$. But distributing and comparing to the coefficients of $2x^2 - y^2 - z^2$ gives an inconsistent system: $a_1a_2=2, b_1b_2= c_1c_2=-1, a_1b_2+a_2b_1=a_1c_2+a_2c_1 = b_1c_2+b_2c_1=0.$ (Inconsistency is conveniently shown by using a CAS to conclude $1\in (a_1a_2-2, b_1b_2+1, c_1c_2+1, a_1b_2+a_2b_1,a_1c_2+a_2c_1, b_1c_2+b_2c_1))$. Thus the polynomial $p$ and the variety $\vsc{A_{\rx}}$ are irreducible over $\bbc$.

\begin{enumerate}
\item \emph{(Part 1)}
Since $\vsc{A_{\rx}}$ only has one irreducible component, we just need to find one real smooth point to prove that $\vsc{A_{\rx}}$ (and hence $\vsc{A_{\rx}}\setminus \vsc{\mathrm{\Pi}  S}$, by Lemma \ref{trtrcons}) is totally real.  An obvious choice is $(1,1,1)\in \bbr ^3$, which is smooth because $(p)=(A_{\rx})$ is prime (and hence radical) and $(\frac{\partial p}{\partial x}, \frac{\partial p}{\partial y},\frac{\partial p}{\partial z})$ evaluated at $(1,1,1)$ is $(4,-2,-2)\neq \mathbf{0}$. Alternatively,  $\vsc{A_{\rx}}$ is totally real by Theorem \ref{signchange} since $p$ is irreducible, $p(1,0,0)>0$, and $p(0,1,0)<0$. It follows that $\mathcal{C}$  satisfies Theorem \ref{invarcor} (1).

\item \emph{(Part 2)} Irreducibility over $\bbc$ implies irreducibility over $\bbr$ and $(1,1,1)\in \mathbf{V}_{\bbr}(A_{\rx}) \setminus \mathbf{V}_{\bbr}(\mathrm{\Pi}  S)$, so $\mathcal{C}$ satisfies part 2. 

\item\emph{(Part 3)} 
Ideal $(A_{\rx})$ is prime and $(A_{\rx})\cap S$ is empty since $2x^2 - y^2 - z^2$ does not divide $x$. Remark \ref{primecor} thus implies that $[A]\cap \rx = (A_{\rx})$.

\indent \hspace*{15pt} Lastly, since $(2x^2-y^2-z^2)' =4xx' -2yy' -2zz'$, we must show that $4x(x' -y +x), -2y(y'- 2x +y+xz)$, and $-2z(z'-xy+z)$ belong to $[A]$. This was done above in our proof of $\mathbf{x'}-\mathbf{f}(\mathbf{x})\in [A]:S^{\infty}$.
\end{enumerate}




\section{The Rosenfeld-\Grob Algorithm for Algebraic Invariants}\label{rga}
A modified version of the Rosenfeld-\Grob algorithm (RGA) of Boulier et al. \cite{{rga},{BoulierLOP95}} is our main tool for differential elimination. We explain the basic ideas behind RGA and then formulate $\rgaexp$, our handcrafted version that produces explicit regular differential systems with nondifferential inequations (and hence invariants by the regular invariant theorem). The subscript \emph{o} stands for ``ordinary'' because our setting involves ODEs of a special form. 

\subsection{$\rgaexp$ for Explicit Systems}\label{rgaexp}

Unlike \Grob basis techniques that output generators of an ideal, 
RGA uses a differential generalization of \emph{characteristic sets}. Given a ranking, a characteristic set $C$ of ideal $I$ is by definition a minimal-rank autoreduced subset of $I$ (Remark \ref{rankset}). 
Such a $C$ is not necessarily unique and might not be a generating set of $I$, but $C$ is nonempty, finite, and pseudoreduces every element of $I$ to zero \cite[p. 175]{mishra}. The same definition and properties hold for \emph{differential characteristic sets} if we use \emph{differential} rankings, \emph{differential} ideals, and \emph{differential} pseudoreduction.
\phantomsection
\label{genericchain}
Given a differential ranking and a system of polynomial differential equations and inequations, RGA outputs a finite collection of special differential characteristic sets called \emph{regular differential chains} \cite{BoulierL10} or simply \emph{chains}; see parts 2, 3 of Theorem \ref{rgadecomp}. (Recall Definition \ref{regsys} and Theorem \ref{premtest} for the nondifferential version.) Multiple chains in the output come from case splits over the vanishing of initials and separants. For this reason, RGA typically outputs a ``generic" chain (obtained by placing initials and separants with the inequations as much as possible while maintaining consistency) and several more specific ones; see \cite{Hubert99} for precise definitions and Section \ref{lorenzex2} for an example that continues the one in Section \ref{lorenzex1}.

Regular differential chains contain important ``geometric" information. Take an input differential system $(A=0,S\neq 0)$ of equations and inequations and perform RGA. Then a differential polynomial $p$ is zero at all points that satisfy $(A=0,S\neq 0)$  if and only if $p$ is differentially pseudoreduced to zero by each chain that RGA returns given input $(A=0,S\neq 0)$ \cite[Cor. 3 and Thm. 9]{rga}. By the differential Nichtnullstellensatz (Theorem \ref{diffsatrad}), this gives an algorithm for testing \emph{radical} differential saturation ideal membership. This is especially noteworthy because, unlike the algebraic case, general differential ideal membership is undecidable (at least for partial differential polynomial rings with two or more derivations; the ordinary case remains open) \cite{umirbaev2016algorithmic}.

We cite the important properties guaranteed by RGA. Compare Theorem \ref{rgadecomp} (due to Boulier et al.) below to Theorems \ref{modrgacorr} and \ref{radlddecomp}, our main results in this section.
\begin{theorem}[{\cite[Thm. 9]{rga}}]\label{rgadecomp}
Let $K$ be a differential field of characteristic 0, let $A,S\subseteq K\{\bx\}$ be finite with $0\notin S$, and fix a differential ranking.

Then RGA applied to $(A,S)$ returns differential systems $(A_1,S_1),\ldots, (A_r,S_r)$ such that 
\begin{enumerate}
\item $\sqrt{\sat{[A]}}= (\sat[S_1]{[A_1]})\cap \cdots \cap (\sat[S_r]{[A_r]})$,
\item each $(A_i,S_i)$ is a regular differential system (together with item 3, this makes each $A_i$ a regular differential chain), and
\item  for all $p\in K\{\bx\}$ and $1\leq i\leq r$, we have $p\in \sat[S_i]{[A_i]}$ if and only if the differential pseudoremainder of $p$ with respect to $A_i$ is 0.
\end{enumerate}
\end{theorem}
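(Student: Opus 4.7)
The plan is to describe RGA as an iterative branching procedure whose invariants deliver each of the three claims. I would maintain a queue of differential systems $(B,T)$ initialized with $(A,S)$. At each step, an unprocessed system is transformed toward regularity: equations are pairwise pseudoreduced via $\mathrm{DiffPseudoDiv}$ so that leaders eventually become distinct; initials and separants of equations are appended to the inequation set $T$; and whenever an ambiguity arises (for example, it is not known whether a chosen initial/separant $h$ vanishes on the current component), the system is split into two branches, justified by the Splitting Lemma (Theorem~\ref{splitrad}):
$$\sqrt{\sat[T]{[B]}} \;=\; \sqrt{\sat[T]{[B,h]}} \,\cap\, \sqrt{\sat[(T\cup\{h\})]{[B]}}.$$
Branches in which $1$ enters $\sat[T]{[B]}$ are discarded as inconsistent; the survivors constitute the outputs $(A_1,S_1),\dots,(A_r,S_r)$. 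Iterating the Splitting Lemma across every branching point immediately gives claim~(1).

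Claim~(2) will follow from the construction itself. By design, the algorithm does not halt on a branch until its equation set is partially reduced and has distinct leaders, and its inequation set contains every relevant separant; these are precisely the requirements of Definition~\ref{regdiffsys}. Lazard's lemma (Lemma~\ref{lazard}) guarantees that the associated saturation ideals are already radical, so the output list is a decomposition of $\sqrt{\sat{[A]}}$ by genuine radical differential ideals.

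For claim~(3), given $p\in K\{\bx\}$ I would compute $r := \mathrm{DiffPseudoDiv}(p,A_i)$, which is partially reduced with respect to $A_i$. Proposition~\ref{pdivsat} supplies an identity $(\widetilde{s})(\widetilde{i})\,p - \widetilde{q} = r$ in which $\widetilde{s},\widetilde{i}$ are products of factors of elements of $S_i$ and $\widetilde{q}\in[A_i]$, so $p\in\sat[S_i]{[A_i]}$ if and only if $r\in\sat[S_i]{[A_i]}$. Because $r$ is partially reduced, Rosenfeld's lemma (Lemma~\ref{rosenfeld}) equates this with $r\in\sat[S_i]{(A_i)}$; and since the algebraic restriction of $(A_i,S_i)$ is a regular algebraic system (Lemma~\ref{regalg}) that can be organized as a regular chain in the sense of Definition~\ref{regsys}, Theorem~\ref{premtest} identifies $r\in\sat[S_i]{(A_i)}$ with vanishing of the algebraic pseudoremainder of $r$ modulo $A_i$, which coincides with the differential pseudoremainder of $p$ on partially reduced inputs. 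Conversely, a zero pseudoremainder trivially places $p$ in $\sat[S_i]{[A_i]}$ via Proposition~\ref{pdivsat}.

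The main obstacle is termination: differential polynomial rings are not Noetherian, so finiteness of the output cannot come from Hilbert's basis theorem alone. The standard argument invokes the well-partial-ordering on autoreduced sets induced by the ranking (Remark~\ref{rankset}): each nontrivial split either strictly lowers the rank of the equation part or eliminates a redundant polynomial, and Ritt's reducedness theorem forbids infinite strictly decreasing chains of autoreduced sets. Establishing that every branch therefore reaches either inconsistency or a regular system in finitely many steps is the delicate bookkeeping contained in~\cite{rga}; once termination is in hand, claims (1)--(3) follow from the Splitting, Rosenfeld, Lazard, and pseudoreduction lemmas already available in the excerpt.
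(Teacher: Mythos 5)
The paper itself does not prove Theorem~\ref{rgadecomp}; it is cited verbatim from Boulier et al.\ \cite{rga}. So there is no in-paper proof to compare against, and I am evaluating your sketch on its own merits.

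Claims (1) and (2) are handled reasonably: the iterated use of the Splitting Lemma (Theorem~\ref{splitrad}), together with the observation that pseudodivision steps preserve the radical differential saturation ideal, gives the decomposition, and Lazard's lemma makes the output components radical. Deferring the termination bookkeeping (Ritt--Raudenbush well-ordering on autoreduced sets) to \cite{rga} is acceptable in a sketch, and this is genuinely where the work lies.

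However, your argument for claim (3) has a real gap. After reducing $p$ to a partially reduced $r$ and invoking Rosenfeld's lemma, you need to convert $r\in\sat[S_i]{(A_i)}$ into ``the pseudoremainder is $0$.'' You do this by appealing to Theorem~\ref{premtest}, but that theorem is stated for \emph{regular sets} (regular chains, Definition~\ref{regsys}), which impose the non-zerodivisor condition on initials --- a strictly stronger requirement than the \emph{regular algebraic system} condition (Definition~\ref{regalgsys}) that Lemma~\ref{regalg} actually gives you. Your phrase ``can be organized as a regular chain'' is precisely the missing content: a regular algebraic system is \emph{not} automatically a regular chain, and the RGA algorithm has to do additional refinement work to ensure its output is one. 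The paper is explicit on exactly this point when contrasting its own $\rgaexp$ with RGA: $\rgaexp$ only produces regular differential systems and therefore does \emph{not} inherit the pseudoreduction-to-zero membership test (see the remark immediately following Theorem~\ref{radlddecomp}). Moreover, the theorem statement itself makes the logical order clear --- it is items (2) \emph{and} (3) \emph{together} that establish the regular-chain property --- so trying to derive (3) from the regular-chain hypothesis is circular. To fix the sketch, you would need to argue directly that the RGA procedure, not merely the generic split/pseudoreduce loop you describe, arranges for each output $A_i$ to satisfy the regularity-of-initials condition of Definition~\ref{regsys}, and only then invoke Theorem~\ref{premtest}.
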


Strictly speaking, for differential elimination results to be computationally meaningful we must be able to algorithmically add, multiply, divide, differentiate, and check equality with 0. We always tacitly assume this about the finitely many differential field elements that appear during a computation. (Our inputs are finite sets of differential polynomials and so there are only finitely many coefficients; all intermediate elements result from these via arithmetic operations or differentiation.) The assumption is mild because in practice coefficients typically belong to $\mathbb{Q}$. 

\phantomsection
\label{rgaapp}
The original authors of RGA gave two versions of the algorithm \cite[pp. 162-3]{BoulierLOP95} \cite[p. 111]{rga}.  
RGA has been implemented in the Maple computer algebra system, where it forms the heart of the ${\tt DifferentialAlgebra}$ package \cite{mapleDiffAlg}. While this tool is convenient (for example, we use it in the example from Section \ref{lorenzex2}), the proprietary nature of Maple impedes a full analysis of the implementation and its performance. As an alternative, Boulier makes freely available the C libraries on which the Maple implementation is based \cite{blad}. RGA has proven its versatility by admitting refinements and extensions over the past two decades \cite{{fakouri2018new},{HashemiT14},{rga},{GolubitskyKMO08},{BouzianeKM01},{Hubert00}}.



Published applications of RGA include parameter estimation for continuous dynamical systems \cite{{UshirobiraEB19},{VerdiereZD18}} and preprocessing of 
systems for later numerical solution \cite{boulier2008differential}. The literature contains various case studies from control theory \cite{HarringtonHM19}, medicine \cite{hong2016minimal}, and mathematical biology \cite{boulier2006differential}.



We give a modified version of RGA that we call $\rgaexp$ and that is well-suited for analyzing systems in explicit form. In some ways our algorithm is simpler than those in the literature. In particular, we produce regular differential systems but do not guarantee that they are regular differential chains. This helps us control the form of the output and more easily obtain explicit bounds. Our application to algebraic invariants (culminating in Theorem \ref{radlddecomp}) does not require radical differential ideal membership testing, so regular differential systems suffice for our use case.

First we describe a purely algebraic algorithm, ${\tt Triangulate}$, that we use as a subroutine in $\rgaexp$. The general structure of ${\tt Triangulate}$ mirrors that of $\rgaexp$ but does not have to deal with derivatives. After presenting both algorithms, proving them correct, and interpreting the output, we analyze their complexity. 

Intuitively, ${\tt Triangulate}$ is a recursive divide-and-conquer algorithm that decomposes the radical of a saturation ideal into an intersection of radical ideals determined by regular algebraic systems. (The same high-level description also applies to $\rgaexp$, but in that case using \emph{differential} polynomial rings and ideals.)  

\begin{remark} \label{singlestepconv}
To facilitate the complexity analysis in Section \ref{rgabds} (e.g., Theorem \ref{cxtyTri}), we use the following convention in describing ${\tt Triangulate}$ and $\rgaexp$: unless stated otherwise, ``pseudodivision" refers to a single step of pseudodivision as discussed in Remark \ref{premred}.
\end{remark}

\noindent ${\tt Triangulate}$: \label{Triangulate}
\smallskip

	\begin{itemize}
    \item \emph{Informal summary}: ${\tt Triangulate}$ transforms an input pair $(A,S)$ into more refined pairs that are ``closer" to being regular algebraic systems. At each step, $\tri$ includes an initial or separant with either the equations or inequations and reduces via pseudodivision. Then $\tri$ is called on the resulting pairs. This recursion creates a tree whose leaves are the output of the algorithm.  

    \item \emph{Detailed description}:
    \begin{itemize}
	\item Choose the ranking $x_n> x_{n-1} > \cdots > x_1$. (The particular choice is not essential; we specify a ranking for concreteness.) 

	 \item Input: a finite set of polynomials $A=\{p_1,p_2,\ldots,p_m\}\subseteq \rx$ that determines the equations $p_1=0,\ldots,p_m=0$. Let $S$ be a finite set of polynomials in $\rx\setminus \{0\}$ corresponding to inequations. (That is, the input is the algebraic system $(A=0,S\neq 0)$). We abuse terminology slightly by calling $A$ itself a set of equations and $S$ itself a set of inequations.

 	\item Output: Pairs $(A_1,S_1),\ldots, (A_r,S_r)$ such that $A_1, \ldots, A_r \subseteq \rx, S_1,\ldots, S_r \subseteq \rx\setminus\{0\}$ are finite sets of polynomials and

\smallskip

 \[\sqrt{\sat{(A)}}=(\sat[S_1]{(A_1)})\cap \cdots \cap (\sat[S_r]{(A_r)}),\] 
 
 \smallskip
 \noindent where $(A_i=0,S_i\neq 0)$ is a regular algebraic system (i.e., $A_i$ is triangular and $S_i$ contains at least the separants of $A_i$).

	\end{itemize}

\begin{enumerate}

\item If $A$ is already triangular and separant $s_q$ belongs to $S$ for every $q\in A$, then return $(A,S)$. 

Otherwise, choose the highest-ranking leader $x$ that either appears in multiple elements of $A$ or appears in only one $q\in A$ but $s_q \notin S$. We call $x$ the \emph{target variable} for the pair $(A,S)$. Choose some $q\in A$ that has minimal degree in $x$ among all elements of $A$ having leader $x$; if there are multiple such polynomials, pick one that has least total degree among those with minimal degree in $x$ (again there might be several).

We define two auxiliary sets that we need in steps 2 and 3.

\begin{itemize}
\item Let $\widetilde{A}:= (A\cup \{i_{q}, q- i_{q}x^{deg_{x}(q)}\})\setminus \{q\}$, where $q- i_{q}x^{deg_{x}(q)}$ is the \emph{tail} of $q$ (i.e., what is left of $q$ if we substitute zero for the initial of $q$). 

The notation $deg_{x}(q)$ represents the degree of variable $x$ in polynomial $q$.  The set $\widetilde{A}$ is finite; the parentheses around $A\cup\{i_q, \ldots \}$ separate the union from the set difference and do not indicate an ideal.\phantomsection \label{trinote} Note that the solutions of $(\widetilde{A}=0,S\neq 0)$ are the same as those of $(A\cup\{i_q\}=0,S\neq 0)$. This replacement is not technically pseudodivision, but it behaves similarly and in Theorem \ref{triangcorr} (correctness of $\tri$) and Lemma \ref{difftriang} we analyze this case together with the pseudodivision steps. \phantomsection
\label{tristep1}

Note that $x$ does not appear in $i_{q}$ and $deg_{x}(q- i_{q}x^{deg_{x}(q)})< deg_{x}(q)$. Also note that the target variable of $(\widetilde{A},S)$ could still be $x$ or might have strictly lower rank, but cannot have higher rank than $x$.

\item Let $\hat{S}:= S\cup \{i_{q}\}$. (Note that $i_{q}$ is not 0.) 
\end{itemize}
\item 
If $x$ appears in multiple elements of $A$, choose some $p\neq q$ that has maximal degree in $x$ among all elements of $A$ having leader $x$.  If there are multiple such polynomials, pick any one that has greatest total degree among those with maximal degree in $x$ (again there might be several). Now pseudodivide $p$ by $q$ and let $r$ be the resulting pseudoremainder. Update the equations by omitting $p$ and including $r$; let $\hat{A}:= (A\cup \{r\})\setminus \{p\}$.

Note that $deg_{x}(r)< deg_x(p)$ and $deg_x(q)\leq deg_x(p)$, but we do not guarantee $deg_x(r)<deg_x(q)$ because we use a single step of pseudodivision (Remark \ref{singlestepconv}) and $r$ is not necessarily reduced with respect to $q$. 

Return the union ${\tt Triangulate}(\widetilde{A},S)\, \cup \, {\tt Triangulate}(\hat{A}\cup\{s_{q}\},\hat{S})\, \cup$ \\${\tt Triangulate}(\hat{A},\hat{S}\cup\{s_{q}\})$.  

(Note: It is convenient to describe these recursive calls in terms of splitting the computation into different branches. Here in step 2 we have split twice. On one branch we included $i_q$ with the equations, left the inequations $S$ unchanged, and called $\tri$ on the updated system $(\widetilde{A},S)$. On the other branch we included $i_q$ with the inequations, pseudodivided $p$ by $q$, and then split again over including $s_{q}$ with the equations or the inequations. This led to ${\tt Triangulate}(\hat{A}\cup\{s_{q}\},\hat{S})$ and ${\tt Triangulate}(\hat{A},\hat{S}\cup\{s_{q}\})$, respectively.)

\item 
If $q$ is the lone element of $A$ with leader $x$, 
then we consider two cases:
	\begin{enumerate}
		\item  If $s_q\in \hat{S}$, return the union ${\tt Triangulate}(\widetilde{A},S)\, \cup \,{\tt Triangulate}(A,\hat{S})$. 
		\item If $s_q\notin \hat{S}$, then $0<deg_x(s_q)<deg_x(q)$ because if $x$ does not appear in $s_q$, then  $deg_x(q)=1$ and  $s_q=i_q\in \hat{S}$. 
Pseudodivide $q$ by $s_q$ and let $r$ be the resulting pseudoremainder; note that  $deg_{x}(r)< deg_x(q)$.

Return the union ${\tt Triangulate}(\widetilde{A},S)\, \cup \,{\tt Triangulate}((A\cup \{s_q,r\})\setminus \{q\}, \hat{S})\, \cup \,{\tt Triangulate}(A,\hat{S}\cup\{s_q\})$.
		
(Note: As in step 2, we say that we have split over including $i_q$ with the equations or inequations, and then likewise over $s_q$.) 
	\end{enumerate}






\end{enumerate}
\end{itemize}

\begin{remark} \label{redund_rmk}

Though not necessary for correctness of ${\tt Triangulate}$, in practice it is essential to trim inconsistent branches (i.e., pairs $(A=0,S\neq 0)$ that have no solution in the reals) as the algorithm progresses. Discarding such branches does not lose any solutions. Analogously, state-of-the-art versions of Buchberger's algorithm for \Grob bases avoid redundant $S$-polynomial calculations by finding an appropriate subset of the possibilities \cite[Sect. 2.10]{clo1_4}. 
A simple optimization of $\tri$ would be to omit branches that add a nonzero constant to the equations or 0 to the inequations. Likewise, we can spot inconsistency by inspection if a nonzero constant multiple of a polynomial in the equations shows up in the inequations or vice versa. In general, though, it may be necessary to compute a regular chain (see Theorem \ref{premtest}) or a \Grob basis of $\sat[S_B]{(B)}$ to determine if $(B=0,S_B\neq 0)$ is solvable in $\bbc$ (in $\bbr$ would require even more, such as real quantifier elimination \cite[Prop. 5.2.2]{bochnak1998real}). These calculations can be expensive, so good judgment is required to pick a strategy for efficiently detecting most of the inconsistent branches. In any case, the worst-case complexity (Theorem \ref{cxtyTri}) is not affected by the lack of consistency checking in ${\tt Triangulate}$ since we must always take into account the possibility of splitting. 
\end{remark}

We now prove termination and correctness of ${\tt Triangulate}$. The termination argument is somewhat delicate because many different cases can emerge during a run of the algorithm. To ensure termination we desire a well ordering that strictly decreases in all cases we could encounter. This demands multiple criteria for ranking sets of equations and inequations and requires a cleverly crafted ordering of the list at the beginning of Theorem \ref{triangcorr}'s proof.



\begin{theorem}[{Termination and correctness of ${\tt Triangulate}$}] \label{triangcorr}
Given finite sets of polynomials $A\subseteq \rx,S\subseteq \rx\setminus\{0\}$, algorithm ${\tt Triangulate}$ terminates and the output  $(A_1, S_1),\ldots, (A_r, S_r)$ satisfies

\[\sqrt{\sat{(A)}}=(\sat[S_1]{(A_1)})\cap \cdots \cap (\sat[S_r]{(A_r)}),\] 

\noindent where $A_1,\ldots, A_r \subseteq \rx, S_1,\ldots, S_r \subseteq \rx\setminus \{0\}$, and $(A_i=0,S_i\neq 0)$ is a regular algebraic system (i.e., $A_i$ is triangular and $S_i$ contains at least the separants of $A_i$).

\end{theorem}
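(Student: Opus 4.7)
The plan is to prove termination and correctness simultaneously by strong induction on a well-founded measure $\Phi$ defined on pairs $(A,S)$. The workhorses are Theorem \ref{algsplitrad} (to split the radical saturation at each case) and Proposition \ref{pdivsat} applied to single-step pseudodivision (Remark \ref{singlestepconv}). The base case is Step~1, where the algorithm returns the single pair $(A,S)$ precisely when it is already a regular algebraic system, so the output conditions are immediate.

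For termination, I would let $\Phi(A,S) = (\mathrm{rk}(x),\, N_x(A))$ in lexicographic order, where $x$ is the target variable of $(A,S)$ (with $\mathrm{rk}=0$ and $N_x=\emptyset$ when no target exists), $\mathrm{rk}(x)$ is the index of $x$ in the ranking $x_n > \cdots > x_1$, and $N_x(A) = \{\deg_x(h) : h \in A,\ l_h = x\}$ is the multiset of $x$-degrees of leader-$x$ polynomials, compared via the Dershowitz--Manna multiset extension of $<$ on $\mathbb{N}$. Well-foundedness follows from finiteness of the variable set and well-foundedness of multiset orderings over $\mathbb{N}$. The strict decrease of $\Phi$ in each recursive branch follows from case inspection: in the $(\widetilde{A},S)$ branch, $q$ is replaced by the $x$-free $i_q$ and the tail of $x$-degree $< \deg_x(q)$; in the Step~2 pseudodivision branches, $p$ of $x$-degree $d$ is replaced by $r$ of $x$-degree $<d$ and $s_q$ of $x$-degree $\deg_x(q)-1 < d$ is added; in the Step~3b pseudodivision branch, $q$ is replaced by $s_q$ and $r$, both of $x$-degree $<\deg_x(q)$; and in the branches that adjoin $s_q$ (or, in Step~3a, simply $i_q$) to the inequations while leaving $A$ fixed, the lone polynomial with leader $x$ then has both initial and separant in the inequations, so $x$ ceases to be the target and $\mathrm{rk}$ drops strictly. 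No recursive call introduces a polynomial of leader higher than $x$, which makes the lexicographic comparison consistent.

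For correctness, I would first apply Theorem \ref{algsplitrad} with $h = i_q$ to obtain
\[\sqrt{\sat{(A)}} \;=\; \sqrt{\sat{(A,i_q)}} \,\cap\, \sqrt{\sat[\hat{S}]{(A)}}.\]
The first factor equals $\sqrt{\sat{(\widetilde{A})}}$ because $(A,i_q) = (\widetilde{A})$ as ideals in $\rx$: the decomposition $q = x^{\deg_x(q)} i_q + (q - i_q x^{\deg_x(q)})$ shows $q \in (\widetilde{A})$, and the added generators of $\widetilde{A}\setminus A$ manifestly lie in $(A,i_q)$. For the second factor, Proposition \ref{pdivsat} applied to a single step of pseudodivision yields the relation $i_q p = g r + c\, x^{d-e}\, q$ in Step~2 (and analogously $i_{s_q} q = g r + c'\, x^{d-e}\, s_q$ in Step~3b, where $i_{s_q}$ is a nonzero scalar multiple of $i_q \in \hat{S}$). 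This shows the polynomial being reduced becomes redundant modulo the $\hat{S}$-saturation once the pseudoremainder is added, giving $\sqrt{\sat[\hat{S}]{(A)}} = \sqrt{\sat[\hat{S}]{(\hat{A})}}$ in Step~2 and the corresponding identity in Step~3b. A second application of Theorem \ref{algsplitrad} with $h = s_q$ then splits this factor into the branches over $s_q$ (with no further split needed in Step~3a, where $s_q \in \hat{S}$ already). The inductive hypothesis supplies regular systems matching the remaining factors, and distributing intersections assembles the stated equality.

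The main obstacle I anticipate is designing $\Phi$ so that it strictly decreases on \emph{every} branch of \emph{every} case, most delicately in Step~3b, where pseudodividing $q$ by $s_q$ produces $r$ with $\deg_x(r) < \deg_x(q)$ but possibly still of leader $x$, while the simultaneously introduced $s_q$ of leader $x$ has degree $\deg_x(q)-1$. A scalar measure such as the maximum or sum of $x$-degrees can fail to decrease, but the Dershowitz--Manna multiset ordering precisely accommodates the situation of removing one higher-degree element while adding several strictly lower-degree ones.
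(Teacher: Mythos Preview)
Your proof is correct. The correctness half follows essentially the paper's strategy: split via Theorem~\ref{algsplitrad} on $i_q$ and then $s_q$, and show that pseudodivision preserves the saturation. Your treatment of the pseudodivision step is actually slightly cleaner than the paper's---you argue directly from the ideal relation in Proposition~\ref{pdivsat} that $\sat[\hat S]{(A)}=\sat[\hat S]{(\hat A)}$, whereas the paper detours through solution sets and the Nichtnullstellensatz (Theorem~\ref{algsatrad}) to reach the same conclusion. One small omission: in the base case you should invoke Lazard's lemma (Lemma~\ref{lazard}) to drop the radical, since the theorem's conclusion has $\sat[S_i]{(A_i)}$ rather than $\sqrt{\sat[S_i]{(A_i)}}$ on the right.

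Your termination argument, however, is genuinely different and more elegant. The paper builds a handcrafted five-component lexicographic measure on $\{1,\dots,n\}\times\mathbb{N}^4$---tracking the target variable, the maximum degree in the target, the number of elements attaining that maximum, the number of elements with the target as leader, and the minimum nonzero degree in the target---and then verifies case by case that some component drops. Your two-component measure $(\mathrm{rk}(x),N_x(A))$ with the Dershowitz--Manna multiset extension captures the same information more uniformly: the ``remove one element, add finitely many strictly smaller ones'' pattern of pseudodivision (and of the $\widetilde A$ replacement) is exactly what that ordering is designed for, and your observation that no branch introduces a leader above $x$ cleanly handles the first component. The paper's version avoids appealing to multiset well-orderings, at the cost of a longer case analysis; yours is shorter but assumes the reader knows Dershowitz--Manna.
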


\begin{proof} We first prove termination. This follows from a partial well ordering on pairs $(B,S_B)$ where $B,S_B \subseteq \rx$ are finite sets of polynomials. Given such pairs $(B_1,S_{B_1}),$ $(B_2,S_{B_2})$, we say that $(B_2,S_{B_2})$ has higher rank than $(B_1,S_{B_1})$ and write $(B_1,S_{B_1})<(B_2,S_{B_2})$ if one of the following conditions holds. (Recall that the target variable of pair $(B,S_B)$ is the variable $x$ of highest rank such that either $B$ contains multiple elements having leader $x$ or $B$ has only one element $p$ with leader $x$ and the separant $s_p$ of $p$ does not belong to $S_B$.)
	\begin{enumerate}
		\item The target variable $x$ of $(B_2,S_{B_2})$ has higher rank than the target variable $y$ of $(B_1,S_{B_1})$.
		\item The two pairs have the same target variable $x$, but $x$ appears with strictly larger degree in $B_2$.
		\item The target variable and maximal degree in the target are the same for both pairs, but $B_2$ has strictly more elements with maximal degree in the target.
		\item All the previous quantities are the same for both pairs, but $B_2$ has strictly more elements whose leader is the target.
		\item  All the previous quantities are the same for both pairs, but an element of $B_2$ with minimal nonzero degree in the target variable $x$ has strictly higher degree in $x$ than the corresponding element of $B_1$.
	\end{enumerate}

 More formally, we are using $\{1,2,\ldots, n\}\times \bbn^4$ ordered lexicographically with the standard orders on $\{1,2,\ldots, n\}$ and $\bbn$. The subscript of the target variable belongs to $\{1,2,\ldots,n\}$ and the four copies of $\bbn$ represent the highest degree in the target variable, the number of elements with the highest degree in the target, the number of elements having the target as their leader, and the minimal degree in the target, respectively.

 To show termination it suffices to show that each recursive call in the body of ${\tt Triangulate}$ acts on a pair of strictly lower rank than that of $(A,S)$. We first note that in all cases the target variable can only remain the same or decrease. This is because we never remove anything from the inequations and the only variables that can appear in the elements added to the equations are the original target variable or variables of lower rank. If the target variable remains the same, then the maximal degree in the target must remain the same or decrease because anything we add to the equations has degree in the target that is strictly less than the original maximal degree in that variable.  
Hence we must show that one of the remaining quantities decreases if the quantities that lexicographically precede it stay the same. 


	\begin{itemize}
		\item Step 2 of ${\tt Triangulate}$ deals with the case that $A$ has multiple elements with leader $x$. First we produce the pair $(\widetilde{A},S)$, where $\widetilde{A}:= (A\cup \{i_{q}, q- i_{q}x^{deg_{x}(q)}\})\setminus \{q\}$, $x$ is the target variable of $(A,S)$, and $q$ is an element of $A$ having leader $x$ and minimal degree in $x$. To see that $(\widetilde{A},S)<(A,S)$, we observe that
			\begin{itemize}
				\item If $deg_x(q)$ is maximal as well as minimal (i.e., all elements of $A$ that have leader $x$ have the same degree in $x$), then the number of elements of maximal degree in $x$ decreases because we removed $q$, the initial $i_q$ does not contain $x$ and $deg_x(q- i_{q}x^{deg_{x}(q)})< deg_x(q)$.
				\item If $deg_x(q)$ is not maximal, then the number of elements of maximal degree in $x$ stays the same. If $x$ does not appear in $q- i_{q}x^{deg_{x}(q)}$, then 				the number of elements with $x$ as the leader decreases (we replaced $q$ with $q- i_{q}x^{deg_{x}(q)}$ and $x$ does not appear in $i_q$). If $x$ does appear in $q- 				i_{q}x^{deg_{x}(q)}$, then the number of elements with $x$ as the leader stays the same, but the minimal degree in $x$ decreases (because $deg_x(q- i_{q}					x^{deg_{x}(q)})< deg_x(q))$. 
			\end{itemize}
  Step 2 also produces the pairs $(\hat{A}\cup\{s_{q}\},\hat{S})$ and $(\hat{A},\hat{S}\cup\{s_{q}\})$, where $\hat{S}:= S\cup \{i_{q}\}$ and $\hat{A}:= (A\cup \{r\})\setminus \{p\}$ for $p\neq q$ having maximal degree in $x$ and $r$ the pseudoremainder upon pseudodividing $p$ by $q$.  The number of elements having maximal degree in $x$ decreases because $r$ and $s_q$ have strictly lower degree in $x$ than $p$ and we remove $p$.
		\item  Step 3 of ${\tt Triangulate}$ deals with the case that $q$ is the lone element of $A$ with leader $x$.  
			\begin{itemize}
				\item If $s_q\in \hat{S}$, then step 3 produces the pair $(A,\hat{S})$. (We already have shown that $(\widetilde{A},S)<(A,S)$, so here we just look at $(A,\hat{S})$). In this situation the target variable must decrease because $q$ is the only element of $A$ with leader $x$ and now $s_q$ belongs to the inequations (so by definition $x$ is no longer a target variable).
				\item  If $s_q\notin \hat{S}$, then step 3 produces the pairs $((A\cup \{s_q,r\})\setminus \{q\}, \hat{S})$ and $(A,\hat{S}\cup\{s_q\})$, where $x$ appears in $s_q$ and $r$ is the pseudoremainder upon pseudodividing $q$ by $s_q$. In the first instance, $((A\cup \{s_q,r\})\setminus \{q\}, \hat{S}) < (A,S)$ because either the target variable decreases or it remains the same and the maximal degree in $x$ decreases (since $deg_{x}(r)<deg_x(q),deg_x(s_q)< deg_x(q)$, and $q$ was the lone element of $A$ with leader $x$). In the second, $(A,\hat{S}\cup\{s_q\})<(A,S)$ because the target variable must decrease ($q$ was the only element of $A$ with leader $x$ and now $s_q$ belongs to the inequations).

			\end{itemize}	
	\end{itemize}

This proves termination. We now establish the properties of the output systems $(A_i=0,S_i \neq 0)$.

Each $A_i$ is triangular and $S_i$ contains the separants of all elements of $A_i$ (i.e.,  $(A_i=0,S_i\neq 0)$ is a regular algebraic system) because this is the base case that returns an explicit answer instead of a recursive call. (Note that $0\notin S_i$ because we only ever add $i_q$ or $s_q$ to the inequations, and these are never the zero polynomial.)

Lastly, we show that the ideals $\sat[S_i]{(A_i)}$ decompose $\sqrt{\sat{(A)}}$ as claimed. Observe that $\sqrt{\sat[S_1]{(A_1)}} \cap \cdots \cap \sqrt{\sat[S_r]{(A_r)}} = (\sat[S_1]{(A_1)}) \cap \cdots \cap (\sat[S_r]{(A_r)})$ because each $\sat[H_i]{(A_i)}$ is radical by Lazard's lemma (Lemma \ref{lazard}). Each operation during a run of  ${\tt Triangulate}$ converts a system $(B=0,S_B\neq 0)$ into another system $(B_1=0, S_{B_1}=0)$ or two systems $(B_1=0, S_{B_1}=0), (B_2=0, S_{B_2}=0)$. By induction on the number of splitting and pseudodivision operations during a run of ${\tt Triangulate}$, it suffices to confirm at each step that $\sqrt{\sat[S_B]{(B)}}=\sqrt{\sat[S_{B_1}]{(B_1)}}$ or $\sqrt{\sat[S_B]{(B)}}=\sqrt{\sat[S_{B_1}]{(B_1)}}\cap \sqrt{\sat[S_{B_2}]{(B_2)}}$. 

\begin{enumerate}

\item (Splitting over an initial or separant)  Steps 2 and 3 both split the computation into branches, one where the initial $i_q$ or separant $s_q$ of a chosen polynomial $q\in B$ is included with the equations $B$ and one where it is included with the inequations $S_B$. Thus by Theorem \ref{algsplitrad}, we have, respectively,  \[\sqrt{\sat{(B)}}=\sqrt{\sat{(B,i_q)}}\cap \sqrt{\sat[(S_B\cup\{i_q\})]{(B)}}\] or \[\sqrt{\sat{(B)}}=\sqrt{\sat{(B,s_q)}}\cap \sqrt{\sat[(S_B\cup\{s_q\})]{(B)}}.\]

\item (Pseudodivision)
Steps 2 and 3b of ${\tt Triangulate}$ replace some $g\in B$ with its pseudoremainder $r_g$ upon pseudodividing by some $h$, where $h\in B$ and the initial $i_h\in S_B$. (In step 2, $g$ is $p$ and $h$ is $q$ while in step 3b, $g$ is $q$ and $h$ is $s_q$. In the latter case, the initial $i_{s_q}$ of $s_q$ is a nonzero constant multiple of $i_q$ (which \emph{does} belong to the inequations $\hat{S}$) because separants are partial derivatives with respect to the leader.) 

We must show
 \begin{equation} \label{premrad} \sqrt{\sat[S_B]{(B)}}=\sqrt{\sat[S_B]{((B\cup \{r_g\} )\setminus \{g\})}}. \end{equation}
\noindent By Proposition \ref{pdivsat} (which makes explicit the relationship between a pseudoremainder and the original polynomials) there exist $\alpha\in \rx$ and a product $\widetilde{i}$ of factors of $i_h$ such that $(\widetilde{i})g - \alpha h = r_g$. Thus for any point $\mathbf{a}\in \bbc^n$ we see that $\mathbf{a}$ causes both $g$ and $h$ to vanish but not $i_h$ if and only if $\mathbf{a}$ causes $h$ and $r_g$ to vanish but not $i_h$. (If $i_h(\mathbf{a})\neq 0$, then $\widetilde{i}(\mathbf{a})\neq 0$ because $\widetilde{i}$ is a product of factors of $i_h$.)  Symbolically,

\begin{align}\label{premgeom} \vsc{B}\setminus \vsc{\Pi S_B} &= \vsc{B\cup \{r_g\} \setminus \{g\} }\setminus \vsc{\Pi S_B}.\end{align}

(Similar reasoning applies to Equation \ref{diffcons} in the proof of Theorem \ref{difftriang}.) Then Equation \ref{premrad} follows by Theorem \ref{algsatrad} (Hilbert's Nichtnullstellensatz, which links radicals of saturation ideals to solutions of systems of equations and inequations).

 \phantomsection
 \label{tristep1pf}
 Technically we don't perform pseudodivision when converting $(A\cup \{i_q\},S)$ to $(\widetilde{A},S)$, but the outcome is analogous. (As noted in step 1 on p. \pageref{trinote}, the solutions of $(\widetilde{A}=0,S\neq 0)$ are the same as those of $(A\cup\{i_q\}=0,S\neq 0)$. See also the proof of Lemma \ref{difftriang} on p. \pageref{rgatristep1pf}.) Hence
 \[\sqrt{\sat{(A\cup\{i_q\})}}=\sqrt{\sat{((A\cup \{i_{q}, q- i_{q}x^{deg_{x}(q)}\})\setminus \{q\})}}=\sqrt{\sat{(\widetilde{A})}}.\]
 
\end{enumerate}

This proves the claimed properties of ${\tt Triangulate}$. 
\end{proof}

We can now describe the full $\rgaexp$ algorithm. $\rgaexp$ eliminates ODEs in explicit form using differential pseudodivision; moreover, no new ODEs are introduced. Importantly, the eliminated ODEs still belong to all differential saturation ideals that appear during the computation. This is ensured by a technical assumption on $\xpf$ that is given in the input description below. (The correctness proof for $\rgaexp$, Theorem \ref{modrgacorr}, shows how the assumption accomplishes this.)  
Without the assumption, the operations of $\rgaexp$ could produce differential saturation ideals that do not satisfy the hypotheses of the regular invariant theorem (Theorem \ref{satinvar}). In turn, that would break the connection between $\rgaexp$ and algebraic invariants that we establish in Theorem \ref{radlddecomp}.

\medskip
\phantomsection \label{rgaalgo}
\noindent $\rgaexp$:
\smallskip

	\begin{itemize}
    \item \emph{Informal summary}: $\rgaexp$ alternates between applying ${\tt Triangulate}$ to the nondifferential elements of the current system and using differential pseudodivision to eliminate ODEs in explicit form. The computation recursively splits into multiple branches whose leaves form the desired differential radical decomposition. The two interleaved operations assure, respectively, that the nondifferential elements form a regular algebraic system and that the output is partially reduced as required for regular differential systems.
    
    \item \emph{Detailed description}:
    \begin{itemize}
	\item Choose the \emph{orderly} differential ranking such that $x_n> x_{n-1} > \cdots > x_1$. (We will use the fact that the ranking is orderly to prove correctness, but the particular sequence of variables is not essential.) 

	 \item Input: Let $\xpef$ be a polynomial vector field. Let $A=\{\xtpf, p_1,p_2,$ $\ldots,p_m\}$ be a finite set where $\xtpf$ is a subset of $\xpf$ and the $p_1,$ $ \ldots, p_m \in \rx$ are nondifferential polynomials. Also, we have a finite set of nondifferential polynomials  $S\subseteq \rx\setminus \{0\}$. As usual in ${\tt Triangulate}$, the elements of $A$ correspond to equations and the elements of $S$ to inequations.  Lastly, the ``technical assumption" mentioned above: we assume that for each member $x'-f(\bx)$ of $\xpf$ we either have $x'-f(\bx) \in A$ (equivalently, $x'-f(\bx)\in\xtpf$) or  
 \[s(x'-f(\bx)) =  \beta(\bx) +\sum_j \gamma_{j}(\bx)(x_j'-f_j(\bx))\] 
 \noindent for some $s\in S, \beta(\bx)\in \big[A \cap \rx\big] \unlhd \, \dkx[\bbr], \gamma_{j}(\bx)\in \rx, x_j<x$, and $x_j'-f_j(\bx)\in A$. (This is a specific case of $\xpf \in \sat{[A]}$. The key point is that for all $x'-f(\bx)\in \xpf$, either $x'-f(\bx)$ explicitly belongs to the equations $A$ or $x'=f(\bx)$ is implied by $A$ and the inequations $S$.) 
	 
 	\item Output: Pairs $(A_1,S_1),\ldots, (A_r,S_r)$ such that $A_1, \ldots, A_r \subseteq \dkx[\bbr], S_1,\ldots, S_r \subseteq \rx\setminus\{0\}$ are finite sets and
 \[\sqrt{\sat{[A]}}=(\sat[S_1]{[A_1]})\cap \cdots \cap (\sat[S_r]{[A_r]}),\] 
 \noindent where each  $(A_i=0,S_i\neq 0)$ is a regular differential system. That is, $A_i$ is partially reduced (no element contains a proper derivative of the leader of another element) and triangular (no two elements have the same leader), and $S_i$ is partially reduced with respect to $A_i$ and contains at least the separants of $A_i$. Moreover, the elements of $A_i$ containing proper derivatives form a subset of $\xtpf$, and for each $x'-f(\bx)\in \xpf$ and $1\leq i \leq r$ we either have $x'-f(\bx)\in A_i$ or 
 \[\hat{s}(x'-f(\bx)) = \hat{\beta}(\bx) +\sum_j \hat{\gamma}_{j}(\bx)(x_j'-f_j(\bx))\]
for some $\hat{s}\in S_i, \hat{\beta}(\bx)\in \big[A_i\cap \rx\big]\unlhd \dkx[\bbr], \hat{\gamma}_j(\bx)\in\rx, x_j<x$, and $x_j'-f_j(\bx)\in A_i$. (The proof of Theorem \ref{modrgacorr} shows that the technical assumption on $\xpf$ at the start is preserved by each operation of $\rgaexp$, and hence holds of the output.)

\end{itemize}

\begin{enumerate}

\item If $A$ is already partially reduced and triangular, and separant $s_q$ belongs to $S$ for every $q\in A$, then return $(A,S)$. 

\item If $A$ is not triangular or separant $s_q\notin S$ for some nondifferential $q\in A$, compute ${\tt Triangulate}(A\cap\rx,S)=\tri(\{p_1,\ldots, p_m\},S)$. (Recall the $p_i$ are the nondifferential elements of $A$. Also, an element of the form $x_{j}'-f_{j}(\bx)$  has separant 1, so as usual we assume such a separant belongs to the inequations $S$.) This yields a finite collection of regular algebraic systems involving only nondifferential polynomials. Return the union $\bigcup_{(B,S_B)} \rgaexp(B\, \cup \, \xtpf, S_B)$ over all $(B,S_B)$ returned by {\tt Triangulate}$(\{p_1,$ $\ldots,p_m\},S)$. (That is, for each such $(B,S_B)$, re-include with the equations $B$ the elements of $A$ that contain proper derivatives and call $\rgaexp$ on the resulting system. This produces a finite collection of differential systems; take the union of all these collections.)

\item If $A$ is triangular and $s_q\in S$ for all $q\in A$ but $A$ is not partially reduced, choose the highest-ranking leader $x_{j}$ such that $x_{j}'-f_{j}(\bx)\in A$ and there exists $q\in A\cap \rx$ having leader $x_j$. (Such  $x_{j}'-f_{j}(\bx)$ and $q$ exist because $A$ is not partially reduced.  Because $A$ is triangular and $x_j$ is the highest-ranking variable with the desired property, $x_{j}'-f_{j}(\bx)$ and $q$ are unique.) 
For concision we write $x_{j}'-f_{j}(\bx)$ as $x'-f(\bx)$.

Recalling that $s_q\in S$ by assumption in the current case, 
differentiate $q$ and pseudodivide $x'-f(\bx)$ by $q'$. 
Let $\widetilde{r}$ be the resulting pseudoremainder; note that  $x'$ is not present in $\widetilde{r}$ because the degree of $x'$ is 1 in $x'-f(\bx)$ and $q'$. (In this case, the pseudoremainder is reduced with respect to $q'$ after a single step of pseudodivision.) However, derivatives of other variables in $q$ may be present. Let $y$ be such a variable; there is some $g(\bx)\in \rx$ such that $y'-g(\bx)$ is a member of $\xpf$. 
Replace every instance of $y'$ in $\widetilde{r}$ with $g(\bx)$. (We justify this move in Theorem \ref{modrgacorr}, the correctness proof for $\rgaexp$.) Doing so for each proper derivative present in $\widetilde{r}$ produces a nondifferential polynomial $r$. (Abusing terminology, we also refer to $r$ as a pseudoremainder.) Update the equations by omitting $x'-f(\bx)$ and including $r$; let $\hat{A}:= (A\cup \{r\})\setminus\{x'-f(\bx)\}$. Note that the elements of $\hat{A}$ having proper derivatives form the set $\xtpf\setminus\{x'-f(\bx)\}$ because $r$ is nondifferential. 

Now compute ${\tt Triangulate}(\hat{A}\cap\rx,S)$. Return the union $\bigcup_{(B,S_B)} \rgaexp(B\cup(\xtpf\setminus\{x'-f(\bx)\}), S_B)$ over all $(B,S_B)$ returned by ${\tt Triangulate}(\hat{A}\cap\rx,S)$.  

\end{enumerate}
\end{itemize}

The next result is an important ingredient of the correctness proof for $\rgaexp$ (Theorem \ref{modrgacorr}). The lemma performs the critical job of uniting the algebraic part (calls to ${\tt Triangulate}$) and differential part (differential saturation ideals) of $\rgaexp$.

\begin{lemma} \label{difftriang}


Let $\xpef$ be a polynomial vector field. Let $A=\{\xtpf, p_1,p_2,$ $\ldots,p_m\}$ be a finite set where $\xtpf$ is a subset of the differential polynomials $\xpf$ and the $p_1, \ldots, p_m \in \rx$ are nondifferential polynomials. Let $S\subseteq \rx\setminus \{0\}$ be a finite set of nondifferential polynomials. Let $(B_1,S_1), \ldots,(B_N, S_N)$ be the output of ${\tt Triangulate}(A\cap\rx,S)$; by correctness of $\tri$ (Theorem \ref{triangcorr}) we know that
\[\sqrt{\sat{(A\cap\rx)}}=\sqrt{\sat[S_1]{(B_1)}}\cap \cdots \cap \sqrt{\sat[S_N]{(B_N)}} = (\sat[S_1]{(B_1)})\cap \cdots \cap (\sat[S_N]{(B_N)}).\] 

Then we have

\begin{align*}
 \sqrt{\sat{[A]}}&=\sqrt{\sat{[A\cap\rx, \xtpf]}} &=\sqrt{\sat[S_1]{[B_1, \xtpf]}}\cap \cdots \\ 
& &\phantom{=} \cap \sqrt{\sat[S_N]{[B_N, \xtpf]}}.
\end{align*}
\end{lemma}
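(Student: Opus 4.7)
The first equality $\sqrt{\sat{[A]}}=\sqrt{\sat{[A\cap\rx,\xtpf]}}$ is immediate: as sets, $A = (A\cap\rx)\cup\xtpf$, so the two differential ideals (and hence their radical saturations) coincide. The substance is in the second equality, and my plan is to lift the algebraic decomposition produced by $\tri$ to the differential level, one $\tri$ operation at a time.

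Specifically, I would argue by induction on the number of splitting, pseudodivision, and step-1 rewrite operations that $\tri$ performs. Let $(C_1,T_1),\ldots,(C_k,T_k)$ denote an intermediate family of algebraic systems obtained from $(A\cap\rx, S)$ after some initial segment of the run. The inductive invariant is
\[\sqrt{\sat{[A\cap\rx,\xtpf]}} \;=\; \bigcap_{i=1}^{k} \sqrt{\sat[T_i]{[C_i,\xtpf]}}.\]
The base case, where the family is just the input pair $(A\cap\rx,S)$, is trivial. When $\tri$ terminates, the family is exactly $(B_1,S_1),\ldots,(B_N,S_N)$ and we have the desired equality.

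For the inductive step, I would verify that each of the three operation types preserves the invariant. For a splitting step over a nondifferential polynomial $h\in\rx\setminus\{0\}$ (which is an initial $i_q$ or separant $s_q$ chosen by $\tri$), the differential splitting theorem (Theorem \ref{splitrad}) applies directly since $h\in\dkx[\bbr]\setminus\{0\}$, giving
\[\sqrt{\sat[T_i]{[C_i,\xtpf]}} \;=\; \sqrt{\sat[T_i]{[C_i\cup\{h\},\xtpf]}}\cap\sqrt{\sat[(T_i\cup\{h\})]{[C_i,\xtpf]}}.\]
For a pseudodivision step that replaces $g\in C_i$ by its pseudoremainder $r_g$ upon dividing by some $h\in C_i$ whose initial $i_h$ lies in $T_i$, Proposition \ref{pdivsat} supplies a relation $\widetilde{i}\,g - \alpha h = r_g$ in $\rx$, where $\widetilde{i}$ is a product of factors of $i_h$ and $\alpha\in\rx$. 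One containment is straightforward: $r_g\in [C_i,\xtpf]$ since $r_g$ is an $\rx$-linear combination of $g$ and $h$, so $[C_i\cup\{r_g\}\setminus\{g\},\xtpf]\subseteq[C_i,\xtpf]$. For the reverse, the same relation expresses $\widetilde{i}\,g$ as an element of $[C_i\cup\{r_g\}\setminus\{g\},\xtpf]$ (note that $h$ remains in $C_i\setminus\{g\}$), and since $\widetilde{i}$ belongs to the multiplicative set generated by $T_i$, we get $g\in \sat[T_i]{[C_i\cup\{r_g\}\setminus\{g\},\xtpf]}$; hence both saturated ideals agree and so do their radicals. The step-1 rewrite of $\tri$ (replacing $q$ with the pair $i_q,\,q-i_q x^{\deg_x(q)}$) is handled analogously, and in fact gives equality of the underlying differential ideals because the linear identity $q = i_q x^{\deg_x(q)} + (q-i_q x^{\deg_x(q)})$ allows one to recover $q$ from the new generators with no saturation needed.

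The main obstacle is really organizational rather than conceptual: one must confirm that $\tri$'s three operation types (splitting, genuine pseudodivision, and the step-1 replacement) are exhaustive, and that in each case the argument goes through with $\xtpf$ appended to the equations and $T_i$ left untouched on the saturation side. The key observation making this routine is that all polynomials that $\tri$ manipulates, as well as the polynomials it adds to either side of the system, are nondifferential, so the differential splitting theorem and Proposition \ref{pdivsat} transfer verbatim from the algebraic setting. Once the inductive invariant is maintained through the final step, evaluating at the output family $(B_1,S_1),\ldots,(B_N,S_N)$ yields the asserted decomposition.
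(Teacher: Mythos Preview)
Your proof is correct and follows essentially the same inductive structure as the paper's: both argue that each operation of $\tri$ (splitting, pseudodivision, and the step-1 rewrite), when augmented by $\xtpf$, preserves the radical differential saturation, with the differential splitting theorem (Theorem~\ref{splitrad}) handling the splits. The one substantive difference is that for the pseudodivision step the paper passes through differential constructible sets and invokes the differential Nichtnullstellensatz (Theorem~\ref{diffsatrad}), whereas you argue directly at the ideal level using the relation from Proposition~\ref{pdivsat}; your route is slightly more elementary and in fact yields equality of the saturations themselves, not just their radicals.

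One small imprecision: you assert that $\widetilde{i}$ belongs to the multiplicative set generated by $T_i$, but Proposition~\ref{pdivsat} only gives $\widetilde{i}$ as a product of \emph{factors} of $i_h$, not a power of $i_h$. The fix is the same one the paper uses after Proposition~\ref{pdivsat}: since $\widetilde{i}$ divides $i_h$, multiply the identity $\widetilde{i}\,g=\alpha h+r_g$ by $i_h/\widetilde{i}\in\rx$ to obtain $i_h\,g\in[C_i\cup\{r_g\}\setminus\{g\},\xtpf]$, and now $i_h\in T_i$ lies in the multiplicative set. With that adjustment your argument goes through unchanged.
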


\begin{proof}
The key observation is that the two operations of ${\tt Triangulate}$ (namely, splitting and nondifferential pseudodivision) preserve the differential solutions of the augmented system that includes the ODEs with the equations. (See Remark \ref{difftriangrmk} for a metamathematical issue concerning this approach.) Moreover, these operations only involve nondifferential polynomials.  For instance, if $q\in A\cap\rx$ we have 
 \[\sqrt{\sat{(A\cap\rx)}}=\sqrt{\sat{(A\cap\rx,i_q)}}\cap \sqrt{\sat[(S\cup\{i_q\})]{(A\cap\rx)}}\]

\noindent after a splitting step because the solutions of $(A\cap \rx=0,S\neq 0)$ in $\bbc$ are the union of the solutions of $(A\cap\rx=0,i_q=0, S\neq 0)$ and $(A\cap\rx=0, S\cup\{i_q\}\neq 0)$ in $\bbc$ (see Theorem \ref{algsplitrad}). 
The analogous differential decomposition holds if we include the ODEs and consider \emph{differential} solutions in any differential extension field.
By definition of $A$ and by Theorem \ref{splitrad} we have
\begin{equation}
\label{diffspliteq}
\begin{aligned}
\sqrt{\sat{[A]}}&=\sqrt{\sat{[A\cap\rx, \xtpf]}}\\
&=\sqrt{\sat{[A\cap\rx, i_q, \xtpf]}}\, \cap \\
&\phantom{=} \cap\sqrt{\sat[(S\cup\{i_q\})]{[A\cap\rx, \xtpf]}}.
\end{aligned}
\end{equation}

If $g\in A\cap\rx$ and $r_g$ is the pseudoremainder from pseudodividing $g$ by some other element of $A\cap\rx$ whose initial belongs to $S$, then much like Equation \ref{premgeom} in the proof of Theorem \ref{triangcorr} we have 
\begin{equation} \label{diffcons}
\begin{aligned} 
\vs[\delta]{A}\setminus \vs[\delta]{\Pi S} &=\vs[\delta]{(A\cap\rx)\cup\{\xtpf\}}\setminus \vs[\delta]{\Pi S} \\
 &=\vs[\delta]{(((A\cap\rx) \cup \{r_g\} )\setminus \{g\} )\cup\xtpf }\setminus \vs[\delta]{\Pi S}.
\end{aligned}
\end{equation}

\noindent Theorem \ref{diffsatrad} then gives
 \[ \sqrt{\sat[S]{[A]}}=\sqrt{\sat[S]{[(((A\cap\rx) \cup \{r_g\} )\setminus \{g\} ),\xtpf ]}}.\]
 
 \noindent Similarly, for $q\in A\cap\rx$ we have
  \[\sqrt{\sat{[A\cup\{i_q\}]}}=\sqrt{\sat{[(((A\cap \rx)\cup \{i_{q}, q- i_{q}x^{deg_{x}(q)}\})\setminus \{q\}),\xtpf]}}.\]
\phantomsection
\label{rgatristep1pf}
\noindent (See 
the final part of the proof of Theorem \ref{triangcorr}, p. \pageref{tristep1pf}.) 

Thus re-inserting the ODEs $\xtpf$ after a single  splitting or pseudodivision step preserves the differential decomposition. The full decomposition follows by induction on the number of splitting and pseudodivision operations during a run of ${\tt Triangulate}$. (In other words, since the ODEs $\xtpf$ are not involved in the operations of ${\tt Triangulate}$, it is equivalent to perform the entire run of ${\tt Triangulate}$ on the nondifferential elements and then re-introduce the $\xtpf$.)
\end{proof}

The system $(B_i=0,S_i\neq 0)$ is a regular algebraic system and so $\sqrt{\sat[S_i]{(B_i)}}=\sat[S_i]{(B_i)}$ by Lazard's lemma (Lemma \ref{lazard}). However, $(B_i=0,\xtpf=0, S_i\neq 0)$ might not be partially reduced so we need the radicals in the differential decomposition given by Lemma \ref{difftriang}.

\begin{remark} \label{difftriangrmk}
Though the proof of Lemma \ref{difftriang} is straightforward, a subtle issue lurks nearby. We initially hoped to use more efficient radical ideal decomposition methods (e.g., that of Sz\'{a}nt\'{o} \cite{{Szanto97},{szanto1999computation}}, which we discuss in Section \ref{rgabds}, p. \pageref{szanto}) in place of ${\tt Triangulate}$. This provides an algebraic decomposition of the desired form

\begin{equation}\label{algdecomp}
\begin{aligned} \sqrt{\sat{(A\cap\rx)}}&=\sqrt{\sat[S_1]{(B_1)}}\cap \cdots \cap \sqrt{\sat[S_N]{(B_N)}}\\ &= (\sat[S_1]{(B_1)})\cap \cdots \cap (\sat[S_N]{(B_N)}), \end{aligned}
\end{equation}

\noindent but requires methods beyond simple splitting and (nondifferential) pseudodivision. We were unable to prove 
Lemma \ref{difftriang} (and hence our main results Theorems \ref{modrgacorr} and \ref{radlddecomp}) using the algebraic decomposition in Equation \ref{algdecomp} without the fact that splitting and pseudodivision preserve differential solutions. This illustrates a major challenge of differential algebra: even when differential polynomials have a simple form, drawing conclusions about differential ideals from restrictions to algebraic ideals is nontrivial. It also explains why we stopped with regular differential systems in $\rgaexp$ instead of continuing with the calculations needed to get regular differential chains.

\end{remark}

\begin{theorem}[{Termination and correctness of $\rgaexp$}] \label{modrgacorr}


 Let $\xpef$ be a polynomial vector field. Let $A=\{\xtpf, p_1,p_2,\ldots,p_m\}$ be a finite set where $\xtpf$ is a subset of the differential polynomials $\xpf$  
and the $p_1, \ldots, p_m \in \rx$ are nondifferential polynomials. Let $S\subseteq \rx\setminus \{0\}$ be a finite set of nondifferential polynomials. Lastly, assume that for each member $x'-f(\bx)$ of $\xpf$ we either have $x'-f(\bx) \in A$ (equivalently, $x'-f(\bx)\in\xtpf$) or  
 \[s(x'-f(\bx)) =  \beta(\bx) +\sum_j \gamma_{j}(\bx)(x_j'-f_j(\bx))\] 
 \noindent for some $s\in S, \beta(\bx)\in \big[ A\cap \rx \big ] \unlhd \, \dkx[\bbr], \gamma_{j}(\bx)\in \rx, x_j<x$, and $x_j'-f_j(\bx)\in A$.

Then the algorithm $\rgaexp$ terminates on input $(A,S)$ and the output $(A_1,S_1),$ $\ldots,(A_r,S_r)$ satisfies
 \[\sqrt{\sat{[A]}}=(\sat[S_1]{[A_1]})\cap \cdots \cap (\sat[S_r]{[A_r]}),\] 

\noindent where  $A_1, \ldots, A_r \subseteq \dkx[\bbr], S_1,\ldots, S_r \subseteq \rx\setminus\{0\}$ are finite sets such that each $(A_i=0,S_i\neq 0)$ is a regular differential system. 
Moreover, the elements of $A_i$ containing proper derivatives form a subset of $\xtpf$, and for each $x'-f(\bx)\in \xpf$ and $1\leq i \leq r$ we either have $x'-f(\bx)\in A_i$ or 
\[\hat{s}(x'-f(\bx)) = \hat{\beta}(\bx) +\sum_j \hat{\gamma}_{j}(\bx)(x_j'-f_j(\bx))\] for some $\hat{s}\in S_i, \hat{\beta}(\bx)\in \big[A_i\cap \rx\big]\unlhd \dkx[\bbr], \hat{\gamma}_j(\bx)\in \rx, x_j<x$, and $x_j'-f_j(\bx)\in A_i$. (That is, the ``technical assumption" on $\xpf$ that we made of the input also holds of the output.)
\end{theorem}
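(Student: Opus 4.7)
The plan is to establish termination, the claimed radical decomposition, and the structural conditions on the output by induction on a well-founded measure combined with explicit algebraic identities produced in steps 2 and 3.

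For termination, I would order pairs $(A,S)$ lexicographically by $(|A \cap \xpf|,\, \epsilon(A,S))$, where $\epsilon(A,S) = 0$ if $A \cap \rx$ is triangular with all its separants in $S$ and $\epsilon(A,S) = 1$ otherwise. Step 2 leaves the first coordinate fixed and, by correctness of ${\tt Triangulate}$ (Theorem \ref{triangcorr}), produces pairs $(B,S_B)$ such that $B \cup \xtpf$ is triangular---the orderly ranking makes proper derivatives $x_j'$ outrank every nondifferential leader, and elements of $\xtpf$ have separant $1$---with all separants in $S_B$; so the second coordinate becomes $0$ in the recursive call and step 2 does not re-fire. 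Step 3 strictly decreases the first coordinate by replacing $x'-f(\bx)$ with a nondifferential polynomial $r$. Hence every branch terminates after at most $|\xtpf|$ applications of step 3, each preceded by one ${\tt Triangulate}$ call.

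For the decomposition, step 2 is covered directly by Lemma \ref{difftriang}, and the inductive hypothesis on the recursive $\rgaexp$ calls expands each factor further. For step 3, pseudodividing $x'-f(\bx)$ by $q'$ in one step yields $\widetilde{r} = s_q(x'-f(\bx)) - q'$, so $\widetilde{r} \in [A]$. Writing $r$ for the polynomial obtained by substituting $g_y(\bx)$ for each proper derivative $y'$ in $\widetilde{r}$ (all of which arise from the $\tfrac{\partial q}{\partial y}\, y'$ summands of $q'$ for $y$ strictly below $x$ in the orderly ranking), one obtains the identity
\[ r - \widetilde{r} \;=\; \sum_{y}\tfrac{\partial q}{\partial y}(\bx)\bigl(y'-g_y(\bx)\bigr). \]
The input technical assumption gives $y'-g_y(\bx) \in \sqrt{\sat{[A]}}$ for every such $y$, so $r \in \sqrt{\sat{[A]}}$ and adjoining $r$ to $A$ does not enlarge the radical. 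Conversely, rearranging gives $s_q(x'-f(\bx)) = (q'+r) - \sum_y \tfrac{\partial q}{\partial y}(y'-g_y)$, exhibiting $x'-f(\bx)$ as an element of $\sqrt{\sat{[(A\cup\{r\})\setminus\{x'-f(\bx)\}]}}$ once the lower-ranked $y'-g_y$ are handled inductively. Applying Theorem \ref{diffsatrad} (differential Nichtnullstellensatz) converts these equalities of vanishing sets into the claimed identity of radicals.

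For the structural conditions on each output $(A_i,S_i)$, the base case of step 1 is reached only when $A$ is triangular, partially reduced, and has all separants in $S$; the nondifferential character of $S_i$ is preserved because ${\tt Triangulate}$ only adds initials and separants of nondifferential polynomials, and step 3 substitutes away proper derivatives. The proper-derivative elements of each $A_i$ form a subset of $\xtpf$ by construction, as the algorithm never introduces new members of $\xpf$. The main obstacle is the preservation of the technical assumption on $\xpf$ in the precise shape demanded by the statement: for each $x'-f(\bx) \notin A_i$ one must exhibit a single $\hat{s} \in S_i$, a differential-ideal element $\hat{\beta} \in [A_i \cap \rx]$, and $\hat{\gamma}_j \in \rx$ that witness the relation with $x_j'-f_j(\bx) \in A_i$ and $x_j < x$. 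The natural choice in step 3 is $\hat{s} = s_q \in S_i$, $\hat{\beta} = q'+r \in [A_i \cap \rx]$ (using closure of differential ideals under $'$), and $\hat{\gamma}_y = -\tfrac{\partial q}{\partial y}$; the $y$ appearing satisfy $y < x$ by the orderly ranking. When some $y'-g_y$ is not in $A_i$, a recursive substitution using the input-level relation for $y'-g_y$ (which only mentions $x_k'-f_k \in A$ with $x_k < y < x$, all of which survive the removal of $x'-f(\bx)$) closes the argument, at the cost of chaining separants through the multiplicative set generated by $S_i$---this is the one place where the ``single $\hat{s} \in S_i$'' clause requires delicate bookkeeping and where reading $\hat{s}$ as the appropriate product of separants is essential.
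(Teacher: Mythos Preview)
Your proof follows essentially the same approach as the paper's: termination via a decreasing measure on the number of surviving ODEs, decomposition via Lemma~\ref{difftriang} for the {\tt Triangulate} calls, and direct algebraic identities for step~3. Your termination measure $(|A\cap\xpf|,\epsilon)$ is actually a bit cleaner than the paper's prose argument, and your explicit identity $r-\widetilde{r}=\sum_y\tfrac{\partial q}{\partial y}(y'-g_y)$ matches the paper's equation $s_q(x'-f(\bx))=q'+\sum_l\zeta_l(\bx)(x_l'-f_l(\bx))+r$.

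There is one genuine gap. You treat preservation of the technical assumption only for step~3, but it must also survive each {\tt Triangulate} call (both in step~2 and inside step~3). Concretely: if before {\tt Triangulate} you have $s(x'-f)=\beta+\sum_j\gamma_j(x_j'-f_j)$ with $\beta\in[A\cap\rx]=[p_1,\dots,p_m]$, then after {\tt Triangulate} returns $(B_l,S_l)$ you need $\hat\beta\in[B_l]$. This is not automatic because $B_l$ need not contain $p_1,\dots,p_m$. The paper's fix is to observe that each $p_i\in\sqrt{\sat{(A\cap\rx)}}\subseteq\sat[S_l]{(B_l)}\subseteq\sat[S_l]{[B_l]}$, so $[p_1,\dots,p_m]\subseteq\sat[S_l]{[B_l]}$; multiplying both sides of the old relation by a suitable element of $S_l$ then pushes $\beta$ into $[B_l]$. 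You should add this step.

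You are right to flag the ``single $\hat s\in S_i$'' issue: after chaining substitutions (both through {\tt Triangulate} and through lower-ranked ODEs in step~3), the witness is a product of elements of $S_i$, not a single element. The paper's own proof has the same looseness (``multiplying both sides \ldots\ by appropriate elements of $S$''), and the statement should really be read with $\hat s$ in the multiplicative set generated by $S_i$; this is harmless since only membership in $\sat[S_i]{[A_i]}$ is ever used downstream.
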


\begin{proof}

We first argue that $\rgaexp$ terminates. On any branch of the computation, once the input $(A,S)$ has the property that $A$ is triangular and $s_q\in S$ for all $q\in A$, this property continues to hold for each new call to $\rgaexp$ along that branch. This is so because the property activates step 3 and step 3 calls ${\tt Triangulate}$ (which terminates by Theorem \ref{triangcorr}) right before recursively calling $\rgaexp$. Hence every subsequent call to $\rgaexp$ either terminates immediately or proceeds to step 3 and eliminates one of the ODEs $x'-f(\bx)$. Note that at most $n$ of the original equations contain proper derivatives and eliminated proper derivatives are never re-introduced to the set of equations (${\tt Triangulate}$ only involves nondifferential polynomials and the pseudoremainder $r$ introduced by step 3 of $\rgaexp$ is nondifferential). It follows that, at most, a branch of $\rgaexp$ calls ${\tt Triangulate}$ and $\rgaexp$ $n+1$ times (the first time in step 2 to make the equations triangular and put separants in the inequations; thereafter only step 3 applies) before terminating.


We justify the decomposition by showing that after each intermediate step we have a decomposition with the claimed form and properties, except possibly partial reducedness and being radical. However, these will both hold of the final output; see the final paragraph of the proof.   During a run of $\rgaexp$, new differential systems are produced either by calling ${\tt Triangulate}$ on the nondifferential equations (followed by re-inserting the ODEs) or performing differential pseudodivision. By Lemma \ref{difftriang}, a call to ${\tt Triangulate}$, followed by re-inserting the ODEs, decomposes $\sqrt{\sat{[A]}}$ into a finite intersection of radicals of differential ideals of the right form (except possibly for partial reducedness)
\[\sqrt{\sat{[A]}}=\sqrt{\sat[S_1]{[B_1, \xtpf]}}\cap \cdots \cap \sqrt{\sat[S_N]{[B_N, \xtpf]}},\]

\noindent where $(B_1,S_1),\ldots, (S_N,B_N)$ is the output of ${\tt Triangulate}(A\cap\rx, S)$. In particular, the equations with proper derivatives are exactly $\xtpf$. We must confirm that each $(B_l\cup \xtpf,S_l)$ preserves the technical assumption on the elements of $\xpf$. If $x'-f(\bx)\in \xpf$ belongs to $A$, then $x'-f(\bx)$ remains in $B_l\cup \xtpf$ because the proper derivatives are unchanged. Otherwise, we have

\begin{equation} \label{triangode}
s(x'-f(\bx)) =  \beta(\bx) +\sum_j \gamma_{j}(\bx)(x_j'-f_j(\bx))
\end{equation}
 
 \noindent for some $s\in S, \beta(\bx)\in \big[A\cap \rx\big]=[p_1,\ldots,p_m], \gamma_{j}(\bx)\in \rx, x_j<x$, and $x_j'-f_j(\bx)\in A$. These same $x_j'-f_j(\bx)$ are now in $B_l\cup \xtpf$. By the algebraic decomposition 
 
 \[\sqrt{\sat{(A\cap\rx)}}=(\sat[S_1]{(B_1)})\cap \cdots \cap (\sat[S_N]{(B_N)})\] 
 
 \noindent guaranteed by correctness of $\tri$ (Theorem \ref{triangcorr}), we know that each $p_i\in A\cap\rx$ belongs to each $\sat[S_l]{(B_l)}\subseteq \rx$. It follows that $p_i^{(k)}\in \sat[S_l]{[B_l]}\subseteq \dkx[\bbr]$ for all $k\in \bbn$ and hence $[p_1,\ldots,p_m]\subseteq \sat[S_l]{[B_l]}$. 
Multiplying both sides of equation \ref{triangode} by an appropriate element of $S_l$, we thus obtain
 
 \begin{equation} 
\hat{s}(x'-f(\bx)) = \hat{\beta}(\bx) +\sum_j \hat{\gamma}_{j}(\bx)(x_j'-f_j(\bx))  
\end{equation}
 
 \noindent for some $\hat{s}\in S_l$ (recall that $S\subseteq S_l$ because on a given branch no inequations are ever removed), $\hat{\beta}(\bx)\in [B_l],$ and $\hat{\gamma}_{j}(\bx)\in\rx$. This shows that $(B_l\cup \xtpf,S_l)$ preserves the technical assumption 
 on the elements of $\xpf$.


We now check the differential pseudodivision from step 3 of $\rgaexp$. 
Step 3 of $\rgaexp$ produces a pseudoremainder $\widetilde{r}$ by pseudodividing some $x'-f(\bx)\in A$ by $q'$, where the separant $s_q\in S$ and $q\in A\cap \rx$ has leader $x$. Recall that $s_q$ is the initial of $q'$ and belongs to $\rx$. Because the initial of $x'-f(\bx)$ is 1 and the degree of $x'$ is 1 in both $x'-f(\bx)$ and $q'$, we have $s_q(x'-f(\bx)) -q' = \widetilde{r}$. (This is an instance of Proposition \ref{pdivsat}.) Step 3 of $\rgaexp$ goes on to substitute nondifferential polynomials (using the relations $\xpef$) for the remaining proper derivatives (which necessarily have lesser rank than $x'$) in $\widetilde{r}$. This produces $r\in \rx$ such that $\widetilde{r} = \sum_l \zeta_l(\bx) (x_l'-f_l(\bx)) + r$ for some $\zeta_l(\bx)\in \rx$ and $x_l<x$. (As in the proof of Lemma \ref{subslie}, we replace $x_l'$ in $\widetilde{r}$ with $(x_l' -f_l(\bx))+f_l(\bx)$. Then distribute to get an equation of the claimed form for $\widetilde{r}$.) Hence $s_q(x'-f(\bx)) =q'+ \sum_l \zeta_l(\bx) (x_l'-f_l(\bx)) + r$. Let $\hat{A}:= (A\cup \{r\})\setminus\{x'-f(\bx)\}$; note that $\hat{A}$ contains all nondifferential polynomials of $A$ and of the others only omits $x'-f(\bx)$. Step 3 of $\rgaexp$ then proceeds using the system $(\hat{A}, S)$. If $\hat{A}$ is no longer triangular or $S$ does not contain $s_r$, the next call to $\tri$ will restore those properties.

We show that $(\hat{A}, S)$ preserves the technical assumption on the elements of $\xpf$ and that $\sat{[A]}=\sat{[\hat{A}]}$. Though $x'-f(\bx)\in A$, by definition $x'-f(\bx)\notin \hat{A}$. However, by assumption each $x_l'-f_l(\bx)$ in the equation 

\begin{equation}\label{premeq}
s_q(x'-f(\bx)) =q'+ \sum_l \zeta_l(\bx) (x_l'-f_l(\bx)) + r
\end{equation}

\noindent either belongs to $A$ (and hence to $\hat{A}$ since $x_l'-f_l(\bx)\neq x'-f(\bx)$) or can be written as 

 \begin{equation}\label{indodeeq}
 s_l(x_l'-f_l(\bx)) = \beta_l(\bx) +\sum_j \gamma_{j,l}(\bx)(x_j'-f_j(\bx))
 \end{equation}

 \noindent for some $s_l\in S,\beta_{l}(\bx)\in \big[A\cap\rx \big]\subseteq \big[\hat{A}\cap\rx\big], \gamma_{j,l}(\bx)\in \rx, x_j<x_l$, and $x_j'-f_j(\bx)\in A$. Since $x_j<x_l<x$, we have $x_j'-f_j(\bx)\neq x'-f(\bx)$ and so $x_j'-f_j(\bx)\in \hat{A}$. Multiplying both sides of \ref{premeq} by appropriate elements of $S$, using \ref{indodeeq}, and using the fact that $q, r$ belong to $\hat{A}\cap\rx$, we see that the technical assumption holds of $x'-f(\bx)$ with respect to $(\hat{A},S)$.  We now consider all $y'-g(\bx)\in \xpf$ such that $y'-g(\bx)\neq x'-f(\bx)$. If $y'-g(\bx)\in A$, then $y'-g(\bx)$ still belongs to $\hat{A}$. If $y'-g(\bx)\notin A$, by assumption
 
 \begin{equation}\label{notAeq}
 s(y'-g(\bx)) = \beta(\bx)+\sum_j \gamma_{j}(\bx)(x_j'-f_j(\bx))
 \end{equation}

 \noindent for some $s\in S,\beta(\bx)\in \big[A\cap\rx\big], \gamma_{j}(\bx)\in \rx, x_j<y$, and $x_j'-f_j(\bx)\in A$. Here one of the $x_j'-f_j(\bx)$ might be $x'-f(\bx)$, but we showed above that $x'-f(\bx)$ has the right form with respect to $(\hat{A},S)$. Hence we can multiply both sides of \ref{notAeq} by appropriate elements of $S$, substitute for $x'-f(\bx)$, and conclude that $y'-g(\bx)\in \sat{[\hat{A}]}$ has the form required by the technical assumption.  

To prove that $\sat{[A]}=\sat{[\hat{A}]}$, it suffices to show that $x'-f(\bx)\in \sat{[\hat{A}]}$ and $r\in \sat{[A]}$. Both memberships follow from equations \ref{premeq} and \ref{indodeeq} (recall $q\in A\cap\rx, r\in \hat{A}\cap\rx$, and $x'-f(\bx)\in A$).
Thus differential pseudodivision replaces a system with an equivalent one whose corresponding differential saturation ideal is the same (in particular, the decomposition does not change). 




Lastly, we explain why each output $(A_i=0,S_i\neq 0)$ is a regular differential system. (Each $\sat[H_i]{[A_i]}$ is then radical by Lazard's lemma (Lemma \ref{lazard}), which is why $\sqrt{\sat[S_1]{[A_1]}}\cap \cdots \cap \sqrt{\sat[S_r]{[A_r]}} = (\sat[S_1]{[A_1]})\cap \cdots \cap (\sat[S_r]{[A_r]})$.) Both $A_i$ and $S_i$ are finite because only finitely many elements can enter at any step. Initials and separants are never identically zero, so 0 was not added to the inequations during the computation (i.e., $0\notin S_i$). Termination only occurs when $A_i$ is partially reduced, so that property is assured. The nondifferential polynomials in $A_i$ form a triangular set by the action of ${\tt Triangulate}$. The remaining differential polynomials form a subset of $\xtpf$. The orderly ranking ensures that an element $x_j'-f_j(\bx)$ of $\xtpf$ has leader $x_j'$, so the entire set $A_i$ (i.e., the union of the nondifferential elements of $A_i$ and a subset of the original ODEs) remains triangular.  The separant of any element of $\xtpf$ is $1$ and the separants of nondifferential elements of $A_i$ belong to $S_i$ by ${\tt Triangulate}$. Since $S_i\subseteq \rx$ has no proper derivatives, it is partially reduced with respect to $A_i$. Thus  $(A_i=0,S_i\neq 0)$ is a regular differential system. This completes the proof.



\end{proof}

We need the following theorem to interpret the output of $\rgaexp$.  The result is intuitively natural but slightly fussy to prove. It connects differential ideals to invariant ideals of polynomial vector fields: in particular, in the presence of an explicit system $\xpf$, the nondifferential polynomials in a differential ideal form the smallest invariant ideal containing the nondifferential polynomials in the original set that generated the differential ideal.

\begin{theorem} \label{lddecomp}

Let $A=\{\xpf, p_1, p_2,\ldots, p_m\}$ with $p_1,\ldots, p_m \in \rx$. Then 
$(\ldfid{A\cap\rx})=(\ldfid{A_{\rx}})=(\ldfid{p_1,\ldots, p_m})=[A]\cap \rx $, where $\mathbf{F}$ is the polynomial vector field $\mathbf{x'}=\mathbf{f}(\bx)$.
\end{theorem}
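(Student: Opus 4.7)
The first two equalities are essentially notational: under the given hypotheses $A \cap \rx = A_\rx = \{p_1, \ldots, p_m\}$, so only the last equality, $(\ldfid{p_1, \ldots, p_m}) = [A] \cap \rx$, requires real work. For the inclusion $(\subseteq)$, the plan is to apply Lemma \ref{subslie}: since $\xpf \subseteq A \subseteq [A]$ and $[A]$ is a differential ideal, $[A] \cap \rx$ is an invariant ideal with respect to $\xpef$. Each $p_i \in [A] \cap \rx$, so iterating Lemma \ref{subslie} shows $\mathcal{L}_{\mathbf{F}}^{(k)}(p_i) \in [A] \cap \rx$ for all $k \geq 0$ and $1 \leq i \leq m$. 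Since $[A] \cap \rx$ is an ideal of $\rx$, it contains the ideal generated by these higher Lie derivatives, namely $(\ldfid{p_1, \ldots, p_m})$.

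For the reverse inclusion $(\supseteq)$, the plan is to construct an auxiliary ring homomorphism that lets us ``evaluate'' proper derivatives using $\mathbf{F}$. Specifically, define
\[
\phi : \mathbb{R}\{\bx\} \longrightarrow \rx, \qquad \phi(x_l^{(k)}) := \mathcal{L}_{\mathbf{F}}^{(k)}(x_l),
\]
extended uniquely as a ring homomorphism (the $x_l^{(k)}$ are algebraically independent indeterminates of $\mathbb{R}\{\bx\}$, so the assignment is free). On generators we verify
\[
\phi\bigl((x_l^{(k)})'\bigr) = \phi(x_l^{(k+1)}) = \mathcal{L}_{\mathbf{F}}^{(k+1)}(x_l) = \mathcal{L}_{\mathbf{F}}\bigl(\phi(x_l^{(k)})\bigr),
\]
and since both $'$ on $\mathbb{R}\{\bx\}$ and $\mathcal{L}_{\mathbf{F}}$ on $\rx$ obey the sum and product rules (Lemma \ref{ldderiv}), $\phi$ is a differential ring homomorphism. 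In particular $\ker\phi$ is a differential ideal.

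Next I would observe that $\phi$ annihilates $\xpf$: for each $l$,
\[
\phi(x_l' - f_l(\bx)) = \mathcal{L}_{\mathbf{F}}(x_l) - f_l(\bx) = f_l(\bx) - f_l(\bx) = 0.
\]
Since $\ker\phi$ is a differential ideal containing $\xpf$, we get $[\xpf] \subseteq \ker\phi$. Moreover $\phi$ restricts to the identity on $\rx$, so $\ker\phi \cap \rx = (0)$, and thus $[\xpf] \cap \rx = (0)$. Now given $q \in [A] \cap \rx$, write
\[
q = \sum_{i,j} g_{i,j}\, p_i^{(j)} + \sum_{l,k} h_{l,k}\,(x_l' - f_l(\bx))^{(k)}, \qquad g_{i,j}, h_{l,k} \in \mathbb{R}\{\bx\}.
\]
Applying $\phi$ and using that $\phi(q) = q$ (as $q \in \rx$), $\phi(p_i^{(j)}) = \mathcal{L}_{\mathbf{F}}^{(j)}(p_i)$, and $\phi$ kills each $(x_l' - f_l(\bx))^{(k)}$, we obtain
\[
q = \sum_{i,j} \phi(g_{i,j})\, \mathcal{L}_{\mathbf{F}}^{(j)}(p_i),
\]
with $\phi(g_{i,j}) \in \rx$. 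This exhibits $q$ as an $\rx$-linear combination of higher Lie derivatives of the $p_i$, so $q \in (\ldfid{p_1, \ldots, p_m})$.

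The only subtle point, and the one step I would check most carefully, is the well-definedness of $\phi$ as a \emph{differential} ring homomorphism: one must confirm that specifying $\phi$ on the algebraic generators $x_l^{(k)}$ produces a map that commutes with the derivations, and that this property propagates from generators to arbitrary differential polynomials via the Leibniz rule. Everything else is straightforward bookkeeping once $\phi$ is in hand.
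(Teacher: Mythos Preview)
Your proof is correct and takes a genuinely different route from the paper. The paper's argument for $(\supseteq)$ is hands-on: it writes $q\in[A]\cap\rx$ as an explicit combination, repeatedly substitutes $x_i^{(r)}\mapsto (x_i^{(r)}-f_i^{(r-1)}(\bx))+f_i^{(r-1)}(\bx)$ to massage the representation into a normal form, and then inducts on the maximum order $l$ of the terms $x_i^{(k)}-f_i^{(k-1)}(\bx)$. The inductive step invokes Theorem~\ref{premtest} (the pseudoremainder membership test for regular sets) applied to $A_0=\{x_1^{(l)}-f_1^{(l-1)}(\bx),\ldots,x_n^{(l)}-f_n^{(l-1)}(\bx)\}$ to peel off the top-order layer.

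Your approach replaces all of this machinery with a single conceptual object: the differential ring homomorphism $\phi:\bbr\{\bx\}\to(\rx,\mathcal{L}_{\mathbf{F}})$ sending $x_l^{(k)}\mapsto\mathcal{L}_{\mathbf{F}}^{(k)}(x_l)$. Once $\phi$ is established as a differential homomorphism, everything falls out immediately from $\xpf\subseteq\ker\phi$ and $\phi|_{\rx}=\mathrm{id}$. This is cleaner and more self-contained: it avoids the induction, the normal-form bookkeeping, and the dependence on Theorem~\ref{premtest}. The paper's argument, by contrast, stays closer to the computational spirit of the surrounding material (regular sets, pseudodivision) and makes the elimination of derivatives explicit, which may be pedagogically useful in context but is logically heavier. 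Your verification that $\phi$ commutes with the derivations---checking on generators and propagating via the Leibniz rule---is exactly the right justification and is not a gap.
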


\begin{proof}
($\subseteq$): Immediate from Lemma \ref{subslie}. 

\noindent ($\supseteq$) Let $q\in [A]\cap \rx$. Then for some $l\geq 1$, we have a representation

\[
q= \sum_{1\leq i \leq n, 1\leq k \leq l} g_{i,k}(\bx) (x_i^{(k)} -f_i^{(k-1)}(\bx)) + \sum_{1\leq j \leq m,1\leq k\leq l} h_{j,k}(\bx) p_j^{(k)} + \sum_{1\leq j \leq m} h_j(\bx) p_j,
\]

\medskip
\noindent where $g_{i,k}, h_{j,k}, h_j \in \bbr\{\bx\}$. Proceeding as in the proof of Lemma \ref{subslie} (i.e., replacing $x_i'$ with $x_i' -f_i(\bx)+f_i(\bx)$), for $k\geq 1$ we may replace $p_j^{(k)}$ with $\mathcal{L}_{\mathbf{F}}^{(k)}(p_j)$ plus a sum of multiples of the various $x_i'-f_i(\bx),\ldots, x_i^{(k)}(\bx)-f_i^{(k-1)}(\bx)$. 
Regrouping and renaming the coefficient polynomials $g_{i,k}$, etc., as necessary, we may thus assume $q$ has the form

\[
q= \sum_{1\leq i \leq n, 1\leq k \leq l} g_{i,k}(\bx) (x_i^{(k)} -f_i^{(k-1)}(\bx)) + \sum_{1\leq j \leq m,1\leq k\leq l} h_{j,k}(\bx)\mathcal{L}_{\mathbf{F}}^{(k)}(p_j) + \sum_{1\leq j \leq m} h_j(\bx) p_j.
\]

\medskip
\noindent Because $q$ and the various $f_i(\bx) ,\mathcal{L}_{\mathbf{F}}^{(k)}(p_j),$ and $p_j$ belong to $\rx$, they are not altered by formally substituting $f_i^{(r-1)}(\bx)$ for $x_i^{(r)}$. With this substitution the sum 
\[\sum_{1\leq i \leq n, 1\leq k \leq l} g_{i,k}(\bx) (x_i^{(k)} -f_i^{(k-1)}(\bx))\]  

\medskip
\noindent vanishes and the coefficients of the $p_j$ and their Lie derivatives become nondifferential polynomials, proving $q\in (\ldfid{p_1,\ldots, p_m})$.

\end{proof}

Our work on $\rgaexp$ culminates with Theorem \ref{radlddecomp}, which interprets the output as a canonical algebraic invariant of a polynomial vector field $\mathbf{F}$. In particular, parts 1 and 3 connect differential ideals and the maximal invariant contained in $\vsr{A\cap\rx}$. Part 3 also indicates the structure of this invariant in terms of smaller invariant sets. These are the major contributions of Theorem \ref{radlddecomp}. Part 2 is implicit in \cite{GhorbalSP14} because there the authors obtain $(\ldfid{A_{\rx}})$, from which membership in the radical can be tested. (See our remarks on \cite{GhorbalSP14} in Section \ref{gencheck}. We discuss radical ideal membership in Section \ref{check} and the two paragraphs preceding it.) However, our mechanism is different (regular differential systems instead of \Grob bases) and shows that to test membership in $\sqrt{(\ldfid{A_{\rx}})}$ it is unnecessary to first find generators of $(\ldfid{A_{\rx}})$.



\begin{theorem}\label{radlddecomp}
Let $A=\{\xpf, p_1, p_2, \ldots, p_m\}$ with $p_1,\ldots, p_m \in \rx$ and let $\mathbf{F}$ be the polynomial vector field $\mathbf{x'}=\mathbf{f}(\bx)$. If $(A_1,S_1),\ldots, (A_r,S_r)$ are the regular differential systems returned by $\rgaexp$ given $(A,\emptyset)$ as input, then 
\begin{enumerate}
\item 
\begin{align*}
\sqrt{(\ldfid{A_{\rx}})}   &= \sqrt{[A]}\cap \rx\\&= (\sat[S_1]{(A_1\cap \rx)})\cap \cdots \cap  (\sat[S_r]{(A_r\cap \rx)}).
\end{align*}
\item  Membership in $\sqrt{(\ldfid{A_{\rx}})}$ is decidable.  

\item  $\vsr{[A]\cap \rx}$ is the largest invariant with respect to $\mathbf{F}$ that is contained in $\vsr{A\cap\rx}$. Moreover,  $\vsr{[A]\cap \rx}$  is a finite union of invariants of the form  $\vsr{\sat[S_i]{(A_i\cap \rx)}}$.
\end{enumerate}

\end{theorem}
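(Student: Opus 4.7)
The plan is to prove part 1 directly and then derive parts 2 and 3 as consequences.

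For part 1, I would first combine Theorem \ref{lddecomp} (which identifies $(\ldfid{A_\rx})$ with $[A]\cap\rx$) with Lemma \ref{radrestr} (2) (which lets radicals commute with restriction to $\rx$) to obtain $\sqrt{(\ldfid{A_\rx})} = \sqrt{[A]\cap\rx} = \sqrt{[A]}\cap\rx$. For the second equality, I would apply Theorem \ref{modrgacorr} to the input $(A,\emptyset)$; since $\sat[\emptyset]{[A]} = [A]$, this gives $\sqrt{[A]} = \bigcap_i \sat[S_i]{[A_i]}$, where each $\sat[S_i]{[A_i]}$ is already radical by Lazard's lemma. Intersecting both sides with $\rx$ and pushing the restriction inside each factor, I would first use Rosenfeld's lemma (Lemma \ref{rosenfeld}) to obtain $\sat[S_i]{[A_i]}\cap\rx = \sat[S_i]{(A_i)}\cap\rx$ (applicable because nondifferential polynomials are trivially partially reduced), and then Lemma \ref{exist} to conclude $\sat[S_i]{(A_i)}\cap\rx = \sat[S_i]{(A_i\cap\rx)}$. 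The hypotheses of Lemma \ref{exist}, namely that $(A_i,S_i)$ is an explicit regular differential system with nondifferential inequations, are guaranteed by Theorem \ref{modrgacorr}.

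Part 2 follows immediately: each $\sat[S_i]{(A_i\cap\rx)}$ is a saturation of a finitely generated ideal in $\rx$, so membership is decidable by standard \Grob basis techniques, and membership in a finite intersection reduces to membership in each component. For part 3, the maximality claim follows from Proposition \ref{maxinvar} applied to $A\cap\rx$, once we identify $\vsr{(\ldfid{A\cap\rx})}$ with $\vsr{[A]\cap\rx}$ via Theorem \ref{lddecomp}. For the union decomposition, I would use Proposition \ref{radprop} (3b) to replace $[A]\cap\rx$ with its radical $\sqrt{[A]}\cap\rx$ without changing the real zero set, then apply part 1 and invoke Lemma \ref{capcup} inductively to turn the intersection of ideals into a union of their real varieties. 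That each component $\vsr{\sat[S_i]{(A_i\cap\rx)}}$ is itself invariant follows from the regular invariant theorem (Theorem \ref{satinvar}), whose hypothesis that every $x'-f(\bx)\in\xpf$ belongs to $\sat[S_i]{[A_i]}$ is delivered by the ``technical assumption preservation'' property of $\rgaexp$ in Theorem \ref{modrgacorr}.

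The main obstacle will be careful bookkeeping of hypotheses: ensuring that each output $(A_i,S_i)$ of $\rgaexp$ genuinely satisfies what Lemma \ref{exist} and Theorem \ref{satinvar} require. In particular I must verify that the elements of $A_i$ containing proper derivatives form a subset of $\xpf$, that $S_i \subseteq \rx$, and that each $x'-f(\bx)\in\xpf$ either belongs to $A_i$ outright or lies in $\sat[S_i]{[A_i]}$ by the preserved identity in Theorem \ref{modrgacorr}. These are exactly the output guarantees of $\rgaexp$, so no fresh differential-algebraic argument is needed---only a careful assembly of the cited results into the three claims of the theorem.
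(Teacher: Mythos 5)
Your proposal is correct and follows essentially the same route as the paper's own proof: Theorem~\ref{lddecomp} plus Lemma~\ref{radrestr}(2) for the first equality, then Theorem~\ref{modrgacorr}, Rosenfeld's lemma, and Lemma~\ref{exist} in that order for the second, with parts 2 and 3 derived via the same combination of \Grob-basis decidability, Proposition~\ref{radprop}(3b), Lemma~\ref{capcup}, Proposition~\ref{maxinvar}, and Theorem~\ref{satinvar}. The bookkeeping concerns you flag at the end (that the $(A_i,S_i)$ are explicit regular differential systems with $\xpf\in\sat[S_i]{[A_i]}$) are exactly the ones the paper also discharges by citing the output guarantees of Theorem~\ref{modrgacorr}, so no extra argument is needed.
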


\begin{proof}

\begin{enumerate}
\item The first equation follows from Theorem \ref{lddecomp} and Lemma \ref{radrestr} (2). For the second, we have
\begin{align*}
\sqrt{[A]}\cap\rx  &=((\sat[S_1]{[A_1]})\cap \cdots \cap  (\sat[S_r]{[A_r]}))\cap\rx && \text{\hspace{.5cm}(Theorem \ref{modrgacorr})}\\
 &=  ((\sat[S_1]{[A_1]})\cap \rx)\cap \cdots \cap  ((\sat[S_r]{[A_r]})\cap \rx) && \text{\hspace{.5cm}}\\
 &= ((\sat[S_1]{(A_1)})\cap \rx)\cap \cdots \cap  ((\sat[S_r]{(A_r)})\cap \rx)&& \text{\hspace{.5cm}(Theorem \ref{rosenfeld})}\\
&=  (\sat[S_1]{(A_1\cap \rx)})\cap \cdots \cap  (\sat[S_r]{(A_r\cap \rx)}).&& \text{\hspace{.5cm}(Lemma  \ref{exist})}
  \end{align*}
 
 \noindent Theorem \ref{modrgacorr}, correctness of $\rgaexp$, applies because $\xpf\in A$. (That is, the ``technical assumption" in the statement of Theorem \ref{modrgacorr}  holds true at the start because all of the  $x_i' - f_i(\bx)$ belong to $A$ before we run $\rgaexp$.) Lemma \ref{exist} applies because the $(A_i,S_i)$ are explicit regular differential systems by Theorem \ref{modrgacorr}.

  \medskip

\item $\rgaexp$ computes the regular differential systems $(A_i,S_i)$. Membership in $\sat[S_i]{(A_i\cap\rx)}$ is decidable using \Grob bases or regular chains (Section \ref{keyregres}). Membership in a finite intersection of ideals is decidable using \Grob bases \cite[p. 194]{clo1_4}. 

\item By part 1 of this theorem, Lemma \ref{capcup}, and part 3b of Proposition \ref{radprop} we have
\begin{align*}
\vsr{\ldfid{A_{\rx}}} &=\vsr{[A]\cap \rx}\\ &= \vsr{\sat[S_1]{(A_1\cap \rx)}}\cup \cdots \cup  \vsr{\sat[S_r]{(A_r\cap \rx)}} \end{align*}

The claims then follow from  Lemma \ref{maxinvar} and Theorem \ref{satinvar}. (Theorem \ref{satinvar} applies because Theorem \ref{modrgacorr} implies that $\xpf\in \sat[S_i]{[A_i]}$ and $(A_i,S_i)$ is an explicit regular differential system.)
\end{enumerate}
\vspace{-.1cm}
\end{proof}

Note that our Theorems \ref{modrgacorr},\ref{radlddecomp} have more specialized hypotheses than the RGA theorem of Boulier et al. (Theorem \ref{rgadecomp}) and do not guarantee that all elements of $\sat[S_i]{[A_i]}$ are differentially pseudoreduced to 0 by $A_i$. However, Theorem \ref{rgadecomp} does not identify the dynamical meaning of the nondifferential polynomials in the output. Moreover, our Theorem \ref{radlddecomp} still allows for testing membership in  $\sqrt{[A]}\cap \rx$. The greater specificity of Theorem \ref{radlddecomp} may also lead to better computational complexity; see Remark \ref{eqeffrmk}.

\subsection{Algebraic Invariants from RGA Using Parameters} \label{lorenzex2}
We now explain how we obtained the Lorenz system invariant in Section \ref{lorenzex1}. Recall that the ODEs in question are $x' = y - x, y'=2x -y-xz,  z'=xy-z$. Our goal is to identify at least one nontrivial algebraic invariant of this system using the algorithms and theorems from Section \ref{rgaexp}. Consider the parameterized polynomial constraint $g(x,y,z)=ax^2+by^2 + cz^2=0$, where $a,b,c$ are unknown constants. (We call $g$ a \emph{template}.) 
To settle on this choice, we first restricted to polynomials of degree 2 and looked at possibilities containing $g$ but also having additional monomials. 
The parameters for these additional monomials ended up being 0 for interesting invariants, so here we only present $g$ as the template. 
The input system is $A:=(x'-y+x, y'-2x +y+xz,  z'-xy+z, g, a', b', c')$, where we interpret the polynomials as equations; the inequation set is empty. The terms $a',b',c'$ express that $a,b,c$ are constant and do not depend on time like $x,y,z$. Computing RGA in Maple (using the {$\tt RosenfeldGroebner$} command) on the system $A$ with an orderly ranking $x>y>z>a>b>c$ yields five regular differential chains: 


\begin{enumerate}
\item $x' + x - y, xz - 2x + y + y', -xy + z + z', a, b, c$,
\item $xz - 2x + y + y', -xy + z + z', c', 2x^2 - y^2 - z^2, a + 2c, b - c$,
\item $z' + z, a', b', x, y, c$,
\item $b', c', x - y, y^2 - 1, z - 1, a + b + c$,
\item $a', b', c', x, y, z$. 
\end{enumerate}

Even though we used a black-box commercial version of RGA for the calculation, Section \ref{lorenzex1} shows that the hypotheses of Theorems \ref{satinvar} and \ref{invarcor} hold of the output. (Specifically, chain 2; the others give invariants by inspection. See the next paragraph.) By default, Maple suppresses the inequations associated to each chain; however, the user can print them with the {$\tt Inequations$} command. When implemented, $\rgaexp$ should give equivalent output (but explicitly showing the inequations).  


In the output we interpret $a,b,c$ as parameters and the invariants as algebraic sets in $\bbr^3$ defined by polynomials in $\bbr[x,y,z]$. Chain 1 defines the trivial invariant set $\mathbb{R}^3$ because $a=b=c=0$ causes $g$ to vanish at every point in  $\mathbb{R}^3$. Chain 2 requires $a=-2c$ and $b=c$, which is the relationship of coefficients in $g_{a=2, b=-1,c=-1}=2x^2 - y^2 - z^2$. In other words, chain 2 defines the two-dimensional surface $2x^2 - y^2 - z^2=0$, the invariant we confirmed in Section \ref{lorenzex1}. (Any nonzero scalar multiple of $(2,-1,-1)$, or equivalently any nonzero solution of $a=-2c,b=c$, would define the same surface.) 
This appears to be the ``most generic chain" or ``general solution" produced by RGA (p. \pageref{genericchain}; roughly, no additional equations hold beyond those implied by the input). 
Chain 3 requires $x,y$ to constantly be zero; the final term of $g$ vanishes because $c=0$. Thus chain 3 defines the $z$-axis, a one-dimensional curve. Chain 4 is even simpler, defining the (zero-dimensional) points  $(1,1,1)$ and $(-1,-1,1)$. Lastly, chain 5 defines the origin, $(0,0,0)$. 

Theorem \ref{radlddecomp} implies that the union of these sets contains \emph{every} algebraic invariant of the form $ax^2+by^2+cz^2=0$ in $\bbr^3$ for the vector field $x' = y - x, y'= 2x -y-xz,  z'=xy- z$. (Moreover, invariants that are proper subsets of $\bbr^3$ are contained in the union of the invariants given by the final four chains.) This is essentially a completeness result describing the sense in which $\rgaexp$ finds \emph{all} invariants having a given parameterized form. Moreover, the output invariant is structured, being composed of smaller invariants of various dimensions (some of which are proper subsets of others). The example suggests that a generic chain corresponds to an invariant that has maximal dimension among the proper invariants. Containment of the invariants produced by $\rgaexp$ is controlled by containment of the corresponding ideals and there may be redundancy or proper containment (as is typical for these kinds of decomposition algorithms). Smaller invariants that are proper subsets of larger ones may still be interesting, though, so we do not pursue minimal decompositions. However, see Section \ref{conc} for a possible future optimization.



\subsection{Bounds on $\rgaexp$} \label{rgabds}
We now find upper bounds (Theorems \ref{cxtyRGA},\ref{expRGA}) on the complexity of $\rgaexp$. The maximal degree of polynomials in the output is a common measure of complexity for algorithms that operate on polynomials; in this paper we use the term \emph{degree complexity}. Like many authors we consider degree complexity as a function of the maximal degree $d$ of the input polynomials and the number of variables $n$. Algorithms involving polynomial ideals frequently have degree complexity that is singly or doubly exponential in $n$ and polynomial in $d$ (e.g., $O(d^{2^n}))$ \cite{DickensteinFGS91}. Thus it is reasonable to ignore matters like time complexity of addition and multiplication, the number of polynomials in the input, etc., that affect the output in only polynomial or singly exponential fashion. 

We start by analyzing ${\tt Triangulate}$. This involves a recursively defined function $T(d,n)$ that bounds the degree complexity. It is not surprising that the \emph{Fibonacci numbers} appear in the definition of $T(d,n)$ because a single step of pseudodivision produces a pseudoremainder whose degree is at most the sum of the degrees of the dividend and divisor (see step 2 of ${\tt DiffPseudoDiv}$ on p. \pageref{pdivdeg}). The Fibonacci sequence also appears in other works on RGA's complexity, e.g., Cor. 29 of \cite{GustavsonOP18}.

\begin{notation}
We write $F_j$ for the $j$-th \emph{Fibonacci number}, defined recursively by $F_0:=0, F_1:=1, F_{j}:=F_{j-1}+F_{j-2}$ for $2\leq j$.
\end{notation}

\begin{proposition}[{Binet's formula \cite[p. 92]{wallis2010introduction}}] \label{Binet}
\[F_j= \frac{\phi_1^j - \phi_2^j}{\sqrt{5}},\]
where $\phi_1:= \frac{1+\sqrt{5}}{2}$ and $\phi_2:= \frac{1-\sqrt{5}}{2}$.
\end{proposition}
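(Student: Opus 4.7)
The plan is to prove Binet's formula by strong induction on $j$, exploiting the fact that $\phi_1$ and $\phi_2$ are precisely the roots of the characteristic polynomial $x^2 - x - 1$ of the Fibonacci recurrence. The intuition is that any sequence of the form $a_j = A\phi_1^j + B\phi_2^j$ automatically satisfies $a_j = a_{j-1} + a_{j-2}$ for $j \geq 2$, because each $\phi_i^j = \phi_i^{j-1} + \phi_i^{j-2}$; matching the base cases $a_0 = 0, a_1 = 1$ then forces $A = 1/\sqrt{5}$ and $B = -1/\sqrt{5}$.

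First I would verify the two base cases $j = 0$ and $j = 1$ directly. For $j=0$ we have $(\phi_1^0 - \phi_2^0)/\sqrt{5} = 0 = F_0$. For $j = 1$ we compute $\phi_1 - \phi_2 = \sqrt{5}$, so $(\phi_1 - \phi_2)/\sqrt{5} = 1 = F_1$. These are both routine arithmetic checks using the definitions of $\phi_1, \phi_2$.

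Next I would carry out the inductive step. Fix $j \geq 2$ and assume the formula holds for $j-1$ and $j-2$. Using the recursion $F_j = F_{j-1} + F_{j-2}$ and the inductive hypothesis,
\[
F_j = \frac{\phi_1^{j-1} - \phi_2^{j-1}}{\sqrt{5}} + \frac{\phi_1^{j-2} - \phi_2^{j-2}}{\sqrt{5}} = \frac{\phi_1^{j-2}(\phi_1 + 1) - \phi_2^{j-2}(\phi_2 + 1)}{\sqrt{5}}.
\]
The key algebraic observation is that $\phi_1$ and $\phi_2$ satisfy $x^2 = x + 1$, which is immediate from the quadratic formula applied to $x^2 - x - 1$. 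Hence $\phi_i + 1 = \phi_i^2$ for $i = 1, 2$, and substituting gives $F_j = (\phi_1^j - \phi_2^j)/\sqrt{5}$, completing the induction.

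There is no real obstacle here; the only step that needs any care at all is verifying the identity $\phi_i^2 = \phi_i + 1$, which reduces to squaring $(1 \pm \sqrt{5})/2$ and simplifying. This entire proof is a short, self-contained induction and so could be written in just a few lines.
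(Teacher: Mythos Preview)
Your proof is correct and entirely standard. The paper itself does not prove this proposition; it simply cites it as a known result from \cite{wallis2010introduction}, so there is no paper proof to compare against.
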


It is somewhat tricky to pin down the degree complexity of ${\tt Triangulate}$ for the same reason that termination (Theorem \ref{triangcorr}) was delicate to establish: depending on the inputs, ${\tt Triangulate}$ must accommodate many different patterns of degrees throughout the computation. This includes the difficulty that multiple intermediate polynomials can have the same degree, so the degree of a given output is not necessarily tied to how many pseudodivision steps the calculation required.  To handle this, we induct on the number of times that the minimal and maximal degrees in the target variable change. 

\begin{theorem}[{Degree complexity of ${\tt Triangulate}$}]\label{cxtyTri} 
Given finite sets of polynomials $A\subseteq \rx,S\subseteq \rx\setminus\{0\}$ in $n=|\mathbf{x}|$ variables such that each polynomial has total degree at most $d$, algorithm ${\tt Triangulate}$ returns polynomials that have total degree at most $T(d,n)$, where $T(d,k)$ is defined recursively by $T(d,0):= d$, $T(d, k):= (F_{2\cdot T(d,k-1) +1})\cdot T(d,k-1)$ for $1 \leq k$. 
\end{theorem}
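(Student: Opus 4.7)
I will prove the bound by induction on $k$, where $T(d, k)$ bounds the total degrees of all polynomials in the current system once ${\tt Triangulate}$ has finished processing $k$ target variables along a given branch of its recursion tree. The base case $k = 0$ is the input itself and holds by hypothesis. The theorem follows once I observe that ${\tt Triangulate}$ processes at most $n$ distinct target variables along any branch, because the rank $5$-tuple from the termination proof of Theorem~\ref{triangcorr} descends strictly in lexicographic order, and its target-variable coordinate takes at most $n$ values.

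The inductive step is the heart of the argument: I need to show that processing one target variable $x$ inflates the degree bound from $M := T(d, k-1)$ to at most $F_{2M+1}\cdot M$. A single pseudodivision step satisfies the bound $\deg(r) \leq \deg(p) + \deg(q)$ recorded on page~\pageref{pdivdeg}; splits alter no polynomials, and the step~1 replacement of $q$ by $i_q$ together with its tail produces polynomials of degrees at most $\deg(q) \leq M$. Hence every degree-growing operation is a pseudodivision, so I need two ingredients: a bound on the number of pseudodivision steps per target variable, and a Fibonacci-style chain rule controlling how those steps compound. For the count, I note that throughout the processing of $x$, every leader-$x$ polynomial has $\deg_x \leq M$ (because $i_q$ is $x$-free, tails and separants strictly reduce $\deg_x$, and pseudoremainders $r$ satisfy $\deg_x(r) < \deg_x(p)$). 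Using the termination well-ordering, the pair (max-$\deg_x$-among-leader-$x$-polynomials, count of leader-$x$-polynomials attaining that max) descends at every pseudodivision, yielding at most $2M$ pseudodivision steps during the processing of a single target variable.

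For the chain rule, let $D_i$ denote the maximum total degree appearing in the system after the $i$th pseudodivision step of processing $x$. Since ${\tt Triangulate}$ selects $q$ with minimum $\deg_x$ and minimum total degree among leader-$x$ polynomials, the divisor $q$ at step $i+1$ has total degree at most $D_{i-1}$, while the dividend $p$ has total degree at most $D_i$. This gives $D_{i+1} \leq D_i + D_{i-1}$ with $D_0 \leq M$, and induction yields $D_i \leq F_{i+2}\cdot M$. Plugging in $i = 2M - 1$ gives the final bound $D_{2M-1} \leq F_{2M+1} \cdot M = T(d,k)$, matching the recurrence. The main obstacle I anticipate is justifying both the sharp step count ($2M$ rather than a naive $O(M^2)$) and the sharp chain rule ($D_{i+1} \leq D_i + D_{i-1}$ rather than $D_{i+1} \leq 2 D_i$); both require carefully exploiting the algorithm's specific choice of minimum-degree $q$ and verifying that step-1 tail replacements, step-3a pure splits, and the branching into multiple output systems do not secretly inflate the per-branch count of pseudodivision steps.
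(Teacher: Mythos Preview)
Your high-level scheme---induct on the number of processed target variables and control per-variable degree growth via a Fibonacci recurrence---matches the paper, but the per-variable analysis has a genuine gap in the step count. Lexicographic descent of $(\max\deg_x,\ \text{count at max})$ does \emph{not} yield at most $2M$ pseudodivisions: when $\max\deg_x$ drops, the count resets to however many polynomials lie at the new maximum, and that number is governed by $|A|$ and by accumulated pseudoremainders, not by $M$. Concretely, start with one polynomial $q$ having $\deg_x(q)=1$ together with $m$ polynomials at $\deg_x=M$; on the branch that always sends $s_q$ to the inequations, each of the $m$ polynomials is pseudodivided by $q$, then each of the $m$ resulting pseudoremainders (generically at $\deg_x=M-1$) is again pseudodivided by $q$, and so on, for roughly $m(M-1)$ pseudodivisions while processing this single target variable. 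Since $T(d,n)$ is independent of $|A|$, plugging an $|A|$-dependent step count into your bound $D_i\le F_{i+2}M$ cannot recover the theorem. (A smaller issue: your stated reason for $D_{i+1}\le D_i+D_{i-1}$, namely that the divisor at step $i{+}1$ has degree at most $D_{i-1}$, fails when the fresh pseudoremainder $r_i$ undercuts every other $\deg_x$ and becomes the next divisor; the recurrence itself survives because at most \emph{one} of $p_{i+1},q_{i+1}$ can equal $r_i$.)

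The paper avoids counting pseudodivisions altogether. It indexes by the combined number $j$ of strict decreases in $x_{\min}$ (the minimal nonzero $\deg_x$ among leader-$x$ equations) and $x_{\max}$, which is at most $2M-2$ irrespective of $|A|$. The key point is that during a phase where both are constant, $t_{\max}$ is bounded by one Fibonacci level above its value at the phase's start: every pseudodividend is a polynomial already present when the phase began (any pseudoremainder has $\deg_x<x_{\max}$ and so is never chosen as dividend while $x_{\max}$ holds), and the divisor's total degree is nonincreasing while $x_{\min}$ holds (the previous $q$ remains in the system on the pseudodivision branches, and the min-total-degree tiebreaker forces any replacement divisor to be no larger). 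The paper packages this as a two-part invariant maintained after $j$ events---$t_{\max}\le F_{j+2}M$ \emph{and} at least one of the current minimal pseudodivisor and maximal pseudodividend has total degree at most $F_{j+1}M$---with one last pseudodivision after $x_{\min}=x_{\max}=1$ giving $t_{\max}\le F_{2M+1}M$.
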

\begin{proof}
Throughout a run of ${\tt Triangulate}$, the inequations are updated with initials and separants of updated equations. Hence it suffices to bound the total degrees of the equations. Equations result from including initials or separants, which doesn't increase the degree, or from a single pseudodivision step. 
Hence to find an upper bound we may assume that each new equation results from pseudodivision and adds the degrees of the two polynomials involved. 

For $0\leq k \leq n$ we claim that $T(d,k)$ is a bound on the total degree of any equation after eliminating $k$ target variables, which implies the theorem statement. (That is, $T(d,k)$ is a bound after $\tri$ ensures that each of those $k$ target variables is the leader in at most one equation and the separant the corresponding polynomial belongs to the inequations.) 
Initially, total degrees are at most $T(d,0)=d$ because we haven't done anything yet (no variables eliminated). We prove that $T(d,1)=F_{2 d+1}d$ is a bound on the total degree after eliminating one target variable. This is sufficient because the process is identical going from $k-1$ to $k$. In particular, if $T(d,k-1)$ bounds total degrees after we have eliminated $k-1$ variables, the same argument (just replacing $d$ with $T(d,k-1)$) shows that $T(T(d,k-1),1)=(F_{2\cdot T(d,k-1)+1})\cdot T(d,k-1)$ is a bound after eliminating $k$ variables. The latter expression is $T(d,k)$ by definition. 

We now show that $T(d,1)=F_{2 d+1}d$ is a bound on the total degree after eliminating one target variable. Let $x$ be the first target variable. We introduce the following terminology to assist with the proof. We refer to the polynomial $q$ chosen in step 1 of ${\tt Triangulate}$ as the \emph{minimal pseudodivisor} for the next pseudodivision step. (Recall that by definition $q$ has minimal degree in $x$ among the equations with leader $x$ and also has minimal total degree among equations with that minimal degree in $x$.) We refer to the polynomial $p$ chosen in step 2 of ${\tt Triangulate}$ as the \emph{maximal pseudodividend} ($p$ has maximal total degree among all equations of maximal degree in $x$). Suppose we have performed $i$ pseudodivision steps so far. The key values are the current minimal nonzero degree in $x$ of any equation (call it $x_{\min,i}$) and the current maximal degree in $x$ of any equation (call it $x_{\max,i}$). Then the current minimal pseudodivisor has degree $x_{\min,i}$ in $x$ and the maximal pseudodividend has degree  $x_{\max,i}$ in $x$. Let $t_{\max,i}$ be the current maximal total degree of any equation (not necessarily having $x$ as its leader). For instance, $x_{\min,0}\leq x_{\max,0}\leq t_{\max,0}\leq d$ (the initial bound before any pseudodivisions). 

As long as elements of degree $x_{\min,0}$ in $x$ are used to pseudodivide elements of degree $x_{\max,0}$ in $x$, we get pseudoremainders of total degree at most $2d$. (In other words, as long as $x_{\min}$ and $x_{\max}$ do not change.) This is because any polynomials not present initially are pseudoremainders that have degree in $x$ strictly less than $x_{\max,0}$. Hence they have not been used as pseudodividends yet. Also, any polynomial with degree $x_{\min,0}$ in $x$ that becomes the minimal pseudodivisor in place of the original must have total degree no greater than that of the original minimal pseudodivisor (which was at most $d$). Otherwise it would not be minimal. Hence as soon as $x_{\min,i+1}<x_{\min,i}$ or $x_{\max,i+1}<x_{\max,i}$ for the first time, we still have $t_{\max,i+1}\leq 2d$. 
Note that $2d=1d+1d=F_1 d + F_2 d = (F_1+F_2)d = F_3d$. Note also that if exactly one of  $x_{\min,i+1}<x_{\min,i}$ or $x_{\max,i+1}<x_{\max,i}$ occurs, then at least one of the subsequent minimal pseudodivisor and maximal pseudodividend still has total degree at most $d=1d=F_2d$. This is because the subsequent minimal pseudodivisor has total degree at most $d$ if $x_{\min,i+1}=x_{\min,i}$. If $x_{\max,i+1}=x_{\max,i}$, then one of the original equations of maximal degree in $x$ is still present and the subsequent maximal pseudodividend has total degree at most $d$. If both $x_{\min,i+1}<x_{\min,i}$ and $x_{\max,i+1}<x_{\max,i}$ occur, then both the new minimal pseudodivisor and maximal pseudodividend have degree at most $2d=F_3d$. 
From now on we omit the numerical subscripts from $x_{\min,i},x_{\max,i},$ and $t_{\max,i}$ because the important quantity is \emph{how many times} $t_{\max}$ \emph{can increase}, not how many pseudodivision steps are performed in all. The amount that $t_{\max}$ can change in any individual step is bounded (e.g., earlier in this paragraph we saw that $t_{\max}$ is at most $2d$ after the first change in $x_{\min}$ or $x_{\max}$), so bounding the \emph{number} of changes allows us to bound the final value of $t_{\max}$.

In particular, the crucial observation is that $t_{\max}$ cannot increase more times than the combined number of decreases in $x_{\min}$ and $x_{\max}$, plus 1. Each of $x_{\min},x_{\max}$ can decrease at most $d-1$ times and keep a nonzero value. So after decreasing a combined number of $2d-2$ times, at most $x_{\min}=x_{\max}=1$. Then each subsequent pseudodivision eliminates $x$, so the pseudoremainders are not used as long as $x$ is still the target variable. Hence at this stage there can be no more changes to $x_{\min}$ and $x_{\max}$ (they will both still be 1 when only one polynomial with leader $x$ remains) and $t_{\max}$ can only increase one more time (due to pseudodivision involving elements having degree 1 in $x$), making for $2d-1$ increases at most.  

As alluded to immediately before the theorem statement, we induct on the combined number of decreases in $x_{\min}$ and $x_{\max}$ to show that $t_{\max}$ never exceeds $F_{2d+1} d=T(d,1)$. In particular, we assert that after $j$ total decreases in $x_{\min}$ and $x_{\max}$ (with $1\leq j\leq 2d-2$), we satisfy two conditions: 1. we have $t_{\max}\leq F_{j+2}d$ and 2. at least one of the current minimal pseudodivisor and maximal pseudodividend has total degree at most $F_{j+1}d$. The paragraph before last establishes the base case $j=1$. 
The reasoning is analogous for the inductive case: suppose after $j-1$ decreases we have $t_{\max}\leq F_{(j-1)+2}d=F_{j+1}d$ and at least one of the current minimal pseudodivisor and maximal pseudodividend has total degree at most $F_{(j-1)+1}d=F_{j}d$. 
As in the base case, we perform pseudodivision until we decrease either one or both of $x_{\min}$ and $x_{\max}$ for a total of $j$ or $j+1$ decreases. Since by assumption either the pseudodivisor or pseudodividend has total degree at most $F_{j}d$ and neither has total degree greater than $F_{j+1}d$, the pseudoremainder has total degree at most $F_{j}d+F_{j+1}d= F_{j+2}d<F_{(j+1)+2}d$. This proves the assertion about $t_{\max}$. If $x_{\min}$ stays the same, then the subsequent minimal pseudodivisor has total degree at most $F_{j+1}d$. If $x_{\max}$ stays the same, then the subsequent maximal pseudodivisor has total degree at most $F_{j+1}d$. If both $x_{\min}$ and $x_{\max}$ decrease, the condition on minimal pseudodivisors and maximal pseudodividends continues to hold: we now have $j+1$ decreases and both the new minimal pseudodivisor and new maximal pseudodividend have total degree at most $F_{j+2}d=F_{(j+1)+1}d$. This proves the inductive case. 

Hence when $j=2d-2$ (necessarily forcing $x_{\min}=x_{\max}=1$), we have a current minimal pseudodivisor of total degree at most $F_{2d-2+1}d=F_{2d-1}d$ and current maximal pseudodividend of total degree at most $F_{2d-2+2}d=F_{2d}d$. The final pseudodivision steps at most add $F_{2d-1}d$ and $F_{2d}d$ to yield the claimed value $F_{2d+1} d=T(d,1)$. This completes the proof.  \end{proof}


We now find an explicit function that bounds the recursively defined function $T(d,n)$. We start with a couple of straightforward inequalities.
 
 \begin{lemma}[{Appendix}]\label{fibbd}
 For all $j\in \bbn$, the $j$-th Fibonacci number $F_j$ satisfies $F_j< 2^j$.
 \end{lemma}
 
\begin{lemma}\label{doubledouble}
Let $M,N\in \bbr$ with $M> 2$. Then $N\cdot\left(2^{M}\right)+ 2M<(N+1)\cdot\left(2^{M}\right)$. 
\end{lemma}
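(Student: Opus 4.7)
The plan is straightforward: subtract $N\cdot 2^M$ from both sides of the desired inequality to reduce it to the equivalent claim $2M < 2^M$ for all real $M>2$. Thus the entire content of the lemma is this one-variable inequality, and $N$ plays no role beyond cancellation.

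To establish $2M < 2^M$ for $M>2$, I would define the real-valued function $f(M) := 2^M - 2M$ and show $f(M) > 0$ on $(2,\infty)$. Observe first that $f(2) = 4 - 4 = 0$, so it suffices to show that $f$ is strictly increasing on $[2,\infty)$. Differentiating, $f'(M) = (\ln 2)\, 2^M - 2$. For $M \geq 2$ we have $2^M \geq 4$, hence $f'(M) \geq 4\ln 2 - 2 = \ln 16 - 2 > 0$ since $16 > e^2$ (using $e < 2.72$, so $e^2 < 7.4 < 16$). Therefore $f$ is strictly increasing on $[2,\infty)$, and combining with $f(2)=0$ yields $f(M) > 0$ for all $M > 2$, i.e., $2M < 2^M$.

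Adding $N\cdot 2^M$ to both sides then gives $N\cdot 2^M + 2M < N\cdot 2^M + 2^M = (N+1)\cdot 2^M$, as required. There is no real obstacle here: the only subtlety is verifying that the derivative of $f$ is positive at and beyond $M=2$, which reduces to the elementary numerical fact $\ln 16 > 2$. An alternative, calculus-free route would be to write $2^M = 4 \cdot 2^{M-2}$ and use the inequality $2^{M-2} > 1 + (M-2)\ln 2$ for $M > 2$ (from the strict convexity of the exponential), giving $2^M > 4 + 4(M-2)\ln 2 > 4 + 2(M-2) = 2M$ since $2\ln 2 > 1$; either approach works equally well.
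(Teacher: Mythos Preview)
Your proof is correct and follows essentially the same approach as the paper: reduce to $2M < 2^M$ by cancellation, then use calculus by noting equality at $M=2$ and comparing derivatives for $M \geq 2$. Your version simply supplies a bit more detail in verifying $4\ln 2 > 2$.
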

\begin{proof}
Note that $N\cdot\left(2^M\right)+ 2M <(N+1)\cdot\left(2^{M}\right)$ if and only if $2M< 2^{M}$.
This holds by calculus: $2M= 2^M$ for $M=2$ and the derivatives satisfy $(2M)' =2 < \ln{2}\cdot \left(2^M\right)=\left(2^M\right)'$ when $M\geq 2$.
\end{proof}

\begin{notation}\label{deftow}
For natural numbers $d\geq 1$ and $n=1$ we define 

\[\text{Tower}(d,1):= 2^{3d+1}.\]

\noindent For $d\geq 1$ and $n>1$ we define
\[\text{Tower}(d,n):= 2^{2^{\iddots ^{{3d+n+1}}}},
\]
\noindent where the right-hand side is an exponent tower of height $n+1$ (i.e., the tower consists of nested exponents with $n$ copies of 2 followed by a final exponent $3d+n+1$).
\end{notation} 

\begin{lemma}[{Appendix}]\label{3tow}
All natural numbers $d,n\geq 1$ satisfy $2^{4\cdot \text{Tower}(d,n)} \leq \text{Tower}(d,n+1)$.
\end{lemma}

\begin{theorem}[{Explicit bounds on degree complexity of ${\tt Triangulate}$}; {Proof in appendix}]\label{expTri}
Let $T(d,n)$ be the recursive function from Theorem \ref{cxtyTri} that bounds the degree complexity of the output of ${\tt Triangulate}$. The following inequality holds for natural numbers $d,n\geq 1$:

\[
T(d,n)<\text{Tower}(d,n).
\]

\noindent 
(See Notation \ref{deftow} for the definition of $\text{Tower}(d,n)$.)
\end{theorem}

\begin{theorem}[{Degree complexity of $\rgaexp$}]\label{cxtyRGA} 
Let $\xpef, A, S$ satisfy the hypotheses of Theorem \ref{modrgacorr}, the termination and correctness result for $\rgaexp$. Suppose further that  each element of $\xpf,A,S$ has degree at most $d$. Then every nondifferential polynomial returned by $\rgaexp(A,S)$ has degree at most $R(d,n)$, defined recursively by $R(d,n)_0:= T(d,n)$, $R(d,n)_{k+1}:= T(d+R(d,n)_k,n)$ for $0\leq k < n$, and $R(d,n):= R(d,n)_n$. (The values $R(d,n)_k$ give bounds on intermediate polynomials' degrees after $\rgaexp$ has recursively called itself $k$ times out of the maximum possible $n$. The outputs containing a proper derivative form a subset of 
$\xpf$, so their degree is bounded by $d$.)
 

\end{theorem}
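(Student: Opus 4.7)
The plan is to track the maximum total degree of nondifferential polynomials through the phases of $\rgaexp$, using Theorem \ref{cxtyTri} to bound the effect of each $\tri$ call and a direct calculation to bound the effect of each differential pseudodivision step in step 3 of $\rgaexp$. The termination analysis from Theorem \ref{modrgacorr} already tells us that any branch of the computation performs at most $n+1$ alternating calls to $\tri$ and $\rgaexp$ before stopping: one initial call to $\tri$ in step 2, followed by up to $n$ eliminations of ODEs via step 3, each of which pseudodivides once and then invokes $\tri$ again on the resulting nondifferential equations. This structure matches the $n+1$ layers of the recursion that defines $R(d,n)=R(d,n)_n$.

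First I would verify the base case. The opening call in step 2 applies $\tri$ to the original nondifferential polynomials in $A$, all of degree at most $d$. By Theorem \ref{cxtyTri}, the output has degree at most $T(d,n) = R(d,n)_0$. All elements of $\xtpf$ re-introduced at the end of step 2 have degree at most $d$, so the bound $R(d,n)_0$ on the nondifferential part is preserved going into the next phase.

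Next I would bound a single differential pseudodivision in step 3. Suppose at the start of some such step the nondifferential polynomials currently have degree at most $D = R(d,n)_k$. The step chooses some $q \in A \cap \rx$ with leader $x$ (hence $\deg(q)\le D$) and some $x' - f(\bx) \in A$ with $\deg(f)\le d$, and computes the single-step pseudoremainder $\widetilde{r} = s_q(x'-f(\bx)) - q'$. Since $s_q = \partial q/\partial x$ has total degree at most $\deg(q)-1 \le D-1$, and $q'$ has the same total degree as $q$ (each monomial of $q'$ replaces a factor of some $x_i$ by $x_i'$), we get $\deg(\widetilde{r}) \le D + d$. In $\widetilde{r}$ the variable $x'$ has been eliminated; each remaining proper derivative $y'$ appears only linearly, with coefficient $-\partial q/\partial y$ of degree at most $D-1$. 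Substituting $f_y(\bx)$ of degree at most $d$ for each such $y'$ therefore produces $r \in \rx$ with $\deg(r) \le (D-1) + d \le D + d$. The other nondifferential polynomials are unchanged, so the whole nondifferential equation set entering the subsequent $\tri$ call has degree at most $D + d$. Applying Theorem \ref{cxtyTri} again, the output of that $\tri$ call has degree at most $T(D+d, n) = T(R(d,n)_k + d, n) = R(d,n)_{k+1}$.

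Iterating this argument at most $n$ times yields that every nondifferential polynomial returned by $\rgaexp(A,S)$ has degree at most $R(d,n)_n = R(d,n)$, while the differential outputs form a subset of $\xpf$ by Theorem \ref{modrgacorr} and therefore already have degree at most $d$. The main obstacle is really only the bookkeeping in the pseudodivision step: one must carefully separate the bounds on $\deg(q)$, $\deg(s_q)$, $\deg(q')$, and the additive contribution $d$ from $\deg(f)$ and $\deg(f_y)$, and confirm that substituting the explicit ODE right-hand sides for the residual proper derivatives of $y\neq x$ does not inflate the degree beyond $D + d$. Once that bound is in hand, the recursion defining $R(d,n)$ matches the phase structure of $\rgaexp$ one-to-one and the theorem follows.
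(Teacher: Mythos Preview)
Your proposal is correct and follows essentially the same approach as the paper: track the nondifferential degree through the alternating $\tri$/step-3 phases, use Theorem~\ref{cxtyTri} for each $\tri$ call, and show that a single step-3 pseudodivision followed by substitution raises the bound from $D$ to at most $D+d$. Your degree bookkeeping for $\widetilde r = s_q(x'-f(\bx)) - q'$ and the subsequent substitution is slightly more explicit than the paper's (which simply cites the additive bound from single-step pseudodivision and notes that the transformed $q'$ term stays within $d+T(d,n)$), but the argument is the same.
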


\begin{proof}
As noted in the correctness proof (Theorem \ref{modrgacorr}), $\rgaexp$ alternates between calls to ${\tt Triangulate}$ and to itself. The first possible call is to ${\tt Triangulate}$, producing nondifferential polynomials of degree at most $T(d,n)=R(d,n)_0$. Then in step 3 an ODE $x'-f(\bx)$ may be pseudodivided 
by the derivative of a nondifferential polynomial $q$; this at most adds the degrees of the dividend and divisor, yielding $d+T(d,n)$. (Recall that $s_q(x'-f(\bx)) -q' = \widetilde{r}$, where $\widetilde{r}$ is the pseudoremainder. The total degree of  $s_q(x'-f(\bx))$ is at most $d+T(d,n)$. In fact, $d+T(d,n)-1$ is a strict upper bound because the degree of the separant actually goes down, but we ignore this. The proper derivatives in $\widetilde{r}$ all come from $q'$, which has total degree at most $T(d,n)$. After replacing the proper derivatives using the relations $\xpef$ and obtaining $r$, the total degree in the transformed version of $q'$ is also at most $d+T(d,n)$.)

Another call to ${\tt Triangulate}$ yields $T(d+T(d,n),n)=R(d,n)_1$ and restarts the process (beginning with pseudodivision in step 3 since the equations are now triangular and have their separants in the set of inequations). After $k$ iterations have eliminated $k$ ODEs, the degree is at most $R(d,n)_k$, whence pseudodivision and another call to ${\tt Triangulate}$ yield $R(d, n)_{k+1}= T(d+R(d,n)_k,n)$.
\end{proof}

\begin{notation}\label{defrtow}
For natural numbers $d,n\geq 1$ and $0\leq k \leq n$ we define 

\[\text{RTower}(d,n)_k:= 2^{{\iddots ^{{\text{Tower}(d,n+k)}}}},
\]
\noindent where the right-hand side is an exponent tower of height $k(n-1)+1$ (i.e., the tower consists of nested exponents with $k(n-1)$ copies of 2 followed by a final exponent $\text{Tower}(d, n+k)$).

We also define

\[
\text{RTower}(d,n):=\text{RTower}(d,n)_n= 2^{{\iddots ^{{\text{Tower}(d,2n)}}}}
\]
\noindent where the right-hand side is an exponent tower of height $n(n-1)+1$. 

\end{notation}

\begin{theorem}[{Explicit bounds on degree complexity of $\rgaexp$; Proof in appendix}]\label{expRGA}
Let $R(d,n)$ be the recursive function from Theorem \ref{cxtyRGA} that bounds the degree complexity of the output of $\rgaexp$. The following inequality holds for natural numbers $d,n\geq 1$:

\[
R(d,n)<\text{RTower}(d,n).
\]

\noindent 
(See Notation \ref{defrtow} for the definition of $\text{RTower}(d,n)$.)
\end{theorem}

After expanding the top exponent $\text{Tower}(d,2n)$ of $\text{RTower}(d,n)$, the bound on $\rgaexp$ becomes an exponent tower with $n(n-1)+2n =n^2+n$ copies of 2 followed by a final exponent $3d+2n+1$.

\phantomsection
\label{trinotbad}
While the nonelementary bounds in Theorem \ref{expRGA} are enormous from a practical perspective, there are important compensating factors. 
Other published versions of RGA \cite[pp. 162-3]{BoulierLOP95} \cite[ p. 587]{GolubitskyKMO08} \cite[p. 111]{rga} use analogues of ${\tt Triangulate}$ (i.e., they are based on recursive splitting and pseudoreduction), so they must have nonelementary \cite[pp. 419-23]{AhoHU74} worst-case complexity like we find for $\rgaexp$ (Theorems \ref{cxtyRGA},\ref{expRGA}). The \emph{Wu-Ritt process}, a close relative of ${\tt Triangulate}$, has nonelementary complexity \cite[p. 121]{GalloM90} similar to what we show in Theorem \ref{expTri}.

In spite of RGA's apparently high worst-case complexity, it has nonetheless been used profitably in applications (p. \pageref{rgaapp}). Likewise, our experience (e.g., Sections \ref{lorenzex1}, \ref{lorenzex2}) shows that RGA has potential for studying algebraic invariants due to its systematic nature and universal results.

More theoretically, the problem of getting a differential radical decomposition of an explicit system with nondifferential inequations is close in spirit to the \emph{non}-differential radical ideal membership problem. The \emph{effective Nullstellensatz} \cite{brownawell1987bounds} shows that radical ideal membership in polynomial rings over fields is strictly easier than general ideal membership in those rings. This opens the door for specialized approaches to radical ideal membership that outperform general \Grob basis methods in terms of asymptotic complexity. (M\"{o}ller and Mora \cite{MollerM84} adapted a previous example \cite{mayr1982complexity} to prove doubly exponential lower bounds for \Grob bases. In particular, they demonstrated ideals with generators of degree $d$ in $n$ variables such that \emph{all} \Grob bases using certain monomial orderings
must have an element of degree doubly exponential in $n$ and polynomial in $d$.) \phantomsection \label{szanto} As mentioned in Remark \ref{difftriangrmk}, Sz\'{a}nt\'{o} \cite{{Szanto97},{szanto1999computation}} developed a 
characteristic set method of singly exponential complexity (in $n$, polynomial in $d$) \cite{amzallag2019complexity} for radical ideal membership. (See also \cite{{BurgisserS09},{boulier2006well},{AubryLM99},{Kalkbrener94}}.) Sz\'{a}nt\'{o}'s method reduces radical ideal membership testing to checking whether polynomials pseudoreduce to zero modulo certain characteristic sets. (In fact, the theory is closely related to that of RGA.)

\phantomsection \label{replacetri} It may be possible to generalize this to the differential case and prove Lemma \ref{difftriang} while replacing $\tri$ with a method of singly exponential complexity. 
The results of \cite{novikov1999trajectories} and \cite{GhorbalSP14} imply that the invariant yielded by $\rgaexp$ can be obtained by other techniques with doubly exponential complexity (Section \ref{relatedcxty}). This further supports the idea that some modification of ${\tt Triangulate}$ might dispense with the nonelementary bounds. Lastly, Theorem \ref{radinvartest} and the resulting analysis in Section \ref{check} suggest that the closely related problem of \emph{checking} invariance has a very rare worst case and that singly exponential complexity is actually the norm.


\begin{subsection}{Invariance Checking, Totally Real Varieties, and Complexity}\label{check}


In this section we connect totally real varieties to invariance checking. 
Theorem \ref{radinvartest} below and the subsequent commentary are also relevant to complexity and bounds.
  
\begin{proposition}\label{checknorad}
Let $A=\{p_1,\ldots,p_m\}\subseteq \rx$, let $\mathbf{F}$ be a polynomial vector field, and let $\vsr{A}$ be  an invariant set with respect to $\mathbf{F}$. Then $\rrad{(A)}$ is an invariant ideal. If in addition $\vsc{A}$ is totally real, then $\sqrt{(A)}$ is an invariant ideal.
\end{proposition}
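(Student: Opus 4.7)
The plan is to deduce the first claim from the machinery already developed for invariant ideals and the real Nullstellensatz, then derive the second claim as a quick corollary of the totally real hypothesis.

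For the first claim, I would start by invoking Theorem \ref{fullinvarcrit}: since $\vsr{A}$ is invariant and $\vsr{A}=\vsr{p_1,\ldots,p_m}$, statement (2) gives that $\mathcal{L}_{\mathbf{F}}^{(k)}(p_i)(\mathbf{a})=0$ for every $k\geq 0$, every $i$, and every $\mathbf{a}\in\vsr{A}$. The real Nullstellensatz (Theorem \ref{rnss}) then yields $\mathcal{L}_{\mathbf{F}}^{(k)}(p_i)\in\rrad{(A)}$ for all such $k,i$, whence $(\ldfid{A})\subseteq\rrad{(A)}$. Next I would show the chain $\rrad{(A)}\subseteq\rrad{(\ldfid{A})}\subseteq\rrad{\rrad{(A)}}=\rrad{(A)}$ (using $A\subseteq\ldfid{A}$ for the first containment, $(\ldfid{A})\subseteq\rrad{(A)}$ for the second, and Proposition \ref{radprop} (2) for the equality), so $\rrad{(A)}=\rrad{(\ldfid{A})}$. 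Because $(\ldfid{A})$ is an invariant ideal by construction (its generating set is closed under Lie differentiation, combined with Lemma \ref{ldderiv}), Lemma \ref{rradinvar} applies and guarantees that $\rrad{(\ldfid{A})}=\rrad{(A)}$ is an invariant ideal.

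For the second claim, I would argue that under the totally real hypothesis the radical $\sqrt{(A)}$ already coincides with the real radical $\rrad{(A)}$, so the result reduces to part one. Indeed, $\sqrt{(A)}$ is radical by definition, and $\vsc{\sqrt{(A)}}=\vsc{A}$ is totally real by assumption. Applying Proposition \ref{trrad} to $\sqrt{(A)}$ (in place of $(A)$) yields that $\sqrt{(A)}$ is real radical, i.e., $\sqrt{(A)}=\rrad{\sqrt{(A)}}=\rrad{(A)}$, where the last equality is Proposition \ref{radprop} (2). Combining with the first claim, $\sqrt{(A)}$ is an invariant ideal.

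I expect no serious technical obstacle: the heart of the argument is just matching radicals and quoting the real Nullstellensatz and Lemma \ref{rradinvar}. The only place to be slightly careful is keeping straight which ideal Proposition \ref{trrad} is applied to—it is $\sqrt{(A)}$, not $(A)$ itself, since $(A)$ need not be radical in this setting—and ensuring that the containment $(\ldfid{A})\subseteq\rrad{(A)}$ is used symmetrically with $A\subseteq\ldfid{A}$ to sandwich the real radicals together.
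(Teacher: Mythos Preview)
Your proposal is correct and follows essentially the same approach as the paper's proof: both derive $(\ldfid{A})\subseteq\rrad{(A)}$ from Theorem~\ref{fullinvarcrit} and the real Nullstellensatz, then squeeze to get $\rrad{(A)}=\rrad{(\ldfid{A})}$ and apply Lemma~\ref{rradinvar}, with the second claim reduced to $\sqrt{(A)}=\rrad{(A)}$ via Proposition~\ref{trrad}. Your version is slightly more explicit (writing out the sandwich of real radicals rather than citing Proposition~\ref{radprop}(3a), and carefully noting that Proposition~\ref{trrad} is applied to $\sqrt{(A)}$ rather than $(A)$), but the substance is identical.
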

\begin{proof}
By the real Nullstellensatz (Theorem \ref{rnss}) and statement 2 of Theorem \ref{fullinvarcrit}, the ideal $(\ldfid{A})\subseteq \rrad{(A)}$. Since $(A)\subseteq (\ldfid{A})\subseteq \rrad{(A)}$, Proposition \ref{radprop} (3a) shows that $\rrad{(\ldfid{A})}=\rrad{(A)}$. The ideal $(\ldfid{A})$ is closed under Lie differentiation and hence is invariant; it follows by Lemma \ref{rradinvar} that $\rrad{(A)}$ is invariant.

If $\vsc{A}$ is totally real, then by Proposition \ref{radprop} (2),(3b) and Proposition \ref{trrad} we have $\sqrt{(A)}=\rrad{(A)}$, which is invariant by the preceding paragraph.
\end{proof}

\begin{proposition}\label{noninvarsuff}
Let $A=\{p_1,\ldots,p_m\}\subseteq \rx$ and let $\mathbf{F}$ be a polynomial vector field. If $\ldf{p_j}\notin \rrad{(A)}$ for some $1\leq j\leq m$, then $\vsr{A}$ is not an invariant set with respect to $\mathbf{F}$. If $\vsc{A}$ is totally real and $\ldf{p_j}\notin \sqrt{(A)}$ for some $1\leq j\leq m$, then $\vsr{A}$ is not invariant.
\end{proposition}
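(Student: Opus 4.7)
The plan is to obtain Proposition \ref{noninvarsuff} directly as the contrapositive of Proposition \ref{checknorad}. The key observation is that each generator $p_j$ belongs to $A \subseteq (A) \subseteq \rrad{(A)}$, so if $\rrad{(A)}$ were closed under $\ldf{\cdot}$, then every Lie derivative $\ldf{p_j}$ would automatically lie in $\rrad{(A)}$. I would therefore begin by spelling out this elementary inclusion and noting that membership of a generator automatically propagates to membership of its Lie derivative whenever the ambient ideal is an invariant ideal.

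For the first claim, I would argue by contraposition. Suppose that $\vsr{A}$ is invariant with respect to $\mathbf{F}$. Then Proposition \ref{checknorad} asserts that $\rrad{(A)}$ is an invariant ideal, i.e.\ $\ldf{\rrad{(A)}} \subseteq \rrad{(A)}$. Since each $p_j \in \rrad{(A)}$, this forces $\ldf{p_j} \in \rrad{(A)}$ for every $1 \leq j \leq m$. Contrapositively, if $\ldf{p_j} \notin \rrad{(A)}$ for some $j$, then $\vsr{A}$ cannot be invariant.

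For the second statement, the same contrapositive argument goes through with $\rrad{(A)}$ replaced by $\sqrt{(A)}$, provided we can invoke the stronger conclusion of Proposition \ref{checknorad} under the totally real hypothesis. Specifically, if $\vsc{A}$ is totally real and $\vsr{A}$ is invariant, Proposition \ref{checknorad} gives that $\sqrt{(A)}$ is an invariant ideal, and then $p_j \in (A) \subseteq \sqrt{(A)}$ yields $\ldf{p_j} \in \sqrt{(A)}$ for all $j$. Contraposition finishes the argument.

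There is no real obstacle here; this is a one-line deduction from Proposition \ref{checknorad} together with the trivial fact that an ideal contains its generators. The only thing worth being careful about is invoking the correct hypothesis in each case: in the second part the totally real assumption on $\vsc{A}$ is what allows the conclusion to be strengthened from $\rrad{(A)}$-membership to $\sqrt{(A)}$-membership, via Proposition \ref{trrad} as used inside Proposition \ref{checknorad}. I would keep the proof to a few sentences, explicitly citing Proposition \ref{checknorad} for both parts rather than re-deriving the invariance of $\rrad{(A)}$ or $\sqrt{(A)}$.
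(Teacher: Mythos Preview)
Your proposal is correct and matches the paper's own proof, which simply says the result is immediate from Proposition~\ref{checknorad} because the hypotheses force $\rrad{(A)}$ (respectively $\sqrt{(A)}$) not to be an invariant ideal. You have spelled out the contrapositive in slightly more detail, but the argument is identical.
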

\begin{proof}
Immediate from Proposition \ref{checknorad} because the hypotheses of Proposition \ref{noninvarsuff} imply that $\rrad{(A)}$ and $\sqrt{(A)}$, respectively, are not invariant ideals.
\end{proof}

The next theorem generalizes to multiple polynomials Lie's criterion for invariance of smooth algebraic manifolds \cite[Fig. 1]{GhorbalSP14}\cite[Thm. 2.8]{olver2000applications}: let $h\in\rx$ and assume that the real variety $\vsr{h}$ has \emph{no} singular points. Lie's criterion then implies that $\vsr{h}$ is invariant with respect to polynomial vector field $\mathbf{F}$ if the first Lie derivative $\ldf{h}$ vanishes everywhere on $\vsr{h}$. 
\begin{theorem}\label{radinvartest}
Let $A=\{p_1,\ldots,p_m\}\subseteq \rx$ and let $\mathbf{F}$ be a polynomial vector field. If $(A)$ is real radical (i.e., $(A)=\rrad{(A)}$), then $\vsr{A}$ is an invariant set with respect to $\mathbf{F}$ if and only if $\ldf{p_j}\in (A)$ for all $1\leq j\leq m$. Equivalently, if $\vsc{A}$ is totally real and $(A)$ is radical (i.e., $(A)=\sqrt{(A)}$), then $\vsr{A}$ is an invariant set with respect to $\mathbf{F}$ if and only if $\ldf{p_j}\in (A)$ for all $1\leq j\leq m$.
\end{theorem}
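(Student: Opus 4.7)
The plan is to recognize Theorem~\ref{radinvartest} as a packaging of results already established in the paper, so the proof consists of citing them in the right order rather than doing new computation. The two formulations are equivalent by Proposition~\ref{trrad}: $(A)\unlhd\rx$ is real radical if and only if $(A)$ is radical and $\vsc{A}$ is totally real. Hence it suffices to prove the first version, where $(A)=\rrad{(A)}$.

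For the forward direction, I would assume $\vsr{A}$ is invariant and apply Proposition~\ref{checknorad} directly: it tells us that $\rrad{(A)}$ is an invariant ideal. Because $(A)=\rrad{(A)}$ by hypothesis, $(A)$ itself is an invariant ideal, so in particular $\ldf{p_j}\in (A)$ for every generator $p_j\in A$. For the reverse direction, I would invoke Corollary~\ref{membinvar}: given that $\ldf{p_j}\in (p_1,\ldots,p_m)=(A)$ for all $1\leq j\leq m$, the real variety $\vsr{A}=\vsr{p_1,\ldots,p_m}$ is an algebraic invariant set of $\mathbf{F}$. Note that Corollary~\ref{membinvar} does not need the real radical hypothesis; that hypothesis is only needed to force Lie derivatives into $(A)$ itself (rather than merely into $\rrad{(A)}$) for the forward direction.

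There is essentially no obstacle: both implications are one-line consequences of already-proved results, and the equivalence of the two theorem statements is a direct appeal to Proposition~\ref{trrad}. The only thing worth flagging in the write-up is the conceptual content, namely that without the real-radical/totally-real hypothesis the backward direction is still true by Corollary~\ref{membinvar}, but the forward direction could fail: generically one only gets $\ldf{p_j}\in\rrad{(A)}$ (by the real Nullstellensatz and Theorem~\ref{fullinvarcrit}(2)), which is a strictly weaker condition than membership in $(A)$. The real-radical assumption collapses this gap and makes invariance checking a pure ideal-membership test that can be settled with one Gr\"obner basis or regular chain reduction rather than with a real radical computation.
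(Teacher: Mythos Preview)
Your proof is correct and essentially the same as the paper's. The paper's proof is the one-liner ``Immediate from Corollary~\ref{membinvar} and Proposition~\ref{noninvarsuff}''; you cite Corollary~\ref{membinvar} for the reverse direction and Proposition~\ref{checknorad} (of which Proposition~\ref{noninvarsuff} is the contrapositive) for the forward direction, and you make explicit the appeal to Proposition~\ref{trrad} for the equivalence of the two formulations, which the paper leaves implicit.
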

\begin{proof}
Immediate from Corollary \ref{membinvar} and Proposition \ref{noninvarsuff}.
\end{proof}

\begin{corollary}\label{invarcheckcpxty}
 Consider all real algebraic sets that are defined by finite collections $A\subseteq \rx$ such that $\vsc{A}$ is totally real and $(A)$ is radical. There is an algorithm for checking invariance of such sets with respect to polynomial vector fields $\mathbf{F}:=\xpf$ that is polynomial in the maximal degree $d$ of elements of $A$ and $\xpf$ and singly exponential in the number of variables $n$. 
\end{corollary}
\begin{proof}
Since $\vsc{A}$ is totally real and $(A)$ is radical, by Theorem \ref{radinvartest} $\vsr{A}$ is invariant with respect to $\mathbf{F}$ if and only if $\ldf{q} \in (A)$ for all $q\in A$. Apply a test like Sz\'{a}nt\'{o}'s \cite{{Szanto97},{szanto1999computation}} for radical ideal membership that is singly exponential in the number of variables and polynomial in the degree.

\end{proof}
Theorem \ref{radinvartest} raises some important questions. How common is it for $\vsc{A}$ to be totally real and $(A)$ to be radical, and can we check those properties efficiently? Regarding the prevalence of totally real varieties, we make the following observations about Theorem \ref{signchange}, which states that the complex variety corresponding to an irreducible real algebraic variety defined by a single polynomial $p\in \rx$ is totally real if and only if $p$ assumes both positive and negative values as $\bx$ varies over $\bbr^n$. 

Fix the degree $d$. Blekherman (\cite[Thm. 1.1]{blekherman2006there}) has given an asymptotic, probabilistic result showing that for large $n=|\mathbf{x}|$, a randomly chosen polynomial almost certainly attains both positive and negative values. (More precisely, \cite{blekherman2006there} puts a probability measure on the space of polynomials and shows that as $n\rightarrow \infty$, the ratio of the measure of the set of nonnegative polynomials and the measure of the set of all polynomials goes to 0. See the first inequality in Theorem 1.1 of \cite{blekherman2006there}, which gives an upper bound $c_2n^{-1/2}$ on this ratio for some constant $c_2$.) Also, if $n\geq 2$ the space of reducible polynomials of at most a given bounded degree has strictly lower dimension than the space of all polynomials with that bound on the degree (see Example \ref{irredex} below). Hence for large $n$ a randomly chosen polynomial $p\in\rx$ will almost certainly define a totally real complex variety by Theorem \ref{signchange} because $p$ will almost certainly be irreducible and take on both positive and negative values. We suspect that a similar result holds for varieties that are defined by multiple, possibly reducible, polynomials (i.e., the generic situation should be that a complex variety defined over $\bbr$ is totally real).

We are not aware of specific results on the prevalence of radical ideals among all ideals, but we expect that a ``typical" ideal (with respect to some reasonable probability distribution) is radical. Consequently, we hypothesize that for almost all (in a precise sense that would require additional work to specify) complex varieties defined over the reals, checking invariance with respect to a polynomial vector field reduces by Theorem \ref{radinvartest} to checking membership of the first Lie derivatives in the ideal generated by the defining polynomials of the variety.

This leads to an interesting situation with regard to the computational complexity of checking for invariance. Given $A\subseteq \rx$, \emph{if} $\vsc{A}$ is totally real \emph{and} $(A)$ is radical, then we can check invariance of $\vsr{A}$ with complexity singly exponential in the number of variables. As explained in the preceding paragraph, we have reason to believe that this is possible for almost all choices of $A$. 

However, we are left with the ``measure zero" cases where $\vsc{A}$ is not totally real or $(A)$ is not radical. Determining whether a given candidate $A$ has these properties appears, in the worst case, to be doubly exponential in the number of variables. For instance, decomposing a variety into irreducible components (which seems unavoidable for determining if the variety is totally real) has doubly exponential lower bounds in $n$ \cite[Thm. 1]{chistov2009double}. (In particular, Chistov shows that for sufficiently large $n,d$ there are polynomials in $n$ variables of degree less than $d$ defining a reducible variety with a component whose corresponding prime ideal has no set of generators with all elements having degree less than a certain doubly exponential bound.) Likewise, checking if a variety is radical is closely related to \emph{primary ideal decomposition} \cite[Sect. 4.8]{clo1_4}, which has doubly exponential complexity (following again from \cite[Thm. 1]{chistov2009double}).

In summary, then, we suspect that with singly exponential complexity we can almost always correctly guess whether a given $A$ is invariant, but \emph{proving} that we have the correct answer may require doubly exponential complexity.  (This is similar, but not identical, to the hypothesis--also possibly true--that proving invariance of an algebraic set has singly exponential average-case complexity but doubly exponential worst-case complexity.) However, we cannot discard the possibility that the worst case is also singly exponential; settling the matter appears difficult. 

\begin{example}\label{irredex}
As part of the preceding analysis we claimed that the set of reducible polynomials of at most a given bounded degree has smaller dimension than the set of all such polynomials, and hence there is a precise sense in which almost all polynomials are irreducible. If $n=1$, then over $\bbc$ every polynomial of degree $d> 1$ is reducible, so we only care about the case $n\geq 2$. Rather than prove a formal theorem with full details, we illustrate with a simple case that mirrors the general result (the example is based on an argument from \cite{459881}).

 Let $K$ be an algebraically closed field and consider $P_{2,d\leq 2}\setminus \{0\}\subseteq K[x,y]$, where $P_{2,d\leq 2}\setminus \{0\}$ is the set of all nonzero polynomials over $K$ in two variables having degree at most 2. An arbitrary element of $P_{2,d\leq 2}\setminus \{0\}$ has the form $p=a_1 + a_2x + a_3y +a_4x^2 +a_5xy + a_6y^2$, with not all $a_i$ being zero. We can thus identify $p$ with the 6-tuple $(a_1,\ldots, a_6)\in K^6\setminus\{\mathbf{0}\}$. Irreducibility is not affected by nonzero scalar multiples, so it is natural to view $p$ as a point $(a_1:\cdots : a_6)$ in $\mathbb{P}^5(K)$, the five-dimensional \emph{projective space} over $K$ \cite[Sect. 8.2]{clo1_4}. By definition, $\mathbb{P}^5(K)$ is the quotient of $K^6\setminus\{\mathbf{0}\}$ by the equivalence relation $\sim$ such that $\mathbf{a}\sim\mathbf{b}$ if and only if there exists $0\neq \alpha\in K$ such that $\alpha \mathbf{a}=\mathbf{b}$. (Going forward we omit the $K$ in  $\mathbb{P}^5(K)$.) For instance, both $1+2x^2 -3xy$ and $-2-4x^2+6xy$ correspond to $(1:0:0:2:-3:0)\in \mathbb{P}^5$.
  
  Points of $\mathbb{P}^5$ that correspond to reducible polynomials in $K[x,y]$ are contained in the image of the map $f: \mathbb{P}^2 \times \mathbb{P}^2 \rightarrow \mathbb{P}^5$ defined by 
\[f((b_1:b_2:b_3),(c_1:c_2:c_3)) = (b_1c_1:b_1c_2 + b_2c_1: b_1c_3+b_3c_1: b_2c_2:b_2c_3+b_3c_2:b_3c_3).\]

This point represents the factorization $p=qr$, where $q=b_1+b_2x+b_3y$ and $r=c_1+c_2x+c_3y$. The image of $f$ has dimension at most $4=2+2=\text{ dim}(\mathbb{P}^2) + \text{ dim}(\mathbb{P}^2) = \text{ dim}(\mathbb{P}^2\times \mathbb{P}^2)< \text{ dim}(\mathbb{P}^5) =5$. These values are justified by Examples 1.30, 1.33 (p. 67)  and Theorem 1.25 (p. 75) of \cite{shaf3ed}. Hence every polynomial corresponding to a point in the ``large" set $\mathbb{P}^5\setminus f(\mathbb{P}^2\times\mathbb{P}^2)$ (the complement of a lower-dimensional set) is irreducible.
\end{example}
\begin{remark}\label{eqeffrmk}
We note that the prevalence of totally real and/or irreducible varieties is also relevant to the complexity of generating \emph{equations} for algebraic invariants. In particular, Theorem \ref{invarcor} implies that these and other conditions make it easier to interpret the output of $\rgaexp$. For instance, suppose $\rgaexp$ returns a collection of regular differential systems $(A_1,S_1),\ldots, (A_r,S_r)$ and each $(A_i,S_i)$ satisfies the hypotheses of Theorem \ref{invarcor} (2). Then we may ignore the (doubly exponential) complication of finding generators of the saturation ideals $\sat[S_i]{(A_i\cap \rx)}$ because the invariant is simply $\cup_{i=1} ^r\vsr{A_i\cap \rx}$ and the inequations are superfluous. (Compare to Theorem \ref{radlddecomp} (3).) 
\end{remark}

\end{subsection}

\section{Related work}\label{related}





\subsection{Invariant Generation and Checking}\label{gencheck}
Several recent algebraic algorithms for generating invariants are close in spirit to our use of $\rgaexp$. In particular, all are \emph{template methods} that find invariants having the form of a parameterized polynomial input (see the example in Section \ref{lorenzex2}). However, $\rgaexp$'s basic elimination method (namely, differential pseudodivision) is unique in the class. This leads to a major advantage of our approach: $\rgaexp$ avoids both \Grob bases and real quantifier elimination. In particular, we obtain the largest possible invariant that is consistent with the input system without having to test ideal membership or the vanishing of Lie derivatives. 

Liu, Zhan, and Zhao \cite{LiuZZ11} proved that invariance of semialgebraic sets (i.e., defined by polynomial equations and inequalities over the real numbers) with respect to a polynomial vector field is decidable and gave an algorithm--referred to here as the LZZ algorithm--for the problem of checking invariance. By enforcing an invariance criterion on a template, the algorithm gives a way of generating invariants of a given form. The chief downside of LZZ is that it uses two expensive tools, \Grob bases and real quantifier elimination. (The generation version only uses real quantifier elimination, but the template might need to be large to find a nontrivial invariant.) 


	Ghorbal and Platzer  \cite{GhorbalP14} contributed a 
procedure that here we refer to as DRI (``differential radical invariant"). 
Given a homogeneous template polynomial $h$ with parameter coefficients, assume an order $N$ and write a symbolic matrix representing membership of the $N$-th Lie derivative of $h$ in the ideal generated by $h$ and the preceding $N-1$ Lie derivatives. 
 DRI seeks an invariant of maximal dimension with this form by minimizing the rank of the symbolic matrix. 

 	One virtue of DRI is completeness: 
any algebraic invariant that exists can be produced by choosing $N$ and the degree of $h$ to be sufficiently large  \cite[Cor. 1]{GhorbalP14}. Another is its reliance on linear algebra (similar to \cite{{RebihaMM15},{MatringeMR10}}, which DRI partly generalizes), an area with many efficient computational tools. DRI does not use \Grob bases, but nonlinear real arithmetic/quantifier elimination is generally required to find parameter values that minimize the rank. 
DRI has considerable power, as demonstrated by a number of nontrivial case studies from the aerospace domain \cite[Sect. 6]{GhorbalP14}. Nonetheless, it is significantly slower than the following two algorithms \cite[Sect. 6]{Boreale20}.
			
	Kong et al. \cite{KongBSJH17} use a similar idea for invariant generation, though they restrict themselves to the case of invariants $p$ that are \emph{Darboux polynomials}. 
This means that the Lie derivative $\dot{p}=ap$ for some polynomial $a$ (i.e., $p$ divides its own Lie derivative with respect to the vector field in question). This prevents completeness, but also contributes to better empirical performance when it applies \cite[Table 1]{KongBSJH17}. However, the implementation still uses \Grob bases and nonlinear real arithmetic \cite[Remark 3]{KongBSJH17}. 
	
	Recent work of Boreale \cite{Boreale20} has strong ties to both DRI \cite{GhorbalP14} and \cite{KongBSJH17}. The POST algorithm \cite[p. 9]{Boreale20} builds on the same theory as \cite{GhorbalP14} but focuses on invariants containing a given initial set. 
Moreover, POST uses \Grob bases and an iterative process involving descending chains of vector spaces and ascending chains of ideals. In some sense, Boreale's work is intermediate between that of \cite{GhorbalP14} and \cite{KongBSJH17}: POST offers completeness guarantees missing from \cite{KongBSJH17} but in general is slower than \cite{KongBSJH17} and faster than DRI. Boreale also uses POST and a generic point of the initial set to study 
the semialgebraic case. Boreale extends these ideas to PDEs in \cite{boreale2022automatic} and Boreale and Collodi treat the special case of polynomial PDE conservation laws in \cite{boreale2022linear}.  

	In \cite{GhorbalSP14} Ghorbal, Sogokon, and Platzer give an invariant checking algorithm that is related to LZZ but is restricted to algebraic sets and improves efficiency in that case. Like LZZ, the method in \cite{GhorbalSP14} uses \Grob bases and nonlinear real arithmetic. The algorithm computes successive Lie derivatives of the input  $A\subseteq \rx$ and checks ideal membership (where \Grob bases come in) until the first-order Lie derivatives of all preceding elements belong to the ideal generated by those elements. This will eventually happen by Hilbert's basis theorem (Theorem \ref{hbt}), so $A$ and the subsequent Lie derivatives computed up to that stage generate $(\ldfid{A})$ (see our comments immediately preceding Theorem \ref{radlddecomp}). Real arithmetic checks that each Lie derivative computed vanishes on all of $\vsr{A}$. This holds if and only if $\vsr{A}$ is invariant with respect to $\mathbf{F}$ (Corollary \ref{membinvar}).

	 The recent ESE invariant checking algorithm of \cite{GhorbalS22} refines and extends both LZZ and \cite{GhorbalSP14}. ESE again uses \Grob bases and real arithmetic, but it structures quantifier elimination subproblems more efficiently than does LZZ and, unlike \cite{GhorbalSP14}, applies to semialgebraic sets. Based on the topological notion of \emph{exit sets} \cite{conley1978isolated} (which concern vector field trajectories exiting the boundary of a set in $\bbr^n$), the ESE algorithm can verify or disprove invariance of a candidate semialgebraic set. ESE is often much faster than an enhanced version of LZZ that is also described in \cite{GhorbalS22}. 
	 
	 We lastly mention a new approach of Wang et al.~\cite{WangCXZK21} for generating \emph{invariant barrier certificates} (a concept closely tied to invariants; see \cite[Thm. 4]{WangCXZK21}). The method employs \Grob bases to find a sufficient number of Lie derivatives and uses them to satisfy conditions \cite[Def. 4]{WangCXZK21} that guarantee a template is an invariant barrier certificate. However, \cite{WangCXZK21} then translates the criterion 
into a numerical optimization problem. Experiments show promising performance, but the method's reliance on local extrema occasionally results in erroneous output \cite[Table 1]{WangCXZK21}. (Wang et al. provide the option of searching for global extrema using a branch-and-bound technique, but this can increase the complexity substantially.) This contrasts with the trade-offs made by fully symbolic methods like $\rgaexp$ for generating or checking invariants.

\subsection{Complexity}\label{relatedcxty}


As indicated in Section \ref{gencheck}, the order of Lie derivatives needed to check invariants is an important factor in the complexity of the algorithms from \cite{LiuZZ11,GhorbalSP14,Boreale20,WangCXZK21,GhorbalS22}. The best upper bound known is doubly exponential in the number of variables \cite[Thm. 4]{novikov1999trajectories}, but both \cite{GhorbalS22} and \cite{novikov1999trajectories} suggest this is overly conservative. This intuition is consistent with our discussion in Section \ref{check}, where Theorem \ref{radinvartest} collapses invariance checking to a single round of radical ideal membership testing (assuming a radical ideal that defines a totally real variety; we argued that this should almost always hold). Whether or not the worst-case complexity is greater than singly exponential, it is necessarily considerable since generating algebraic invariant sets is NP-hard \cite[p. 289]{GhorbalP14}.) 

In another direction, Section \ref{rgabds} demonstrates that the complexity of RGA is nontrivial to analyze. 
Upper bounds on the \emph{order} of intermediate and output differential polynomials were worked out in \cite{GolubitskyKMO08,GustavsonOP18}. 
The order bounds on RGA from \cite{GolubitskyKMO08} amount to $n!$ for ODEs of the form $\mathbf{x'} =\mathbf{f}(\mathbf{x})$ in $n$ variables. However, this result is excessive for $\rgaexp$ because we differentiate at most $n$ times. Moreover, in $\rgaexp$ we only handle differential polynomials of order at most 1 since we use the explicit form to substitute nondifferential polynomials for proper derivatives. 

Degree bounds for differential characteristic sets appear in \cite[Thm. 5.48]{simmons_towsner}, but these are again excessive for our situation.

\section{Conclusion}\label{conc}

In this paper we have adapted the Rosenfeld-\Grob algorithm to generate algebraic invariants of polynomial differential systems. Finding invariants is a crucial and computationally difficult task that our method tackles with a new and systematic approach using tools from algebraic geometry, differential algebra, and the theory of regular systems. Our algorithm $\rgaexp$ provides an alternative to the traditional \Grob bases and real quantifier elimination while giving a novel representation of the largest algebraic invariant contained in an input algebraic set. We have also highlighted the prevalence and importance of totally real varieties for reducing the computational complexity of both generating and checking invariants of differential equations.

 The unique aspects of $\rgaexp$ provide multiple directions for future work. As discussed in Remark \ref{difftriangrmk} and Section \ref{rgabds}, p. \pageref{replacetri}, one important goal is modifying the subroutine $\tri$ to preserve differential solutions while obtaining a regular algebraic decomposition with singly exponential complexity. Beyond theoretical complexity improvements, our framework would greatly benefit from effective heuristics (e.g., for detecting totally real varieties) and optimizations like pruning superfluous branches in $\tri$. Some rankings are much easier to compute than others \cite{boulier2000efficient}, so translating from one differential ranking to another could be an important strategy. Partial runs of $\rgaexp$ are another interesting possibility. In our experience, the final iterations of the algorithm tend to be drastically more expensive than the early ones while yielding less additional information (e.g., giving relatively trivial, low-dimensional chains when we already know the generic chain). The example in \ref{lorenzex2} demonstrates this; we are most interested in the generic Chain 2 and would like the option of only computing it.
 
The treatment of parameters is also a critical matter. Our example used auxiliary variables with constant derivative, but this is suboptimal. Fakouri et al. \cite{fakouri2018new} describe a modified version of RGA designed to more efficiently identify all possible regular chains corresponding to different values of parameters. 
Adapting this algorithm to our problems could be important for a more scalable $\rgaexp$. 

Dong et al. \cite{dong2023differential} take another approach to differential elimination and parameters. There the focus is on isolating input-output relations and structural identifiability rather than invariants.  Like our work, \cite{dong2023differential} focuses heavily on nondifferential ideals and their relation to explicit differential systems. However, the systems treated there have a more special shape (state-space form) than ours. Their algorithms are also probabilistic. Nonetheless, those algorithms yield significant improvements in performance over standard differential elimination and there are tantalizing hints of deeper connections to the approach we have taken in this paper. (For instance, in most cases the representations found in \cite{dong2023differential} are actually characteristic sets.) Elucidating these connections and generalizing the novel techniques from \cite{dong2023differential} could be very fruitful for the elimination approach to invariant sets.

While purely algebraic invariants are important for continuous dynamical systems and enjoy strong decidability properties \cite{PlatzerT20}, many polynomial vector fields have no such invariants that are nontrivial (not isolated points, coordinate axes, etc.). The Van der Pol oscillator is an example that nonetheless has interesting semialgebraic invariants \cite{JonesP19,odani1995limit}. 
In the literature there are several encodings of inequalities as equations using auxiliary variables (e.g., \cite{PlatzerQR09, anderson1977output}). Our initial investigation suggests that this idea, together with RGA, can yield insights into semialgebraic invariants of nonlinear systems with non-obvious dynamics.

In light of these prospects, we view our results in this paper as laying the groundwork for an enhanced method that produces easily verified invariants, combines optimal theoretical complexity with strong practical performance, and is broadly applicable. Progress on these fronts could lead to inclusion in practical tools for CPS verification such as the theorem prover \KX \cite{FultonMQVP15} with its automatic invariant-generating utility, Pegasus \cite{SogokonMTCP21}.

\paragraph{Acknowledgements}
We are grateful to Fran\c{c}ois Boulier, Khalil Ghorbal, and Yong Kiam Tan for helpful exchanges. We also thank the reviewers for their very thorough and thoughtful suggestions. This work was supported by the AFOSR under grants FA9550-18-1-0120 and FA9550-16-1-0288.

\section{Appendix} \label{appendix}
\setcounter{theorem}{14}
\begin{lemma}
Let $K$ be a field. 
\begin{enumerate}
\item If $X\subseteq K^n$ and $p\in K[\mathbf{x}]$, then $p$ vanishes at every point of $X$ if and only if $p$ vanishes at every point of the Zariski closure $\overline{X}^K$ of $X$; i.e., $\vi{K}{X} = \vi{K}{\overline{X}^K}$.
\item Given $X,Y\subseteq K^n$, we have $\overline{X}^K=\overline{Y}^K$ if and only $\vi{K}{X}=\vi{K}{Y}$.
\end{enumerate}
\end{lemma}

\begin{proof}
1. In both cases $\mathbf{V}_{K}(p)$ is a Zariski-closed set containing $X$.

\noindent 2. If $\overline{X}^K=\overline{Y}^K$, then $\vi{K}{X}=\vi{K}{\overline{X}^K}=\vi{K}{\overline{Y}^K}=\vi{K}{Y}$ by 1. Conversely, suppose $\vi{K}{X}=\vi{K}{Y}$. Then similarly $\vi{K}{\overline{X}^K}=\vi{K}{\overline{Y}^K}$ by 1. Since $\overline{X}^K$,$\overline{Y}^K$ are Zariski closed, we have $\overline{X}^K=\vs[K]{\vi{K}{\overline{X}^K}}=\vs[K]{\vi{K}{\overline{Y}^K}}=\overline{Y}^K$ by Lemma \ref{viva}.
\end{proof}

\setcounter{theorem}{16}
\begin{lemma}
Let $X\subseteq \mathbb{R}^n$. Then the real Zariski closure $\overline{X}^{\bbr}$ equals $\overline{X}^{\mathbb{C}}\cap\mathbb{R}^n$, the restriction of the complex Zariski closure to the reals.
\end{lemma}

\begin{proof}
$\subseteq$: We must show that $\overline{X}^{\mathbb{C}}\cap\mathbb{R}^n$ contains $X$ and is real Zariski closed. Containment of $X$ is clear. For real Zariski closedness, note that  $\overline{X}^{\mathbb{C}}\cap\mathbb{R}^n$ is the zero set of the real and imaginary parts of the defining polynomials of $\overline{X}^{\mathbb{C}}$. (That is, suppose $\overline{X}^{\mathbb{C}}=\vs[\bbc]{A}$ for some $A\subseteq \cx$. For all $p\in A$, distribute over the real and imaginary parts of each monomial's coefficient to write $p=p_1 +ip_2$ where $p_1,p_2\in \rx$. We only care about real solutions $\mathbf{a}\in \bbr^n$ here, so $p(\mathbf{a})=0$ if and only if $p_1(\mathbf{a})=p_2(\mathbf{a})=0$.)

\noindent $\supseteq$: We must show that $\overline{X}^{\mathbb{C}}\cap\mathbb{R}^n$ is contained in every real Zariski closed set $Y$ that contains $X$. Observe that $\vs[\bbc]{\vi{\bbr}{Y}}$ is a complex Zariski closed set containing $Y$ and thus $X$. Then $\overline{X}^{\mathbb{C}}\subseteq \vs[\bbc]{\vi{\bbr}{Y}}$ and so $\overline{X}^{\mathbb{C}}\cap\mathbb{R}^n\subseteq \vs[\bbc]{\vi{\bbr}{Y}}\cap \bbr^n =  \vs[\bbr]{\vi{\bbr}{Y}} =Y$, with the last equality following from Lemma \ref{viva}.  
\end{proof}

\setcounter{theorem}{25}
\begin{lemma}
Let $I \,\unlhd \, \mathbb{C}[\mathbf{x}]$ and let $0\notin S\subseteq \bbc[\bx]$ be finite. Then
\begin{enumerate}
\item $\sat{I} =\sat[(\mathrm{\Pi}  S)]{I}$ and 
 \item $\mathbf{V}_{\mathbb{C}}(I:S^{\infty})=\vsc{\sat[(\mathrm{\Pi}  S)]{I}}=\overline{\mathbf{V}_{\mathbb{C}}(I)\setminus \mathbf{V}_{\mathbb{C}}(\mathrm{\Pi}  S)}^{\mathbb{C}}$.
 \end{enumerate}
\end{lemma}
\begin{proof}
\begin{enumerate}

\item $\subseteq$: If $p\in \sat{I}$, then $(s_1^{k_1}s_2^{k_2}\cdots s_m^{k_m})p\in I$ for some $s_j\in S$ and $k_j\geq 0$. Multiplying by appropriate powers of the $s_j$ to equalize the exponents, we see that $(\mathrm{\Pi}  S)^{max\{k_1,\ldots,k_m\}}p\in I$ (recall that ideals are closed under multiplication by anything). Hence $p\in \sat[(\mathrm{\Pi}  S)]{I}$.

\smallskip
\noindent $\supseteq$: This direction holds because $\mathrm{\Pi}  S$ is in the multiplicative set generated by $S$.

\item Part 1 implies the first equality. For the last equality:

$\subseteq$: Let $\mathbf{a}\in \vsc{\sat[(\mathrm{\Pi}  S)]{I}}$. By definition of Zariski closure, it suffices to show that $q(\mathbf{a})=0$ for every $q\in \cx$ that vanishes everywhere on $\vsc{I}\setminus \vsc{\mathrm{\Pi}  S}$.

We claim that $q\in \sqrt{\sat[(\mathrm{\Pi}  S)]{I}}$. This is equivalent to $(\mathrm{\Pi}  S)q \in \sqrt{I}$. (By definition, $q\in \sqrt{\sat[(\mathrm{\Pi}  S)]{I}}$ means that $(\mathrm{\Pi}  S)^Mq^N \in I$ for some natural numbers $M,N$. Multiplying by appropriate powers of $\mathrm{\Pi}  S$ and $q$ as in the proof of part 1, we have $((\mathrm{\Pi}  S)q)^{max\{M,N\}}\in I$, whence  $(\mathrm{\Pi}  S)q \in \sqrt{I}$.) By the Nullstellensatz (Theorem \ref{nss}), we must prove that $(\mathrm{\Pi}  S)q$ vanishes at every point of $\vsc{I}$. But this is true since 
$q$ vanishes everywhere on $\vsc{I}$ except possibly at some points of $\vsc{\mathrm{\Pi}  S}$ (where $\mathrm{\Pi}  S$ vanishes by definition). Hence $q\in \sqrt{\sat[(\mathrm{\Pi}  S)]{I}}$ and so $q(\mathbf{a})=0$.

\smallskip
\noindent $\supseteq$: Since $\mathbf{V}_{\mathbb{C}}(I:(\mathrm{\Pi}  S)^{\infty})$ is $\bbc$-Zariski closed, it already contains $\overline{\mathbf{V}_{\mathbb{C}}(I)\setminus \mathbf{V}_{\mathbb{C}}(\mathrm{\Pi}  S)}^{\mathbb{C}}$ \emph{if} $\mathbf{V}_{\mathbb{C}}(I:(\mathrm{\Pi}  S)^{\infty})$ contains the smaller set $\mathbf{V}_{\mathbb{C}}(I)\setminus \mathbf{V}_{\mathbb{C}}(\mathrm{\Pi}  S)$. Hence  we only need to show that $p(\mathbf{b})=0$ for every $p\in \sat[(\mathrm{\Pi}  S)]{I}$ and $\mathbf{b}\in \vsc{I}\setminus \vsc{\mathrm{\Pi}  S}$. This holds because $(\mathrm{\Pi}  S)^Mp\in I$ for some $M$ and so $((\mathrm{\Pi}  S)^Mp)(\mathbf{b}) =0$; we have $(\mathrm{\Pi}  S)(\mathbf{b})\neq 0$ by assumption, so it must be that $p(\mathbf{b})=0$.
\end{enumerate}
\end{proof}

\setcounter{theorem}{26}
\begin{theorem}[{Hilbert's Nichtnullstellensatz;} {Table \ref{galois}}]
Let $A\subseteq \cx$, and let $0\notin S\subseteq \cx$ be finite. A polynomial $p\in\cx$ vanishes at every complex solution of $(A=0,S\neq 0)$ if and only if $p\in\sqrt{\sat{(A)}}$.
\end{theorem}

\begin{proof}
The $(\subseteq)$ case in the proof of Lemma \ref{satgeom} (2) (in particular, the argument there showing $q\in \sqrt{\sat[(\mathrm{\Pi}  S)]{I}}$) establishes that vanishing at every complex solution of ($A=0,S\neq 0$) implies membership in $\sqrt{\sat{(A)}}$. 

For the other direction, if $p\in\sqrt{\sat{(A)}}$, then $p^N\in \sat{(A)}$ for some natural number $N$. Substitute this $p^N$ for the $p$ that appears in the proof of the $(\supseteq)$ case in Lemma \ref{satgeom} (2). Follow the argument given there to show that $p^N$, and hence $p$, vanishes at every complex solution $\mathbf{b}$ of $(A=0,S\neq 0)$.  
\end{proof}

\setcounter{theorem}{40}
\begin{proposition}
For any $A\subseteq \rx$, there exists a finite set $B\subseteq \rx$ such that  $\mathbf{V}_{\mathbb{C}}(B)$ is totally real and $\mathbf{V}_{\bbr}(A)=\mathbf{V}_{\bbr}(B)$.
\end{proposition}

\begin{proof}

By Hilbert's basis theorem there is a finite generating set $B$ of $\rrad{(A)}$, the real radical of the ideal generated by $A$. The definition of the real radical implies that $\mathbf{V}_{\bbr}(A)=\mathbf{V}_{\bbr}(\rrad{(A)})= \vsr{B}$. 

To show that $\vsc{B}$ is totally real, we must establish $\cl{\vs[\bbr]{B}}{\bbc}= \vs[\bbc]{B}$.  By Lemma \ref{vanishclos} (2) it suffices to show that any $p\in \cx$ that vanishes everywhere on $\vsr{B}$ also vanishes on $\vsc{B}$ (i.e, $\vi{\bbc}{\vsr{B}}\subseteq \vi{\bbc}{\vsc{B}}$; the reverse containment is immediate).  Splitting $p=p_1+ip_2$ into real and imaginary parts (see the proof of Lemma \ref{restrictclos}), it follows from the real Nullstellensatz (Theorem \ref{rnss}) that both $p_1,p_2$ vanish everywhere on $\vsr{B}$ and so must belong to $\rrad{(B)}=(B)$ (since $(B)=\rrad{(A)}$ is real radical). 
Thus $p_1,p_2$ (and consequently $p$) vanish at each point of $\vsc{B}$.
\end{proof}

\setcounter{theorem}{43}
\begin{lemma}
Let $A,B \subseteq \rx$. If $\vsc{A}$ is totally real, then $\vsc{A}\setminus \vsc{B}$ is a totally real constructible set. \end{lemma}

\begin{proof}
By Lemma \ref{vanishclos} (2) it suffices to show that if $p\in \cx$ vanishes on $\vsr{A}\setminus \vsr{B}$, then $p$ vanishes on $\vsc{A}\setminus \vsc{B}$. Let $\mathbf{a}\in \vsc{A}\setminus \vsc{B}$ with the goal of showing $p(\mathbf{a})=0$. Note that for every $q\in B$ the product $pq$ vanishes on all of $\vsr{A}$ since $p$ vanishes on  $\vsr{A}\setminus \vsr{B}$ and $q$ vanishes on $\vsr{B}$. It follows by Lemma \ref{vanishclos} (1) that $pq$ vanishes on $\vsc{A}$ because $\vsc{A}$ is totally real by assumption and so $\vsr{A}$ is Zariski dense in $\vsc{A}$. Since $\mathbf{a}\in \vsc{A}\setminus \vsc{B}$, we have $\tilde{q}(\mathbf{a})\neq 0$ for some $\tilde{q}\in B$. Thus $p(\mathbf{a})\tilde{q}(\mathbf{a})=0$ and $p(\mathbf{a})=0$ as needed.
\end{proof}

\setcounter{theorem}{47}
\begin{theorem}[{Characterization of algebraic invariants }{\cite[Lemma 5]{Boreale20}}{\cite[Lemma 2.1]{christopher2007inverse}}]
Let $\mathbf{F}$ be a polynomial vector field and let $X\subseteq \mathbb{R}^n$ be real Zariski closed. The following are equivalent:
\begin{enumerate}
\item $X$ is an algebraic invariant set of $\mathbf{F}$.
\item For all $p_1,\ldots,p_m \in \rx$ such that $X=\mathbf{V}_{\bbr}(p_1,\ldots,p_m)$, we have $\mathcal{L}_{\mathbf{F}}^{(k)}(p_i)(\mathbf{a})=0$ for all $k\geq 0$, $1\leq i \leq m$, and $\mathbf{a} \in X$.
\item There exists $I\unlhd \rx$ such that $X=\mathbf{V}_{\bbr}(I)$ and $I$ is an invariant ideal with respect to $\mathbf{F}$ (i.e., $\ldf{I}\subseteq I$).
\end{enumerate}
\end{theorem}
\begin{proof}
\noindent $(1)\Rightarrow (2)$: Part (ii) of \cite[Lemma 2.1]{christopher2007inverse} proves that if $X=\mathbf{V}_{\bbr}(p_1,\ldots,p_m)$ is invariant with respect to $\mathbf{F}$, then $\mathcal{L}_{\mathbf{F}}(p_i)\in \sqrt{(p_1,\ldots, p_m)}$ for $1\leq i \leq m$. Hence $\mathcal{L}_{\mathbf{F}}(p_i)$ vanishes at every point of $X$. Reapply part (ii) of \cite[Lemma 2.1]{christopher2007inverse} to $\mathbf{V}_{\bbr}(p_1,\ldots,p_m,$ $\mathcal{L}_{\mathbf{F}}(p_1),$ $\ldots, \mathcal{L}_{\mathbf{F}}(p_m))$, which is still the same invariant $X$. This shows that second-order Lie derivatives of the $p_i$ vanish on $X$. Continue the process for any order $k$.  (See also \cite[Thm. 1]{GhorbalP14} for a variant of $(1)\Leftrightarrow (2)$.)
\smallskip

\noindent $(2)\Rightarrow (3)$: Assuming (2), if $A=\{p_1, \ldots, p_m\}$ then we have $\mathbf{V}_{\bbr}(p_1,\ldots,p_m)=\mathbf{V}_{\bbr}((\mathcal{L}^*_{\mathbf{F}}(A)))$; as noted above, $(\mathcal{L}^*_{\mathbf{F}}(A))$ is an invariant ideal.
\smallskip

\noindent $(1)\Leftrightarrow (3)$: This is precisely \cite[Lemma 5]{Boreale20}.
\end{proof}

\setcounter{theorem}{48}
\begin{corollary}
Let $\mathbf{F}$ be a polynomial vector field and let $X=\vsr{p_1,\ldots, p_m}$. If $\ldf{p_i}\in (p_1,\ldots,p_m)$ for all $1\leq i\leq m$, then $X$ is an algebraic invariant set of $\mathbf{F}$.
\end{corollary}

\begin{proof}
By the sum and product rules (Lemma \ref{ldderiv}), for $I=(p_1,\ldots, p_m)$ to be an invariant ideal (and hence satisfy statement 3 of Theorem \ref{fullinvarcrit}) it suffices that the Lie derivatives of the generators $p_1,\ldots, p_m$ belong to $I$.
\end{proof}

\setcounter{theorem}{70}
\begin{proposition}
Let ${\tt DiffPseudoDiv}$$(p,q)=r$. Then for some $\widetilde{s}$ a product of factors of $s_q$, $\widetilde{i}$ a product of factors of $i_q$, and  $\widetilde{q}\in [q]$ we have $(\widetilde{s})(\widetilde{i})p-\widetilde{q} = r$. In particular, we have $p\in \sat[\{s_q,i_q\}]{[q]}$ if $r$ is 0. 

In the nondifferential case (or if $p$ contains no proper derivatives of the leader $l_q$ of $q$) we have $(\widetilde{i})p-\widetilde{q} = r$, with $\widetilde{q}\in (q)$ now, and $p\in \sat[\{i_q\}]{(q)}$ if $r$ is 0.  
\end{proposition}

\begin{proof}
This follows from the form of operations in ${\tt DiffPseudoDiv}$. If $p$ contains a highest proper derivative $(l_{q})^{(k)}$ of the leader of $q$, then the first intermediate pseudoremainder (from the call  ${\tt DiffPseudoDiv}$$(p,q^{(k)})$ using step 2 is 
 \begin{equation*}\label{singlesteprep} \widetilde{r}:= (s_q/g)p -(c/{g})((l_{q})^{(k)})^{d-1}q^{(k)},\end{equation*}
 
\noindent where $s_q$ is the initial of $q^{(k)}$, $d$ is the highest power of $(l_{q})^{(k)}$ in $p$,  $c$ is the coefficient of $((l_{q})^{(k)})^d$ in $p$,  and $g$ is the GCD of $s_q$ and $c$. Note that $\widetilde{r}$ has the claimed property. (Consequently, Proposition \ref{pdivsat} also applies to a single step of pseudodivision as defined in Remark \ref{premred}.) By induction on $k$ and $d$ we may assume that $(s_q^*)(i_q^*)\widetilde{r} -q^* = r$ for some $s_q^*$ a product of factors of $s_q$, $i_q^*$ a product of factors of $i_q$, and $q^*\in [q]$. 
Multiplying $\widetilde{r}$ by $(s_q^*)(i_q^*)$ and subtracting $q^*$, the two preceding equations imply that 

\[(s_q^*)(s_q/g)(i_q^*)p -((s_q^*)(i_q^*)(c/{g})((l_{q})^{(k)})^{d-1}q^{(k)}+q^*)=r,\]

\noindent which has the correct form. 

Analogous arguments hold for the case that $p$ contains no proper derivative of $l_q$ and the nondifferential case. The assertions for $r=0$ follow from the definition of a saturation ideal. (Multiply both sides by appropriate factors of $s_q,i_q$ so that  $\widetilde{s}$ and $\widetilde{i}$ become powers of $s_q,i_q$ and not just arbitrary products of their factors.) 
\end{proof}

\setcounter{theorem}{99}
\begin{lemma}
 For all $j\in \bbn$, the $j$-th Fibonacci number $F_j$ satisfies $F_j< 2^j$.
 \end{lemma}
 \begin{proof}
 \begin{align*}
F_{j} \,&= \, \frac{\phi_1^{j} - \phi_2^{j}}{\sqrt{5}} \,= \, \frac{\left(\frac{1+\sqrt{5}}{2}\right)^{j} - \left(\frac{1-\sqrt{5}}{2}\right)^{j}}{\sqrt{5}}
&& \hspace{.5cm} \text{(Binet's formula, Proposition \ref{Binet})} \\
&< \frac{2^j +1}{\sqrt{5}} && \hspace{.5cm} \text{(} \phi_1<2, |\phi_2| <1\text{)}\\
&< 2^j. && \hspace{.5cm} \text{(}  \sqrt{5}>2, 1\leq 2^j \text{)}
\end{align*}
\end{proof}

\setcounter{theorem}{102}
\begin{lemma}
All natural numbers $d,n\geq 1$ satisfy $2^{4\cdot \text{Tower}(d,n)} \leq \text{Tower}(d,n+1)$.
\end{lemma}

\begin{proof}
First let $n=1$. Then 

\[2^{4\cdot \text{Tower}(d,1)} = 2^{4\cdot 2^{3d+1}} 
= 2^{2^{3d+1+2}}=  2^{2^{3d+3}}= \text{Tower}(d,2).\]

\noindent Now let $n>1$. We find

\begin{align*}
 2^{4\cdot \text{Tower}(d,n)} &=2^{4\cdot 2^{2^{\iddots ^{{3d+n+1}}}} }\\
 &= 2^{2^{\left(2^{\iddots ^{{3d+n+1}}}+2\right)}} \\
 &\leq  2^{2^{\left(2\cdot 2^{\iddots ^{{3d+n+1}}}\right)}}.
\end{align*}

\noindent The net effect is to add 1 to the exponent of the third 2 from the bottom of the tower. 
Repeat the cycle of ``adding 1 to an exponent is less than multiplying the exponent by 2, which adds 1 to the next exponent up" as long as the exponent is 2. This propagates addition of 1 up the tower to the final exponent, whence

\begin{align*}
 2^{4\cdot \text{Tower}(d,n)} &\leq 2^{ 2^{2^{\iddots ^{{3d+n+2}}}} }=\text{Tower}(d,n+1).
\end{align*}
\end{proof}

\setcounter{theorem}{103}
\begin{theorem}[{Explicit bounds on degree complexity of ${\tt Triangulate}$}]
Let $T(d,n)$ be the recursive function from Theorem \ref{cxtyTri} that bounds the degree complexity of the output of ${\tt Triangulate}$. The following inequality holds for natural numbers $d,n\geq 1$:

\[
T(d,n)<\text{Tower}(d,n).
\]

\noindent 
(See Notation \ref{deftow} for the definition of $\text{Tower}(d,n)$.)
\end{theorem}
\begin{proof}
We induct on $n$, starting with $n=1$. By definition, $T(d,1)= (F_{2\cdot T(d,0) +1})\cdot T(d,0)=F_{2d+1}d$. We observe the following:

\begin{align*}
F_{2d+1} d  &< 2^{2d+1} d && \hspace{.5cm} \text{(Lemma \ref{fibbd})}\\
&<  2^{2d+1}2^d \, = 2^{3d+1} \,=\text{Tower}(d,1).\\
\end{align*}
\noindent This proves the base case. Now suppose the inequality holds for $n=k$; i.e., $T(d,k)< \text{Tower}(d,k)$. 
The inductive case is similar but we now have an additional power of 2, allowing us to use Lemma \ref{doubledouble}. 
By definition, $T(d,k+1)= (F_{2\cdot T(d,k) +1})\cdot T(d,k)$. We obtain the following:

\begin{align*}
 (F_{2\cdot T(d,k) +1})\cdot T(d,k)  &< 2^{\left(2\cdot \text{Tower}(d,k)+1\right)} \cdot 2^{\log_2{\text{Tower}(d,k)}} && \hspace{.5cm} \parbox{3cm}{(Lemma \ref{fibbd} and inductive hypothesis)} \\
 &= 2^{\left(2\cdot \text{Tower}(d,k)+1 +\log_2{\text{Tower}(d,k)}\right)}  &&\\
 &<  2^{\left(2\cdot  2^{\log_2{\text{Tower}(d,k)}}+2\log_2{\text{Tower}(d,k)}\right)}  && \hspace{.5cm} \parbox{3cm}{($\log_2{\text{Tower}(d,k)}>2$)}\\
 &< 2^{\left(3\cdot  2^{\log_2{\text{Tower}(d,k)}} \right)} && \hspace{.5cm} \text{(Lemma \ref{doubledouble})}\\
&= 2^{\left(3\cdot \text{Tower}(d,k)\right)}\\
 &< \text{Tower}(d,k+1). &&\hspace{.5cm} \text{(Lemma \ref{3tow})}
\end{align*}

\noindent This completes the proof.
\end{proof}

\setcounter{theorem}{106}
\begin{theorem}[{Explicit bounds on degree complexity of $\rgaexp$}]
Let $R(d,n)$ be the recursive function from Theorem \ref{cxtyRGA} that bounds the degree complexity of the output of $\rgaexp$. The following inequality holds for natural numbers $d,n\geq 1$:

\[
R(d,n)<\text{RTower}(d,n).
\]

\noindent 
(See Notation \ref{defrtow} for the definition of $\text{RTower}(d,n)$.)
\end{theorem}

\begin{proof}
We use induction on $k$ for $0\leq k\leq n$ to bound $R(d,n)_k$ (recall that $R(d,n):= R(d,n)_n$. For $k=0$ we have $R(d,n)_0 := T(d,n)<\text{Tower}(d,n)$ by Theorem \ref{expTri}; this equals $\text{RTower}(d,n)_0$, which is a tower with $0=0(n-1)$ copies of 2 followed by exponent $\text{Tower}(d,n+0)$.

Now let $0\leq k<n$ and suppose $R(d,n)_k<\text{RTower}(d,n)_k$. We show that $R(d,n)_{k+1}<\text{RTower}(d,n)_{k+1}$.
\begin{align*}
R(d,n)_{k+1} &= T(d+R(d,n)_k,n) && \\
&< \text{Tower}(d+R(d,n)_k,n) && \hspace{.5cm} \text{(Theorem \ref{expTri})}\\
&<\text{Tower}(d+\text{RTower}(d,n)_k,n) && \hspace{.5cm} \parbox{5cm}{(inductive hypothesis; \text{Tower} is an increasing function of both inputs)}\\
&=2^{{\iddots ^{{3(d+\text{RTower}(d,n)_k)+\alpha}}}} && \hspace{.5cm} \parbox{5cm}{($n$ copies of 2 (hence at least one); $\alpha:=1$ if $n=1$ and $\alpha:=n+1$ if $n>1$)}\\
&=2^{{\iddots ^{{3\cdot \text{RTower}(d,n)_k+3d +\alpha}}}} && \hspace{.5cm}\\ 
&<2^{{\iddots ^{{4\cdot \text{RTower}(d,n)_k}}}} && \hspace{.5cm}  \parbox{5cm}{(unfold the definitions to see that $\text{RTower}(d,n)_k>3d+\alpha$)}\\ 
&= 2^{{\iddots ^{{4\cdot \left(2^{{\iddots ^{{\text{Tower}(d,n+k)}}}}\right)}}}} && \hspace{.5cm} \parbox{5cm} {(where the tower replacing $\text{RTower}(d,n)_k$ has $k(n-1)$ copies of 2 followed by final exponent $\text{Tower}(d,n+k))$}\\
&\leq 2^{{\iddots ^{{\left(2^{{\iddots ^{{4\cdot \text{Tower}(d,n+k)}}}}\right)}}}} && \hspace{.5cm} \parbox{5cm}{(clear since $\text{Tower}(d,n+k)>1$)}\\
&\leq 2^{{\iddots ^{\text{Tower}(d,n+k+1)}}} && \hspace{.5cm} \parbox{5cm}{(Lemma \ref{3tow} applies because the lower tower has at least one copy of 2; this reduces the copies of 2 by one and leaves $n-1+k(n-1)=(k+1)(n-1)$ copies)}\\
&= \text{RTower}(d,n)_{k+1}.
\end{align*}

\end{proof}

\begin{thebibliography}{100}
\expandafter\ifx\csname url\endcsname\relax
  \def\url#1{\texttt{#1}}\fi
\expandafter\ifx\csname urlprefix\endcsname\relax\def\urlprefix{URL }\fi
\expandafter\ifx\csname href\endcsname\relax
  \def\href#1#2{#2} \def\path#1{#1}\fi

\bibitem{Platzer10}
A.~Platzer, Logical Analysis of Hybrid Systems: Proving Theorems for Complex
  Dynamics, Springer, Heidelberg, 2010.
\newblock \href {https://doi.org/10.1007/978-3-642-14509-4}
  {\path{doi:10.1007/978-3-642-14509-4}}.

\bibitem{Platzer18}
A.~Platzer, \href{https://doi.org/10.1007/978-3-319-63588-0}{Logical
  Foundations of Cyber-Physical Systems}, Springer, 2018.
\newblock \href {https://doi.org/10.1007/978-3-319-63588-0}
  {\path{doi:10.1007/978-3-319-63588-0}}.
\newline\urlprefix\url{https://doi.org/10.1007/978-3-319-63588-0}

\bibitem{goriely2001integrability}
A.~Goriely, Integrability and nonintegrability of dynamical systems, Vol.~19,
  World Scientific, 2001.

\bibitem{PlatzerT20}
A.~Platzer, Y.~K. Tan, Differential equation invariance axiomatization, J.
  {ACM} 67~(1) (2020) 6:1--6:66.
\newblock \href {https://doi.org/10.1145/3380825} {\path{doi:10.1145/3380825}}.

\bibitem{TanP21}
Y.~K. Tan, A.~Platzer, Deductive stability proofs for ordinary differential
  equations, in: {TACAS} {(2)}, Vol. 12652 of Lecture Notes in Computer
  Science, Springer, 2021, pp. 181--199.

\bibitem{tan2021axiomatic}
Y.~K. Tan, A.~Platzer, An axiomatic approach to existence and liveness for
  differential equations, Formal Aspects of Computing (2021) 1--58.

\bibitem{zhan2024reset}
N.~Zhan, H.~Su, M.~Yang, B.~Gu, Reset controller synthesis: a
  correct-by-construction way to the design of {CPS}, Research Directions:
  Cyber-Physical Systems 2 (2024) e7.

\bibitem{platzer2024axiomatization}
A.~Platzer, L.~Qian, Axiomatization of compact initial value problems: Open
  properties, arXiv preprint arXiv:2410.13836 (2024).

\bibitem{DBLP:conf/hybrid/Rodriguez-CarbonellT05}
E.~Rodr{\'{\i}}guez-Carbonell, A.~Tiwari, Generating polynomial invariants for
  hybrid systems, in: M.~Morari, L.~Thiele (Eds.), HSCC, Vol. 3414 of LNCS,
  Springer, 2005, pp. 590--605.

\bibitem{DBLP:journals/fmsd/SankaranarayananSM08}
S.~Sankaranarayanan, H.~B. Sipma, Z.~Manna, Constructing invariants for hybrid
  systems, Form. Methods Syst. Des. 32~(1) (2008) 25--55.
\newblock \href {https://doi.org/10.1007/s10703-007-0046-1}
  {\path{doi:10.1007/s10703-007-0046-1}}.

\bibitem{DBLP:conf/hybrid/Tiwari08}
A.~Tiwari, Generating box invariants, in: M.~Egerstedt, B.~Mishra (Eds.), HSCC,
  Vol. 4981 of LNCS, Springer, 2008, pp. 658--661.

\bibitem{DBLP:journals/fmsd/PlatzerC09}
A.~Platzer, E.~M. Clarke, Computing differential invariants of hybrid systems
  as fixedpoints, Form. Methods Syst. Des. 35~(1) (2009) 98--120.
\newblock \href {https://doi.org/10.1007/s10703-009-0079-8}
  {\path{doi:10.1007/s10703-009-0079-8}}.

\bibitem{DBLP:conf/hybrid/Sankaranarayanan10}
S.~Sankaranarayanan, Automatic invariant generation for hybrid systems using
  ideal fixed points, in: K.~H. Johansson, W.~Yi (Eds.), HSCC, ACM, 2010, pp.
  221--230.
\newblock \href {https://doi.org/10.1145/1755952.1755984}
  {\path{doi:10.1145/1755952.1755984}}.

\bibitem{DBLP:conf/cav/SankaranarayananT11}
S.~Sankaranarayanan, A.~Tiwari, Relational abstractions for continuous and
  hybrid systems, in: G.~Gopalakrishnan, S.~Qadeer (Eds.), CAV, Vol. 6806 of
  LNCS, Springer, Berlin, 2011, pp. 686--702.
\newblock \href {https://doi.org/10.1007/978-3-642-22110-1_56}
  {\path{doi:10.1007/978-3-642-22110-1_56}}.

\bibitem{DBLP:journals/sttt/TalyGT11}
A.~Taly, S.~Gulwani, A.~Tiwari, Synthesizing switching logic using constraint
  solving, STTT 13~(6) (2011) 519--535.
\newblock \href {https://doi.org/10.1007/s10009-010-0172-8}
  {\path{doi:10.1007/s10009-010-0172-8}}.

\bibitem{DBLP:conf/itp/Platzer12}
A.~Platzer, A differential operator approach to equational differential
  invariants, in: L.~Beringer, A.~Felty (Eds.), ITP, Vol. 7406 of LNCS,
  Springer, 2012, pp. 28--48.

\bibitem{DBLP:conf/etsc/WuZ12}
B.~Wu, X.~Zou, Computing invariants for hybrid systems, in: Electronic
  System-Integration Technology Conference (ESTC), 2012 4th, 2012, pp.
  203--206.
\newblock \href {https://doi.org/10.1109/ESTC.2012.6485572}
  {\path{doi:10.1109/ESTC.2012.6485572}}.

\bibitem{DBLP:conf/tacas/GhorbalP14}
K.~Ghorbal, A.~Platzer, Characterizing algebraic invariants by differential
  radical invariants, in: E.~{\'A}brah{\'a}m, K.~Havelund (Eds.), TACAS, Vol.
  8413 of LNCS, Springer, 2014, pp. 279--294.
\newblock \href {https://doi.org/10.1007/978-3-642-54862-8_19}
  {\path{doi:10.1007/978-3-642-54862-8_19}}.

\bibitem{DBLP:conf/vmcai/SogokonGJP16}
A.~Sogokon, K.~Ghorbal, P.~B. Jackson, A.~Platzer,
  \href{https://doi.org/10.1007/978-3-662-49122-5\_13}{A method for invariant
  generation for polynomial continuous systems}, in: B.~Jobstmann, K.~R.~M.
  Leino (Eds.), Verification, Model Checking, and Abstract Interpretation -
  17th International Conference, {VMCAI} 2016, St. Petersburg, FL, USA, January
  17-19, 2016. Proceedings, Vol. 9583 of Lecture Notes in Computer Science,
  Springer, 2016, pp. 268--288.
\newblock \href {https://doi.org/10.1007/978-3-662-49122-5\_13}
  {\path{doi:10.1007/978-3-662-49122-5\_13}}.
\newline\urlprefix\url{https://doi.org/10.1007/978-3-662-49122-5\_13}

\bibitem{DBLP:conf/sas/RouxVS16}
P.~Roux, Y.~Voronin, S.~Sankaranarayanan, Validating numerical semidefinite
  programming solvers for polynomial invariants, in: X.~Rival (Ed.), Static
  Analysis - 23rd International Symposium, {SAS} 2016, Edinburgh, UK, September
  8-10, 2016, Proceedings, Vol. 9837 of LNCS, Springer, 2016, pp. 424--446.
\newblock \href {https://doi.org/10.1007/978-3-662-53413-7_21}
  {\path{doi:10.1007/978-3-662-53413-7_21}}.

\bibitem{Sankaranarayanan16}
S.~Sankaranarayanan, Change-of-bases abstractions for non-linear hybrid
  systems, Nonlinear Analysis: Hybrid Systems 19 (2016) 107 -- 133.

\bibitem{DBLP:conf/hybrid/KongBSJH17}
H.~Kong, S.~Bogomolov, C.~Schilling, Y.~Jiang, T.~A. Henzinger, Safety
  verification of nonlinear hybrid systems based on invariant clusters, in:
  G.~Frehse, S.~Mitra (Eds.), Proceedings of the 20th International Conference
  on Hybrid Systems: Computation and Control, {HSCC} 2017, Pittsburgh, PA, USA,
  April 18-20, 2017, {ACM}, 2017, pp. 163--172.

\bibitem{DBLP:conf/sofsem/Boreale18}
M.~Boreale, Complete algorithms for algebraic strongest postconditions and
  weakest preconditions in polynomial ode's, in: A.~M. Tjoa, L.~Bellatreche,
  S.~Biffl, J.~van Leeuwen, J.~Wiedermann (Eds.), {SOFSEM} 2018: Theory and
  Practice of Computer Science - 44th International Conference on Current
  Trends in Theory and Practice of Computer Science, Krems, Austria, January 29
  - February 2, 2018, Proceedings, Vol. 10706 of LNCS, Springer, 2018, pp.
  442--455.

\bibitem{majumdar2020algebraic}
R.~Majumdar, J.~Ouaknine, A.~Pouly, J.~Worrell, Algebraic invariants for linear
  hybrid automata, in: 31st International Conference on Concurrency Theory
  (CONCUR), 2020.

\bibitem{DBLP:journals/jsc/GhorbalS22}
K.~Ghorbal, A.~Sogokon,
  \href{https://doi.org/10.1016/j.jsc.2022.01.004}{Characterizing positively
  invariant sets: Inductive and topological methods}, J. Symb. Comput. 113
  (2022) 1--28.
\newblock \href {https://doi.org/10.1016/j.jsc.2022.01.004}
  {\path{doi:10.1016/j.jsc.2022.01.004}}.
\newline\urlprefix\url{https://doi.org/10.1016/j.jsc.2022.01.004}

\bibitem{zhao2023formal}
H.~Zhao, N.~Qi, L.~Dehbi, X.~Zeng, Z.~Yang, Formal synthesis of neural barrier
  certificates for continuous systems via counterexample guided learning, ACM
  Transactions on Embedded Computing Systems 22~(5s) (2023) 1--21.

\bibitem{DBLP:journals/fmsd/SogokonMTCP22}
A.~Sogokon, S.~Mitsch, Y.~K. Tan, K.~Cordwell, A.~Platzer, Pegasus: Sound
  continuous invariant generation, Form. Methods Syst. Des. 58~(1) (2022)
  5--41, special issue for selected papers from FM'19.
\newblock \href {https://doi.org/10.1007/s10703-020-00355-z}
  {\path{doi:10.1007/s10703-020-00355-z}}.

\bibitem{Wang01}
D.~Wang, \href{https://doi.org/10.1007/978-3-7091-6202-6}{Elimination Methods},
  Texts {\&} Monographs in Symbolic Computation, Springer, 2001.
\newblock \href {https://doi.org/10.1007/978-3-7091-6202-6}
  {\path{doi:10.1007/978-3-7091-6202-6}}.
\newline\urlprefix\url{https://doi.org/10.1007/978-3-7091-6202-6}

\bibitem{bochnak1998real}
J.~Bochnak, M.~Coste, M.-F. Roy, Real algebraic geometry, Vol.~36 of Ergebnisse
  der Mathematikund ihrer Grenzgebiete, Springer, 1998.

\bibitem{MR0568864}
E.~R. Kolchin, Differential algebra and algebraic groups, Pure and Applied
  Mathematics, Vol. 54, Academic Press, New York-London, 1973.

\bibitem{LiY19}
W.~Li, C.~Yuan, \href{https://doi.org/10.1007/s11424-019-8367-x}{Elimination
  theory in differential and difference algebra}, J. Syst. Sci. Complex. 32~(1)
  (2019) 287--316.
\newblock \href {https://doi.org/10.1007/s11424-019-8367-x}
  {\path{doi:10.1007/s11424-019-8367-x}}.
\newline\urlprefix\url{https://doi.org/10.1007/s11424-019-8367-x}

\bibitem{Pardo95}
L.~M. Pardo, How lower and upper complexity bounds meet in elimination theory,
  in: {AAECC}, Vol. 948 of Lecture Notes in Computer Science, Springer, 1995,
  pp. 33--69.

\bibitem{mishra}
B.~Mishra, Algorithmic algebra, Texts and Monographs in Computer Science,
  Springer-Verlag, New York, 1993.

\bibitem{clo1_4}
D.~Cox, J.~Little, D.~O'Shea, Ideals, varieties, and algorithms, 4th Edition,
  Undergraduate Texts in Mathematics, Springer, 2015.

\bibitem{gallo1991efficient}
G.~Gallo, B.~Mishra, Efficient algorithms and bounds for {W}u-{R}itt
  characteristic sets, in: Effective methods in algebraic geometry, Springer,
  1991, pp. 119--142.

\bibitem{mayr1982complexity}
E.~W. Mayr, A.~R. Meyer, The complexity of the word problems for commutative
  semigroups and polynomial ideals, Advances in Mathematics 46~(3) (1982)
  305--329.

\bibitem{Boreale20}
M.~Boreale, \href{https://doi.org/10.1016/j.scico.2020.102441}{Complete
  algorithms for algebraic strongest postconditions and weakest preconditions
  in polynomial {ODE}s}, Sci. Comput. Program. 193 (2020) 102441.
\newblock \href {https://doi.org/10.1016/j.scico.2020.102441}
  {\path{doi:10.1016/j.scico.2020.102441}}.
\newline\urlprefix\url{https://doi.org/10.1016/j.scico.2020.102441}

\bibitem{becker1993computation}
E.~Becker, R.~Neuhaus, Computation of real radicals of polynomial ideals, in:
  Computational algebraic geometry, Springer, 1993, pp. 1--20.

\bibitem{MeniniPT21}
L.~Menini, C.~Possieri, A.~Tornamb{\`{e}}, Algebraic Geometry for Robotics and
  Control Theory, World Scientific, 2021.

\bibitem{richardson1969some}
D.~Richardson, Some undecidable problems involving elementary functions of a
  real variable, The Journal of Symbolic Logic 33~(4) (1969) 514--520.

\bibitem{BoulierLOP95}
F.~Boulier, D.~Lazard, F.~Ollivier, M.~Petitot, Representation for the radical
  of a finitely generated differential ideal, in: {ISSAC}, {ACM}, 1995, pp.
  158--166.

\bibitem{MR2556127}
O.~Golubitsky, M.~Kondratieva, A.~Ovchinnikov, A.~Sz\'{a}nt\'{o},
  \href{http://dx.doi.org/10.1016/j.jalgebra.2009.05.032}{A bound for orders in
  differential {N}ullstellensatz}, J. Algebra 322~(11) (2009) 3852--3877.
\newblock \href {https://doi.org/10.1016/j.jalgebra.2009.05.032}
  {\path{doi:10.1016/j.jalgebra.2009.05.032}}.
\newline\urlprefix\url{http://dx.doi.org/10.1016/j.jalgebra.2009.05.032}

\bibitem{Lang:algebra}
S.~Lang, Algebra, 3rd Edition, Vol. 211 of Graduate Texts in Mathematics,
  Springer-Verlag, New York, 2002.

\bibitem{Rudin87}
W.~Rudin, Real and Complex Analysis, 3rd Ed., McGraw-Hill, Inc., USA, 1987.

\bibitem{mac2013categories}
S.~Mac~Lane, Categories for the working mathematician, 2nd Edition, Vol.~5 of
  Graduate {T}exts in {M}athematics, Springer Science \& Business Media, 2013.

\bibitem{ore1944galois}
O.~Ore, Galois connexions, Transactions of the American Mathematical Society
  55~(3) (1944) 493--513.

\bibitem{shaf3ed}
I.~R. Shafarevich, Basic algebraic geometry 1, 3rd Edition, Springer-Verlag,
  Berlin Heidelberg, 2013, translated from the 2007 Russian edition by Miles
  Reid.
\newblock \href {https://doi.org/https://doi.org/10.1007/978-3-642-37956-7}
  {\path{doi:https://doi.org/10.1007/978-3-642-37956-7}}.

\bibitem{michalek2021invitation}
M.~Micha{\l}ek, B.~Sturmfels, Invitation to nonlinear algebra, Vol. 211 of
  Graduate Studies in Mathematics, American Mathematical Soc., 2021.

\bibitem{rga}
F.~Boulier, F.~Ollivier, D.~Lazard, M.~Petitot,
  \href{http://dx.doi.org/10.1007/s00200-009-0091-7}{Computing representations
  for radicals of finitely generated differential ideals}, Appl. Algebra Engrg.
  Comm. Comput. 20~(1) (2009) 73--121.
\newblock \href {https://doi.org/10.1007/s00200-009-0091-7}
  {\path{doi:10.1007/s00200-009-0091-7}}.
\newline\urlprefix\url{http://dx.doi.org/10.1007/s00200-009-0091-7}

\bibitem{sottile19}
F.~Sottile, Real algebraic geometry for geometric constraints, in: M.~Sitharam,
  A.~S. John, J.~Sidman (Eds.), Handbook of geometric constraint systems
  principles, CRC Press, 2019, pp. 273--286.

\bibitem{sander1996aspects}
T.~Sander, Aspects of algebraic geometry over non algebraically closed fields,
  Technical report, International Comp. Sci. Institute (1996).

\bibitem{marshall2008positive}
M.~Marshall, Positive polynomials and sums of squares, no. 146, American
  Mathematical Soc., 2008.

\bibitem{HarrisHS20}
K.~Harris, J.~D. Hauenstein, {\'{A}}.~Sz{\'{a}}nt{\'{o}}, Smooth points on
  semi-algebraic sets, {ACM} Commun. Comput. Algebra 54~(3) (2020) 105--108.

\bibitem{Hartshorne}
R.~Hartshorne, Algebraic Geometry, Graduate Texts in Mathemematics 52.
  Springer-Verlag, 1977.

\bibitem{blekherman2021sums}
G.~Blekherman, R.~Sinn, G.~G. Smith, M.~Velasco, Sums of squares: A real
  projective story, Notices of the American Mathematical Society 68~(5) (2021).

\bibitem{ContiT95}
P.~Conti, C.~Traverso, A case of automatic theorem proving in euclidean
  geometry: the {M}aclane 8\({}_{\mbox{3}}\) theorem, in: {AAECC}, Vol. 948 of
  Lecture Notes in Computer Science, Springer, 1995, pp. 183--193.

\bibitem{Wang92a}
D.~Wang, Irreducible decomposition of algebraic varieties via characteristic
  sets and {G}r{\"{o}}bner bases, Comput. Aided Geom. Des. 9~(6) (1992)
  471--484.

\bibitem{hubbard2015vector}
J.~H. Hubbard, B.~B. Hubbard, Vector calculus, linear algebra, and differential
  forms: a unified approach, Matrix Editions, 2015.

\bibitem{walter98}
W.~Walter, Ordinary Differential Equations, 1st Edition, Graduate Texts in
  Mathematics, 182, Springer New York, 1998.

\bibitem{christopher2007inverse}
C.~Christopher, J.~Llibre, C.~Pantazi, S.~Walcher, Inverse problems for
  multiple invariant curves, Proceedings of the Royal Society of Edinburgh
  Section A: Mathematics 137~(6) (2007) 1197--1226.

\bibitem{Ritt}
J.~F. Ritt, Differential Algebra, Dover Publications, New York, 1950.

\bibitem{KolchinDAAG}
E.~R. Kolchin, Differential Algebra and Algebraic Groups, Academic Press, New
  York, 1976.

\bibitem{marker2005model}
D.~Marker, Model theory of differential fields, Model Theory of Fields, eds. D.
  Marker, M. Messmer, \& A. Pillay 5 (2005) 41--109.

\bibitem{robinson1959concept}
A.~Robinson, On the concept of a differentially closed field, Office of
  Scientific Research, US Air Force, 1959.

\bibitem{marker2005fields}
D.~Marker, Introduction to the model theory of fields, Model Theory of Fields,
  eds. D. Marker, M. Messmer, \& A. Pillay 5 (2005) 1--37.

\bibitem{enderton:set_thy}
H.~B. Enderton, Elements of set theory, Academic Press [Harcourt Brace
  Jovanovich Publishers], New York, 1977.

\bibitem{sit2002ritt}
W.~Y. Sit, The {R}itt--{K}olchin theory for differential polynomials, in: W.~Y.
  Sit (Ed.), Differential algebra and related topics, World Scientific, 2002,
  pp. 1--70.

\bibitem{boulier2006differential}
F.~Boulier, Differential elimination and biological modelling, in: Workshop D2.
  2 of the Special Semester on Gr{\"o}bner Bases and Related Methods, Vol.~2,
  de Gruyter, 2006, pp. 111--139.

\bibitem{hubert2001notesa}
E.~Hubert, Notes on triangular sets and triangulation-decomposition algorithms
  i: Polynomial systems, in: International Conference on Symbolic and Numerical
  Scientific Computation, Springer, 2001, pp. 1--39.

\bibitem{hubert2001notesb}
E.~Hubert, Notes on triangular sets and triangulation-decomposition algorithms
  ii: Differential systems, in: International Conference on Symbolic and
  Numerical Scientific Computation, Springer, 2001, pp. 40--87.

\bibitem{BoulierLMP21}
F.~Boulier, F.~Lemaire, M.~M. Maza, A.~Poteaux, A short contribution to the
  theory of regular chains, Math. Comput. Sci. 15~(2) (2021) 177--188.

\bibitem{AubryLM99}
P.~Aubry, D.~Lazard, M.~M. Maza, On the theories of triangular sets, J. Symb.
  Comput. 28~(1-2) (1999) 105--124.

\bibitem{lorenz1963deterministic}
E.~N. Lorenz, Deterministic nonperiodic flow, Journal of atmospheric sciences
  20~(2) (1963) 130--141.

\bibitem{sparrow2012lorenz}
C.~Sparrow, The {L}orenz equations: bifurcations, chaos, and strange
  attractors, Vol.~41, Springer Science \& Business Media, 2012.

\bibitem{llibre2002invariant}
J.~Llibre, X.~Zhang, Invariant algebraic surfaces of the {L}orenz system,
  Journal of Mathematical Physics 43~(3) (2002) 1622--1645.

\bibitem{swinnerton2002invariant}
S.~P. Swinnerton-Dyer, The invariant algebraic surfaces of the {L}orenz system,
  in: Mathematical Proceedings of the Cambridge Philosophical Society, Vol.
  132, Cambridge University Press, 2002, pp. 385--393.

\bibitem{SogokonMTCP21}
A.~Sogokon, S.~Mitsch, Y.~K. Tan, K.~Cordwell, A.~Platzer, Pegasus: sound
  continuous invariant generation, Formal Methods Syst. Des. 58~(1-2) (2021)
  5--41.

\bibitem{BoulierL10}
F.~Boulier, F.~Lemaire, A normal form algorithm for regular differential
  chains, Math. Comput. Sci. 4~(2-3) (2010) 185--201.

\bibitem{Hubert99}
E.~Hubert, Essential components of an algebraic differential equation, J. Symb.
  Comput. 28~(4-5) (1999) 657--680.

\bibitem{umirbaev2016algorithmic}
U.~Umirbaev, Algorithmic problems for differential polynomial algebras, Journal
  of Algebra 455 (2016) 77--92.

\bibitem{mapleDiffAlg}
M.~({W}aterloo~{M}aple {I}nc.), Overview of the {D}ifferential{A}lgebra package
  (online documentation),
  \url{https://www.maplesoft.com/support/help/maple/view.aspx?path=DifferentialAlgebra},
  Accessed: 2021-09-01.

\bibitem{blad}
F.~Boulier, {BLAD} software package,
  \url{https://www.lifl.fr/%7Eboulier/pmwiki/pmwiki.php?n=Main.BLAD },
  Accessed: 2021-09-01.

\bibitem{fakouri2018new}
S.~Fakouri, S.~Rahmany, A.~Basiri, A new algorithm for computing regular
  representations for radicals of parametric differential ideals, Cogent
  Mathematics \& Statistics 5~(1) (2018) 1507131.

\bibitem{HashemiT14}
A.~Hashemi, Z.~Touraji, An improvement of {R}osenfeld-{G}r{\"{o}}bner
  algorithm, in: {ICMS}, Vol. 8592 of Lecture Notes in Computer Science,
  Springer, 2014, pp. 466--471.

\bibitem{GolubitskyKMO08}
O.~Golubitsky, M.~V. Kondratieva, M.~M. Maza, A.~Ovchinnikov, A bound for the
  {R}osenfeld-{G}r{\"{o}}bner algorithm, J. Symb. Comput. 43~(8) (2008)
  582--610.

\bibitem{BouzianeKM01}
D.~Bouziane, A.~Kandri{-}Rody, H.~Maarouf, Unmixed-dimensional decomposition of
  a finitely generated perfect differential ideal, J. Symb. Comput. 31~(6)
  (2001) 631--649.

\bibitem{Hubert00}
E.~Hubert, Factorization-free decomposition algorithms in differential algebra,
  J. Symb. Comput. 29~(4-5) (2000) 641--662.

\bibitem{UshirobiraEB19}
R.~Ushirobira, D.~V. Efimov, P.~Blirnan, Estimating the infection rate of a
  {SIR} epidemic model via differential elimination, in: {ECC}, {IEEE}, 2019,
  pp. 1170--1175.

\bibitem{VerdiereZD18}
N.~Verdi{\`{e}}re, S.~Zhu, L.~Denis{-}Vidal, A distribution input-output
  polynomial approach for estimating parameters in nonlinear models:
  application to a chikungunya model, J. Comput. Appl. Math. 331 (2018)
  104--118.

\bibitem{boulier2008differential}
F.~Boulier, F.~Lemaire, Differential algebra and system modeling in cellular
  biology, Algebraic Biology (2008) 22--39.

\bibitem{HarringtonHM19}
H.~A. Harrington, K.~L. Ho, N.~Meshkat, A parameter-free model comparison test
  using differential algebra, Complex. 2019 (2019) 6041981:1--6041981:15.

\bibitem{hong2016minimal}
J.-Y. Hong, K.~Hara, J.-W. Kim, E.~F. Sato, E.~B. Shim, K.-H. Cho, Minimal
  systems analysis of mitochondria-dependent apoptosis induced by cisplatin,
  The Korean Journal of Physiology \& Pharmacology 20~(4) (2016) 367--378.

\bibitem{Szanto97}
{\'{A}}.~Sz{\'{a}}nt{\'{o}}, Complexity of the {W}u-{R}itt decomposition, in:
  {PASCO}, {ACM}, 1997, pp. 139--149.

\bibitem{szanto1999computation}
A.~Sz\'{a}nt\'{o}, Computation with polynomial systems, Ph{D} dissertation,
  Cornell University, 1999.

\bibitem{GhorbalSP14}
K.~Ghorbal, A.~Sogokon, A.~Platzer, Invariance of conjunctions of polynomial
  equalities for algebraic differential equations, in: {SAS}, Vol. 8723 of
  Lecture Notes in Computer Science, Springer, 2014, pp. 151--167.

\bibitem{DickensteinFGS91}
A.~Dickenstein, N.~Fitchas, M.~Giusti, C.~Sessa, The membership problem for
  unmixed polynomial ideals is solvable in single exponential time, Discret.
  Appl. Math. 33~(1-3) (1991) 73--94.

\bibitem{GustavsonOP18}
R.~Gustavson, A.~Ovchinnikov, G.~Pogudin, New order bounds in differential
  elimination algorithms, J. Symb. Comput. 85 (2018) 128--147.

\bibitem{wallis2010introduction}
W.~D. Wallis, J.~C. George, Introduction to combinatorics, Chapman and
  Hall/CRC, 2010.

\bibitem{AhoHU74}
A.~V. Aho, J.~E. Hopcroft, J.~D. Ullman, The Design and Analysis of Computer
  Algorithms, Addison-Wesley, 1974.

\bibitem{GalloM90}
G.~Gallo, B.~Mishra, Wu-{R}itt characteristic sets and their complexity, in:
  Discrete and Computational Geometry, Vol.~6 of {DIMACS} Series in Discrete
  Mathematics and Theoretical Computer Science, {DIMACS/AMS}, 1990, pp.
  111--136.

\bibitem{brownawell1987bounds}
W.~D. Brownawell, Bounds for the degrees in the {N}ullstellensatz, Annals of
  Mathematics 126~(3) (1987) 577--591.

\bibitem{MollerM84}
H.~M. M{\"{o}}ller, F.~Mora, Upper and lower bounds for the degree of
  {G}roebner bases, in: {EUROSAM}, Vol. 174 of Lecture Notes in Computer
  Science, Springer, 1984, pp. 172--183.

\bibitem{amzallag2019complexity}
E.~Amzallag, M.~Sun, G.~Pogudin, T.~N. Vo, Complexity of triangular
  representations of algebraic sets, Journal of Algebra 523 (2019) 342--364.

\bibitem{BurgisserS09}
P.~B{\"{u}}rgisser, P.~Scheiblechner, On the complexity of counting components
  of algebraic varieties, J. Symb. Comput. 44~(9) (2009) 1114--1136.

\bibitem{boulier2006well}
F.~Boulier, F.~Lemaire, M.~M. Maza, Well known theorems on triangular systems
  and the {D5} principle, in: Transgressive Computing 2006, J.-G. Dumas,
  Universit{\'e} Joseph Fourier, Grenoble, France, 2006, pp. 79--91.

\bibitem{Kalkbrener94}
M.~Kalkbrener, Prime decompositions of radicals in polynomial rings, J. Symb.
  Comput. 18~(4) (1994) 365--372.

\bibitem{novikov1999trajectories}
D.~Novikov, S.~Yakovenko, Trajectories of polynomial vector fields and
  ascending chains of polynomial ideals, in: Annales de l'institut Fourier,
  Vol.~49, 1999, pp. 563--609.

\bibitem{olver2000applications}
P.~J. Olver, Applications of Lie groups to differential equations, 2nd Edition,
  Vol. 107 of Graduate Texts in Mathematics, Springer, 2000.

\bibitem{blekherman2006there}
G.~Blekherman, There are significantly more nonegative polynomials than sums of
  squares, Israel Journal of Mathematics 153~(1) (2006) 355--380.

\bibitem{chistov2009double}
A.~Chistov, Double-exponential lower bound for the degree of any system of
  generators of a polynomial prime ideal, St. Petersburg Mathematical Journal
  20~(6) (2009) 983--1001.

\bibitem{459881}
Q.~Yuan, Are most polynomials reducible or irreducible?, Mathematics Stack
  Exchange, \url{https://math.stackexchange.com/q/459881} (version: 2013-08-05,
  accessed 2022-05-30).

\bibitem{LiuZZ11}
J.~Liu, N.~Zhan, H.~Zhao, Computing semi-algebraic invariants for polynomial
  dynamical systems, in: {EMSOFT}, {ACM}, 2011, pp. 97--106.

\bibitem{GhorbalP14}
K.~Ghorbal, A.~Platzer, Characterizing algebraic invariants by differential
  radical invariants, in: {TACAS}, Vol. 8413 of Lecture Notes in Computer
  Science, Springer, 2014, pp. 279--294.

\bibitem{RebihaMM15}
R.~Rebiha, A.~V. Moura, N.~Matringe, Generating invariants for non-linear
  hybrid systems, Theor. Comput. Sci. 594 (2015) 180--200.

\bibitem{MatringeMR10}
N.~Matringe, A.~V. Moura, R.~Rebiha, Generating invariants for non-linear
  hybrid systems by linear algebraic methods, in: {SAS}, Vol. 6337 of Lecture
  Notes in Computer Science, Springer, 2010, pp. 373--389.

\bibitem{KongBSJH17}
H.~Kong, S.~Bogomolov, C.~Schilling, Y.~Jiang, T.~A. Henzinger, Safety
  verification of nonlinear hybrid systems based on invariant clusters, in:
  {HSCC}, {ACM}, 2017, pp. 163--172.

\bibitem{boreale2022automatic}
M.~Boreale, Automatic pre-and postconditions for partial differential
  equations, Information and Computation 285 (2022) 104860.

\bibitem{boreale2022linear}
M.~Boreale, L.~Collodi, A linear-algebraic method to compute polynomial pde
  conservation laws, Journal of Symbolic Computation 108 (2022) 55--72.

\bibitem{GhorbalS22}
K.~Ghorbal, A.~Sogokon, Characterizing positively invariant sets: Inductive and
  topological methods, J. Symb. Comput. 113 (2022) 1--28.

\bibitem{conley1978isolated}
C.~C. Conley, Isolated invariant sets and the Morse index, no.~38, American
  Mathematical Soc., 1978.

\bibitem{WangCXZK21}
Q.~Wang, M.~Chen, B.~Xue, N.~Zhan, J.~Katoen, Synthesizing invariant barrier
  certificates via difference-of-convex programming, in: {CAV} {(1)}, Vol.
  12759 of Lecture Notes in Computer Science, Springer, 2021, pp. 443--466.

\bibitem{simmons_towsner}
W.~Simmons, H.~Towsner, Proof mining and effective bounds in differential
  polynomial rings, Advances in Mathematics 343 (2019) 567--623.

\bibitem{boulier2000efficient}
F.~Boulier, Efficient computation of regular differential systems by change of
  rankings using {K}{\"a}hler differentials, in: MEGA 2000, 2000.

\bibitem{dong2023differential}
R.~Dong, C.~Goodbrake, H.~A. Harrington, G.~Pogudin, Differential elimination
  for dynamical models via projections with applications to structural
  identifiability, SIAM Journal on Applied Algebra and Geometry 7~(1) (2023)
  194--235.

\bibitem{JonesP19}
M.~Jones, M.~M. Peet, Using {SOS} for optimal semialgebraic representation of
  sets: Finding minimal representations of limit cycles, chaotic attractors and
  unions, in: {ACC}, {IEEE}, 2019, pp. 2084--2091.

\bibitem{odani1995limit}
K.~Odani, The limit cycle of the van der {P}ol equation is not algebraic,
  Journal of Differential Equations 115~(1) (1995) 146--152.

\bibitem{PlatzerQR09}
A.~Platzer, J.~Quesel, P.~R{\"{u}}mmer, Real world verification, in: {CADE},
  Vol. 5663 of Lecture Notes in Computer Science, Springer, 2009, pp. 485--501.

\bibitem{anderson1977output}
B.~D.~O. Anderson, R.~W. Scott, Output feedback stabilization—solution by
  algebraic geometry methods, Proceedings of the IEEE 65~(6) (1977) 849--861.

\bibitem{FultonMQVP15}
N.~Fulton, S.~Mitsch, J.~Quesel, M.~V{\"{o}}lp, A.~Platzer, Keymaera {X:} an
  axiomatic tactical theorem prover for hybrid systems, in: {CADE}, Vol. 9195
  of Lecture Notes in Computer Science, Springer, 2015, pp. 527--538.

\end{thebibliography}
\end{document}